\RequirePackage{fix-cm}
\documentclass[aos,preprint]{imsart}

\RequirePackage{amsthm,amsmath,amsfonts,amssymb,mathtools,bm}
\RequirePackage[numbers]{natbib}

\RequirePackage{graphicx}
\RequirePackage{tikz}
\usetikzlibrary{arrows.meta,positioning,calc}
\RequirePackage{booktabs}
\RequirePackage{adjustbox}
\RequirePackage{xcolor}
\RequirePackage{physics}
\RequirePackage{enumitem}
\RequirePackage{multirow}
\DeclareFontShape{T1}{ptm}{m}{scit}{<->ssub * ptm/m/sc}{}
\graphicspath{{example1/}}

\startlocaldefs

\theoremstyle{plain}
\newtheorem{theorem}{Theorem}
\newtheorem{lemma}{Lemma}
\newtheorem{proposition}{Proposition}
\newtheorem{corollary}[theorem]{Corollary}

\theoremstyle{definition}
\newtheorem{assumption}{Assumption}
\newtheorem{definition}{Definition}
\newtheorem{remark}{Remark}
\newtheorem{exam}{Example}
\newtheorem{example}{Experiment}

\newcommand{\PP}{\mathbb{P}}
\newcommand{\EE}{\mathbb{E}}
\newcommand{\E}{\mathbb E}
\newcommand{\R}{\mathbb{R}}
\newcommand{\PsiB}{\Psi_B}
\newcommand{\psiB}{\psi_B}

\newcommand{\QB}{\mathcal{Q}_B}
\newcommand{\ind}{\mathbf{1}}

\endlocaldefs

\begin{document}

\begin{frontmatter}

\title{Studentized Cheap Bootstrap: Achieving Higher-Order Coverage Accuracy with Low Computation}
\runtitle{Studentized Cheap Bootstrap}

\begin{aug}

\author[A]{\fnms{Shengyi}~\snm{He}\ead[label=e1]{sh3972@columbia.edu}}
\author[A]{\fnms{Henry}~\snm{Lam}\ead[label=e2]{henry.lam@columbia.edu}}
\author[A]{\fnms{Yunhao}~\snm{Yan}\ead[label=e3]{yy2882@columbia.edu}}

\address[A]{
Department of Industrial Engineering and Operations Research,
Columbia University,\\
\printead[presep={\ }]{e1,e2,e3}
}

\end{aug}

\begin{abstract}
The bootstrap is a versatile method for quantifying statistical uncertainty. Among its variants, a popular approach, the studentized bootstrap, provably achieves higher-order coverage error reduction compared to other benchmarks. However, its implementation typically requires an analytical form of the standard error, or otherwise an additional layer of resampling effort which can be computationally expensive. In this paper, we introduce what we call the \emph{studentized cheap bootstrap} that achieves the same higher-order coverage accuracy as the conventional studentization, but substantially thinning the computational effort in the additional resampling layer to only very few Monte Carlo replications. Intriguingly, while conventional wisdom views ``studentization'' as an informal link between the bootstrap and $t$-distribution, we provide a first recognition that this link is in fact formal, notably with a distinct insight that the degree of freedom in the $t$-distribution corresponds to the Monte Carlo computation effort in the additional resampling layer, rather than the data size as in traditional thinking. Moreover, our desirable higher-order coverage accuracy builds crucially on this insight, as well as explicit calculations and geometric analyses of higher-order terms in the Edgeworth and Cornish--Fisher expansions tailored to limiting $t$-distributions.
\end{abstract}

\begin{keyword}[class=MSC]
\kwdgroup[type=primary]{\kwd{62G09}\kwd{62E20}}
\kwdgroup[type=secondary]{\kwd{62F25}}
\end{keyword}

\begin{keyword}
\kwd{Bootstrap}
\kwd{Studentization}
\kwd{Higher-order coverage accuracy}
\kwd{Edgeworth expansion}
\kwd{Cornish--Fisher expansion}
\end{keyword}

\end{frontmatter}

\section{Introduction}\label{sec intro}
The bootstrap is a common approach for statistical uncertainty quantification. Its main idea hinges on the resemblance between the resampling distribution and the original sampling distribution, serving as an implementable mechanism to measure the uncertainty of an empirical estimator. The theoretical foundations and diverse applications of this method have been extensively documented in several classic monographs, including \cite{efron1994introduction, davison1997bootstrap, shao2012jackknife, hall2013bootstrap}. Compared with analytical approaches such as the delta method, the bootstrap advantageously does not require structural knowledge of the problem (typically in the form of ``gradient information'' such as the influence function). On the other hand, the bootstrap is known to be computationally intensive, as its implementation requires a sufficiently large number of Monte Carlo runs---namely, the processes of resampling and re-evaluating the model---in order to obtain an accurate statistical summary of the resampling distribution.

Our focus in this paper is on attaining higher-order coverage accuracy. Namely, we are interested in generating confidence intervals (CIs) whose coverage errors are small relative to the sample size. To explain, a statistically valid CI typically means that its coverage achieves the prescribed nominal level (e.g., 95\%) asymptotically as the sample size grows. However, in any finite sample, the coverage error relative to the nominal level is nonzero and can be substantial. For ``simple'' approaches, such as the so-called basic bootstrap or the delta method, this error is typically of order $O(n^{-1})$ for two-sided intervals and $O(n^{-1/2})$ for one-sided intervals, where $n$ is the sample size. By higher-order coverage accuracy, we mean that the interval can achieve $O(n^{-2})$ for two-sided intervals and $O(n^{-1})$ for one-sided intervals. That is, they attain a coverage error that is one order smaller than that of the ``simple'' approaches. To this end, we discuss the main existing methods for attaining higher-order coverage-accurate CIs.

At a high level, there are roughly four major lines of methods in the bootstrap literature. First, the studentized bootstrap \cite{hall1986bootstrap,hall1988theoretical,beran1987prepivoting,babu1983inference} constructs confidence intervals by using quantiles of a pivotal statistic. In particular, this pivotal statistic is formed using a standard error computed from the resample estimates, rather than from the original estimate. This studentization induces a useful cancellation effect and leads to higher-order accuracy: the coverage error is of order $O(n^{-1})$ for one-sided intervals, and can be further improved to $O(n^{-2})$ for two-sided intervals when the interval is constructed symmetrically around the original estimate \cite{hall1988symmetric}. The main drawback of this approach is that it requires either an analytical or otherwise computable expression for the standard error \cite{davison1997bootstrap}; if such a formula is unavailable, one must approximate the standard error by additional resampling, which introduces another layer of computation. Second, bootstrap iteration (or the double bootstrap) \cite{hall1986bootstrap,hall1988bootstrap,lee1995asymptotic,booth1994monte} achieves higher-order accuracy by correcting the bootstrap quantiles obtained from the basic bootstrap, i.e., those based on a non-pivotal statistic, so that the resulting interval more closely matches the nominal coverage level. This correction requires a ``bootstrap on a bootstrap'', leading to a computational burden analogous to that of the studentized bootstrap when no analytical closed-form expression is available for the standard error. Third, direct Edgeworth corrections \cite{abramovitch1985edgeworth,withers1984asymptotic,hall1983inverting} use algebraic derivations to improve the calibration of intervals. These procedures require little computational effort. However, strictly speaking they are not bootstrap methods, since they are not automated and require model-specific derivations; see, for example, the linear regression setting in \cite{qumsiyeh1990edgeworth}. Finally, we mention the well-known BCa and ABC methods \cite{efron1987bca,efron2020automatic,diciccio1996bootstrap,diciccio1992more}. The ABC method is fast and accurate, but is primarily designed for parametric settings, especially exponential families, and requires additional analytical input such as local gradient information. In contrast, the BCa method applies to black-box target statistics and remains computationally attractive. Besides the basic bootstrap cost, it requires only an additional jackknife step to estimate the acceleration constant, with computational cost proportional to the sample size. Under suitable regularity conditions, standard BCa intervals are second-order accurate, with coverage error of order $O(n^{-1})$ \cite{hall1988theoretical,hall2013bootstrap}. However, unlike symmetric two-sided studentized bootstrap intervals, the standard BCa procedure does not in general achieve the sharper $O(n^{-2})$ coverage error for two-sided confidence intervals.

As the above discussion shows, existing bootstrap methods with higher-order accuracy typically require either analytical knowledge of the standard error or nested resampling with a \emph{multiplicatively large} computational cost, in the sense that the procedure requires first running an outer layer of resampling, and for each outer resample, a large number of inner resamples and re-evaluations are conducted, so that the total number of model evaluations is the product of the outer and inner numbers of resamples. The standard BCa method mentioned above is a notable exception in terms of computation, but it generally achieves only $O(n^{-1})$ coverage accuracy rather than the sharper $O(n^{-2})$ accuracy of symmetric two-sided studentized bootstrap intervals. For problems arising in modern machine learning or simulation, unfortunately neither requirement is trivial: these models can be black-box and hence may not admit a closed-form standard error (and in fact, if they do, then we may not even need to use the bootstrap in the first place), and their model evaluation can be dauntingly expensive, since it may entail running large-scale optimization routines or high-fidelity simulators. Motivated by this challenge, our main goal is to devise a higher-order coverage-accurate method that \emph{does not require a closed-form standard error or a multiplicatively large resampling effort}.

Our main methodology is what we call the \emph{studentized cheap bootstrap (SCB)}. Procedurally, this approach resembles the studentized bootstrap in settings where the standard error does not admit a closed form. In such settings, the resampled standard error in the denominator of the pivotal statistic would need to be approximated by an inner layer of resampling, thus leading to the ``bootstrap on a bootstrap'' described above. Our SCB is a subtle twist on the studentized bootstrap: it uses a very small number of resamples (e.g., 5) in this inner layer. In this way, the overall number of model re-evaluations is not the product of two large numbers, but only a small multiple of the number of outer resamples.

While the twist in SCB relative to the conventional studentized bootstrap appears minimal, the theory leading to its attainment of higher-order coverage accuracy requires a statistical argument that is very distinct from the conventional view. To understand this, in the conventional bootstrap argument, the implementation of resampling, whether in the outer or inner layer, requires a large number of resamples (or, synonymously, Monte Carlo runs). This is because the established theory of the bootstrap builds on the resemblance of the resampling distribution to the original sampling distribution, in the sense that these two distributions are approximately the same as the data size grows. This closeness is based on convergence results that exclude the consideration of Monte Carlo size. When implementing the bootstrap, Monte Carlo size enters because many resamples have to be generated in order to obtain a good Monte Carlo approximation of the summary statistics (such as the quantile and the standard error) for the resampling distribution. In other words, the conventional understanding of the bootstrap requires that the number of resamples, both in the outer and inner layers, be sufficiently large; otherwise the coverage guarantee breaks down.

For our SCB, the underlying theory that leads to higher-order coverage improvement uses the recent so-called cheap bootstrap idea \cite{lam2022,lam2023bootstrap}, but also a first formal link between ``studentization'' in the studentized bootstrap and the Student's $t$-statistic. The cheap bootstrap, which has been applied recently to problems such as stochastic gradient descent \cite{lam2026cheap}, simulation uncertainty quantification \cite{huang2023efficient,chowdhury2025efficient,lam2022cheap}, and biostatistics \cite{ohlendorff2025cheap}, allows the use of very few resamples to produce statistically valid CIs. Instead of using the sample-resample resemblance followed by a Monte Carlo approximation, the cheap bootstrap looks at the \emph{joint} distribution of the original empirical estimate and the resample estimates, in a sense taking an integrated perspective on the data noise and the Monte Carlo resampling noise. In particular, under assumptions similar to those of conventional bootstraps, this joint distribution, for a \emph{fixed} number of resamples, can be seen to be asymptotically independent as well as jointly Gaussian. This in turn allows the construction of a pivotal statistic that leads to a $t$ limit, notably with a small, fixed number of resamples used to form the standard error. The main insight of SCB is that, \emph{when we replace the $t$-quantile of the cheap bootstrap interval with a bootstrap calibration, we obtain higher-order coverage accuracy}. Intriguingly, this leads to a new view of the studentized bootstrap: while conventional wisdom claims that ``studentization'' in the studentized bootstrap has no formal link with the Student $t$-distribution, since it merely refers to pivotalization with a standard error that is procedurally analogous to the textbook $t$-statistic for the population mean, the idea behind SCB is rooted differently. More precisely, we provide an explicit link between the studentized bootstrap and the $t$-distribution, with the distinct insight that the degree of freedom in the $t$-distribution corresponds to the number of resamples in the inner resampling layer, rather than the data size. That is, if we take the degree of freedom to be a \emph{fixed} number of inner resamples, then indeed the studentized bootstrap is asymptotically $t$-distributed. This turns out to provide the foundation for a higher-order coverage correction that does not require a growing number of inner resamples.

Figure \ref{fig:bootstrap_relationships}, which we discuss in more detail in the next section, shows the relational positioning and strengths of SCB in terms of both coverage error and computational effort compared with existing bootstrap approaches. The basic bootstrap and the standard error bootstrap are ``simple'' approaches that do not have higher-order coverage accuracy, but they require only one layer of resampling effort. The studentized bootstrap, which uses pivotalization to approximate quantiles, can be viewed as an enhancement of the basic bootstrap, which does not use pivotalization, but it can also be viewed as an enhancement of the standard error bootstrap by replacing the $z$-quantiles with bootstrapped quantiles. This can yield a higher-order coverage error, but generally requires nested resampling that is computationally expensive. The cheap bootstrap can be viewed as a ``computational relaxation'' of the standard error bootstrap, by using few resamples to approximate the standard error; this turns out to be coverage-valid when using $t$-quantiles in lieu of $z$. Our SCB integrates ideas from both the studentized bootstrap and the cheap bootstrap. On the one hand, it replaces the $t$-quantiles in the cheap bootstrap with bootstrapped quantiles. On the other hand, it performs a computational relaxation of the studentized bootstrap by drastically reducing the number of inner resamples needed to approximate the standard error. This integration leads to both a higher-order coverage accuracy and a light resampling computation load that is of a similar order to that of the basic and standard error bootstraps.
\begin{figure}[!b]
\centering
\scalebox{0.78}{
\begin{tikzpicture}[
    >=Latex,
    font=\small,
    method/.style={
        draw,
        rounded corners=4pt,
        thick,
        align=left,
        minimum height=2.35cm,
        inner sep=6pt
    },
    leftmethod/.style={method, text width=5.25cm},
    rightmethod/.style={method, text width=5.95cm},
    classical/.style={leftmethod, fill=orange!10},
    se/.style={leftmethod, fill=violet!10},
    cheap/.style={leftmethod, fill=teal!10},
    advanced/.style={rightmethod, fill=blue!10},
    scb/.style={rightmethod, fill=green!12},
    arrow/.style={-{Latex[length=3mm]}, thick},
    lab/.style={font=\small, inner sep=1.5pt, align=center}
]

\node[classical] (basic) at (0,5.4) {
\makebox[\linewidth][c]{\textbf{Basic Bootstrap}}\\[1mm]
Output: $\left[
\hat\theta-\{\theta_b^*-\hat\theta\}_{\frac{1+\alpha}{2}},
\hat\theta-\{\theta_b^*-\hat\theta\}_{\frac{1-\alpha}{2}}
\right]$\\
Coverage error: $O(n^{-1})$\\
Statistic Evaluations: $1+B_1$
};

\node[se] (seboot) at (0,1.65) {
\makebox[\linewidth][c]{\textbf{Standard Error Bootstrap}}\\[1mm]
Output: $\left[
\hat\theta\pm z_{(1+\alpha)/2}\sqrt{\mathrm{Var}_*(\theta^*)}
\right]$\\
Coverage error: $O(n^{-1})$\\
Statistic Evaluations: $1+B_1$
};

\node[cheap] (cheapboot) at (0,-2.15) {
\makebox[\linewidth][c]{\textbf{Cheap Bootstrap}}\\[1mm]
Output: $\left[
\hat\theta\pm t_{B,(1+\alpha)/2}S
\right]$\\
Coverage error: $O(n^{-1})$\\
Statistic Evaluations: $1+B$
};

\node[advanced] (stud) at (9.4,5.4) {
\makebox[\linewidth][c]{\textbf{Studentized Bootstrap}}\\[1mm]
Output: $\left[
\hat\theta\pm \hat\sigma\{|T_{\mathsf N,b_1}^*|\}_{\alpha}
\right]$\\
Coverage error: $O(n^{-2})$\\
Statistic Evaluations: $(1+B_1)(1+B_2)$,\\
or $1+B_1$ if closed-form $\hat{\sigma}$.
};

\node[scb] (scbnode) at (9.4,-2.15) {
\makebox[\linewidth][c]{\textbf{Studentized Cheap Bootstrap}}\\[1mm]
Output: $\left[
\hat\theta\pm \hat S\,\hat u_{\alpha}^{\mathrm{ts}}
\right]$\\
Coverage error: $O(n^{-2})$\\
Statistic Evaluations: $(1+B_1)(1+B)$
};

\node[align=center, font=\small] at (5.35,1.55) {
$B$ small\\
$B_1,B_2\to\infty$ large
};

\draw[arrow] (basic.east) -- (stud.west);
\draw[arrow] (seboot.east) -- (stud.west);
\draw[arrow] (cheapboot.east) -- (scbnode.west);

\draw[arrow] ([xshift=-18pt]seboot.south) -- ([xshift=-18pt]cheapboot.north);
\draw[arrow] ([xshift=18pt]cheapboot.north) -- ([xshift=18pt]seboot.south);

\draw[arrow] (scbnode.north) -- (stud.south);

\node[lab] at (4.55,6.15) {Gaussian\\pivotalization};

\node[lab] at (2.75,3.40) {Replace $z$-quantiles\\with $T_{\mathsf N}^*$-quantiles};

\node[lab] at (4.55,-1.3) {$t$-pivotalization\\$B$ deg. of freedom};

\node[lab] at (-2.0,-0.25) {using $t$-quantiles};

\node[lab] at (1.25,-0.25) {$B\uparrow B_1$};

\node[lab] at (10.85,1.55) {$B\uparrow B_2$};

\end{tikzpicture}
}
\caption{Schematic comparison of five bootstrap methods. Here $\alpha$ denotes the nominal coverage level, and $\{X_b\}_p$ denotes the empirical $p$-quantile of $\{X_b\}$.}
\label{fig:bootstrap_relationships}
\end{figure}

In terms of mathematical developments, we analyze and establish the higher-order coverage accuracy of SCB via Edgeworth and Cornish--Fisher expansions based on the Student's $t$ distribution. Most existing Edgeworth expansions have been developed for pivotal statistics with either Gaussian or $\chi^2$ limiting distributions \cite{barndorff1984bartlett,diciccio1991empirical}. By contrast, comparable expansions built on the Student's $t$ distribution remain much less developed, except for the recent contributions in \cite{he2021higherorder} and \cite{lam2022}. In this paper, we derive explicit closed-form expressions for the Edgeworth and Cornish--Fisher expansions associated with a $t$-distribution. In particular, we obtain explicit analogs of the classical Edgeworth polynomials which, to the best of our knowledge, are derived here for the first time in closed form for the Student's $t$ setting. These functions are quasi-polynomials in the sense that they are finite linear combinations of terms of the form $x^k/(B+x^2)^{\ell/2}$, where $B$ is the number of bootstrap resamples, and $k$ and $\ell$ are integers. They retain key structural properties of the classical Edgeworth polynomials, including infinite smoothness and the odd--even parity pattern across expansion orders, which allow us to derive the desired orders of the coverage errors of SCB intervals.

We close this introduction by positioning our work alongside other non-bootstrap methods of constructing CIs with higher-order accuracy, as well as methods to reduce resampling efforts. Besides the direct Edgeworth corrections discussed above, higher-order coverage accuracy has also been pursued through several classical analytic, non-resampling routes. Bartlett corrections improve $\chi^2$ approximations for likelihood-ratio statistics through analytic multiplicative corrections \cite{bartlett1937properties,lawley1956general}. Closely related ideas appear in empirical likelihood, the nonparametric counterpart of Wilks-type likelihood ratio, which can also admit Bartlett corrections under suitable regularity conditions \cite{owen1988empirical,owen1990empirical,diciccio1991empirical}. Another important route is saddlepoint approximation, which uses cumulant-generating functions to obtain highly accurate distributional tail approximations \cite{daniels1954saddlepoint,lugannani1980saddlepoint,jensen1995saddlepoint}. Higher-order likelihood theory provides yet another set of refinements, including modified likelihood roots, profile-likelihood adjustments, and related likelihood-based corrections for nuisance-parameter problems \cite{barndorff1983formula,barndorff1991modified,cox1987parameter,barndorffnielsen1994inference,reid2003asymptotics,severini2000likelihood,brazzale2007applied}. Our approach differs from all these methods in that we are solely resampling-based, which in a sense is easy to implement without additional analytic overhead. On the other hand, regarding the control of computational effort in bootstrap inference, a focus has been on reducing the sample size in bootstrap replications, including the classical $m$-out-of-$n$ bootstrap \cite{politis1999subsampling,bickel2012resampling}, and the more recent bags of little bootstraps \cite{kleiner2014scalable} and subsampled double bootstrap \cite{sengupta2016subsampled}. These methods reduce the effective sample size used to create each bootstrap replicate, and retain correct inference by suitable scale-back of the resulting subsampled estimates. While they can have benefits such as consistency under milder assumptions, they have also been shown to lighten resampling computation for some problems. Other approaches that save computation use analytic approximations via for instance the infinitesimal jackknife \cite{giordano2019swiss,schulam2019can,lu2020uncertainty}, including for problems such as bagging and random forests that can involve multiple layers of resampling when directly using the bootstrap \cite{wager2014confidence,lopes2019estimating}. Nonetheless, none of these studies focus on or push the performance to higher-order coverage accuracy as investigated in this paper.

The remainder of this paper is organized as follows. Section~\ref{sec 2 established methods} reviews and discusses the pros and cons of established bootstrap methods in terms of higher-order accuracy and computational cost. Section~\ref{sec scb procedure} presents our SCB procedure. Section~\ref{sec4 mainroad} develops the theoretical foundation of SCB, including higher-order Edgeworth and Cornish--Fisher expansions for Student's $t$-limiting statistics, quantile error bounds, and the resulting coverage error analysis. Section~\ref{num res} provides numerical results illustrating the empirical performance of SCB relative to existing benchmark methods.

\section{Bootstraps and Cheap Bootstraps}\label{sec 2 established methods}
We first frame the problem setting. Let $\theta\coloneqq\theta(P)\in\mathbb R$ denote a target statistical quantity, where $P$ is the unknown data distribution. Suppose we have i.i.d.\ data $\mathcal{X}=\{\mathbf{X}_1,\dots,\mathbf{X}_n\}$, where $\mathbf{X}_i\in\mathbb R^d$ and the sample size is $n$. Our goal is to construct an $\alpha$-level CI. Equivalently, $1-\alpha$ denotes the significance level in the notation of this paper. More precisely, we consider the case of a two-sided CI, in which we are interested in obtaining two quantities $\mathsf L_n$ and $\mathsf U_n$, or an interval $[\mathsf L_n,\mathsf U_n]$, such that
\begin{equation}
\mathbb P(\mathsf L_n\leq\theta\leq \mathsf U_n)=\alpha+o(1)\label{two-sided asymptotic exact}
\end{equation}
On the other hand, for a one-sided CI, we are interested in obtaining $\mathsf L_n$ such that
\begin{equation}
\mathbb P(\mathsf L_n\leq \theta)=\alpha+o(1)\label{lower asymptotic exact}
\end{equation}
which we call an upper interval, or obtaining $\mathsf U_n$ such that
\begin{equation}
\mathbb P(\theta\leq \mathsf U_n)=\alpha+o(1)\label{upper asymptotic exact}
\end{equation}
which we call a lower interval. The probability $\mathbb P$ is with respect to the data size $n$, but also potentially to any Monte Carlo noise involved in the construction procedures of $\mathsf L_n$ and $\mathsf U_n$ (whose role will be clearer in the sequel). The notation $o(1)$ means a term that goes to 0 as $n\to\infty$ (again, it is possible that Monte Carlo size would also play a role, and this will become clearer later). We call the coverage guarantees \eqref{two-sided asymptotic exact}, \eqref{lower asymptotic exact}, and \eqref{upper asymptotic exact} asymptotically exact, which is a basic validity certificate for CIs.

Our focus in this paper is higher-order coverage accuracy. In particular, we are interested in the order of the $o(1)$ terms in \eqref{two-sided asymptotic exact}, \eqref{lower asymptotic exact}, and \eqref{upper asymptotic exact}. These $o(1)$ terms are the coverage error of the CI, namely the difference between its empirical coverage and the nominal level $\alpha$. Roughly speaking, and as we will discuss in more detail below, it is relatively easy to obtain procedures that achieve $O(n^{-1})$ coverage error for the two-sided case, and $O(n^{-1/2})$ for the one-sided case. The smaller coverage error in the two-sided case is due to a cancellation effect arising from the parity structure of the coverage expansion; see Section 3.5.5 of \cite{hall2013bootstrap}. We are particularly interested in constructing CIs with higher-order coverage accuracy. In particular, relative to the ``easy'' intervals, we aim to improve the coverage error from $O(n^{-1})$ to $O(n^{-2})$ in the two-sided case, and from $O(n^{-1/2})$ to $O(n^{-1})$ in the one-sided case. These intervals perform better in small-sample situations, or in problems where the sample size is small relative to the problem dimension, as discussed in a range of works, e.g., \cite{hall1988symmetric,hall1988theoretical,diciccio1996bootstrap}.

We consider the bootstrap, which is arguably the most common approach to constructing CIs and whose main advantage is that it is data-driven and automated. Here we give a quick overview of the idea behind the bootstrap, which will be useful in paving the way to explain the so-called cheap bootstrap and subsequently SCB. The starting point is the use of resampling. To estimate $\theta$, a natural point estimator is $\hat{\theta}\coloneqq \theta(\hat P_n)$, where $\hat P_n$ is the empirical distribution constructed from the sample data $\mathcal{X}=\{\mathbf{X}_1,\dots,\mathbf{X}_n\}$, i.e.,
$\hat P_n(\cdot)=\frac{1}{n}\sum_{i=1}^n\delta_{\mathbf X_i}(\cdot)$ where $\delta_{\mathbf X_i}$ is the Dirac measure at $\mathbf X_i$. By a resample, we mean sampling with replacement from $\mathcal X$ $n$ times to obtain $\mathcal X^*=\{\mathbf X_{1}^*,\dots,\mathbf X_{n}^*\}$ and hence the resample estimate $\theta^*:=\theta(P_n^*)$, where $P_n^*$ is the empirical distribution formed from the resample, i.e., $P_n^*(\cdot)=\frac{1}{n}\sum_{i=1}^n\delta_{\mathbf X_i^*}(\cdot)$.

The bootstrap works on the principle that the sampling distribution of the original estimate $\hat\theta$, and the resampling distribution of $\theta^*$, are similar. To make this slightly more precise, asymptotically as data size $n$ grows, the original estimate $\hat\theta$, under standard regularity conditions, follows a central limit theorem
\begin{equation}
\sqrt n(\hat\theta-\theta)\Rightarrow \mathcal N\label{CLT original}
\end{equation}
where $\Rightarrow$ denotes weak convergence. On the other hand, the resample estimate $\theta^*$ satisfies a conditional weak convergence result; see, e.g., Chapter 3.6 of \cite{van1996weak}
\begin{equation}
\sqrt n(\theta^*-\hat\theta)\Rightarrow \mathcal N\text{\ \ conditional on\ \ }\mathbf X_1,\mathbf X_2,\cdots\label{CLT resample}
\end{equation}
with the limit $\mathcal N$ being the same random variable as in the central limit theorem \eqref{CLT original} for $\theta$ above. The bootstrap idea operationalizes the similar limits above in order to construct CIs. To this end, if we can approximate some summary statistics of the sampling distribution of $\hat\theta$, then we can construct CIs for $\theta$. For example, if we can find two quantiles, say $\mathsf L_n$ and $\mathsf U_n$, such that
$$\mathbb P(\mathsf L_n\leq\hat\theta-\theta\leq\mathsf U_n)=\alpha+o(1)$$
then an $\alpha$-level CI would be $[\hat\theta-\mathsf U_n,\hat\theta-\mathsf L_n]$. Using the resemblance of the resampling distribution in \eqref{CLT resample} to \eqref{CLT original}, we can use $\hat{\mathsf L}_n,\hat{\mathsf U}_n$ such that
$$\mathbb P(\hat{\mathsf L}_n\leq\theta^*-\hat\theta\leq\hat{\mathsf U}_n|\mathcal{X})=\alpha+o(1)$$
That is, $[\hat\theta-\hat{\mathsf U}_n,\hat\theta-\hat{\mathsf L}_n]$ is a valid $\alpha$-level CI. A main insight of the bootstrap is that $\hat{\mathsf L}_n,\hat{\mathsf U}_n$ are \emph{computable}, by running a large number of Monte Carlo runs to resample $P_n^*$ and re-evaluate the estimate $\theta^*=\theta(P_n^*)$. Specifically, we repeatedly resample from $\mathcal X$ to obtain $\mathcal X_b^*=\{\mathbf X_{b,1}^*,\dots,\mathbf X_{b,n}^*\}$ and compute the resample estimate $\theta_b^*=\theta(P_{b,n}^*)$, where $P_{b,n}^*$ is the empirical distribution of $\mathcal X_{b}^*$, for $b=1,\dots,B_1$. By taking the empirical quantiles of $\theta_b^*-\hat\theta$, we obtain approximations of $\hat{\mathsf L}_n$ and $\hat{\mathsf U}_n$.

In the following, we present some important bootstrap variants using the principle above, and beyond it, that collectively form the basis of Figure \ref{fig:bootstrap_relationships}.

\subsection{Basic, Standard Error, and Studentized Bootstraps}

First, the development described in the previous subsection essentially leads to what is known as the \emph{basic bootstrap}. To implement it, we run Monte Carlo replicates of $\theta_b^*$ for $b=1,\ldots,B_1$, with $B_1$ sufficiently large, and use them to approximate the $(1-\alpha)/2$ and $(1+\alpha)/2$ quantiles of the centered bootstrap statistics $\theta_b^*-\hat\theta$. The resulting CI is  $\left[
\hat\theta-\{\theta_b^*-\hat\theta\}_{\frac{1+\alpha}{2}},
\,
\hat\theta-\{\theta_b^*-\hat\theta\}_{\frac{1-\alpha}{2}}
\right]$, or equivalently
$[\,2\hat\theta-\{\theta_b^*\}_{(1+\alpha)/2},\; 2\hat\theta-\{\theta_b^*\}_{(1-\alpha)/2}\,]$, where in general we denote by $\{x_b\}_{p}$ the $p$-quantile of a set of samples $\{x_b\}$ indexed by $b$ (formed by simple order statistics). This approach gives coverage error $O(n^{-1})$ in the two-sided case and $O(n^{-1/2})$ in the one-sided case.

Using a similar principle but, instead of quantiles, operationalizing with the standard error as the summary statistic leads to the so-called \emph{standard error bootstrap} \cite{efron1981nonparametric}. To explain this, suppose the limit $\mathcal N$ in \eqref{CLT original} and \eqref{CLT resample} is the typical Gaussian case. Then we can use $[\hat\theta\pm z_{(1+\alpha)/2}\sqrt{\mathrm{Var}(\hat\theta)}]$ as the $\alpha$-level CI, where $z_{(1+\alpha)/2}$ is the standard normal $(1+\alpha)/2$-level quantile. Similar to the development of the basic bootstrap, we do not know $\mathrm{Var}(\hat\theta)$, but we can approximate it using the resampling distribution. That is, we approximate $\mathrm{Var}(\hat\theta)$ by
$\mathrm{Var}_*(\theta^*) \coloneqq \sum_{b=1}^{B_1}(\theta_b^*-\bar\theta^*)^2/(B_1-1)$,
where $\bar\theta^* \coloneqq \sum_{b=1}^{B_1}\theta_b^*/B_1$ is the mean of the bootstrap replicates $\theta_b^*$. The resulting CI is
$[\,\hat\theta \pm z_{(1+\alpha)/2}\sqrt{\mathrm{Var}_*(\theta^*)}\,]$. Note that, like the basic bootstrap, we need to run a large number of Monte Carlo replications to evaluate the $\theta_b^*$'s in the implementation of the standard error bootstrap. Moreover, its coverage error is also similar to that of the basic bootstrap in the case where the limit $\mathcal N$ is Gaussian. We summarize the key properties and implementation requirements of the basic and standard error bootstraps in the upper-left corner of Figure \ref{fig:bootstrap_relationships}.

Next, the \emph{studentized bootstrap} \cite{davison1997bootstrap}, also called the percentile-$t$ method in \cite{hall2013bootstrap}, resembles the basic bootstrap, but emphasizes the use of a pivotal statistic. Instead of approximating the quantiles of $\hat\theta-\theta$, it aims to approximate those of $T_{\mathsf N}\coloneqq (\hat\theta-\theta)/\hat\sigma$, where $\hat\sigma$ is a consistent standard error estimator of $\hat\theta$. The latter implies that $T_{\mathsf N}$, analogous to \eqref{CLT original}, converges weakly to the standard normal distribution. Moreover, the resemblance of the resampling distribution to the original sampling distribution still applies, so that $T_{\mathsf N}^*$, the resample version of $T_{\mathsf N}$, also converges to standard normal conditional on the data. This allows the execution of a bootstrap approximation. The studentized bootstrap is primarily motivated by higher-order coverage accuracy. Compared with non-pivotal bootstrap methods such as the basic bootstrap, it reduces the one-sided CI coverage error from $O(n^{-1/2})$ to $O(n^{-1})$, and the two-sided CI coverage error from $O(n^{-1})$ to $O(n^{-2})$ when combined with a symmetric construction; see \cite{hall2013bootstrap}.

While the studentized bootstrap can attain sharper coverage accuracy, it pays a nontrivial computational price. The procedure needs to generate bootstrap replicates of the pivotal statistic, $T_{\mathsf N,b_1}^*\coloneqq (\theta_{b_1}^*-\hat\theta)/\sigma_{b_1}^*$ for $b_1=1,2,\ldots,B_1$, and based on the approximation $T_{\mathsf N,b_1}^*\approx T_{\mathsf N}$ in distribution, the studentized bootstrap outputs the one-sided CIs $[\hat\theta-\hat\sigma\{T_{\mathsf N,b_1}^*\}_{\alpha}, \infty )$ and $(-\infty,\hat\theta-\hat\sigma\{T_{\mathsf N,b_1}^*\}_{1-\alpha}]$, as well as the symmetric two-sided CI
\begin{align}\label{eqsymp}
[\,\hat\theta-\hat\sigma\{\,|T_{\mathsf N,b_1}^*|\,\}_{\alpha},\hat\theta+\hat\sigma\{\,|T_{\mathsf N,b_1}^*|\,\}_{\alpha}\,].
\end{align}
Here $\hat\sigma$ is the standard error estimator based on the original data $\mathcal X$. In this pipeline, given the bootstrap datasets $\mathcal X_{b_1}^*=\{\mathbf X_{b_1,1}^*,\dots,\mathbf X_{b_1,n}^*\}$, one must compute not only $\theta_{b_1}^*$ but also an accurate estimator of $\sigma_{b_1}^*$ as a functional of the resampled empirical distribution $P_{b_1}^*$. In some simple settings, such as linear regression, this functional may admit an explicit formula. However, in general, when no efficient analytic expression is available, one needs to resort to a second layer of Monte Carlo to estimate it via the sample variance:
\begin{align}\label{eqsigmastar}
    (\sigma_{b_1}^*)^2\approx \frac{1}{B_2-1}\sum_{b_2=1}^{B_2}(\theta^{**}_{b_1,b_2}-\theta_{b_1}^*)^2.
\end{align}
Here $\theta^{**}_{b_1,b_2}$ is the estimate of the target statistic computed from the inner bootstrap dataset $\mathcal X_{b_1,b_2}^{**}=\{\mathbf X_{b_1,b_2,1}^{**},\dots,\mathbf X_{b_1,b_2,n}^{**}\}$ resampled from $\mathcal X_{b_1}^*$. To estimate the quantiles of the resampled pivotal statistics accurately, both the number of outer resamples $B_1$ and the number of inner resamples $B_2$ need to be sufficiently large. This nested Monte Carlo procedure requires a multiplicative effort of $B_1\times B_2$ model evaluations, which can be very expensive whenever there is no efficient way to calculate $\sigma^*$.

The upper-right corner of Figure \ref{fig:bootstrap_relationships} shows the higher-order coverage benefit of the studentized bootstrap, but also its multiplicative resampling effort. Moreover, suppose we view the inner resampling layer used to approximate the standard error $\sigma^*$ in the studentized bootstrap as the resampling step in the standard error bootstrap. Then the studentized bootstrap can be viewed as a higher-order coverage correction of the standard error bootstrap, by replacing the $z$-quantile in the latter with a more accurate bootstrap quantile calibration. In this sense, the studentized bootstrap is a higher-order enhancement of both the basic bootstrap and the standard error bootstrap.

\subsection{Cheap Bootstrap}
The conventional bootstrap idea, described at the beginning of this section and culminating in the use of the distributional resemblance between $\hat\theta-\theta$ and $\theta^*-\hat\theta$ depicted in \eqref{CLT original} and \eqref{CLT resample} (or their variants operating on the standard error or the pivotal statistic), is a \emph{two-stage} approach. To explain this, we first approximate the original sampling distribution by the resampling distribution, and use the summary statistics of the resample estimates, which are computable, to approximate the counterparts for the original estimate. The computation of summary statistics of the resample estimates is implemented via Monte Carlo, namely by drawing new resamples and re-evaluating the model. In this process, the approximation of the resampling distribution to the sampling distribution is separate from the Monte Carlo approximation (i.e., \eqref{CLT original} and \eqref{CLT resample} do not have Monte Carlo size explicitly involved). Because of this, we need a \emph{large} Monte Carlo size in order to ensure that the summary statistics of the resample estimates can be accurately obtained; otherwise the bootstrap pipeline would fail.

The cheap bootstrap takes a different perspective by looking at the joint distribution of the original empirical estimate and the resample estimates, in a sense providing an integrated analysis of the data noise and the Monte Carlo noise. This turns out to allow the construction of statistically valid CIs with only a very small number of Monte Carlo resamples, including even just one resample. In particular, \cite{lam2022} shows that, in the typical case where $\mathcal N$ is Gaussian, the combination of \eqref{CLT original} and \eqref{CLT resample} leads to
\begin{equation}
\sqrt n(\hat\theta-\theta,\theta_1^*-\hat\theta,\ldots,\theta_{B}^*-\hat\theta)\Rightarrow(\mathcal N_0,\ldots,\mathcal N_{B})\label{CLT joint}
\end{equation}
where $\mathcal N_0,\ldots,\mathcal N_{B}$ are i.i.d.\ Gaussian. Importantly, this holds for any fixed $B\geq 1$. With the joint convergence \eqref{CLT joint}, we can construct CIs using a pivotal statistic based on elementary Gaussian machinery. More precisely, with several resample estimates $\theta_b^*$, we construct a ``variance'' estimator $S^2 \coloneqq \sum_{b=1}^{B}(\theta_b^*-\hat\theta)^2/B$. Then the two-sided CI is $[\,\hat\theta \pm t_{B,(1+\alpha)/2}S\,]$ and the one-sided CIs are $(\hat\theta - t_{B,\alpha}S,\infty]$ and $[-\infty,\hat\theta + t_{B,\alpha}S)$, where $t_{B,\alpha}$ denotes the $\alpha$-quantile of a Student's $t$ distribution with $B$ degrees of freedom. These intervals are justified by \eqref{CLT joint}, which implies that the statistic
\begin{align}\label{eq tpivotal}
T \coloneqq \frac{\hat\theta-\theta}{S}
\end{align}
converges in distribution to a Student's $t$ distribution with $B$ degrees of freedom. In the CIs above, $B$ can be a very small number such as $1$, although \cite{lam2022} and \cite{lam2023bootstrap} suggest that it is advantageous to use $B$ above 3 to obtain a shorter interval, while on the other hand using an even larger $B$ is not necessarily beneficial, since the incremental shortening of the interval length diminishes sharply as $B$ increases.

The coverage error of the cheap bootstrap is $O(n^{-1})$ in the two-sided case and $O(n^{-1/2})$ in the one-sided case, matching those of the corresponding CIs produced by the basic bootstrap. Moreover, the cheap bootstrap can be viewed as a ``computational relaxation'' of the standard error bootstrap by using a fixed, small number of resamples $B$ to construct the standard error estimate, and correspondingly using a $t$-quantile instead of a $z$-quantile, with the degree of freedom of the $t$ being the number of resamples---a measure of the computational effort rather than the data size in conventional statistical thinking. In particular, when $B\to\infty$, the standard error estimate in the cheap bootstrap CI consistently converges to $\sqrt{\mathrm{Var}(\hat\theta)}$, and the degree of freedom in the $t$ goes to $\infty$, thus recovering the standard normal distribution, so that the resulting CI reduces to the standard error bootstrap. Nonetheless, the cheap bootstrap opens up the possibility of using far fewer resamples than the standard error bootstrap while retaining the same level of coverage accuracy. The lower-right corner of Figure \ref{fig:bootstrap_relationships} shows the statistical and relational properties of the cheap bootstrap.

\section{Studentized Cheap Bootstrap: Procedure and Main Theory}\label{sec scb procedure}
With the existing bootstrap methods and their relations discussed, we now introduce our main approach, SCB, and position it within the bootstrap framework. SCB aims to get the ``best of both worlds'' by producing higher-order coverage accuracy while at the same time avoiding a multiplicative computational effort from nested Monte Carlo resampling. The high-level idea is to integrate the cheap bootstrap with the studentized bootstrap: instead of using the $t$-quantile driven by the limit of the pivotal statistic $T$ in the cheap bootstrap, we bootstrap-calibrate the quantile of $T$. Note that the cheap bootstrap pivotal statistic $T$ itself involves running $B$ resample estimates to form the scale statistic $S$, so this entire procedure still requires two layers of resampling. This is analogous to the execution of the studentized bootstrap when the standard error does not admit an analytically computable form. However, we no longer seek a consistent estimate of $\sigma_{b_1}^*$ as in \eqref{eqsigmastar}, and the limiting quantile is no longer normal. Instead, we work with the limiting $t$ distribution of the cheap bootstrap, and the number of inner resamples is as small as what is needed in the cheap bootstrap. In this way, the overall resampling effort is $B\times B_1$, where $B$ is a fixed small number that does not inflate the computational requirement multiplicatively. The lower-right corner of Figure \ref{fig:bootstrap_relationships} positions our SCB.

\subsection{The SCB Procedure}
We now describe the SCB procedure. First, we resample from $\mathcal X$ to obtain the outer resample datasets
\[
\mathcal X_{b_1}^*=\{\mathbf X_{b_1,1}^*,\dots,\mathbf X_{b_1,n}^*\},
\qquad b_1=1,2,\ldots,B_1
\]
and compute the corresponding estimate \(\theta_{b_1}^*:=\theta(\mathcal X_{b_1}^*)\) for each $b_1$. Next, for each \(b_1\), we resample from \(\mathcal X_{b_1}^*\) to obtain the inner resample datasets
\[
\mathcal X_{b_1,b_2}^{**}
=
\{\mathbf X_{b_1,b_2,1}^{**},\dots,\mathbf X_{b_1,b_2,n}^{**}\},
\qquad b_2=1,2,\ldots,B
\]
and compute the corresponding estimate \(\theta_{b_1,b_2}^{**}:=\theta(\mathcal X_{b_1,b_2}^{**})\) for each $b_2$. Using all these \(\theta_{b_1,b_2}^{**}\) for $b_2=1,\ldots,B$, we compute the resampled scale estimate
\begin{align*}
S_{b_1}^*
\coloneqq
\left[
\frac{1}{B}\sum_{b_2=1}^{B}
(\theta_{b_1,b_2}^{**}-\theta_{b_1}^*)^2
\right]^{1/2}.
\end{align*}
Next, we compute the resampled pivotal statistics
\begin{align}\label{eq resampletpivotal}
T_{b_1}^* \coloneqq \frac{\theta_{b_1}^*-\hat\theta}{S_{b_1}^*},
\qquad b_1=1,\ldots,B_1.
\end{align}
We construct one more scale estimate, this time by resampling directly from the original dataset \(\mathcal X\) rather than from the outer resamples:
\begin{align*}
\hat S
=
\left[
\frac{1}{B}\sum_{k=1}^{B}
(\theta_k^*-\hat\theta)^2
\right]^{1/2}.
\end{align*}
Recall that \(\{x_{b_1}\}_{p}\) denotes the \(p\)-quantile of a finite sequence $x_{b_1}$ indexed by \(b_1\). Then SCB outputs the two-sided CI
\begin{align}\label{eqa11}
\mathcal I^{\mathrm{ts}}_{B_1}
=
\left[
\hat\theta-\{\,|T_{b_1}^*|\,\}_{\alpha}\times \hat S,\,
\hat\theta+\{\,|T_{b_1}^*|\,\}_{\alpha}\times \hat S
\right],
\end{align}
and the one-sided CIs
\begin{align}
\mathcal I^{\mathrm{low}}_{B_1}
&=
\left(-\infty,\hat\theta-\{\,T_{b_1}^*\,\}_{1-\alpha}\times \hat S\right],\qquad
\mathcal I^{\mathrm{up}}_{B_1}=\left[\hat\theta-\{\,T_{b_1}^*\,\}_{\alpha}\times \hat S,\infty\right).
\end{align}
Throughout, we suppress the dependence on the fixed inner resample size \(B\), and write only the dependence on \(B_1\).

Compared with the conventional studentized bootstrap, SCB still requires \(B_1\) to be sufficiently large, since \(B_1\) controls the accuracy of the Monte Carlo approximation to the conditional quantiles of \(T^*|\mathcal X\), where \(T^*\) denotes the resampled pivotal statistic (i.e., with the same distribution as \(T_{b_1}^*\) defined above given $\mathcal X$). The crucial point, however, is that the number of inner resamples \(B\) can be chosen as a small integer. Correspondingly, note that the scale estimator differs from the standard error estimator in the studentized bootstrap, namely \eqref{eqsigmastar}. In particular, it is centered at the outer resample estimates \(\theta_{b_1}^*\) and uses \(B\) in the denominator, exactly as in the cheap bootstrap. This distinction relates to the difference in the theories behind the statistical guarantees of the studentized bootstrap and SCB. Indeed, our numerical results in Section~\ref{num res} show that even very small values of \(B\) can perform well when coverage is the primary concern, and in practice we recommend \(B\approx 3\) to $5$, since this typically reduces the CI length to a level close to that of the non-pivotal methods. When \(B\) is treated as fixed, the total computational cost is linear in \(B_1\).

As discussed earlier, the relationship between CB and SCB is analogous to that between the basic bootstrap and the studentized bootstrap. SCB improves the coverage error of the two-sided CIs generated by CB from \(O(n^{-1})\) to \(O(n^{-2})\), while requiring substantially less computation than the classical studentized bootstrap. This improvement is formalized in Theorems~\ref{theorem err 1side} and \ref{theorem err 2side}, which are the main theoretical results of this paper and are presented in the next subsection.

\subsection{Theory on Coverage Accuracy}\label{sec: theoretical}
We analyze the coverage errors of SCB. As is customary in the bootstrap literature, we consider the function-of-mean model, namely $\theta=f(\bm\mu)$, where $\bm\mu=\EE[\mathbf X]$ for a $d$-dimensional random vector $\mathbf X$ and $f:\mathbb R^d\to\mathbb R$ is a function. Denote $\overline{\mathbf X}=(1/n)\sum_{i=1}^n\mathbf X_i$ as the sample mean of the i.i.d.\ data $\mathcal{X}=\{\mathbf X_1,\ldots,\mathbf X_n\}$. We define the smooth function
\begin{equation}
A(\mathbf x)=\frac{f(\mathbf x)-f(\bm\mu)}{h(\bm\mu)}\label{A def}
\end{equation}
for a function $h:\mathbb R^d\to\mathbb R$, where $h(\bm\mu)^2$ is the asymptotic variance of $\sqrt n f(\overline{\mathbf X})$ and can be written in terms of $\bm\mu$ under some regularity conditions. We also define the studentized version of $A$ by
\begin{equation}
A_s(\mathbf x)=\frac{f(\mathbf x)-f(\bm\mu)}{h(\mathbf x)}\label{A def2}
\end{equation}
For nominal coverage level \(\alpha\in(0,1)\), define the ideal cutoff values
\begin{align}\label{eq popquantile}
u_\alpha^{\mathrm{up}}
\coloneqq
\inf\{\mathsf q\in \R:\PP(T\le \mathsf q)\ge \alpha\},\qquad  u_\alpha^{\mathrm{low}}
\coloneqq
\sup\{\mathsf q\in \R:\PP(T\ge \mathsf q)\ge \alpha\}
\end{align}
and
\begin{align}\label{eq popquantile 2side}
u_\alpha^{\mathrm{ts}}
\coloneqq
\inf\{\mathsf q\ge 0:\PP(|T|\le \mathsf q)\ge \alpha\}.
\end{align}
These are the population quantiles associated with the law of the original pivotal statistic \(T\) defined in \eqref{eq tpivotal}. Recall that \(T^*\) denotes a generic resampled pivotal statistic having the same conditional distribution as \(T_{b_1}^*\) in \eqref{eq resampletpivotal} given \(\mathcal X\). We then define the corresponding bootstrap cutoff values
\begin{align}\label{eq empquantile}
\hat u_\alpha^{\mathrm{up}}
&\coloneqq
\inf\{\mathsf q\in \R:\PP(T^*\le \mathsf q\mid \mathcal X)\ge \alpha\},\qquad
\hat u_\alpha^{\mathrm{low}}
\coloneqq
\sup\{\mathsf q\in \R:\PP(T^*\ge \mathsf q\mid \mathcal X)\ge \alpha\}
\end{align}
and
\begin{align}\label{eq empquantile 2side}
\hat u_\alpha^{\mathrm{ts}}
\coloneqq
\inf\{\mathsf q\ge 0:\PP(|T^*|\le \mathsf q\mid \mathcal X)\ge \alpha\}.
\end{align}
Thus \(\hat u_\alpha^{\mathrm{up}}\), \(\hat u_\alpha^{\mathrm{low}}\) and \(\hat u_\alpha^{\mathrm{ts}}\) are data-dependent bootstrap analogues of \(u_\alpha^{\mathrm{up}}\), \(u_\alpha^{\mathrm{low}}\) and \(u_\alpha^{\mathrm{ts}}\). Since they are defined through the conditional law given \(\mathcal X\), they are random variables measurable with respect to the \(\sigma\)-algebra generated by \(\mathcal X\). In implementation, the conditional quantiles are approximated by the empirical quantiles \(\{T_{b_1}^*\}_{\alpha}\) and \(\{|T_{b_1}^*|\}_{\alpha}\) computed from the Monte Carlo replicates \(\{T_{b_1}^*\}_{b_1=1}^{B_1}\) used in \eqref{eqa11}. In the idealized regime \(B_1\to\infty\), these empirical quantiles converge to the corresponding conditional quantiles defined above. 

We now state the standard assumptions commonly used in bootstrap analysis for the smooth function model; see Chapter 5 of \cite{hall2013bootstrap} and Appendix \ref{appendix a}.
\begin{assumption}\label{assume1}
Consider the function-of-mean model in which $\theta=f(\bm\mu)$ for some function $f:\mathbb R^d\to\mathbb R$, where $\bm\mu=\E[\mathbf X]$ for a $d$-dimensional random vector $\mathbf X$. Assume that $f$ and $h$ each have $\nu+3$ bounded derivatives in a neighborhood of $\bm\mu$ for some $\nu\geq 3$, that $\E\|\mathbf X\|^l<\infty$ for a sufficiently large positive number $l$, and that the characteristic function $\chi$ of $\mathbf X$ satisfies Cram\'er's condition $\limsup_{\|\mathbf t\|\to\infty}|\chi(\mathbf t)|<1.$
\end{assumption}

The next two theorems constitute the main results of this paper: SCB attains the same higher-order coverage accuracy as the classical studentized bootstrap, namely order $O(n^{-1})$ for one-sided intervals and order $O(n^{-2})$ for symmetric two-sided intervals. In fact, these results will describe the coverage error order uniformly in the confidence level $\alpha$. The proofs of these theorems are given in Appendix \ref{appendix D}.

\begin{theorem}[One-Sided Coverage Error of SCB]\label{theorem err 1side}
Under Assumption~\ref{assume1}, for any $\varepsilon\in(0,1/2)$ and any fixed $B\geq1$,
\begin{align*}
\sup_{\varepsilon\le \alpha\le 1-\varepsilon}
\left|
\PP(T\le \hat u_\alpha^{\mathrm{up}})-\alpha
\right|
=
O(n^{-1})
\end{align*}
and
\begin{align*}
\sup_{\varepsilon\le \alpha\le 1-\varepsilon}
\left|
\PP(T\ge \hat u_\alpha^{\mathrm{low}})-\alpha
\right|
=
O(n^{-1}).
\end{align*}
\end{theorem}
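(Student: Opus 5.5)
The plan follows the classical Hall-style route for the percentile-$t$ method, with the Gaussian base replaced by the Student $t_B$ law. Write $F_B$ and $f_B$ for the $t_B$ distribution function and density.

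\textbf{Step 1: Edgeworth expansion for $T$.} Conditionally on $\mathcal X$, the resample means entering $\theta_1^*,\dots,\theta_B^*$ are independent, and jointly with $\overline{\mathbf X}$ they form a sum-type vector of dimension $d(B+1)$. This holds once we represent each resample through multinomial weights, or equivalently through an auxiliary i.i.d. construction. In the function-of-mean model, $T=(\hat\theta-\theta)/\hat S$ is then a smooth function of these means, except on a negligible set where $\hat S$ is near zero. Since $B$ is fixed, that set has polynomially small probability because $\chi^2_B$ puts small mass near zero.

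Combining a delta-method expansion with the moment and Cram\'er conditions of Assumption~\ref{assume1}, I would establish
\[
\PP(T\le x)=F_B(x)+n^{-1/2}q_1(x)f_B(x)+n^{-1}q_2(x)f_B(x)+O(n^{-3/2}),
\]
uniformly in $x$. Here $q_1,q_2$ are the quasi-polynomials in $x$ and $(B+x^2)^{-1/2}$ announced in the introduction, and their coefficients depend smoothly on population moments. They are obtained by integrating the multivariate Gaussian Edgeworth density over the level sets of the map to $T$; the radial part gives the $\chi_B$ factor.

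Cram\'er's condition is needed for the bootstrap resamples as well. I expect this to be the main technical obstacle. Conditional on $\mathcal X$, $\theta_b^*$ is lattice-like, but this does not matter for the unconditional law of $T$, since the joint unconditional vector is a smooth function of sums of genuinely nonlattice variables after the resampling randomness is absorbed. For $T^*$ given $\mathcal X$, I would follow Hall (Ch.~5) and use the outer resample mean, which satisfies a conditional Cram\'er condition with probability $1-O(n^{-\lambda})$. The inner-layer $t_B$ structure then comes from a conditional version of the same computation.

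\textbf{Step 2: Bootstrap expansion.} The same argument applied to $T^*$ given $\mathcal X$ yields
\[
\PP(T^*\le x\mid\mathcal X)=F_B(x)+n^{-1/2}\hat q_1(x)f_B(x)+n^{-1}\hat q_2(x)f_B(x)+O_p(n^{-3/2}),
\]
on an event of probability $1-O(n^{-1})$. The hatted polynomials have sample moments in place of population moments, so $\hat q_1-q_1=O_p(n^{-1/2})$ with exponential-type tails by moment bounds.

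\textbf{Step 3: Cornish--Fisher inversion.} Because $f_B>0$ and is smooth, and $\alpha$ ranges over $[\varepsilon,1-\varepsilon]$, the quantile $t_{B,\alpha}$ stays in a compact set. Inverting gives
\[
u_\alpha^{\mathrm{up}}=t_{B,\alpha}+n^{-1/2}p_1(t_{B,\alpha})+O(n^{-1}),
\]
and $\hat u_\alpha^{\mathrm{up}}$ has the same form with $\hat p_1$. Hence
\[
\hat u_\alpha^{\mathrm{up}}=u_\alpha^{\mathrm{up}}+n^{-1/2}\Delta_\alpha+O_p(n^{-1}),
\]
where $\Delta_\alpha=n^{1/2}(\hat p_1-p_1)(t_{B,\alpha})$ is a linear function of centered sample means.

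\textbf{Step 4: Coverage.} By a delta-method argument,
\[
\PP(T\le\hat u_\alpha^{\mathrm{up}})=\PP\bigl(T-n^{-1/2}\Delta_\alpha\le u_\alpha^{\mathrm{up}}\bigr)+O(n^{-1}),
\]
with the tail events controlled by moment and large-deviation bounds. The perturbed statistic $T-n^{-1/2}\Delta_\alpha$ is again a smooth function of means, so it has an Edgeworth expansion of the same type. Its $n^{-1/2}$ term equals that of $T$ plus a contribution proportional to $\E[\Delta_\alpha\mid T=x]$. At leading order that conditional mean is $\E[\Delta_\alpha\,\mathcal N_0\,/\,\sqrt{\chi^2_B/B}]$-type. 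It vanishes at order $n^{-1/2}$ because $\Delta_\alpha$ is a centered linear statistic that is asymptotically jointly Gaussian with the numerator, and the resulting correction is of order $n^{-1}$.

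Therefore $\PP(T\le\hat u_\alpha^{\mathrm{up}})=\PP(T\le u_\alpha^{\mathrm{up}})+O(n^{-1})=\alpha+O(n^{-1})$. Uniformity in $\alpha$ follows because every remainder is uniform on the compact $t$-quantile range. The lower-sided statement follows by applying the argument to $-T$.
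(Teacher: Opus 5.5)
Your pipeline matches the paper's: $t_B$-based Edgeworth expansions for $T$ and for $T^*\mid\mathcal X$, Cornish--Fisher inversion on the compact range of $\Psi_B^{-1}(\alpha)$, then transfer of the cutoff error to coverage. However, Steps~3--4 contain a scaling error, and your key claim rests on it.

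\textbf{The scaling error.} With $\Delta_\alpha=n^{1/2}(\hat p_1-p_1)(t_{B,\alpha})=O_p(1)$, the cutoff difference is $\hat u_\alpha^{\mathrm{up}}-u_\alpha^{\mathrm{up}}=n^{-1/2}(\hat p_1-p_1)(t_{B,\alpha})+\cdots=n^{-1}\Delta_\alpha+\cdots$, not $n^{-1/2}\Delta_\alpha$.

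\textbf{The cancellation you invoke is false.} Under your scaling the perturbation would be of order $n^{-1/2}$, and nothing cancels it. To first order, $\Delta_\alpha$ is a linear combination of $n^{1/2}(\hat m_j-m_j)$ for moments up to order three, so it is generically correlated with $\sqrt n(\hat\theta-\theta)$. Suppose its joint Gaussian limit is $c\,\mathcal N_0+W$ with $W$ independent of $(\mathcal N_0,\dots,\mathcal N_B)$. Then the leading term of $\E[\Delta_\alpha\mid T=x]$ is $c\,x\,\E[(\tfrac1B\sum_{b=1}^B\mathcal N_b^2)^{1/2}\mid T=x]$, which is nonzero for $x\neq0$ whenever $c\neq0$. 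Being centered does not make a conditional mean vanish. A genuinely $O_p(n^{-1/2})$ cutoff shift correlated with $T$ would therefore produce an $n^{-1/2}$ coverage error. Your conclusion survives only because the true shift is $O_p(n^{-1})$. At that order no cancellation is needed, and none occurs: the $n^{-1}$ contribution is generically nonzero, which is exactly why one-sided accuracy stops at $O(n^{-1})$.

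\textbf{The repair, and what the paper does.} Keep the correct scaling and show directly that an $O_p(n^{-1})$ cutoff shift moves coverage by only $O(n^{-1})$.
\begin{itemize}
\item The paper uses the second-order approximants ($\nu=2$). Then $|u_{2,\alpha}^{\mathrm{up}}-u_\alpha^{\mathrm{up}}|$ is $O(n^{-3/2})$ uniformly in $\alpha$, and so is $|\hat u_{2,\alpha}^{\mathrm{up}}-\hat u_\alpha^{\mathrm{up}}|$ off an event of probability $O(n^{-\lambda})$ (Theorems~\ref{uniform cf expansion} and~\ref{uniform resample cf expansion}).
\item It bounds the effect of $\hat u_{2,\alpha}^{\mathrm{up}}-u_{2,\alpha}^{\mathrm{up}}=n^{-1/2}(\hat\xi_1^{\mathrm{up}}-\xi_1^{\mathrm{up}})+n^{-1}(\hat\xi_2^{\mathrm{up}}-\xi_2^{\mathrm{up}})$ by $L\,\E\sup_\alpha|\hat u_{2,\alpha}^{\mathrm{up}}-u_{2,\alpha}^{\mathrm{up}}|+O(n^{-3/2})=O(n^{-1})$, using the Lipschitz-type Lemma~\ref{lmHqq}.
\item It treats the atom term $|\PP(T\le u_\alpha^{\mathrm{up}})-\alpha|$ separately.
\end{itemize}

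If you prefer your perturbed-statistic route, it handles the dependence between $T$ and the random cutoff more carefully than a deterministic Lipschitz bound. You then only need the expansion of $T-n^{-1}\Delta_\alpha$ to order $n^{-1/2}$, where it coincides with that of $T$. The leftover remainder must be bounded by $Cn^{-1}$ off an event of probability $O(n^{-1})$; the good event from your Step~2 supplies this.

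\textbf{On Steps~1--2.} The paper does not use a joint $d(B+1)$-dimensional sum. It conditions on $\mathcal X$, replaces the law of each resample coordinate by its bootstrap Edgeworth signed measure on a good event, and integrates in closed form via $\chi^2$ densities. This sidesteps the joint Cram\'er/lattice obstacle you flag.
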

\begin{theorem}[Two-Sided Coverage Error of SCB]\label{theorem err 2side}
Under Assumption~\ref{assume1}, for any $\varepsilon\in(0,1/2)$ and fixed $B\geq 2$,
\begin{align*}
\sup_{\varepsilon\le \alpha\le 1-\varepsilon}
\left|
\PP(\abs{T}\le \hat u_\alpha^{\mathrm{ts}})-\alpha
\right|
=
O(n^{-2}).
\end{align*}
\end{theorem}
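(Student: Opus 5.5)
Our plan follows Hall's classical route for the symmetric percentile-$t$ interval, with the Gaussian base replaced by the Student-$t$ base with $B$ degrees of freedom.

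\textbf{Step 1: Edgeworth expansion of $T$ with a $t_B$ base.} We first reduce $T$ to a smooth function of means. Conditional on $\mathcal X$, each $\theta_k^*-\hat\theta$ is a smooth function of a resampled mean. Write the joint vector $(\overline{\mathbf X},\overline{\mathbf X}^*_1,\dots,\overline{\mathbf X}^*_B)$ as a mean of suitably coupled (multinomial-weighted) arrays. Applying the delta method to $A$ and $A_s$ then yields an expansion of $P(T\le x)$ of the form
\[
G_B(x)+\sum_{j=1}^{\nu} n^{-j/2}q_j(x)g_B(x)+O(n^{-(\nu+1)/2}),
\]
uniformly in $x$. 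Here $G_B$ and $g_B$ are the $t_B$ distribution function and density. The $q_j$ are the quasi-polynomials in $x^k/(B+x^2)^{\ell/2}$, and their coefficients are smooth functions of population moments.

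The route to this expansion is as follows. Condition on the direction of the Gaussian vector $(\mathcal N_1,\dots,\mathcal N_B)$, or equivalently integrate the standard Edgeworth expansion of the $(B+1)$-dimensional normalized vector over the region $\{\mathcal N_0\le x\|(\mathcal N_1,\dots,\mathcal N_B)\|/\sqrt B\}$. This region is a cone, so the polynomial-times-Gaussian corrections integrate to closed-form quasi-polynomial corrections. The Cram\'er condition is used to validate the multivariate expansion. The resampling part is lattice, so we need a smoothing argument in the spirit of Hall's Chapter 5 (or a cited result). The key structural fact is parity: $q_j$ is even in $x$ for odd $j$ and odd for even $j$, inherited from the parity of the multivariate Edgeworth polynomials under $\mathbf x\mapsto-\mathbf x$.

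\textbf{Step 2: bootstrap analogue.} The same expansion holds for $P(T^*\le x\mid\mathcal X)$, with coefficients $\hat q_j$ given by the same functions evaluated at sample moments. It holds on an event of probability $1-O(n^{-\lambda})$ for large $\lambda$, given the moment assumption. On that event $\hat q_j-q_j=O_p(n^{-1/2})$, with an explicit leading linear term in $\overline{\mathbf X}-\bm\mu$.

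\textbf{Step 3: Cornish--Fisher inversion.} For the two-sided case we use symmetry. By parity, $P(|T|\le x)=2G_B(x)-1+2n^{-1}q_2(x)g_B(x)+2n^{-2}q_4(x)g_B(x)+\cdots$, since the odd-$j$ terms cancel. Inverting gives
\[
\hat u^{\mathrm{ts}}_\alpha=\xi+n^{-1}\hat p_{2}(\xi)+n^{-2}\hat p_4(\xi)+\cdots,
\]
with $\xi=G_B^{-1}((1+\alpha)/2)$ and a parallel expansion for $u^{\mathrm{ts}}_\alpha$. Then $\hat u^{\mathrm{ts}}_\alpha-u^{\mathrm{ts}}_\alpha=n^{-1}(\hat p_2-p_2)(\xi)+O_p(n^{-2})=n^{-3/2}U_n+O_p(n^{-2})$, where $U_n=\sqrt n\,\mathbf c^\top(\overline{\mathbf X}-\bm\mu)$ is linear. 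The requirement $B\ge2$ presumably enters here, ensuring $g_B$ and the relevant moments of the $t_B$ remainders are integrable or differentiable enough for the quasi-polynomial bounds. We take $\varepsilon\le\alpha\le1-\varepsilon$ so that $\xi$ stays in a compact set and uniformity follows.

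\textbf{Step 4: coverage.} We write $P(|T|\le\hat u)=P(|T|\le u+n^{-3/2}U_n)+O(n^{-2})$, using the delta method with a negligible-remainder argument of the form $P(|R|>n^{-2}\log n)=o(n^{-2})$. This is Hall's ``$P(T\le x+n^{-1}\Delta)$'' lemma. Applying it to the event $\{T\mp n^{-3/2}U_n \text{ crosses } \pm u\}$ through a joint Edgeworth expansion of $(T,U_n)$, the correction is
\[
n^{-3/2}\bigl[\EE(U_n\mid T=u)-\EE(U_n\mid T=-u)\bigr]g_B(u)\cdot(\text{const}).
\]
The conditional mean of $U_n$ given $T=x$ is $n^{-1/2}$ times an even function of $x$ at leading order, again by parity. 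The two endpoints therefore cancel at order $n^{-3/2}$, and since the next term is of order $n^{-2}$ overall, this gives $O(n^{-2})$.

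The one-sided theorem follows in the same way, stopping one order earlier. There $\hat u-u=O_p(n^{-1})$ with leading term $n^{-1/2}(\hat q_1-q_1)=n^{-1}U_n$, and $\EE(U_n\mid T)=O(n^{-1/2})$ contributes only $O(n^{-1})$.

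\textbf{Main obstacle.} We expect the main obstacle to be Step 1, together with the parity claim in Step 4 for the $t$ base. Integrating the multivariate Edgeworth expansion over the cone must be shown to preserve the odd/even structure. We also need a rigorous expansion despite the lattice multinomial resampling. We would first try conditioning on $\mathcal X$ and applying a smooth-function-model expansion to the stacked array with a Cram\'er-type condition in the smoothed sense, following the cheap-bootstrap Edgeworth analysis of \cite{lam2022}.
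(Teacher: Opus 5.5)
Your outline follows Hall's route, but the step the theorem hinges on, removing the $n^{-3/2}$ term, is mis-argued in Step~4 and only asserted in Steps~1 and~3. Start with Step~4. Perturbing the cutoff to $u+\delta$ with $\delta=n^{-3/2}U_n$ moves both endpoints outward together, so to first order
\[
\PP(\abs{T}\le u+\delta)-\PP(\abs{T}\le u)\approx n^{-3/2}\bigl[\E(U_n\mid T=u)f_T(u)+\E(U_n\mid T=-u)f_T(-u)\bigr].
\]
This is a sum, not a difference. Also, $\E(U_n\mid T=x)$ is not $O(n^{-1/2})$. Both $U_n$ and $W_n=\sqrt n A_s(\bar{\mathbf X})$ are asymptotically linear in $\sqrt n(\bar{\mathbf X}-\bm\mu)$ and generically correlated, and the scale is asymptotically independent of them. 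So $\E(U_n\mid T=x)$ is $O(1)$ and odd in $x$. Put into your difference formula, that leaves a nonvanishing $n^{-3/2}$ term. The real mechanism is an odd conditional mean integrated against an even weight; your two misstatements just cancel each other. Separately, a band argument with $\PP(\abs{R}>n^{-2}\log n)=o(n^{-2})$ gives only $O(n^{-2}\log n)$, not $O(n^{-2})$.

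Steps~1 and~3 claim that $\PP(\abs{T}\le x)$ and $\PP(\abs{T^*}\le x\mid\mathcal X)$ have no $n^{-3/2}$ term; you need this to get the Cornish--Fisher remainders down to $O(n^{-2})$. The claim rests on parity of a joint unconditional Edgeworth expansion of $(\sqrt n(\hat\theta-\theta),\sqrt n(\theta_b^*-\hat\theta))_{b\le B}$, but that vector is not a smooth function of i.i.d.\ means. Your fallback, conditioning on $\mathcal X$ as in \cite{lam2022}, gives an $n^{-1}$ term with random coefficients $\hat a_{[2,k]}$. The piece $n^{-1}(\hat a_{[2,k]}-a_{[2,k]})\,g_{[B,k]}(W_n/\mathsf q)$ is $O_p(n^{-3/2})$, and parity of the Edgeworth polynomials says nothing about its mean.

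The missing ingredient is Proposition~\ref{theo:expect-cancellation-good}: for even $g\in C^2$ with bounded derivatives, $\E[g(W_n)(M_n-M)]=O(n^{-1})$. It is proved by a leave-one-out expansion whose leading term is $n^{-1/2}\E[g'(W_{n,-1})]\,\E[\varrho_M(\mathbf X_1)\varphi(\mathbf X_1)]$, and $\E[g'(W_{n,-1})]=O(n^{-1/2})$ because $g'$ is odd. The paper uses this result, and its resampled analogue, for both two-sided expansions. It uses it again for $I_3$, as follows:
\begin{itemize}
\item Since $\hat u^{\mathrm{ts}}_{3,\alpha}$ is $\mathcal X$-measurable, it conditions on $\mathcal X$ and writes $\PP(\abs{T}\le \hat u\mid\mathcal X)\approx J_1(W_n/\hat u)+n^{-1}J_2(W_n/\hat u)$.
\item It Taylor-expands in $\delta=\hat u^{\mathrm{ts}}_{3,\alpha}-u^{\mathrm{ts}}_{3,\alpha}$. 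The leading term is $n^{-1}C_Bu^{-1}g_B(W_n/u)(\hat\xi_2^{\mathrm{ts}}-\xi_2^{\mathrm{ts}})(\mathsf q)$ with $g_B(x)=\abs{x}^Be^{-Bx^2/2}$, and the second-order term is $O(\E\delta^2)=O(n^{-3})$.
\end{itemize}
This route needs no joint expansion of $(T,U_n)$ and no band argument. It is also exactly where $B\ge2$ enters, through $g_B,g_{[B,k]}\in C^2$, rather than through integrability.
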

Theorems~\ref{theorem err 1side} and \ref{theorem err 2side} are stated in terms of the conditional SCB pivotal quantiles. To connect these statements with the implemented SCB intervals, let
\(\mathcal I_{\infty}^{\mathrm{up}}\), \(\mathcal I_{\infty}^{\mathrm{low}}\), and \(\mathcal I_{\infty}^{\mathrm{ts}}\) denote the idealized limiting versions of
\(\mathcal I_{B_1}^{\mathrm{up}}\), \(\mathcal I_{B_1}^{\mathrm{low}}\), and \(\mathcal I_{B_1}^{\mathrm{ts}}\), respectively, as \(B_1\to\infty\). Then, by construction,
\begin{align*}
\{\theta\in\mathcal I_{\infty}^{\mathrm{up}}\}
=
\{T\le \hat u_\alpha^{\mathrm{up}}\},\qquad
\{\theta\in\mathcal I_{\infty}^{\mathrm{low}}\}
=
\{T\ge \hat u_\alpha^{\mathrm{low}}\},\qquad
\{\theta\in\mathcal I_{\infty}^{\mathrm{ts}}\}
=
\{\abs{T}\le \hat u_\alpha^{\mathrm{ts}}\}.
\end{align*}
Therefore, Theorems~\ref{theorem err 1side} and \ref{theorem err 2side} imply that, when the number of inner resamples $B$ is fixed as small as $1$ in the one-sided case and $2$ in the two-sided case, and with the idealized outer-resampling limit \(B_1\to\infty\), the implemented SCB intervals attain coverage error \(O(n^{-1})\) for one-sided intervals and \(O(n^{-2})\) for symmetric two-sided intervals. In other words, while the outer resamples need to be sizable, SCB attains higher-order coverage accuracies by using very few resamples in the inner layer, thus effectively reducing a nested Monte Carlo requirement in a ``bootstrap on a bootstrap" to a single layer of computation.

Compared with the basic and standard bootstraps, as well as other variants such as the so-called percentile bootstrap, SCB achieves higher-order coverage accuracy. The studentized bootstrap achieves the same accuracy order as SCB, namely $O(n^{-1})$ for one-sided CIs (and also two-sided equal-tailed, i.e., nonsymmetric, CIs), and $O(n^{-2})$ for two-sided symmetric CIs. However, it requires both $B_1\to\infty$ and $B_2\to\infty$, unless the standard error has an analytical form, and is therefore computationally much more expensive than SCB. On the other hand, SCB is expected to yield wider intervals because the scale estimators $S_{b_1}^*$ and $\hat S$ can vary substantially when $B$ is small, making extreme values more likely than in the studentized bootstrap. Nevertheless, the numerical results in Section~\ref{num res} show that this widening effect diminishes rapidly once \(B\) is increased to a modest value, such as \(B=5\). We also note that, compared with the basic bootstrap and other benchmarks that do not possess higher-order accuracy, the length of SCB intervals is typically larger, reflecting the correction of the undercoverage commonly observed in these benchmark bootstrap methods. This observation is also seen in our numerical experiments in Section~\ref{num res}.

We also compare the errors of SCB with two other lines of approaches discussed in the introduction. First, bootstrap iteration, which means iterative bootstrap schemes that recalibrate against the nominal coverage level $\alpha$, can also achieve higher-order accuracy; see Section 3.11 of \cite{hall2013bootstrap}. However, like the studentized bootstrap, it requires a computationally intensive nested resampling procedure. Second, we consider BCa which is a popular automated second-order correction method. In terms of coverage error, BCa intervals achieve order $O(n^{-1})$ under suitable regularity conditions. For one-sided intervals, this is on par with SCB. However, for two-sided intervals, the standard BCa construction is more analogous to an equal-tailed interval and consequently has coverage error of order $O(n^{-1})$ \cite{efron2003second,efron2020automatic}. To the best of our knowledge, no comparable $O(n^{-2})$ result is available for the standard BCa procedure. Note that both BCa and SCB are fully automatic and applicable to black-box statistics, but nonetheless they still require sufficient smoothness of the target statistic to perform well: SCB relies on stable estimation of $S^*$ and $\hat S$, while BCa relies on a stable jackknife estimate of the acceleration constant. Finally, in terms of computational cost, BCa requires $B_1+n$ statistic evaluations, where $n$ is the data size, while SCB requires $B_1B$, with $B_1$ sufficiently large. Thus, even with a small choice such as $B=5$, SCB is generally not cheaper than BCa. However, SCB is higher-order coverage accurate with error $O(n^{-2})$ for two-sided intervals, but BCa can only achieve $O(n^{-1})$. Indeed, in the two-sided case, SCB exploits symmetry to achieve a sharper coverage error order, while relying on a pivotal statistic that avoids expensive nested resampling effort. Our next section presents the roadmap on our main theoretical developments giving rise to these insights.

\subsection{Roadmap of the Coverage Error Proof}
We give a brief roadmap for the proof of the main coverage error results. The argument is most naturally understood backwards from the desired coverage statement. We discuss both the one-sided and symmetric two-sided cases, but use the two-sided notation as the representative case.

\medskip
\noindent
\emph{Step 1: Reduce the coverage error to a cutoff-comparison problem.}
For the symmetric two-sided case, we aim to prove
\(
|\PP(\abs{T}\le \hat u_\alpha^{\mathrm{ts}})-\alpha|
=O(n^{-2})
\),
uniformly in \(\alpha\). By the definition of \(u_\alpha^{\mathrm{ts}}\), the discretization error
\(|\PP(\abs{T}\le u_\alpha^{\mathrm{ts}})-\alpha|\)
is negligible because of the exponential decay under Cram\'{e}r's condition. Hence the main task is to show that
\[
\PP(\abs{T}\le \hat u_\alpha^{\mathrm{ts}})
-
\PP(\abs{T}\le u_\alpha^{\mathrm{ts}})
\]
is of the desired order. The one-sided case is reduced in the same way, with \(T\), \(u_\alpha^{\mathrm{up}}\), and \(\hat u_\alpha^{\mathrm{up}}\) replacing \(\abs{T}\), \(u_\alpha^{\mathrm{ts}}\), and \(\hat u_\alpha^{\mathrm{ts}}\).

\medskip
\noindent
\emph{Step 2: Insert Cornish--Fisher approximants as a bridge.}
Since the exact cutoff comparison in Step 1 is difficult to analyze directly, we insert the finite-order Cornish--Fisher approximants \(\hat u_{\nu,\alpha}^{\mathrm{ts}}\) and \(u_{\nu,\alpha}^{\mathrm{ts}}\), and bound it by
\begin{align*}
   \abs{ \PP(\abs{T}\le \hat u_\alpha^{\mathrm{ts}})
-
\PP(\abs{T}\le u_\alpha^{\mathrm{ts}})}&\leq\abs{\PP(\abs{T}\le \hat u_\alpha^{\mathrm{ts}})-\PP(\abs{T}\le \hat u_{\nu,\alpha}^{\mathrm{ts}})}\\
+&\abs{\PP(\abs{T}\le \hat u_{\nu,\alpha}^{\mathrm{ts}})-\PP(\abs{T}\le u_{\nu,\alpha}^{\mathrm{ts}})}+\abs{\PP(\abs{T}\le u_{\nu,\alpha}^{\mathrm{ts}})-\PP(\abs{T}\le u_\alpha^{\mathrm{ts}})}.
\end{align*}
The first and third terms are cutoff-approximation errors, while the middle term compares the two finite-order approximants. The one-sided proof follows the same structure, with \(T\), \(u_\alpha^{\mathrm{up}}\), \(\hat u_\alpha^{\mathrm{up}}\), \(u_{\nu,\alpha}^{\mathrm{up}}\), and \(\hat u_{\nu,\alpha}^{\mathrm{up}}\) replacing their two-sided counterparts.

This decomposition shifts the problem from comparing two exact probabilities to controlling finite-order cutoff approximations. Consequently, for the Student's \(t\) distribution, we need a uniform Cornish--Fisher expansion theory that includes both the population expansion and the resampled expansion with sample-based Edgeworth coefficients. The Cornish--Fisher expansions needed for the coverage proof are \(\nu=2\) in the one-sided case and \(\nu=3\) in the symmetric two-sided case:
\begin{align*}
u_{2,\alpha}^{\mathrm{up}}
&=
\mathsf q
+
n^{-1/2}\xi_1^{\mathrm{up}}(\mathsf q)
+
n^{-1}\xi_2^{\mathrm{up}}(\mathsf q),
\qquad
\mathsf q=\Psi_B^{-1}(\alpha),
\\
u_{3,\alpha}^{\mathrm{ts}}
&=
\mathsf q
+
n^{-1}\xi_2^{\mathrm{ts}}(\mathsf q),
\qquad
\mathsf q=\Psi_B^{-1}\left(\frac{1+\alpha}{2}\right).
\end{align*}
The functions \(\xi\), explicitly identified in \eqref{eq12}, have simple closed-form expressions, yet are not polynomials. This identification, together with its resampled analogue obtained by replacing population moments with sample moments, forms our Student's \(t\)-based Cornish--Fisher expansion theory, summarized in Theorems~\ref{uniform cf expansion} and \ref{uniform resample cf expansion}. 

\medskip
\noindent
\emph{Step 3: Build the Cornish--Fisher theory from uniform Edgeworth expansions.}
The Cornish--Fisher approximants in Step 2 are obtained by inverting the corresponding Edgeworth expansions. Therefore, establishing uniform Edgeworth expansions around the Student's \(t\) limit is a necessary preliminary step for deriving the cutoff expansions. In fact, we prove
\begin{align}
\PP(T\le \mathsf q)
&=
\Psi_B(\mathsf q)
+
n^{-1/2}\zeta_1^{\mathrm{up}}(\mathsf q)
+
n^{-1}\zeta_2^{\mathrm{up}}(\mathsf q)
+
O(n^{-3/2}),\notag
\\
\PP(\abs{T}\le \mathsf q)
&=
2\Psi_B(\mathsf q)-1
+
n^{-1}\zeta_2^{\mathrm{ts}}(\mathsf q)
+
O(n^{-2}),\label{eq no15}
\end{align}
where the correction terms \(\zeta_1^{\mathrm{up}}\), \(\zeta_2^{\mathrm{up}}\), and \(\zeta_2^{\mathrm{ts}}\) are given in closed form in \eqref{eq zeta2} and \eqref{eq zeta2 upper}. The same type of expansion is also needed for the resampled statistic \(T^*|\mathcal X\). The corresponding population and resampled results are Theorems~\ref{theorem uniform version} and \ref{theorem resample uniform version}. 

\medskip
\noindent
\emph{Step 4: Establish the cancellation in the symmetric two-sided case.}
For the symmetric two-sided case, a subtle issue is to justify the absence of the \(O(n^{-3/2})\) term in \eqref{eq no15}. This cancellation is not merely a formal consequence of using a symmetric interval. It requires a joint treatment of the pivotal statistic and the sample-based estimators of the Edgeworth coefficients. This is the indispensable step behind the sharper \(O(n^{-2})\) coverage error. A key technical ingredient for this analysis is Proposition~\ref{theo:expect-cancellation-good}, which establishes the required expectation-level cancellation.

\medskip

We next discuss the technical novelty of our proof. At a high level, our argument follows the guiding idea as the pioneering work \cite{hall1988symmetric}. Both analyses view $\hat u_\alpha^{\mathrm{ts}}$ as a perturbation of $u_\alpha^{\mathrm{ts}}$ and study the error expansion of
\begin{align*}
    \PP\left(\abs{T}\le u_\alpha^{\mathrm{ts}}+\Delta\right),
\end{align*}
where $\Delta\coloneqq \hat u_\alpha^{\mathrm{ts}}-u_\alpha^{\mathrm{ts}}$ is further approximated through finite-term Cornish--Fisher expansions. In both studies, the main technical work is to show that the $n^{-3/2}$ term is absent from this expansion. In \cite{hall1988symmetric}, this absence follows from the oddness of the function $r_1(x)$ in Eq.~(3.2). In our setting, the analogous cancellation is established in Proposition~\ref{theo:expect-cancellation-good}.

However, our proof method is quite different from the classical counterpart. Most significantly, since we use a Student's $t$ pivotal rather than a Gaussian one, the resulting correction terms are no longer polynomials. Instead, they are finite linear combinations of Student's $t$-type algebraic functions of the form
\begin{align*}
    \frac{x^k}{(B+x^2)^{\ell/2}},
\end{align*}
where $k$ and $\ell$ are positive integers and $B$ is the number of inner resamples. We call these functions \emph{quasi-polynomial} analogues of the classical Edgeworth polynomials. Deriving these explicit quasi-polynomial correction functions and the corresponding Cornish--Fisher expansions is one of the main technical novelties of our proof.

Moreover, to prove the absence of the $n^{-3/2}$ term, \cite{hall1988symmetric} analyzes the higher-order cumulants of the perturbed pivots $T\pm\Delta$ and then derives the corresponding perturbed Edgeworth expansion through characteristic functions and Fourier--Stieltjes inversion. This argument relies on the classical Edgeworth expansion theory for Gaussian pivotal statistics \cite{BhattacharyaGhosh1978,bhattacharya2010normal}. By contrast, our SCB analysis works in a fixed-inner-resampling regime, where the quantile perturbation must be analyzed through a high-dimensional, specifically $B$-dimensional, integral. The randomness of the perturbation affects not only the coefficients of the correction functions, but also the region of integration, and therefore requires a joint analysis of both effects. This leads to the new cancellation result in Proposition~\ref{theo:expect-cancellation-good}, together with several supporting lemmas.

Finally, in the large-$B$ limit, our expansions are consistent with the classical Gaussian-pivotal picture: $\Psi_B$ converges to $\Phi$, and the correction functions $\zeta$ and $\eta$ converge to their classical Edgeworth and Cornish--Fisher counterparts, respectively. In this sense, our fixed-$B$ framework can be viewed as a finite-inner-resampling analogue of Hall's Gaussian-pivotal analysis, with the classical picture emerging heuristically as $B$ becomes large.

\section{Developments of Main Results}\label{sec4 mainroad}
We now develop the main technical ingredients for proving the coverage error results of SCB. First, we establish uniform Edgeworth expansions for the Student's \(t\)-type pivotal statistics \(T\) and \(T^*\). Next, we derive the corresponding Cornish--Fisher expansions for their quantiles. These two ingredients allow us to compare the ideal cutoff values with their bootstrap counterparts and to control the resulting quantile errors uniformly in \(\alpha\). We then use these quantile approximations to prove the one-sided coverage result. For the symmetric two-sided case, the same strategy forms the basis of the argument, together with additional cancellation coming from the symmetry structure of the expansion.

\subsection{Uniform Edgeworth Expansion for t-Distributions}
In this and the next subsection, we present the main technical ingredients that support our major Theorems \ref{theorem err 1side} and \ref{theorem err 2side}. Throughout the paper, $\Phi$ and $\phi$ denote the cdf and pdf of the standard normal distribution, respectively, while $\Psi_B$ and $\psi_B$ denote the cdf and pdf of the Student's $t$ distribution with $B$ degrees of freedom, respectively. 

Under suitable regularity conditions (see, e.g., \cite{hall2013bootstrap}), or more simply under Assumption~\ref{assume1}, both $\sqrt{n}A(\bar{\mathbf X})$ and $\sqrt{n}A_s(\bar{\mathbf X})$ admit Edgeworth expansions of the form
\[
\mathbb{P}\bigl(\sqrt{n}\tilde A(\bar{\mathbf X}) \le \mathsf q\bigr)
=
\Phi(\mathsf q)
+
\sum_{k=1}^{\nu} n^{-k/2}\pi_k(\mathsf q)\phi(\mathsf q)
+
O(n^{-(\nu+1)/2}),
\]
uniformly in $\mathsf q\in\mathbb{R}$, where each $\pi_k$ is an Edgeworth polynomial. Here $\Phi$ and $\phi$ denote the cumulative distribution function and probability density function of the standard normal distribution, respectively. By convention, we write $\pi_k=p_k$ when $\tilde A=A$, and $\pi_k=q_k$ when $\tilde A=A_s$. Moreover, it is known that $q_1$, $p_2$, and $q_2$ admit the representations
\begin{align}\label{Eq edgepoly}
    q_1(x) = \sum_{k=0,2} b_{[1,k]} x^k,\qquad
    p_2(x) = \sum_{k=1,3,5} a_{[2,k]} x^k,\qquad
    q_2(x) = \sum_{k=1,3,5} b_{[2,k]} x^k.
\end{align}
Equation~\eqref{Eq edgepoly} serves as the definition of the coefficients $b_{[1,k]}$, $b_{[2,k]}$, and $a_{[2,k]}$. The coefficients $a_{[2,k]}$ and $b_{[2,k]}$ are polynomial functions of moments and mixed moments up to order $4$, associated with the smooth function models $A(\mathbf X)$ and $A_s(\mathbf X)$, respectively. Similarly, the coefficients $b_{[1,k]}$ are polynomial functions of moments and mixed moments up to order $3$ associated with $A_s(\mathbf X)$. We state these coefficients explicitly because they also appear in the quasi-polynomial terms, namely the $\zeta_k$ functions in Theorem~\ref{theorem uniform version}, arising in the Edgeworth expansions for the Student's $t$-type statistic.

Precise statements of this Edgeworth expansion result, together with its bootstrap and resampled counterparts, are given in Appendix~\ref{appendix a}. Although these three theorems, adapted from \cite{hall2013bootstrap}, do not appear explicitly in the statements of our main results, they play a foundational role in the proofs and will be invoked repeatedly throughout the appendix.

After reviewing the usual Edgeworth expansion for Gaussian-limit pivotal statistics, we now present the corresponding expansion for Student's $t$-type statistics in Theorem~\ref{theorem uniform version}. The detailed proof for the symmetric two-sided case is given in Appendix~\ref{seca1}, while that for the one-sided case is given in Appendix~\ref{sec theorem1B2}.

\begin{theorem}[Uniform Higher-order Coverage Errors for Cheap Bootstrap]\label{theorem uniform version}
Under Assumption~\ref{assume1},
\begin{itemize}
    \item When $B\geq 2$, for any $\varepsilon_0>0$, there exists a constant $\mathsf C_{\mathrm{ts}}$ such that
\begin{align}\label{Eq cheap boots uniform convergence}
    \sup_{\mathsf{q}\geq \varepsilon_0}\abs{\mathbb P(\abs{T}\leq \mathsf{q}) - 2\Psi_B(\mathsf{q})+1-\frac{\zeta^\mathrm{ts}_2(\mathsf{q})}{n} }\leq \frac{1}{n^2}\times \mathsf C_{\mathrm{ts}}.
\end{align}
where 
\begin{align}
\zeta_2^\mathrm{ts}(\mathsf q)
\coloneqq
\sum_{k\in\{1,3,5\}}
\Biggl[
&
b_{[2,k]}\,
\frac{
2^{(k+1)/2}B^{B/2}\Gamma\!\left(\frac{B+k}{2}\right)
}{
\sqrt{\pi}\Gamma(B/2)
}
\cdot
\frac{
\mathsf q^k
}{
(B+\mathsf q^2)^{(B+k)/2}
}
\notag\\
&\quad
-
a_{[2,k]}\,
\frac{
2^{(k+1)/2}B^{(B+k+1)/2}
\Gamma\!\left(\frac{k}{2}+1\right)
\Gamma\!\left(\frac{B+k}{2}\right)
}{
\pi\Gamma\!\left(\frac{B+k+1}{2}\right)
}
\cdot
\frac{
\mathsf q
}{
(B+\mathsf q^2)^{(B+k)/2}
}
\Biggr].
\label{eq zeta2}
\end{align}

\item When $B\geq 1$, there exists a constant $\mathsf C_{\mathrm{up}}$ such that 
\begin{align}
    \sup_{\mathsf{q}\in\mathbb{R}}\abs{\mathbb P(T\leq \mathsf{q})-\Psi_B(\mathsf{q})-\frac{\zeta_{1}^{\mathrm{up}}(\mathsf{q})}{\sqrt n}-\frac{\zeta_{2}^\mathrm{up}(\mathsf{q})}{n}}\leq \frac{1}{n^{3/2}}\times \mathsf C_{\mathrm{up}},
\end{align}
where
\begin{equation}\label{eq zeta2 upper}
\begin{split}
\zeta_{1}^{\mathrm{up}}(\mathsf{q})
&\coloneqq
\sum_{k=0,2}b_{[1,k]}
\frac{
B^{B/2+k}
}{
\sqrt{2\pi}
}
\cdot
\frac{
\mathsf{q}^k
}{
(B+\mathsf{q}^2)^{B/2+k}
},
\\
\zeta_{2}^\mathrm{up}(\mathsf{q})
&\coloneqq
\frac{1}{2}\times \zeta_{2}^\mathrm{ts}(\mathsf{q}).
\end{split}
\end{equation}
\end{itemize}
\end{theorem}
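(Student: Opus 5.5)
The plan is to condition on the $B$ resample perturbations and integrate the classical Gaussian-pivotal Edgeworth expansions against their joint law. Write $W_0=\sqrt n(\hat\theta-\theta)/h(\bm\mu)$ and $W_b=\sqrt n(\theta_b^*-\hat\theta)/h(\bm\mu)$ for $b=1,\dots,B$. Then
\[
T=\frac{W_0}{\sqrt{B^{-1}\sum_{b}W_b^2}},
\qquad
\{T\le\mathsf q\}=\Bigl\{W_0\le \mathsf q\,\sqrt{B^{-1}\textstyle\sum_b W_b^2}\Bigr\}.
\]
The key observation is that $S$ depends on $\mathcal X$ only through the empirical distribution, and $\hat S/\hat\sigma$ is, to the required order, close to a quantity built from $B$ conditionally i.i.d.\ bootstrap replicates. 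So I would condition on $\mathcal X$ and write
\[
\PP(T\le \mathsf q)=\EE\Bigl[\PP\Bigl(\sqrt n\,A_s(\bar{\mathbf X})\le \mathsf q\,\frac{\hat S}{\hat\sigma}\Bigr)\Bigr],
\]
where $\hat\sigma=h(\bar{\mathbf X})/\sqrt n$.

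\textbf{Step 1: isolate the chi-type ratio.} Set $R=\hat S/\hat\sigma$. Given $\mathcal X$, the normalized replicates $\sqrt n(\theta_b^*-\hat\theta)/h(\bar{\mathbf X})$ are i.i.d.\ and admit the bootstrap Edgeworth expansion recalled in Appendix~\ref{appendix a}. Its polynomials $\hat p_k$ have sample-moment coefficients, which differ from the population ones by $O_p(n^{-1/2})$. I would write the joint law of $(\sqrt n A_s(\bar{\mathbf X}),\,$replicates$)$ through a $(B+1)$-dimensional expansion. The cleanest route is to take the expansion of $\sqrt nA_s(\bar{\mathbf X})$ (polynomials $q_k$) together with the conditional expansion of each replicate (polynomials $p_k$, to leading order). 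The dependence between $R$ and $\sqrt nA_s$ enters only through the random coefficients of $\hat p_k$ and through $h(\bar{\mathbf X})$. Integrating,
\[
\PP(T\le\mathsf q)=\int \Bigl[\Phi(\mathsf q r)+\sum_k n^{-k/2}q_k(\mathsf q r)\phi(\mathsf q r)\Bigr]\,dF_R(r)+\text{cross terms},
\]
where the law of $R^2$ is a $\chi^2_B/B$ law corrected by the product Edgeworth density
\[
\prod_{b}\Bigl[\phi(w_b)\bigl(1+n^{-1/2}\hat p_1\text{-terms}+n^{-1}\hat p_2\text{-terms}\bigr)\Bigr].
\]

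\textbf{Step 2: compute the integrals in closed form.} The leading term gives $\EE\,\Phi(\mathsf qR_0)=\Psi_B(\mathsf q)$ with $R_0^2\sim\chi^2_B/B$. The $q_k$ terms become integrals of the form
\[
\int_0^\infty r^{k}\,\phi(\mathsf q r)\, r^{B-1}e^{-Br^2/2}\,dr,
\]
which evaluate via the Gamma function to $\mathsf q^k(B+\mathsf q^2)^{-(B+k)/2}$ times the constants shown in \eqref{eq zeta2}. These produce the $b_{[\cdot,k]}$ parts of $\zeta_1^{\mathrm{up}}$ and $\zeta_2^{\mathrm{ts}}$. The $p_2$ corrections in the density of the $W_b$ change the law of $\sum_bW_b^2$. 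Integrating $\Phi(\mathsf q r)$ against these corrections, by rotating to polar coordinates and using $\int_{S^{B-1}}w_1^{j}\,d\sigma$, gives terms in $\mathsf q(B+\mathsf q^2)^{-(B+k)/2}$ with the $\Gamma(k/2+1)/\Gamma((B+k+1)/2)$ constants; these are the $a_{[2,k]}$ parts. The $n^{-1/2}$ correction $p_1$ in the replicates is even, so its contribution to the radial law integrates with an odd spherical moment and vanishes. I would still verify this explicitly.

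\textbf{Step 3: two-sided parity and the $n^{-3/2}$ cancellation.} For $\PP(|T|\le\mathsf q)=\PP(T\le\mathsf q)-\PP(T<-\mathsf q)$, the terms of odd order $n^{-1/2}$ and $n^{-3/2}$ are even in $\mathsf q$ and cancel by the standard parity of $q_1,q_3,p_1,p_3$. Consequently, carrying the expansion to order $\nu=3$ leaves an $O(n^{-2})$ remainder, and the $n^{-1}$ terms double, which gives $\zeta_2^{\mathrm{up}}=\zeta_2^{\mathrm{ts}}/2$. The restriction $B\ge2$ and $\mathsf q\ge\varepsilon_0$ is needed to make the radial integrals and the $r\to0$ region controllable: with $B=1$ the density of $R$ near zero is not small enough for the higher derivative terms.

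\textbf{Main obstacle.} I expect the hardest step to be justifying Step 1 rigorously: establishing a valid joint Edgeworth expansion for the vector $(\sqrt nA_s(\bar{\mathbf X}),W_1,\dots,W_B)$ with uniform remainder $O(n^{-2})$, while accounting for the cross-dependence through the sample-moment coefficients. I would first try to treat everything as a smooth function of the mean of the $(B+1)d$-dimensional lattice-free vector formed by the original and multinomially reweighted means. Conditionally, though, Cramér's condition fails for the multinomial weights, so I would instead expand conditionally with the bootstrap Edgeworth theorem (valid with probability $1-O(n^{-\lambda})$). I would then take expectations, Taylor-expanding the random coefficients around the population moments, and bound the odd-order cross moments $\EE[(\hat a-a)\cdot g(\sqrt nA_s)]$ by their $O(n^{-1})$ size.
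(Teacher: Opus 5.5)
Your Steps 1--2 follow essentially the paper's route: condition on $\mathcal X$, replace the law of each replicate by its bootstrap Edgeworth expansion on a high-probability event, evaluate the $\chi^2$-type integrals in closed form, and remove the $\hat p_1$ contributions by parity. This suffices for the one-sided $O(n^{-3/2})$ claim, because there the random-coefficient error only needs $n^{-1}\mathbb E\abs{\hat a_{[2,k]}-a_{[2,k]}}=O(n^{-3/2})$.

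The gap is Step~3. After integrating out the replicates, the two-sided $n^{-1}$ term is $-n^{-1}\sum_{k=1,3,5}c_{B,k}\,\hat a_{[2,k]}\,g_{[B,k]}(W_n/\mathsf q)$. Here $W_n=\sqrt n A_s(\bar{\mathbf X})$, the $c_{B,k}$ are constants, and $g_{[B,k]}(x)=\abs{x}^{B-1+k}e^{-Bx^2/2}$. Parity of $p_1,p_3$ removes the $\hat p_1,\hat p_3$ contributions identically. Parity of $q_1,q_3$ removes the odd-order terms from the expectation of the leading term (an even function of $W_n$ alone) and from the population part $a_{[2,k]}\,\mathbb E[g_{[B,k]}(W_n/\mathsf q)]$.

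Parity says nothing, however, about $n^{-1}\mathbb E[(\hat a_{[2,k]}-a_{[2,k]})\,g_{[B,k]}(W_n/\mathsf q)]$. This is a joint moment of the pivot and the sample moments, is not governed by any marginal Edgeworth polynomial, and Cauchy--Schwarz gives only $O(n^{-3/2})$. Your closing claim that such cross moments are $O(n^{-1})$ ``by their size'' is false in general. Indeed, $\sqrt n(M_n-M)$ and $W_n$ are asymptotically jointly Gaussian with covariance $\rho=\mathbb E[\varrho_M(\mathbf X_1)\varphi(\mathbf X_1)]$, so $\sqrt n\,\mathbb E[(M_n-M)h(W_n)]\to\rho\,\mathbb E[h'(Z)]$. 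This limit is in general nonzero for the weight $h(x)=g_{[B,k]}(x/\mathsf q)\ind\{x>0\}$ that actually arises in $\mathbb P(T\le\mathsf q)$, and for $h(x)=x$ (cf.\ Example~\ref{exeven}).

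For the same reason, deriving the two-sided result from a one-sided expansion ``carried to $\nu=3$'' fails as written. That expansion has an $n^{-3/2}$ coefficient built from this cross term, which you have neither computed nor shown to be even in $\mathsf q$. An $O(n^{-2})$ one-sided remainder would itself require identifying that coefficient.

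The missing idea is that in the symmetric two-sided case the weight is \emph{even} in $W_n$, and evenness forces the cancellation. This is the paper's Proposition~\ref{theo:expect-cancellation-good}: for even $g\in C^2(\mathbb R)$ with $g,g',g''$ bounded and $g''\in L^1$, $\mathbb E[g(W_n)(M_n-M)]=O(n^{-1})$. Its proof runs as follows.
\begin{enumerate}
\item Linearize $M_n-M=n^{-1}\sum_i\varrho_M(\mathbf X_i)+R$ with $\mathbb E\abs{R}=O(n^{-1})$.
\item Reduce by exchangeability to $\mathbb E[g(W_n)\varrho_M(\mathbf X_1)]$.
\item Subtract the leave-one-out pivot $W_{n,-1}$, which is independent of $\mathbf X_1$, so $\mathbb E[g(W_{n,-1})\varrho_M(\mathbf X_1)]=0$.
\item Taylor-expand $g(W_n)-g(W_{n,-1})$ to second order. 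What remains is $n^{-1/2}\,\mathbb E[g'(W_{n,-1})]\,\rho$, which is $O(n^{-1})$ because $g'$ is odd, so $\mathbb E[g'(W_{n,-1})]=O(n^{-1/2})$ by a first-order Edgeworth expansion.
\end{enumerate}

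This is also where the hypotheses genuinely enter, rather than through the density of $R$ near zero. $B\ge2$ is needed because $g_{[1,1]}(x)=\abs{x}e^{-x^2/2}$ has a kink at $0$ and is not $C^2$. $\mathsf q\ge\varepsilon_0$ is needed because the $C^2$ norms of $x\mapsto g_{[B,k]}(x/\mathsf q)$ blow up as $\mathsf q\to0$. A joint Edgeworth expansion of the pivot together with the order-four sample moments would also give the cancellation, since $\mathbb E[h'(Z)]=0$ for even $h$. But it needs extra regularity (a Cram\'er condition on the augmented moment vector) beyond Assumption~\ref{assume1} as stated, which the paper's leave-one-out argument avoids.
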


Theorem~\ref{theorem uniform version} has three notable features. First, the remainder bounds hold uniformly in the argument $\mathsf q$: over all $\mathsf q\in\mathbb R$ in the one-sided case, and over all $\mathsf q\ge \varepsilon_0$ in the symmetric two-sided case. This uniformity is essential for the coverage error analysis in Theorems~\ref{theorem err 1side} and \ref{theorem err 2side}, since those arguments require control of the distributional approximation at random or varying quantile locations rather than at a single fixed $\mathsf q$.

Second, in the symmetric two-sided case, the theorem gives the sharper approximation error of order $O(n^{-2})$. This order should be distinguished from the more standard $O(n^{-3/2})$ bound, which can be obtained by a comparatively direct higher-order expansion argument and does not require excluding the case $B=1$. The improvement from $O(n^{-3/2})$ to $O(n^{-2})$ is not a formal consequence of symmetrization alone. Rather, it relies on the fact that the entire $O(n^{-3/2})$ contribution cancels in the symmetric two-sided expansion. Establishing this cancellation is the technically delicate part of the analysis, but it is indispensable for the $O(n^{-2})$ coverage-error result. It requires treating jointly the pivotal statistic $\sqrt n\,\tilde A(\bar{\mathbf X})$ and the sample-based estimators of the Edgeworth coefficients, namely $\hat a_{[2,k]}$ and $\hat b_{[2,k]}$ for $k=1,3,5$. The key technical ingredient is the following expectation-level cancellation result, whose proof is given in Appendix~\ref{secprop1}.

\begin{proposition}[Expectation Cancellation of Monomial Moment Errors under an Even Weight]\label{theo:expect-cancellation-good}
Under Assumption~\ref{assume1}, with the moment order chosen sufficiently large, let $W_n=\sqrt n \tilde A(\bar{\mathbf X}_n)$ be the pivotal statistic. Let \(M\) be a given monomial in the marginal and mixed moments of \(\mathbf X_1\), and let \(M_n\) be its sample plug-in estimator. The precise definition of \(M\) and \(M_n\) is given in Section~\ref{secProp1lemma1}, in particular~\eqref{eq MMM}. Let \(g\in C^2(\mathbb R)\) be an \emph{even} function such that \(g''\in L^1(\mathbb R)\). Suppose further that \(g\), \(g'\), and \(g''\) are uniformly bounded over \(\mathbb R\). Then
\begin{align}\label{eq cancel baby version}
\mathbb E\left[g(W_n)(M_n-M)\right]=O\left(\frac{1}{n}\right).
\end{align}
\end{proposition}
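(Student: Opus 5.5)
We prove $\mathbb E[g(W_n)(M_n-M)]=O(n^{-1})$ by linearizing $M_n-M$ and then using parity of an Edgeworth expansion for the joint law of $W_n$ and a sample mean. The structural reason is that $M_n-M$ is, to first order, a centered sample mean, hence of size $n^{-1/2}$. Its covariance with an even function of an asymptotically normal statistic vanishes at order $n^{-1/2}$: the leading Gaussian term is odd in the joint variable, and the next correction enters only at relative order $n^{-1/2}$.

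\emph{Step 1 (linearization).} Write $M=\prod_j m_j$, a product of finitely many population moments $m_j=\mathbb E[\psi_j(\mathbf X)]$ with polynomial $\psi_j$. Its plug-in estimator is $M_n=\prod_j \bar\psi_{j}$, where $\bar\psi_j=n^{-1}\sum_i\psi_j(\mathbf X_i)$. Expanding the product gives
\[
M_n-M=\sum_j \Bigl(\prod_{l\neq j}m_l\Bigr)(\bar\psi_j-m_j)+R_n .
\]
The remainder $R_n$ consists of products of at least two centered means. With enough moments, $\mathbb E|R_n|=O(n^{-1})$, and since $g$ is bounded, $|\mathbb E[g(W_n)R_n]|\le \|g\|_\infty \mathbb E|R_n|=O(n^{-1})$. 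It therefore suffices to show $\mathbb E[g(W_n)\bar Y]=O(n^{-1})$, where $\bar Y=n^{-1}\sum_i Y_i$ and $Y_i=\psi(\mathbf X_i)-\mathbb E\psi(\mathbf X_i)$ is a mean-zero scalar. Equivalently, set $V_n=\sqrt n\,\bar Y$; we must show $\mathbb E[g(W_n)V_n]=O(n^{-1/2})$.

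\emph{Step 2 (joint expansion).} Augment $\mathbf X$ to $\mathbf Z=(\mathbf X,\psi(\mathbf X))$. Then $(W_n,V_n)$ is $\sqrt n$ times a smooth function of $\bar{\mathbf Z}$, evaluated relative to its mean. I would invoke the multivariate Edgeworth expansion of Bhattacharya--Ghosh for smooth functions of means, which holds under Cram\'er's condition for $\mathbf Z$ and sufficient moments, as in the appendix versions adapted from Hall. It gives
\[
P\bigl((W_n,V_n)\in\cdot\bigr)=\int_{\cdot}\phi_\Sigma(w,v)\bigl[1+n^{-1/2}\pi_1(w,v)\bigr]\,dw\,dv+O(n^{-1})
\]
uniformly over a class large enough to include the test function $g(w)v$. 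Here $\phi_\Sigma$ is a centered bivariate normal density and $\pi_1$ is an \emph{odd} polynomial.

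Because $v$ is unbounded, I would truncate at $|V_n|\le n^{\delta}$. Outside the truncation, moment bounds make the contribution $O(n^{-1})$. Inside, I would use the bound "expansion error times polynomial weight" allowed by the moment assumptions. The smoothness hypotheses on $g$ enter here. With $g,g',g''$ bounded and $g''\in L^1$, one can integrate by parts in $w$, so the error term is controlled through smooth-function expansions rather than indicator sets. Alternatively, one can write $g$ via its Fourier transform, which decays like $|t|^{-2}$, and use the characteristic-function form of the expansion.

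\emph{Step 3 (parity).} The leading term is $\int g(w)\,v\,\phi_\Sigma(w,v)\,dw\,dv$. The integrand is odd under $(w,v)\mapsto(-w,-v)$, since $g$ is even and $\phi_\Sigma$ is even, so this term is $0$. The $n^{-1/2}$ term is $\int g(w)\,v\,\pi_1(w,v)\phi_\Sigma\,dw\,dv$, which is generally nonzero and contributes exactly $O(n^{-1/2})$. The remainder is $O(n^{-1})\cdot\mathbb E$-weighted by polynomial factors. Altogether $\mathbb E[g(W_n)V_n]=O(n^{-1/2})$, and dividing by $\sqrt n$ gives \eqref{eq cancel baby version}.

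The main obstacle is Step 2: justifying the expansion of $\mathbb E[g(W_n)V_n]$ with an $O(n^{-1})$ remainder for a test function that is smooth in $w$ but unbounded (linear) in $v$. I would try first the Fourier route. Write
\[
\mathbb E[g(W_n)V_n]=\frac{1}{2\pi}\int\hat g(t)\,\bigl(-i\,\partial_s\bigr)\mathbb E\bigl[e^{itW_n+isV_n}\bigr]\big|_{s=0}\,dt .
\]
Then apply the standard characteristic-function expansion together with its derivative in $s$. The decay $|\hat g(t)|\lesssim |t|^{-2}$, which follows from $g''\in L^1$, controls large $|t|$, where Cram\'er's condition gives exponential smallness.
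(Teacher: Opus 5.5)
Your linearization in Step 1 and the parity mechanism in Step 3 are right. The gap is Step 2, which carries the whole argument and is asserted rather than proved. As stated, it is also not available from the hypotheses. A bivariate Edgeworth expansion of $(W_n,V_n)$ with $O(n^{-1})$ remainder, valid for the unbounded test function $g(w)v$, would via Bhattacharya--Ghosh require Cram\'er's condition for the augmented vector $(\mathbf X,\mathbf X^\alpha)$. Assumption~\ref{assume1} only grants Cram\'er's condition for $\mathbf X$, and the appendix expansions you appeal to are univariate.

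The Fourier fallback has concrete defects as well:
\begin{itemize}
\item Cram\'er's condition does not make $\mathbb E[V_n e^{itW_n}]$ exponentially small at large $|t|$ when $W_n$ is a nonlinear function of $\bar{\mathbf X}$, since the characteristic function does not factor over observations.
\item $\hat g$ need not be a function: for $g(x)=1-e^{-x^2}$ it has an atom at $0$. So the bound $|\hat g(t)|\lesssim |t|^{-2}$ does not justify $\int\hat g(t)(\cdots)\,dt$ near $t=0$.
\item The moderate-$t$ regime for nonlinear $W_n$ is never addressed.
\end{itemize}

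The missing idea is a decoupling that needs only the \emph{univariate} expansion of $W_n$. By exchangeability, $\mathbb E[g(W_n)\bar Y]=\mathbb E[g(W_n)Y_1]$. The leave-one-out statistic $W_{n,-1}=\sqrt{n-1}\,\tilde A(\bar{\mathbf X}_{-1})$ is independent of $Y_1$, so $\mathbb E[g(W_{n,-1})Y_1]=0$.

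Now expand $g(W_n)-g(W_{n,-1})$ to second order. On a high-probability Lipschitz event, $W_n-W_{n,-1}=n^{-1/2}\bigl(\varphi(\mathbf X_1)-\varphi(\bar{\mathbf X}_{-1})\bigr)+O(n^{-1})$, where $\varphi(\mathbf x)=\nabla\tilde A(\bm\mu)^\top(\mathbf x-\bm\mu)$. Every resulting term is $O(n^{-1})$ except
\[n^{-1/2}\,\mathbb E[g'(W_{n,-1})]\,\mathbb E[Y_1\varphi(\mathbf X_1)].\]
Since $g'$ is odd, its Gaussian expectation vanishes. Applying the univariate Edgeworth expansion of $W_{n-1}$ to $g'$ by integration by parts, which is exactly where $g''\in L^1$ is used, gives $\mathbb E[g'(W_{n,-1})]=O(n^{-1/2})$.

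This is the paper's proof, and it rests only on Theorem~\ref{Theo Hall1} plus moment bounds. Your joint-expansion route could plausibly be completed, but only with an extra Cram\'er hypothesis on $(\mathbf X,\mathbf X^\alpha)$ or a considerably more careful Fourier argument than the one you sketch.
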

Na\"ively, one would expect the quantity inside the expectation in~\eqref{eq cancel baby version} to be of order $O(n^{-1/2})$, due to the CLT-scale fluctuation of $\sqrt n(M_n-M)$. The sharper $O(n^{-1})$ bound instead arises from a nontrivial cancellation at the expectation level induced by the evenness of $g$. Except for the evenness condition, the smoothness and boundedness assumptions imposed on $g$ are much stronger than necessary, but they are sufficient for deriving our main theorem. Example~\ref{exeven} below provides some intuition for this phenomenon and illustrates concretely that evenness is essential.
\begin{exam}\label{exeven}
Consider the simplest case where $X_i\in\mathbb R$ are iid with mean $\mu$ and unit variance, $M$ is the true mean $\mu$, $M_n=\bar{X}$, and $\tilde A(x)=x-\mu$. Let $\kappa\coloneqq \mathbb E[(X_1-\mu)^3]$ denote the third central moment. We consider the function $g_k(x)=x^k$ for positive integers $k$. Then
\[
\mathbb E\left[g_k\left(\sqrt n\tilde A(\bar X)\right)(\bar X-\mu)\right]
=
n^{k/2}\mathbb E[(\bar X-\mu)^{k+1}].
\]
It is easy to verify that
\[
n^{k/2}\mathbb E[(\bar X-\mu)^{k+1}]
=
\begin{cases}
n^{-1/2}, & k=1,\\[0.3em]
n^{-1}\kappa, & k=2,\\[0.3em]
3n^{-1/2}+n^{-3/2}\bigl(\mathbb E[(X_1-\mu)^4]-3\bigr), & k=3.
\end{cases}
\]
More generally,
\[
n^{k/2}\mathbb E[(\bar X-\mu)^{k+1}]
=
\begin{cases}
\displaystyle
\frac{1}{n}
\frac{(2r+1)!}{3!2^{r-1}(r-1)!}
\kappa
+
O(n^{-2}),
& k=2r,\\[1.2em]
\displaystyle
\frac{(2r+1)!!}{\sqrt n}
+
O(n^{-3/2}),
& k=2r+1.
\end{cases}
\]
Thus, when $k$ is even, the expectation is of order $O(n^{-1})$, whereas when $k$ is odd, it is generally only of order $O(n^{-1/2})$. This calculation illustrates the cancellation mechanism induced by evenness.
\end{exam}

A somewhat peculiar feature of Theorem~\ref{theorem err 2side} is the technical restriction $B\geq 2$, which excludes the case $B=1$. This restriction arises from the regularity requirements on the weight functions in order to apply Proposition~\ref{theo:expect-cancellation-good}. In particular, the relevant functions include terms of the form
\begin{align}
   g_{[B,k]}(x)\coloneqq
   \abs{\frac{x}{\mathsf q}}^{B-1+k}
   \exp\left(-\frac{B x^2}{2\mathsf q^2}\right),
   \qquad k\in\{1,3,5\}.
\end{align}
When $B=1$ and $k=1$, the function $g_{[1,1]}$ has a kink at $x=0$ and is not differentiable there. Hence $g_{[1,1]}\notin C^2(\mathbb R)$. On the other hand, Proposition~\ref{theo:expect-cancellation-good} is proved under the condition $g\in C^2(\mathbb R)$, since its proof uses a first-order Taylor expansion with a Lagrange remainder. We believe that this smoothness requirement, and hence the additional restriction $B\geq 2$, is not essential and can be removed through an alternative proof.

Third, the correction terms $\zeta_k^{\mathrm{ts}}$ and $\zeta_k^{\mathrm{up}}$ admit explicit closed-form expressions. By contrast, in the earlier cheap-bootstrap development of \cite{lam2022}, the corresponding $\zeta_k$ functions were represented as high-dimensional integrals. This makes the present expansions substantially more transparent. The simplification is possible because the integral structure naturally gives rise to $\chi^2$ random variables through sums of squared Gaussian coordinates, allowing the high-dimensional integrals to be evaluated in closed form.

Furthermore, the closed-form expressions of $\zeta_k^{\mathrm{ts}}$ and $\zeta_k^{\mathrm{up}}$ can be simplified in the limit $B\to\infty$.

\begin{corollary}[Large-$B$ limits of $\zeta_k^{\mathrm{ts}}$ and $\zeta_k^{\mathrm{up}}$] The following limits hold:
\begin{align*}
\lim_{B\to\infty}\zeta_{1}^{\mathrm{up}}(\mathsf q)
&=
\phi(\mathsf q)
\left(
b_{[1,0]}+b_{[1,2]}\mathsf q^2
\right),\\
\lim_{B\to\infty}\zeta_2^{\mathrm{ts}}(\mathsf q)
&=
2\phi(\mathsf q)
\left[
b_{[2,1]}\mathsf q
+
b_{[2,3]}\mathsf q^3
+
b_{[2,5]}\mathsf q^5
-
\left(
a_{[2,1]}+3a_{[2,3]}+15a_{[2,5]}
\right)\mathsf q
\right].
\end{align*}
\end{corollary}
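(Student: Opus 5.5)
The corollary follows from term-by-term pointwise limits in the closed forms \eqref{eq zeta2 upper} and \eqref{eq zeta2}, with $\mathsf q$ and the coefficients $a_{[2,k]},b_{[2,k]},b_{[1,k]}$ held fixed. Each term is a product of two factors. The first is a $\mathsf q$-dependent factor, which I rewrite in the form $\mathsf q^m(1+\mathsf q^2/B)^{-(B+c)/2}$ with $c$ fixed. The second is a constant built from Gamma functions and powers of $B$. Only two asymptotic facts are needed:
\begin{itemize}
\item $(1+\mathsf q^2/B)^{-(B+c)/2}\to e^{-\mathsf q^2/2}$ for fixed $c$;
\item the Gamma-ratio asymptotic $\Gamma(x+s)/\Gamma(x)\sim x^{s}$ as $x\to\infty$, which follows from Stirling's formula.
\end{itemize}

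\emph{Limit of $\zeta_1^{\mathrm{up}}$.} I divide numerator and denominator by $B^{B/2+k}$, which gives
\[
\frac{B^{B/2+k}\mathsf q^k}{(B+\mathsf q^2)^{B/2+k}}=\mathsf q^k\bigl(1+\mathsf q^2/B\bigr)^{-(B/2+k)}\to \mathsf q^k e^{-\mathsf q^2/2}.
\]
After dividing by $\sqrt{2\pi}$, the $k$-th term tends to $b_{[1,k]}\mathsf q^k\phi(\mathsf q)$. Summing over $k=0,2$ yields the first claimed limit.

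\emph{First ($b$-)part of $\zeta_2^{\mathrm{ts}}$.} I factor $B^{B/2}$ out of the denominator, so the $k$-th term becomes
\[
\frac{2^{(k+1)/2}}{\sqrt\pi}\cdot\frac{\Gamma\!\left(\frac{B+k}{2}\right)}{\Gamma\!\left(\frac B2\right)B^{k/2}}\cdot \mathsf q^k\bigl(1+\mathsf q^2/B\bigr)^{-(B+k)/2}.
\]
By the Gamma-ratio asymptotic, $\Gamma(\frac{B+k}{2})/\Gamma(\frac B2)\sim (B/2)^{k/2}$, so the middle factor tends to $2^{-k/2}$. The whole term therefore tends to $\sqrt{2/\pi}\,\mathsf q^ke^{-\mathsf q^2/2}=2\phi(\mathsf q)\mathsf q^k$, which accounts for the $b_{[2,k]}\mathsf q^k$ terms in the claim.

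\emph{Second ($a$-)part of $\zeta_2^{\mathrm{ts}}$.} Again factoring $B^{B/2}$ out of the denominator, the $k$-th term becomes
\[
\frac{2^{(k+1)/2}\Gamma\!\left(\frac k2+1\right)}{\pi}\cdot B^{1/2}\,\frac{\Gamma\!\left(\frac{B+k}{2}\right)}{\Gamma\!\left(\frac{B+k+1}{2}\right)}\cdot \mathsf q\bigl(1+\mathsf q^2/B\bigr)^{-(B+k)/2}.
\]
Here $B^{1/2}\Gamma(\frac{B+k}{2})/\Gamma(\frac{B+k+1}{2})\to\sqrt2$, so the term tends to $2^{(k+2)/2}\Gamma(\frac k2+1)\pi^{-1}\,\mathsf q e^{-\mathsf q^2/2}$. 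To match the claim, this must equal $2\phi(\mathsf q)\,k!!\,\mathsf q$. This reduces to the identity $2^{(k+1)/2}\Gamma(\frac k2+1)=k!!\sqrt\pi$ for odd $k$, which follows from $\Gamma(\frac12)=\sqrt\pi$ and the recursion $\Gamma(x+1)=x\Gamma(x)$. For $k=1,3,5$ it gives $k!!=1,3,15$, which are exactly the Gaussian moments $\EE Z^{k+1}$. This produces the coefficient $a_{[2,1]}+3a_{[2,3]}+15a_{[2,5]}$.

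Summing the $b$-part and $a$-part limits gives the second claimed limit. The only delicate point is keeping the powers of $2$ and $B$ straight in the Gamma ratios. Both ratio limits can be justified rigorously by Stirling's formula or by Wendel's inequality; there is no real obstacle. As a sanity check, the resulting expression is $2\phi(\mathsf q)\bigl[q_2(\mathsf q)-\mathsf q\,\EE\{p_2(Z)Z\}\bigr]$, with $q_2$ and $p_2$ as in \eqref{Eq edgepoly}. This is the expected Gaussian-pivot form of the second-order two-sided correction.
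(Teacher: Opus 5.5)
Your proposal is correct and is essentially the argument the paper leaves implicit. The paper states the corollary without a separate proof, as a direct consequence of the closed forms \eqref{eq zeta2 upper} and \eqref{eq zeta2}, and your termwise limits supply the missing computation: $(1+\mathsf q^2/B)^{-(B+c)/2}\to e^{-\mathsf q^2/2}$, the Gamma-ratio asymptotics, and $2^{(k+1)/2}\Gamma(\tfrac k2+1)=k!!\sqrt{\pi}$. Your formulation is also insensitive to the fixed offset $c$. That is useful here, because a direct $\chi^2_B$ computation of the $k=2$ term of $\zeta_1^{\mathrm{up}}$ gives $B^{B/2+1}/(B+\mathsf q^2)^{B/2+1}$ rather than the displayed $B^{B/2+2}/(B+\mathsf q^2)^{B/2+2}$, yet the limit is the same.
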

It is known that the cheap bootstrap reduces to the standard error bootstrap as $B\to\infty$. The explicit limiting formulas above may help clarify the relationship between Edgeworth expansions and the standard error bootstrap. To the best of our knowledge, this connection has not been established in the existing literature.

We next state the corresponding uniform Edgeworth expansion for the resampled cheap-bootstrap statistic $T^*$. This result is the conditional analogue of Theorem~\ref{theorem uniform version}: the deterministic coefficients in the population expansion are replaced by their sample-based counterparts, and the remainder bound holds with high probability. The proof is given in Appendix~\ref{sec-proof-resample-ts} for the symmetric two-sided case and in Appendix~\ref{resample-one-side} for the one-sided case.

\begin{theorem}[Uniform Higher-order Coverage Errors for the Resampled Cheap Bootstrap]\label{theorem resample uniform version}
Under Assumption~\ref{assume1},
\begin{itemize}
    \item When $B\geq 2$, for any $\varepsilon_0>0$, there exists a constant $\hat{\mathsf C}_{\mathrm{ts}}$ such that
    \begin{align}\label{Eq resample ts uniform convergence}
        \mathbb{P}\left(
        \sup_{\mathsf q\geq \varepsilon_0}
        \abs{
        \mathbb P(\abs{T^*}\leq \mathsf q\mid \mathcal X)
        -2\Psi_B(\mathsf q)+1
        -\frac{\hat{\zeta}_2^\mathrm{ts}(\mathsf q)}{n}
        }
        \geq \frac{1}{n^2}\times \hat{\mathsf C}_{\mathrm{ts}}
        \right)
        =
        O(n^{-\lambda})
    \end{align}
    where $\hat{\zeta}^\mathrm{ts}_2$ is defined in the same way as $\zeta^\mathrm{ts}_2$ in \eqref{eq zeta2}, except that the coefficients $a_{[2,k]}$ and $b_{[2,k]}$ are replaced by $\hat a_{[2,k]}$ and $\hat b_{[2,k]}$, respectively.

    \item When $B\geq 1$, there exists a constant $\hat{\mathsf C}_{\mathrm{up}}$ such that
    \begin{align}\label{Eq resample up uniform convergence}
        \mathbb{P}\left(
        \sup_{\mathsf q\in\mathbb{R}}
        \abs{
        \mathbb P(T^*\leq \mathsf q\mid \mathcal X)
        -\Psi_B(\mathsf q)
        -\frac{\hat{\zeta}_{1}^{\mathrm{up}}(\mathsf q)}{\sqrt n}
        -\frac{\hat{\zeta}_{2}^{\mathrm{up}}(\mathsf q)}{n}
        }
        \geq \frac{1}{n^{3/2}}\times \hat{\mathsf C}_{\mathrm{up}}
        \right)
        =
        O(n^{-\lambda})
    \end{align}
    where $\hat{\zeta}_{1}^{\mathrm{up}}$ and $\hat{\zeta}_{2}^{\mathrm{up}}$ are defined in the same way as $\zeta_{1}^{\mathrm{up}}$ and $\zeta_{2}^{\mathrm{up}}$ in \eqref{eq zeta2 upper}, except that all coefficients $a_{[2,k]}$ and $b_{[i,k]}$ are replaced by $\hat a_{[2,k]}$ and $\hat b_{[i,k]}$ for $i=1,2$, respectively.
\end{itemize}
\end{theorem}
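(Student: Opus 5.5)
The plan is to mirror the proof of Theorem~\ref{theorem uniform version}, working conditionally on $\mathcal X$ and replacing the population distribution $P$ by $\hat P_n$. Conditionally on $\mathcal X$, the outer resample mean $\overline{\mathbf X}^*$ is a mean of i.i.d.\ draws from $\hat P_n$. Given the outer resample, the $B$ inner resample means are i.i.d.\ draws of means from $P^*_{b_1}$. Write
\[
T^*=\frac{\sqrt n\,(f(\overline{\mathbf X}^*)-f(\overline{\mathbf X}))}{\bigl[B^{-1}\sum_{b=1}^B n\,(f(\overline{\mathbf X}^{**}_b)-f(\overline{\mathbf X}^*))^2\bigr]^{1/2}}.
\]
The steps below first rewrite $T^*$ so that the population argument can be reused, then control the conditional expansions on a high-probability event, and finally assemble them.

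\emph{Step 1: reduce $\{T^*\le \mathsf q\}$ to half-spaces.} Normalizing the numerator and each inner term by $h(\overline{\mathbf X})$ and $h(\overline{\mathbf X}^*)$, respectively, the event $\{T^*\le\mathsf q\}$ becomes
\[
\Bigl\{\sqrt n A^*(\overline{\mathbf X}^*)\le \mathsf q\,\frac{h(\overline{\mathbf X}^*)}{h(\overline{\mathbf X})}\, \Bigl[B^{-1}\textstyle\sum_b (\sqrt n A^{**}_b)^2\Bigr]^{1/2}\Bigr\}.
\]
Here $A^*$ and $A^{**}_b$ are the bootstrap analogues of $A$ and $A_s$. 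The ratio $h(\overline{\mathbf X}^*)/h(\overline{\mathbf X})$ is exactly what converts the non-studentized $p$-type expansion of the numerator into the $q$-type one, which is why both $\hat a$ and $\hat b$ coefficients appear. Conditionally on the outer layer, the inner statistics are i.i.d.\ with the bootstrap Edgeworth expansion from Appendix~\ref{appendix a}, whose coefficients are evaluated at $P^*_{b_1}$. Integrating the outer statistic against the product of the $B$ inner expansions gives a $B$-dimensional integral. Its leading term integrates to $\Psi_B$, and its correction terms are of the same $\chi^2$-type form evaluated in closed form in the proof of Theorem~\ref{theorem uniform version}. The algebra that produced $\zeta$ there therefore produces $\hat\zeta$ verbatim, provided the coefficients of the inner expansions (from $P^*_{b_1}$) can be replaced by $\hat a_{[2,k]},\hat b_{[i,k]}$ (from $\hat P_n$). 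That replacement is justified in Step~2.

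\emph{Step 2: the high-probability event.} I would invoke the bootstrap Edgeworth theorem of Appendix~\ref{appendix a} (Hall, Theorem~5.1 type). It gives a uniform expansion with remainder $C n^{-(\nu+1)/2}$ on an event of probability $1-O(n^{-\lambda})$, defined by sample moments up to high order lying near their population values and by the empirical Cram\'er condition. The same holds at the inner level with the resampled moments of $P^*_{b_1}$, outside a further $O(n^{-\lambda})$ event.

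The inner coefficients differ from $\hat a,\hat b$ by $O_p(n^{-1/2})$. In the one-sided case this shift enters multiplied by $n^{-1}$ and is absorbed into $n^{-3/2}$, except for the $n^{-1/2}\hat\zeta_1$ term, whose inner-coefficient fluctuation must be handled by integrating over the outer resample. This produces an $n^{-1}$ term that, by the same calculation as in the population proof, recombines into $\hat\zeta_2^{\mathrm{up}}=\tfrac12\hat\zeta_2^{\mathrm{ts}}$.

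\emph{Step 3: the main obstacle, the two-sided $O(n^{-2})$ bound.} The $n^{-3/2}$ contributions, including those from random inner coefficients correlated with the outer statistic, must cancel conditionally on $\mathcal X$. Here I would apply the conditional analogue of Proposition~\ref{theo:expect-cancellation-good}, with $\hat P_n$ in place of $P$ and the even weights $g_{[B,k]}$. Since $B\ge2$ these weights are $C^2$ with bounded derivatives, uniformly for $\mathsf q\ge\varepsilon_0$. The proposition's proof uses only moment bounds and a Taylor expansion, so it transfers to the bootstrap world on the good event, with constants depending only on the high-order sample moments, which are bounded there. Uniformity in $\mathsf q\ge\varepsilon_0$ follows because every bound depends on $\mathsf q$ through smooth functions bounded on that range.

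Assembling Steps 1–3 and taking the union of the exceptional events yields \eqref{Eq resample ts uniform convergence} and \eqref{Eq resample up uniform convergence}.
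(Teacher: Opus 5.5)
Your architecture matches the paper's. You condition on $\mathcal X$, replace the inner law by its Edgeworth expansion with $\mathcal X^*$-dependent coefficients $p_k^*$, and expand the outer statistic $\sqrt n A_s^*(\overline{\mathbf X}^*)$ conditionally on $\mathcal X$. You then swap $p_2^*$ for $\hat p_2$, and for the two-sided $n^{-2}$ rate you apply the resampled cancellation proposition with the even weights $x\mapsto g_{[B,k]}(x/\mathsf q)$ on a good event. The two-sided part is essentially the paper's proof. The one-sided part of your Step~2, however, describes a mechanism that does not exist and misses the observation that actually produces the expansion.

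The inner expansions contribute nothing at order $n^{-1/2}$. The integrand $\mathbf 1\{W_n^*\le \mathsf q\,(B^{-1}\sum_b z_b^2)^{1/2}\}$ depends on each inner coordinate only through $z_b^2$. Since $p_1^*$ is even, $d\bigl(p_1^*(z_b)\phi(z_b)\bigr)$ is an odd signed density. Hence every term containing it vanishes identically, for every realization of $(\mathcal X,\mathcal X^*)$; this includes the $n^{-1/2}$ term and the $p_1^*\times p_1^*$ cross term at order $n^{-1}$. This is Step~1 of the paper's one-sided population proof ($V_{1/2}\equiv 0$), carried over verbatim.

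So there is no inner-coefficient fluctuation to integrate out at order $n^{-1}$, and $\hat\zeta_1^{\mathrm{up}}$ does not come from the inner layer at all. It is the outer term $n^{-1/2}\hat q_1\phi$ of the conditional expansion of $\sqrt n A_s^*(\overline{\mathbf X}^*)$, whose coefficients $\hat b_{[1,k]}$ are already $\sigma(\mathcal X)$-measurable. You can read this off \eqref{eq zeta2 upper}: $\zeta_1^{\mathrm{up}}$ contains only the $q$-type coefficients $b_{[1,k]}$, which can never arise from the $p$-type inner expansions.

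Likewise, $\hat\zeta_2^{\mathrm{up}}$ is not a recombined fluctuation term. It is the sum of two pieces:
\begin{itemize}
\item the outer $\hat q_2$ contribution, with coefficients $\hat b_{[2,k]}$;
\item the inner $p_2^*$ contribution after replacing $a^*_{[2,k]}$ by $\hat a_{[2,k]}$, at a crude cost of $n^{-1}\cdot O(n^{-1/2})$ with no cancellation needed.
\end{itemize}
If the inner $n^{-1/2}$ term really carried a random, outer-correlated coefficient as you describe, it would generate an additional $O(n^{-1})$ term not present in $\hat\zeta_2^{\mathrm{up}}=\tfrac12\hat\zeta_2^{\mathrm{ts}}$. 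Your conclusion survives only because that term is identically zero. Your account is also inconsistent with the two-sided statement, which has no $n^{-1/2}$ term. There, the same symmetry in $z_b$ removes the inner odd-order terms, and evenness of $\lvert\cdot\rvert$ removes the outer ones.

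A smaller correction concerns Step~3. The cancellation proposition does not transfer to the bootstrap world using only moment bounds and a Taylor expansion. Its key step is $\hat{\mathbb E}[g'(W^*_{n,-1})]=O(n^{-1/2})$, which combines the oddness of $g'$ with a conditional first-order Edgeworth (or Berry--Esseen) bound for the resampled statistic. That is why the paper intersects with $\hat{\mathcal E}_{\nu=1}$ and proves Lemma~\ref{lm:resampled-lemma3-expectation}.
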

Recall that $\psi_B$ is the pdf of the Student's $t$ distribution with $B$ degrees of freedom. For the expansion coefficients in Theorem~\ref{theorem uniform version}, define
\begin{align*}
    \eta_2^{\mathrm{ts}}(\mathsf q)
    \coloneqq
    \frac{\zeta_2^{\mathrm{ts}}(\mathsf q)}{\psi_B(\mathsf q)},
    \qquad
    \eta_k^{\mathrm{up}}(\mathsf q)
    \coloneqq
    \frac{\zeta_k^{\mathrm{up}}(\mathsf q)}{\psi_B(\mathsf q)},
    \quad k=1,2.
\end{align*}
Then the expansions in Theorem~\ref{theorem uniform version} can be rewritten as
\begin{equation}\label{eq11}
\begin{split}
G_{n}^{\mathrm{up}}(\mathsf q)
&\coloneqq 
\PsiB(\mathsf q)
+
n^{-1/2}\eta_1^{\mathrm{up}}(\mathsf q)\psiB(\mathsf q)
+
n^{-1}\eta_2^{\mathrm{up}}(\mathsf q)\psiB(\mathsf q),
\\
G_{n}^{\mathrm{ts}}(\mathsf q)
&\coloneqq
2\PsiB(\mathsf q)-1
+
\frac{1}{n}
\eta_2^{\mathrm{ts}}(\mathsf q)\psiB(\mathsf q).
\end{split}
\end{equation}
This reformulation is more convenient for developing the corresponding Cornish--Fisher expansions in the next section. The functions $\eta_k$ are given explicitly by
\begin{align*}
\eta_2^{\mathrm{ts}}(\mathsf q)
=
&
\,2(b_{[2,1]}-a_{[2,1]})\mathsf q
-a_{[2,3]}\frac{6B(B+1)}{B+2}\frac{\mathsf q}{B+\mathsf q^2}
+2b_{[2,3]}(B+1)\frac{\mathsf q^3}{B+\mathsf q^2}
\\
&
-a_{[2,5]}\frac{30B^2(B+1)(B+3)}{(B+2)(B+4)}\frac{\mathsf q}{(B+\mathsf q^2)^2}
+2b_{[2,5]}(B+1)(B+3)\frac{\mathsf q^5}{(B+\mathsf q^2)^2},
\\
\eta_1^{\mathrm{up}}(\mathsf q)
=
&
\frac{\Gamma(\frac{B}{2})}{\sqrt{2}\,\Gamma\!\left(\frac{B+1}{2}\right)}
\left[
b_{[1,0]}(B+\mathsf q^2)^{1/2}
+
b_{[1,2]}\frac{B^2\mathsf q^2}{(B+\mathsf q^2)^{3/2}}
\right],
\qquad
\eta_2^{\mathrm{up}}(\mathsf q)
=
\frac{1}{2}\eta_2^{\mathrm{ts}}(\mathsf q).
\end{align*}
Although their explicit forms are complicated, the functions $\eta_k$ have two useful properties: they are infinitely differentiable on $\mathbb R$, and they follow the same odd--even parity pattern as classical Edgeworth polynomials. More precisely, $\eta_k$ is odd when $k$ is even and even when $k$ is odd. 

We therefore refer to them as Edgeworth quasi-polynomials for the Student's $t$ distribution; this structure motivates the general quasi-polynomial framework developed in Appendix~\ref{conjectural framework}, where the expansions are denoted by $G_{n,\nu}^{\mathrm{up}}(\mathsf q)$ and $G_{n,\nu}^{\mathrm{ts}}(\mathsf q)$. The expansions established here correspond to the one-sided case with $\nu=2$ and the symmetric two-sided case with $\nu=3$, as displayed in \eqref{eq11}.

\subsection{Uniform Cornish--Fisher Expansion for t-Distributions}

We now pass from the Edgeworth expansions to their Cornish--Fisher counterparts. We first introduce the population quantiles and their formal approximants, and then state the corresponding existence and uniqueness result. 
We expect these quantiles to admit the following finite Cornish--Fisher expansions for the Student's $t$-type statistic:
\begin{align*}
u_{2,\alpha}^{\mathrm{up}}
&\coloneqq
\mathsf q+\sum_{k=1}^{2} n^{-k/2}\xi_k^{\mathrm{up}}(\mathsf q),
\quad \text{where }\mathsf q\coloneqq\PsiB^{-1}(\alpha),
\\
u_{3,\alpha}^{\mathrm{ts}}
&\coloneqq
\mathsf q+\frac{1}{n}\xi_2^{\mathrm{ts}}(\mathsf q),
\quad \text{where }\mathsf q\coloneqq\PsiB^{-1}\!\left(\frac{1+\alpha}{2}\right).
\end{align*}
with
\begin{equation}\label{eq12}
\begin{split}
\xi_1^{\mathrm{up}}(\mathsf q)
&=
-\eta_1^{\mathrm{up}}(\mathsf q),
\qquad
\xi_2^{\mathrm{ts}}(\mathsf q)
=
-\frac{1}{2}\eta_2^{\mathrm{ts}}(\mathsf q),
\\
\xi_2^{\mathrm{up}}(\mathsf q)
&=
-\eta_2^{\mathrm{up}}(\mathsf q)
+\eta_1^{\mathrm{up}}(\mathsf q)\bigl(\eta_1^{\mathrm{up}}\bigr)'(\mathsf q)
-\frac{(B+1)\mathsf q}{2(B+\mathsf q^2)}\bigl(\eta_1^{\mathrm{up}}(\mathsf q)\bigr)^2.
\end{split}
\end{equation}
Here each $\xi_k$, although not a polynomial, belongs to $C^\infty(\mathbb R)$, is even when $k$ is odd, and odd when $k$ is even. Moreover, the coefficients of $\xi_k$ are polynomials in moments up to order $k+2$. The subscripts $\nu=2,3$ in $u_{2,\alpha}^{\mathrm{up}}$ and $u_{3,\alpha}^{\mathrm{ts}}$ are consistent with the approximation order $O(n^{-(\nu+1)/2})$.

Proposition \ref{thpp1} below establishes the existence and uniqueness of these formal Cornish--Fisher approximants, together with their uniform convergence properties. A general version of Proposition \ref{thpp1}, namely Proposition \ref{prop:CF-existence}, is stated in Appendix \ref{conjectural framework} under the general quasi-polynomial framework. Its proof is given in Appendix \ref{sec51} and Appendix  \ref{sec52}.
\begin{proposition}[Existence and uniqueness of the Cornish--Fisher expansions on the Student's $t$ distribution]\label{thpp1}
Under Assumption~\ref{assume1}, for any fixed $B\geq1$ in the one-sided statement and any fixed $B\geq2$ in the two-sided statement, there exist unique functions $\xi_1^{\mathrm{up}},\xi_2^{\mathrm{up}},\xi_2^{\mathrm{ts}}$, given in \eqref{eq12}, such that for every compact interval $K\subset\mathbb R$ and $\tilde K\subset(0,\infty)$, there exist finite constants $\mathsf A_K^{\mathrm{up}}$ and $\mathsf A_{\tilde K}^{\mathrm{ts}}$ satisfying
\begin{align*}
\sup_{\mathsf q\in K}
\left|
G_n^{\mathrm{up}}\!\left(\mathsf q+\sum_{k=1}^{2} n^{-k/2}\xi_k^{\mathrm{up}}(\mathsf q)\right)-\PsiB(\mathsf q)
\right|
&\le
\mathsf A_K^{\mathrm{up}} n^{-3/2},
\\
\sup_{\mathsf q\in \tilde K}
\left|
G_n^{\mathrm{ts}}\!\left(\mathsf q+\frac{1}{n}\xi_2^{\mathrm{ts}}(\mathsf q)\right)-\bigl(2\PsiB(\mathsf q)-1\bigr)
\right|
&\le
\mathsf A_{\tilde K}^{\mathrm{ts}} n^{-2}.
\end{align*}
\end{proposition}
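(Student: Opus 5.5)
The plan is a Taylor expansion of $G_n$ around $\mathsf q$, followed by matching coefficients of powers of $n^{-1/2}$.

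For the one-sided case, write $\delta_n(\mathsf q)=n^{-1/2}\xi_1^{\mathrm{up}}(\mathsf q)+n^{-1}\xi_2^{\mathrm{up}}(\mathsf q)$. Since $\Psi_B$, $\psi_B$ and the quasi-polynomials $\eta_k^{\mathrm{up}}$ are $C^\infty$, we can expand each term of $G_n^{\mathrm{up}}(\mathsf q+\delta_n)$ to the needed order with a Lagrange remainder:
\begin{align*}
\Psi_B(\mathsf q+\delta_n)&=\Psi_B(\mathsf q)+\psi_B(\mathsf q)\delta_n+\tfrac12\psi_B'(\mathsf q)\delta_n^2+O(\delta_n^3),\\
n^{-1/2}\eta_1^{\mathrm{up}}\psi_B(\mathsf q+\delta_n)&=n^{-1/2}\eta_1^{\mathrm{up}}\psi_B(\mathsf q)+n^{-1/2}(\eta_1^{\mathrm{up}}\psi_B)'(\mathsf q)\delta_n+O(n^{-1/2}\delta_n^2),
\end{align*}
and $n^{-1}\eta_2^{\mathrm{up}}\psi_B(\mathsf q+\delta_n)=n^{-1}\eta_2^{\mathrm{up}}\psi_B(\mathsf q)+O(n^{-1}\delta_n)$.

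Collecting powers of $n^{-1/2}$ gives two conditions:
\begin{itemize}
\item the $n^{-1/2}$ coefficient vanishes iff $\psi_B\xi_1+\eta_1\psi_B=0$, i.e.\ $\xi_1^{\mathrm{up}}=-\eta_1^{\mathrm{up}}$;
\item the $n^{-1}$ coefficient is $\psi_B\xi_2+\tfrac12\psi_B'\xi_1^2+(\eta_1\psi_B)'\xi_1+\eta_2\psi_B$.
\end{itemize}
We then divide by $\psi_B>0$ and use $\psi_B'/\psi_B=-(B+1)\mathsf q/(B+\mathsf q^2)$. With $\xi_1=-\eta_1$, the terms become $\tfrac12(\psi_B'/\psi_B)\eta_1^2-\eta_1\eta_1'-(\psi_B'/\psi_B)\eta_1^2+\eta_2$. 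Setting this to zero forces $\xi_2=-\eta_2+\eta_1\eta_1'-\frac{(B+1)\mathsf q}{2(B+\mathsf q^2)}\eta_1^2$, which is exactly \eqref{eq12}. Uniqueness follows because any other choice leaves a nonzero coefficient at order $n^{-1/2}$ or $n^{-1}$ at some point of $K$. That contradicts the $O(n^{-3/2})$ bound, so the coefficients must match pointwise on $K$, and hence as functions by continuity.

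For the two-sided case we set $\delta_n=n^{-1}\xi_2^{\mathrm{ts}}$. Then
\begin{align*}
2\Psi_B(\mathsf q+\delta_n)-1&=2\Psi_B(\mathsf q)-1+2\psi_B\delta_n+O(n^{-2}),\\
n^{-1}\eta_2^{\mathrm{ts}}\psi_B(\mathsf q+\delta_n)&=n^{-1}\eta_2^{\mathrm{ts}}\psi_B(\mathsf q)+O(n^{-2}).
\end{align*}
The single $n^{-1}$ term vanishes iff $\xi_2^{\mathrm{ts}}=-\tfrac12\eta_2^{\mathrm{ts}}$. No $n^{-3/2}$ term can appear, since $\delta_n$ has no $n^{-1/2}$ component. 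Uniqueness follows as before.

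The remaining step, and the main technical obstacle, is making the remainders uniform on $K$ (respectively $\tilde K$). We will handle it as follows:
\begin{itemize}
\item On a compact set, $|\xi_k|$ is bounded, so $|\delta_n|\le c_K n^{-1/2}$ and $\mathsf q+\delta_n$ stays in a fixed compact neighbourhood $K'$ of $K$ for all large $n$.
\item The quasi-polynomials $x^k/(B+x^2)^{\ell/2}$ and $\psi_B$ are $C^\infty$ with derivatives bounded on $K'$, because $B+x^2\ge B>0$, so every Lagrange remainder is bounded by $\sup_{K'}$ of a continuous function.
\item For small $n$, the differences are trivially bounded, since $G_n$ is bounded on compacts; this is absorbed into the constant.
\item The restriction $B\ge2$ for the two-sided case is inherited only through the definition of $\eta_2^{\mathrm{ts}}$ from Theorem~\ref{theorem uniform version}; the algebra above is valid for any $B$.
\item $\tilde K\subset(0,\infty)$ keeps us away from $0$, consistent with the $\mathsf q\ge\varepsilon_0$ range of that theorem, though smoothness alone already suffices here.
\end{itemize}

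Finally, the parity claims follow by inspection. $\eta_1$ is even and $\eta_2$ is odd; $\eta_1\eta_1'$ and $\mathsf q\,\eta_1^2/(B+\mathsf q^2)$ are odd. Hence $\xi_2^{\mathrm{up}}$ is odd and $\xi_1^{\mathrm{up}}$ is even.
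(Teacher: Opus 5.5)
Your proposal is correct and follows essentially the same route as the paper's proof (Appendices~\ref{sec51} and~\ref{sec52}). Both expand $G_n$ around $\mathsf q$ with Lagrange remainders controlled on compacts, match the $n^{-k/2}$ coefficients order by order to solve for the $\xi_k$ (uniqueness coming from the forced vanishing of each coefficient), and then check parity. The differences are minor: you carry out the $\nu=2,3$ algebra explicitly and recover the closed forms in \eqref{eq12}, which the paper's general-$\nu$ induction leaves implicit, and you bound the remainders on an enlarged compact $K'$, which is slightly cleaner than the paper's claim that the intermediate point lies in $K$ itself.
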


Motivated by the expansion forms of $u_{2,\alpha}^{\mathrm{up}}$ and $u_{3,\alpha}^{\mathrm{ts}}$, we define the formal resampled Cornish--Fisher approximants by
\begin{align*}
\hat u_{2,\alpha}^{\mathrm{up}}
&\coloneqq
\mathsf q+\sum_{k=1}^{2} n^{-k/2}\hat\xi_k^{\mathrm{up}}(\mathsf q),
\quad \text{where }\mathsf q\coloneqq\PsiB^{-1}(\alpha),
\\
\hat u_{3,\alpha}^{\mathrm{ts}}
&\coloneqq
\mathsf q+\frac{1}{n}\hat\xi_2^{\mathrm{ts}}(\mathsf q),
\quad \text{where }\mathsf q\coloneqq\PsiB^{-1}\!\left(\frac{1+\alpha}{2}\right).
\end{align*}
Here $\hat\xi_1^{\mathrm{up}},\hat\xi_2^{\mathrm{up}}$, and $\hat\xi_2^{\mathrm{ts}}$ are obtained from \eqref{eq12} by replacing the population moments in the coefficients with the corresponding sample moments. In this way, we have defined the four quantile approximants needed for the coverage error analysis, namely
\[
u_{2,\alpha}^{\mathrm{up}},\qquad
u_{3,\alpha}^{\mathrm{ts}},\qquad
\hat u_{2,\alpha}^{\mathrm{up}},\qquad
\hat u_{3,\alpha}^{\mathrm{ts}}.
\]

We now explain their roles. Recall the exact quantiles $u_\alpha^{\mathrm{up}}$ and $u_\alpha^{\mathrm{ts}}$ of the finite-sample distribution of $T$, defined in \eqref{eq popquantile} and \eqref{eq popquantile 2side}, and the resampled quantiles $\hat u_\alpha^{\mathrm{up}}$ and $\hat u_\alpha^{\mathrm{ts}}$, defined in \eqref{eq empquantile} and \eqref{eq empquantile 2side}. The approximants $u_{2,\alpha}^{\mathrm{up}},u_{3,\alpha}^{\mathrm{ts}},\hat u_{2,\alpha}^{\mathrm{up}},\hat u_{3,\alpha}^{\mathrm{ts}}$ serve as explicit higher-order proxies for these exact quantiles. Their main advantage is that they are given by finite sums of smooth $\xi_k$ functions, which makes it much easier to control differences such as $|u_{2,\alpha}^{\mathrm{up}}-\hat u_{2,\alpha}^{\mathrm{up}}|$ and $|u_{3,\alpha}^{\mathrm{ts}}-\hat u_{3,\alpha}^{\mathrm{ts}}|$ than if one worked directly with the original quantile definitions. At the same time, these approximants remain sufficiently accurate: as the next theorems show, $u_{2,\alpha}^{\mathrm{up}}$ and $u_{3,\alpha}^{\mathrm{ts}}$ are already higher-order approximations to $u_\alpha^{\mathrm{up}}$ and $u_\alpha^{\mathrm{ts}}$, while $\hat u_{2,\alpha}^{\mathrm{up}}$ and $\hat u_{3,\alpha}^{\mathrm{ts}}$ are the corresponding higher-order approximations to $\hat u_\alpha^{\mathrm{up}}$ and $\hat u_\alpha^{\mathrm{ts}}$. Therefore, these four explicit approximants provide the convenient intermediate objects through which the final coverage error bounds are established.
\begin{theorem}[Uniform error for the Cornish--Fisher expansions]\label{uniform cf expansion}
Let $\varepsilon\in(0,1/2)$. Under Assumption~\ref{assume1}, for any fixed $B\geq1$ in the one-sided statement and any fixed $B\geq2$ in the two-sided statement, there exist finite constants $\mathsf C_{\mathrm{c.f.}}^{\mathrm{up}}$ and $\mathsf C_{\mathrm{c.f.}}^{\mathrm{ts}}$ such that
\begin{align*}
\sup_{\varepsilon\le\alpha\le1-\varepsilon}
\left|u_{2,\alpha}^{\mathrm{up}}-u_\alpha^{\mathrm{up}}\right|
\le
\mathsf C_{\mathrm{c.f.}}^{\mathrm{up}} n^{-3/2},
\qquad
\sup_{\varepsilon\le\alpha\le1-\varepsilon}
\left|u_{3,\alpha}^{\mathrm{ts}}-u_\alpha^{\mathrm{ts}}\right|
\le
\mathsf C_{\mathrm{c.f.}}^{\mathrm{ts}} n^{-2}.
\end{align*}
\end{theorem}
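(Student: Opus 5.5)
The plan is a standard inversion argument. It combines the uniform Edgeworth expansion of Theorem~\ref{theorem uniform version} with the approximant property of Proposition~\ref{thpp1}, and it uses the strict positivity of the limiting density $\psi_B$ on compacts to convert probability errors into quantile errors. I treat the one-sided case first; the two-sided case is identical with $|T|$, $2\Psi_B-1$, $2\psi_B$ and the rates $n^{-2}$ in place of $n^{-3/2}$.

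\textbf{Step 1: reduce to compact sets.} For $\alpha\in[\varepsilon,1-\varepsilon]$, the base point $\mathsf q=\Psi_B^{-1}(\alpha)$ ranges over a compact $K=[\Psi_B^{-1}(\varepsilon),\Psi_B^{-1}(1-\varepsilon)]$. In the two-sided case, $\mathsf q=\Psi_B^{-1}((1+\alpha)/2)$ ranges over a compact $\tilde K\subset(0,\infty)$ bounded away from $0$, so the restriction $\mathsf q\ge\varepsilon_0$ in Theorem~\ref{theorem uniform version} is met for a suitable $\varepsilon_0$. The $\xi_k$ are $C^\infty$, hence bounded on $K$, so $u^{\mathrm{up}}_{2,\alpha}$ lies in a slightly enlarged compact $K'$ for $n$ large. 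By Theorem~\ref{theorem uniform version}, the exact quantile also satisfies $u_\alpha^{\mathrm{up}}=\mathsf q+O(n^{-1/2})$, so it lies in $K'$ as well. That bound follows from $F_n(x)\coloneqq\PP(T\le x)=\Psi_B(x)+O(n^{-1/2})$ uniformly together with $\psi_B\ge c>0$ on $K'$.

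\textbf{Step 2: the approximant nearly attains level $\alpha$.} Combining Theorem~\ref{theorem uniform version} with Proposition~\ref{thpp1} gives
\[
F_n(u^{\mathrm{up}}_{2,\alpha})=G_n^{\mathrm{up}}(u^{\mathrm{up}}_{2,\alpha})+O(n^{-3/2})=\alpha+O(n^{-3/2}),
\]
uniformly in $\alpha$. The first equality uses the uniform Edgeworth expansion and the second uses the approximant property on $K$. The exact quantile also satisfies $F_n(u_\alpha^{\mathrm{up}})=\alpha+O(n^{-3/2})$. To see this, note that $F_n(u^{\mathrm{up}-}_\alpha)\le\alpha\le F_n(u^{\mathrm{up}}_\alpha)$, so the question is whether $F_n$ has atoms. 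The jump size of $F_n$ is controlled by the Edgeworth expansion: both the left and right limits are within $O(n^{-3/2})$ of the continuous function $G_n^{\mathrm{up}}$. This is the point where the discretization remark in the roadmap, via Cram\'er's condition, enters.

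\textbf{Step 3: invert using the local slope.} The idea is to compare two points $x,y\in K'$ through $F_n(x)-F_n(y)=G_n^{\mathrm{up}}(x)-G_n^{\mathrm{up}}(y)+O(n^{-3/2})$. Here $G_n^{\mathrm{up}}$ is differentiable with
\[
(G_n^{\mathrm{up}})'=\psi_B+O(n^{-1/2})\ge c/2
\]
on $K'$ for large $n$, since the quasi-polynomial correction terms have bounded derivatives on compacts. By the mean value theorem, $|G_n^{\mathrm{up}}(x)-G_n^{\mathrm{up}}(y)|\ge (c/2)|x-y|$. Taking $x=u^{\mathrm{up}}_{2,\alpha}$ and $y=u^{\mathrm{up}}_\alpha$, Step 2 gives $|G_n^{\mathrm{up}}(x)-G_n^{\mathrm{up}}(y)|=O(n^{-3/2})$, hence $|x-y|\le \mathsf C n^{-3/2}$. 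All constants depend only on $K'$, so the bound is uniform in $\alpha$. The same argument in the two-sided case uses the $O(n^{-2})$ bounds from Theorem~\ref{theorem uniform version} and Proposition~\ref{thpp1}, and yields $n^{-2}$.

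\textbf{Main obstacle.} The delicate point is Step 2's handling of the exact quantile. $T$ is built from lattice-free resamples but is not absolutely continuous, so $F_n$ may have atoms, and the bound $F_n(u_\alpha)=\alpha+O(n^{-3/2})$, or $O(n^{-2})$ in the two-sided case, must come from the two-sided uniform control of both $F_n(x)$ and $F_n(x^-)$ by the continuous $G_n$. I would first try to obtain the control of $F_n(x^-)$ by applying the uniform Edgeworth bound at $x-\delta$ and letting $\delta\downarrow0$, using the continuity of $G_n$.
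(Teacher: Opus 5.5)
Your argument is correct and is essentially the paper's inversion argument: the uniform $t$-Edgeworth expansion of Theorem~\ref{theorem uniform version} plus the approximant property of Proposition~\ref{thpp1}, turned into a quantile bound through a lower bound on $(G_n)'$ over a compact set. The only difference is bookkeeping. The paper first solves $G_n(\widetilde u_\alpha)=\alpha$ exactly (Lemma~\ref{lm inter}) and then brackets $u_\alpha$ in $[\widetilde u_\alpha-\delta_n,\widetilde u_\alpha+\delta_n]$ via $F_n(\widetilde u_\alpha+\delta_n)>\alpha>F_n(\widetilde u_\alpha-\delta_n)$, which sidesteps atoms of $F_n$ altogether. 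You instead compare $u_{2,\alpha}$ and $u_\alpha$ directly and bound the jump of $F_n$ at $u_\alpha$ by your left-limit argument; this is equally valid and makes the appeal to Cram\'er's condition in your Step~2 unnecessary.
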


\begin{theorem}[Uniform error for the resampled Cornish--Fisher expansions]\label{uniform resample cf expansion}
Let $\varepsilon\in(0,1/2)$. Under Assumption~\ref{assume1}, for any fixed $B\geq1$ in the one-sided statement and any fixed $B\geq2$ in the two-sided statement, there exist finite constants $\hat{\mathsf C}_{\mathrm{c.f.}}^{\mathrm{up}}$ and $\hat{\mathsf C}_{\mathrm{c.f.}}^{\mathrm{ts}}$ such that
\begin{align*}
&\PP\left(
\sup_{\varepsilon\le\alpha\le1-\varepsilon}
\left|\hat u_{2,\alpha}^{\mathrm{up}}-\hat u_\alpha^{\mathrm{up}}\right|
\ge
\hat{\mathsf C}_{\mathrm{c.f.}}^{\mathrm{up}} n^{-3/2}
\right)
=
O(n^{-\lambda}),\\
&\PP\left(
\sup_{\varepsilon\le\alpha\le1-\varepsilon}
\left|\hat u_{3,\alpha}^{\mathrm{ts}}-\hat u_\alpha^{\mathrm{ts}}\right|
\ge
\hat{\mathsf C}_{\mathrm{c.f.}}^{\mathrm{ts}} n^{-2}
\right)
=
O(n^{-\lambda}).
\end{align*}
\end{theorem}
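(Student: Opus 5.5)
The plan is to derive Theorem~\ref{uniform resample cf expansion} from the resampled Edgeworth expansion (Theorem~\ref{theorem resample uniform version}) by a conditional, on-event version of the deterministic inversion used for Theorem~\ref{uniform cf expansion}. The conditioning variable is the sample $\mathcal X$. Define a good event $\mathcal E_n$ as the intersection of three events:
\begin{itemize}[leftmargin=*]
\item[(i)] the uniform remainder bound in \eqref{Eq resample up uniform convergence} (resp.\ \eqref{Eq resample ts uniform convergence}) holds with constant $\hat{\mathsf C}_{\mathrm{up}}$ (resp.\ $\hat{\mathsf C}_{\mathrm{ts}}$);
\item[(ii)] all sample coefficients satisfy $|\hat a_{[2,k]}-a_{[2,k]}|\le 1$ and $|\hat b_{[i,k]}-b_{[i,k]}|\le 1$;
\item[(iii)] the sample moments defining the coefficients lie in a fixed compact neighbourhood of their population values.
\end{itemize}
Event (i) fails with probability $O(n^{-\lambda})$ by Theorem~\ref{theorem resample uniform version}. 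Events (ii)--(iii) fail with probability $O(n^{-\lambda})$ by Markov's inequality on high moments, since $\E\|\mathbf X\|^l<\infty$ with $l$ large. So $\PP(\mathcal E_n^c)=O(n^{-\lambda})$, and it remains to show the deterministic bound holds on $\mathcal E_n$.

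On $\mathcal E_n$, the coefficients of $\hat\eta_k,\hat\xi_k$ are bounded by a constant $\mathsf K$ independent of $n$ and $\mathcal X$. I would first prove a version of Proposition~\ref{thpp1} in which the constants depend only on $\mathsf K$ and the compact set, not on the particular coefficients. This follows by inspecting its proof: $G_n$ is a finite sum of coefficients times fixed $C^\infty$ quasi-polynomials times $\psi_B$, and the inversion is a Taylor expansion of $\Psi_B$ and the $\eta_k\psi_B$ terms around $\mathsf q$. Its remainders are controlled by sup norms of derivatives on a slightly enlarged compact set, and these are linear or polynomial in the coefficients. Let $K=\Psi_B^{-1}([\varepsilon,1-\varepsilon])$ in the one-sided case and $\tilde K=\Psi_B^{-1}([(1+\varepsilon)/2,1-\varepsilon/2])\subset(0,\infty)$ in the two-sided case. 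Then
\[
\sup_{\varepsilon\le\alpha\le1-\varepsilon}\bigl|\hat G_n(\hat u_{\nu,\alpha})-\alpha\bigr|\le \mathsf A(\mathsf K)\, n^{-(\nu+1)/2},
\]
where $\hat G_n$ is the resampled analogue of \eqref{eq11}, $\nu=2$ in the one-sided case, and $\nu=3$ in the two-sided case.

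Combining with the remainder bound (i), and using that $\hat u_{3,\alpha}^{\mathrm{ts}}$ stays $\ge\varepsilon_0$ for a suitable $\varepsilon_0>0$ and large $n$, we get $|F^*(\hat u_{\nu,\alpha})-\alpha|\le C n^{-(\nu+1)/2}$. Here $F^*$ is the conditional cdf of $T^*$ (resp.\ of $|T^*|$).

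Next, convert this probability error into a quantile error. For $x$ in an enlarged compact set, I would compare $F^*$ at the points $\hat u_{\nu,\alpha}\pm Dn^{-(\nu+1)/2}$. Using the expansion plus the uniform remainder, and the fact that the derivative of $\hat G_n$ is bounded below by $c>0$ there (because $\psi_B$, resp.\ $2\psi_B$, is bounded below on a compact set and the corrections are $O(n^{-1/2})$ uniformly on $\mathcal E_n$), we obtain
\[
F^*\bigl(\hat u_{\nu,\alpha}+Dn^{-(\nu+1)/2}\bigr)\ge \alpha + (cD-2C)\,n^{-(\nu+1)/2}>\alpha,
\]
and similarly $F^*\bigl(\hat u_{\nu,\alpha}-Dn^{-(\nu+1)/2}\bigr)<\alpha$ for $D>2C/c$. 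By the inf/sup definitions \eqref{eq empquantile}--\eqref{eq empquantile 2side}, the quantile $\hat u_\alpha$ is then sandwiched within $Dn^{-(\nu+1)/2}$ of $\hat u_{\nu,\alpha}$. Atoms of $F^*$ do not matter, since only the cdf values at the two bracketing points are used. For $\hat u^{\mathrm{low}}$, apply the same argument to $\PP(T^*\ge \mathsf q\mid\mathcal X)=1-F^*(\mathsf q^-)$; the uniform expansion controls left limits too, because it holds for all $\mathsf q$ and the approximant is continuous.

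The main obstacle is the uniformity step: verifying that the constants in Proposition~\ref{thpp1} and in the lower bound on the derivative of $\hat G_n$ depend on the random coefficients only through the bound $\mathsf K$. In the one-sided case, $\xi_2^{\mathrm{up}}$ involves products such as $\eta_1\eta_1'$, which are quadratic in the coefficients, so the Taylor remainders are polynomial in $\mathsf K$. I would handle this by writing the inversion remainder explicitly as an integral-form remainder and bounding it on the enlarged compact set $K'=K+[-1,1]$, valid for $n$ large uniformly on $\mathcal E_n$. Small $n$ are absorbed into the constant, using that the probability bound is trivially true there.
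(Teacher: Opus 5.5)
Your proposal is correct and follows essentially the paper's route. You restrict to a high-probability event on which the sample coefficients are bounded (the paper's $\mathcal E_{\mathrm m}$) and the resampled Edgeworth remainder of Theorem~\ref{theorem resample uniform version} holds, rerun the Cornish--Fisher inversion with constants depending only on the coefficient bound, and convert the cdf error into a quantile error using a lower bound on $\hat G_n'$ and monotone bracketing. The differences are cosmetic: you bracket $\hat u_\alpha$ directly around $\hat u_{\nu,\alpha}$ instead of going through the intermediate root $\widetilde u_{\nu,\alpha}$ of $\hat G_n=\alpha$ (Lemma~\ref{lm inter}), which spares the well-definedness step, and you make explicit the remainder event that the paper leaves implicit.
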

General versions of the two theorems above are stated at the end of Appendix~\ref{conjectural framework}, with proofs given in Appendices~\ref{sec theo13} and \ref{sec theo14}. These higher-order quantile error bounds imply the quantile error of SCB, stated in the following corollary, whose proof is given in Appendix~\ref{appendixcor5}.

\begin{corollary}[Quantile error of SCB]\label{cor1}
Under Assumption~\ref{assume1}, for any fixed $B\geq1$ in the one-sided statement and any fixed $B\geq2$ in the two-sided statement,
\begin{align*}
\sup_{\varepsilon\le \alpha\le 1-\varepsilon}
\left|
\hat u_{\alpha}^{\mathrm{up}}-u_{\alpha}^{\mathrm{up}}
\right|
=
O_p(n^{-1}),
\qquad
\sup_{\varepsilon\le \alpha\le 1-\varepsilon}
\left|
\hat u_{\alpha}^{\mathrm{ts}}-u_{\alpha}^{\mathrm{ts}}
\right|
=
O_p(n^{-3/2}).
\end{align*}
\end{corollary}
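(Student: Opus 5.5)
Corollary~\ref{cor1} follows from a triangle inequality through the Cornish--Fisher approximants. For the one-sided case we write
\[
\hat u_\alpha^{\mathrm{up}}-u_\alpha^{\mathrm{up}}
=(\hat u_\alpha^{\mathrm{up}}-\hat u_{2,\alpha}^{\mathrm{up}})
+(\hat u_{2,\alpha}^{\mathrm{up}}-u_{2,\alpha}^{\mathrm{up}})
+(u_{2,\alpha}^{\mathrm{up}}-u_\alpha^{\mathrm{up}}).
\]
The analogous decomposition with $\hat u_{3,\alpha}^{\mathrm{ts}}$ and $u_{3,\alpha}^{\mathrm{ts}}$ handles the two-sided case. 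Theorem~\ref{uniform cf expansion} bounds the third term deterministically by $\mathsf C^{\mathrm{up}}_{\mathrm{c.f.}}n^{-3/2}$, respectively $\mathsf C^{\mathrm{ts}}_{\mathrm{c.f.}}n^{-2}$, uniformly over $\alpha\in[\varepsilon,1-\varepsilon]$. Theorem~\ref{uniform resample cf expansion} bounds the first term by $\hat{\mathsf C}_{\mathrm{c.f.}}n^{-3/2}$ (resp.\ $n^{-2}$) outside an event of probability $O(n^{-\lambda})$. That is an $O_p$ bound, and it is of smaller order than the target in both cases.

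The middle term is the main one. In the one-sided case,
\[
\hat u_{2,\alpha}^{\mathrm{up}}-u_{2,\alpha}^{\mathrm{up}}=n^{-1/2}\bigl(\hat\xi_1^{\mathrm{up}}-\xi_1^{\mathrm{up}}\bigr)(\mathsf q)+n^{-1}\bigl(\hat\xi_2^{\mathrm{up}}-\xi_2^{\mathrm{up}}\bigr)(\mathsf q),
\qquad \mathsf q=\Psi_B^{-1}(\alpha).
\]
As $\alpha$ ranges over $[\varepsilon,1-\varepsilon]$, $\mathsf q$ ranges over the compact set $K=[\Psi_B^{-1}(\varepsilon),\Psi_B^{-1}(1-\varepsilon)]$. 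In the two-sided case, $\mathsf q=\Psi_B^{-1}((1+\alpha)/2)$ ranges over a compact subset $\tilde K$ of $(0,\infty)$. By \eqref{eq12} and the explicit forms of $\eta_k$, each $\xi_k$ is a finite sum $\sum_j c_j\,\varphi_j(\mathsf q)$. Here the $\varphi_j$ are fixed continuous functions, for instance $\mathsf q^3/(B+\mathsf q^2)$ or $(B+\mathsf q^2)^{1/2}$, and the $c_j$ are polynomials in population moments of order at most $k+2$ (for $\xi_2^{\mathrm{up}}$, also products of the $b_{[1,\cdot]}$ coming from $(\eta_1^{\mathrm{up}})^2$ and $\eta_1^{\mathrm{up}}(\eta_1^{\mathrm{up}})'$). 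The $\varphi_j$ are bounded on $K$ or $\tilde K$. It then suffices to show $\hat c_j-c_j=O_p(n^{-1/2})$. This holds because the sample moments up to order $4$ are $\sqrt n$-consistent by the CLT, given the finite moments of order $l$ in Assumption~\ref{assume1}, and the coefficients are polynomial, hence locally Lipschitz, functions of those moments (delta method). Therefore
\[
\sup_{\mathsf q\in K}|\hat\xi_k^{\mathrm{up}}-\xi_k^{\mathrm{up}}|=O_p(n^{-1/2}),
\]
and the middle term is $O_p(n^{-1/2}\cdot n^{-1/2})+O_p(n^{-3/2})=O_p(n^{-1})$. In the two-sided case the $n^{-1/2}$ term is absent, so the middle term is $n^{-1}(\hat\xi_2^{\mathrm{ts}}-\xi_2^{\mathrm{ts}})(\mathsf q)=O_p(n^{-3/2})$ uniformly on $\tilde K$. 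Combining the three pieces gives the claimed rates.

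The only point needing care is uniformity in $\alpha$. The supremum over $\alpha$ becomes a supremum over $\mathsf q$ in a compact set, and it factorizes into a finite sum of (random coefficient error) $\times$ (deterministic bounded function), so no empirical-process argument is needed. One should also confirm that the two-sided coefficient $B\ge2$ restriction does not affect $\xi_2^{\mathrm{ts}}$ being bounded on $\tilde K$. This is immediate from its explicit quasi-polynomial form, which has no singularities since $B+\mathsf q^2\ge B>0$.
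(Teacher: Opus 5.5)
Your proof is correct and follows essentially the paper's route: a triangle inequality through the Cornish--Fisher approximants, with the two outer terms controlled by the population and resampled uniform Cornish--Fisher theorems, and the middle term controlled by $\sqrt n$-consistency of the plug-in coefficients times bounded functions on a compact $\mathsf q$-range. The only difference is cosmetic: in the one-sided case the paper uses the $\nu=1$ approximant $\hat u_{1,\alpha}^{\mathrm{up}}$, so that only $\hat\xi_1^{\mathrm{up}}-\xi_1^{\mathrm{up}}$ appears, whereas you use $\nu=2$. Your choice relies directly on the main-text Theorems~\ref{uniform cf expansion} and \ref{uniform resample cf expansion}, at the cost of also bounding the harmless extra term $n^{-1}(\hat\xi_2^{\mathrm{up}}-\xi_2^{\mathrm{up}})$.
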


No restriction $B\geq 2$ is needed in Corollary~\ref{cor1} for the symmetric two-sided quantile error. The reason is that an $O(n^{-3/2})$ approximation in \eqref{Eq cheap boots uniform convergence} already suffices for this conclusion. The classical studentized bootstrap has quantile error order $O_p(n^{-3/2})$ in the symmetric two-sided case and $O_p(n^{-1})$ in the one-sided upper case; see Eqs.~(3.23) and (3.53) in \cite{hall2013bootstrap}. Thus, SCB achieves the same quantile error order as the studentized bootstrap with substantially less computation.

\section{Numerical Results}\label{num res}
We study the finite-sample performance of SCB through three function-of-mean examples, one V-statistic example, and two input uncertainty quantification examples for \(M/M/k\) queueing systems. All CIs are constructed at the nominal coverage level \(\alpha=0.95\). We compare SCB with the basic bootstrap (BB), standard error bootstrap (SE), and cheap bootstrap (CB) discussed in Section~\ref{sec 2 established methods}, and additionally include the percentile bootstrap (Per) and symmetric bootstrap (Sym) as two further benchmarks, defined below.

The percentile bootstrap takes the empirical \((1-\alpha)/2\) and \((1+\alpha)/2\) quantiles of the resample estimates \(\theta_b^*\) as the CI endpoints:
\[
\left[
\{\theta_b^*\}_{(1-\alpha)/2},
\{\theta_b^*\}_{(1+\alpha)/2}
\right].
\]
The symmetric bootstrap is included to separate the effect of symmetry from the effect of studentization. It uses the empirical \(\alpha\)-quantile of \(|\theta_b^*-\hat\theta|\) and constructs
\[
\left[
\hat\theta-\{|\theta_b^*-\hat\theta|\}_{\alpha},
\hat\theta+\{|\theta_b^*-\hat\theta|\}_{\alpha}
\right].
\]
Unlike the symmetric studentized bootstrap of \cite{hall1988symmetric}, this procedure does not use a pivotal statistic and therefore should not be regarded as a higher-order coverage-accurate method.

For SCB, let \(B\) denote the inner bootstrap size, \(B_1\) the outer bootstrap size, and \(n\) the sample size. Throughout the experiments, we take \(B_1\in\{39,79\}\), and use the \(38\)th or \(77\)th order statistic of \(\{|T_{b_1}^*|\}_{b_1=1}^{B_1}\), respectively, when computing empirical quantiles. Measured by the number of evaluations of the target statistic, the computational cost of SCB is
\[
1+B_1+B_1B+B=(B_1+1)(B+1).
\]
Here the four terms correspond respectively to the original estimate \(\hat\theta\), the \(B_1\) outer estimates \(\theta_{b_1}^*\), the \(B_1B\) inner estimates \(\theta_{b_1,b_2}^{**}\), and the \(B\) additional resample estimates used to construct \(\hat S\). For each benchmark method without an inner resampling layer, we set its single-layer resample size to
\[
B_{\mathrm{bench}}=(B_1+1)(B+1),
\]
so that all methods are compared under the same computational budget.

A minor degeneracy can arise when \(B=1\), since the scale estimates \(\hat S\) or \(S_{b_1}^*\) may become zero or nearly zero when the corresponding resample estimates coincide. To avoid degenerate intervals, we use a restart threshold \(\epsilon\), restarting the procedure whenever \(\hat S<\epsilon\) or \(S_{b_1}^*<\epsilon\). In our experiments, this safeguard is essentially irrelevant for \(B\geq 2\). The threshold values used below are reported with the corresponding experiments.

\subsection{Function-of-Mean Type Experiments}

We first consider three function-of-mean experiments, where the target parameter has the form $\theta(P)=f(\mathbb E_P[\mathbf X])$. Let $\mu_1,\mu_2,\ldots,\mu_d$ denote the components of the mean vector. Throughout this subsection, we take $B\in\{1,2,5,10,20,30,50\}$, $B_1\in\{39,79\}$, and $n\in\{10,25,50,100,200\}$. The true target values in all three examples can be computed analytically.

\begin{example}\label{ex1}
Let $d=2$ and $\mathbf X=(X_1,X_2)$, where $X_1$ and $X_2$ are independent and each follows a $\chi^2(5)$ distribution. The function is
\[
f(\mu_1,\mu_2)
=
\frac{(5-4\mu_1)\mu_2}{20}
+
\frac{\mu_1(4-\mu_1)}{40}.
\]
\end{example}

The results for Experiment~1 are shown in Figure~\ref{figure example1}. This example illustrates the main pattern observed in the function-of-mean experiments. SCB is consistently closer to the nominal \(95\%\) coverage level than the benchmark methods, especially in small and moderate samples where the benchmarks exhibit visible undercoverage. This coverage improvement comes with wider intervals when \(B\) is very small, but the widths decrease quickly as \(B\) increases, while the coverage remains stable. Increasing \(B_1\) from \(39\) to \(79\) has a comparatively mild effect: it slightly stabilizes the empirical pivotal quantiles and can reduce the interval widths, but this reduction is much less pronounced than the effect of increasing \(B\).

\begin{figure}[!tbp]
  \centering
  \begin{adjustbox}{max width=\linewidth}
    \begin{minipage}[t]{0.475\linewidth}\vspace{0pt}
      \centering
      \includegraphics[width=0.98\linewidth]{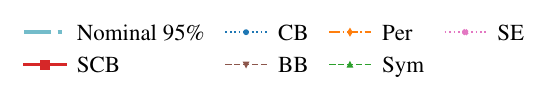}\\[0.2cm]
      \includegraphics[width=0.98\linewidth]{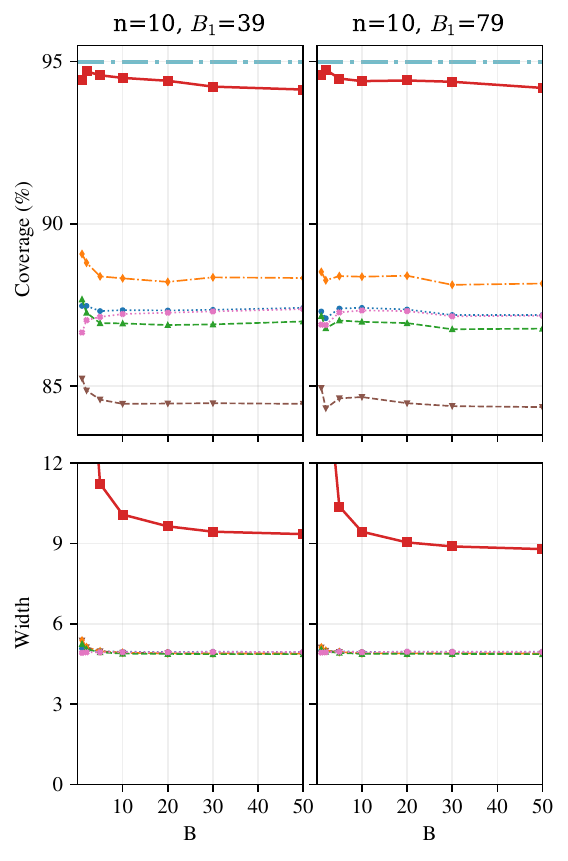}
      \vfill
      \includegraphics[width=0.98\linewidth]{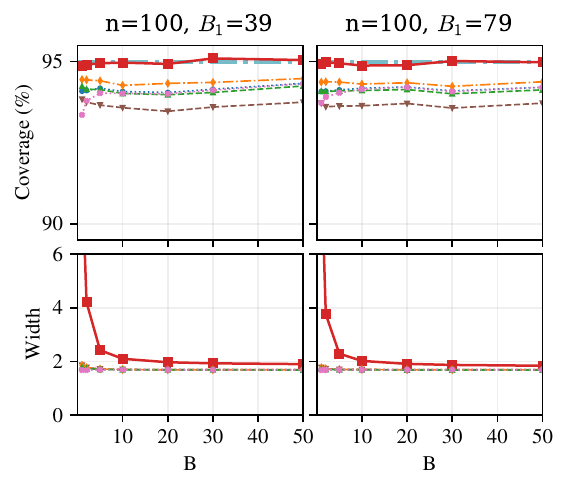}
    \end{minipage}
    \hfill
    \begin{minipage}[t]{0.475\linewidth}\vspace{0pt}
      \centering
      \includegraphics[width=0.98\linewidth]{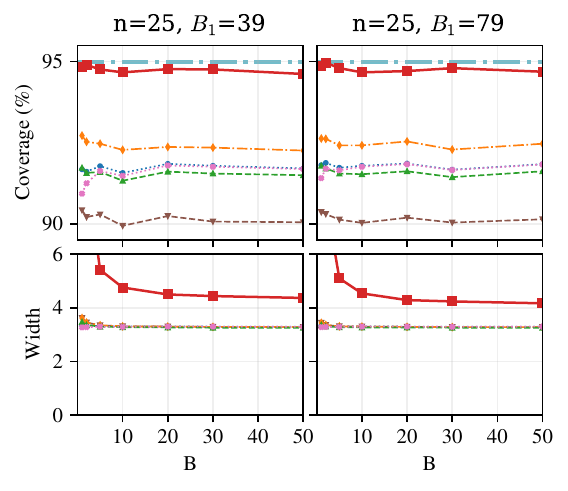}\\
      \includegraphics[width=0.98\linewidth]{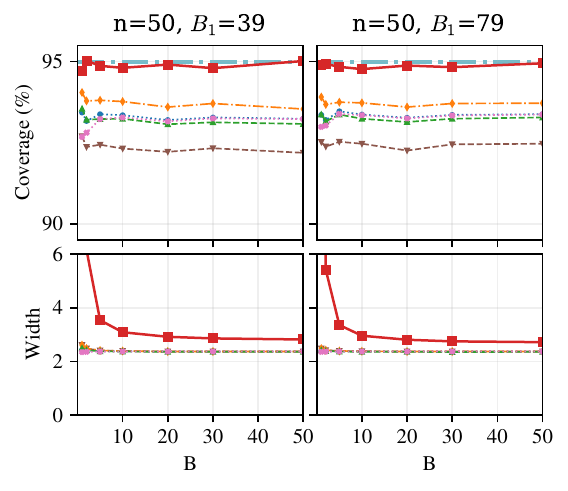}
      \vfill
      \includegraphics[width=0.98\linewidth]{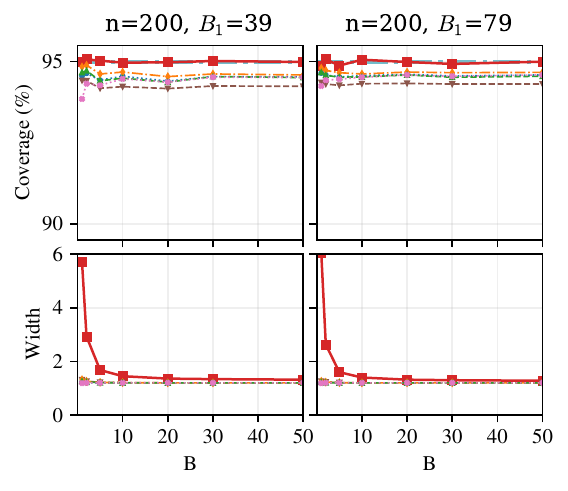}
    \end{minipage}
  \end{adjustbox}
  \caption{Coverage probabilities and interval widths for SCB and benchmark methods in Experiment~1. Benchmark methods use \(B_{\mathrm{bench}}=(B_1+1)(B+1)\), matching the same computational budget as SCB. The SCB restart threshold is \(\epsilon=0.01\) for \(B_1=39\) and \(\epsilon=0.05\) for \(B_1=79\). Each result is based on \(100{,}000\) independent experiments.}
  \label{figure example1}
\end{figure}

\begin{example}
The setup is the same as in Example~\ref{ex1}, except that $X_1$ and $X_2$ are independent and each follows the uniform distribution on $(0,10)$.
\end{example}

The results for Experiment~2 are shown in Figure~\ref{figure example2}. The same qualitative behavior appears in this bounded-input setting, although the undercoverage of the benchmark methods is less severe than in Experiment~1. SCB remains close to the nominal level across the sample sizes considered, and its additional width for very small \(B\) again diminishes rapidly as \(B\) grows.

\begin{figure}[!tbp]
  \centering
  \begin{adjustbox}{max width=\linewidth}
    \begin{minipage}[t]{0.475\linewidth}\vspace{0pt}
      \centering
      \includegraphics[width=0.98\linewidth]{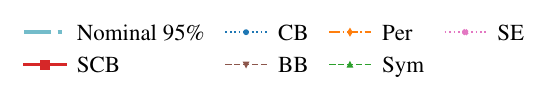}\\[0.2cm]
      \includegraphics[width=0.98\linewidth]{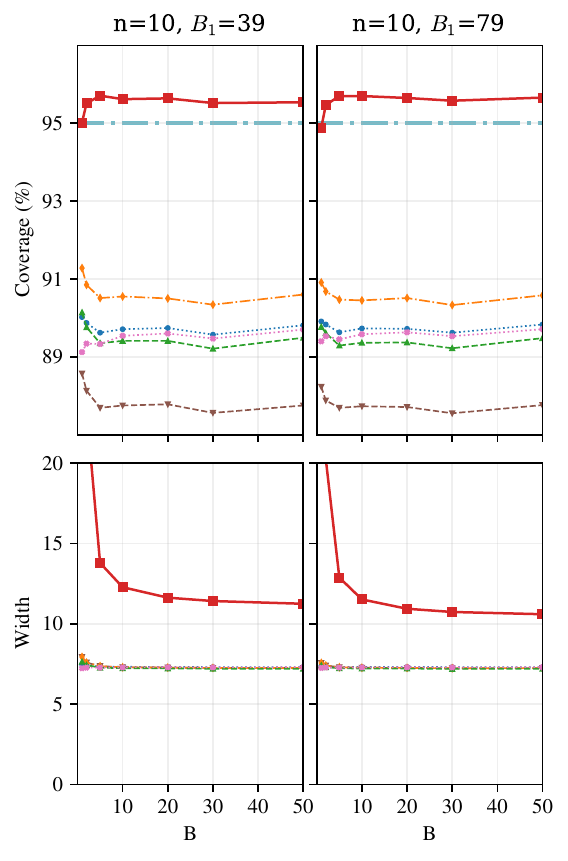}
      \vfill
      \includegraphics[width=0.98\linewidth]{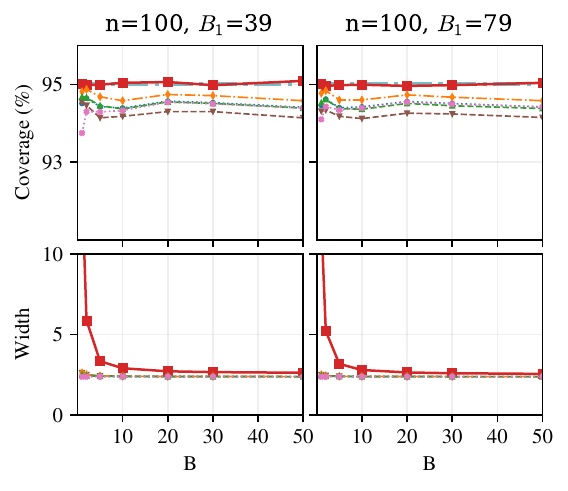}
    \end{minipage}
    \hfill
    \begin{minipage}[t]{0.475\linewidth}\vspace{0pt}
      \centering
      \includegraphics[width=0.98\linewidth]{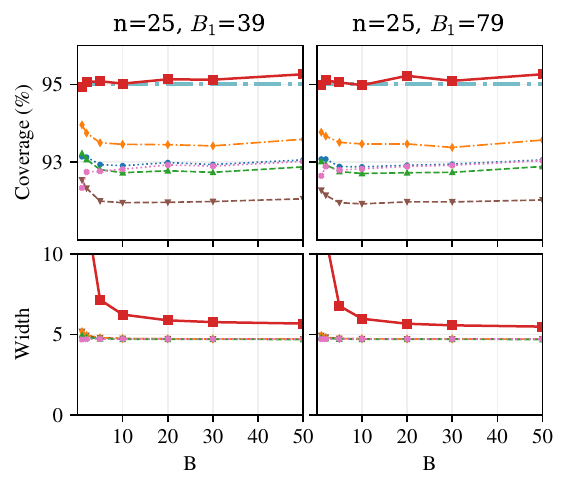}\\
      \includegraphics[width=0.98\linewidth]{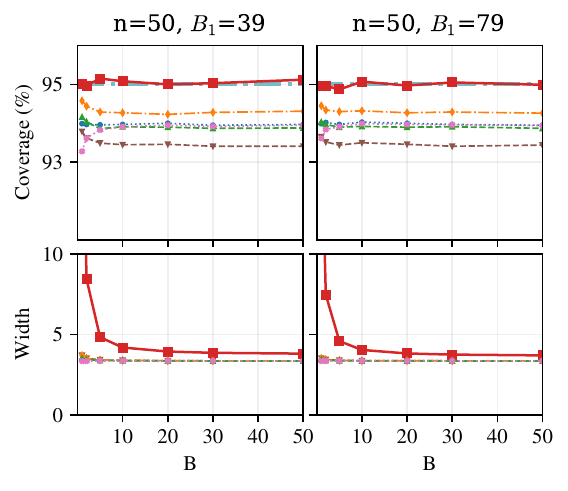}
      \vfill
      \includegraphics[width=0.98\linewidth]{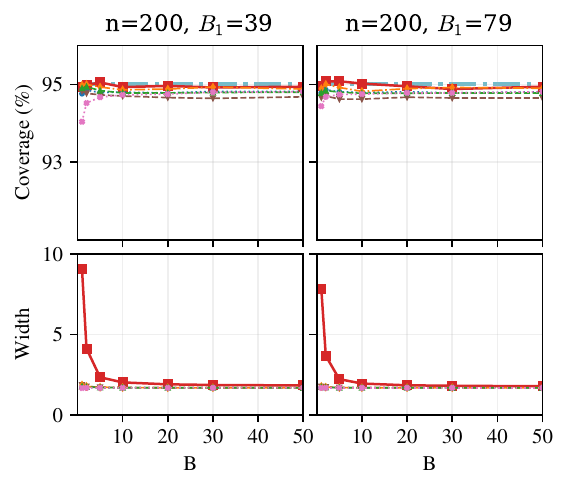}
    \end{minipage}
  \end{adjustbox}
  \caption{Coverage probabilities and interval widths for SCB and benchmark methods in Experiment~2. Benchmark methods use \(B_{\mathrm{bench}}=(B_1+1)(B+1)\), matching the same computational budget as SCB. The SCB restart threshold is \(\epsilon=0.01\). Each result is based on \(100{,}000\) independent experiments.}
  \label{figure example2}
\end{figure}

\begin{example}
Let $d=3$, and let $X_1,X_2,X_3$ be independent exponential random variables with mean $2$. The function is
\[
f(\mu_1,\mu_2,\mu_3)
=
\log\left(
0.1+
(\mu_1^2+4\mu_1\mu_2\mu_3+\mu_2^2\mu_3+0.4\mu_1\mu_2\mu_3^2)^2
\right).
\]
\end{example}

The results for Experiment~3 are shown in Figure~\ref{figure example3}. The same coverage--width tradeoff is observed, now in a more nonlinear function-of-mean example with skewed exponential inputs. The benchmark methods, particularly the percentile bootstrap, show more pronounced undercoverage, while SCB stays close to the nominal level. As before, the extra width of SCB is mainly concentrated at the smallest values of \(B\) and decreases substantially for moderate \(B\).

\begin{figure}[!tbp]
  \centering
  \begin{adjustbox}{max width=\linewidth}
    \begin{minipage}[t]{0.475\linewidth}\vspace{0pt}
      \centering
      \includegraphics[width=0.98\linewidth]{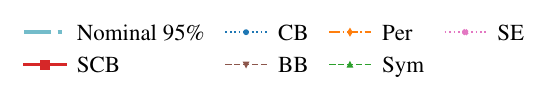}\\[0.2cm]
      \includegraphics[width=0.98\linewidth]{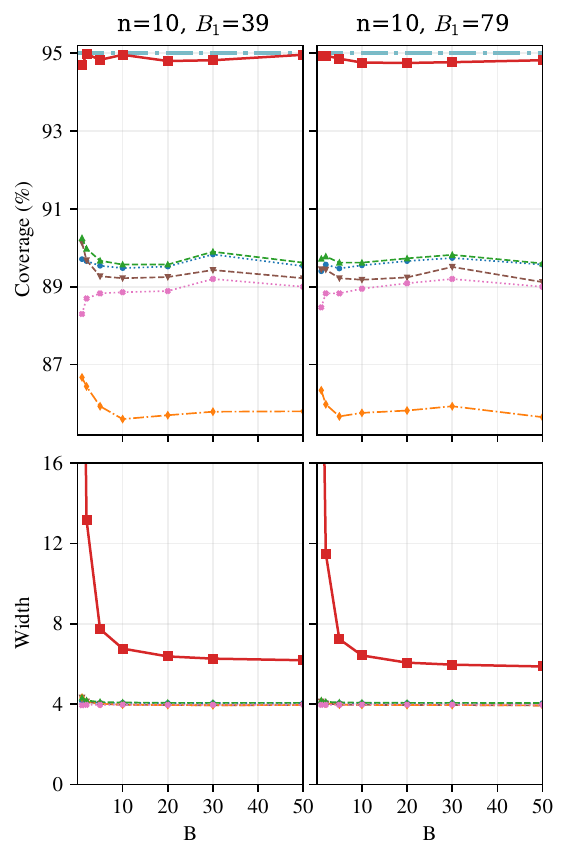}
      \vfill
      \includegraphics[width=0.98\linewidth]{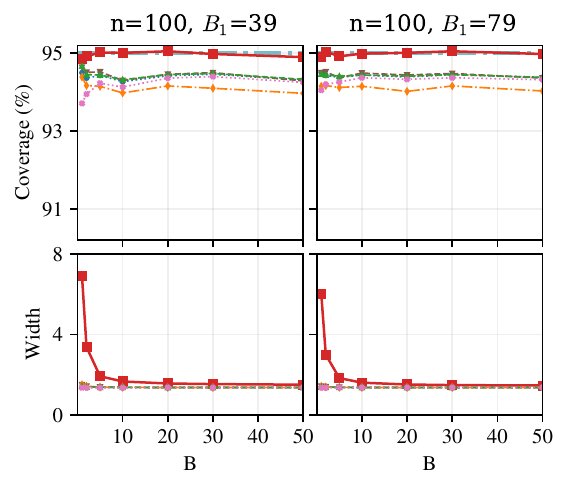}
    \end{minipage}
    \hfill
    \begin{minipage}[t]{0.475\linewidth}\vspace{0pt}
      \centering
      \includegraphics[width=0.98\linewidth]{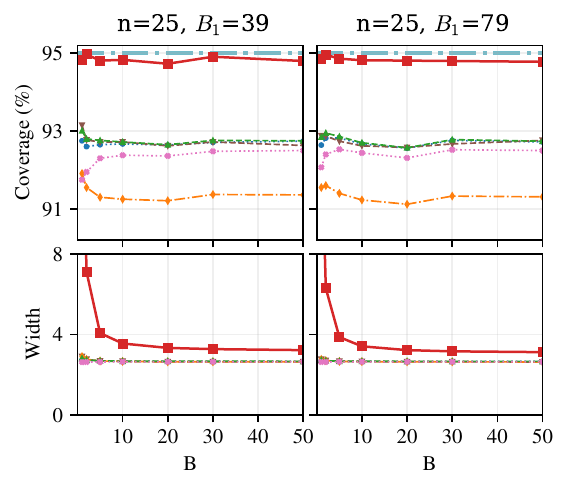}\\
      \includegraphics[width=0.98\linewidth]{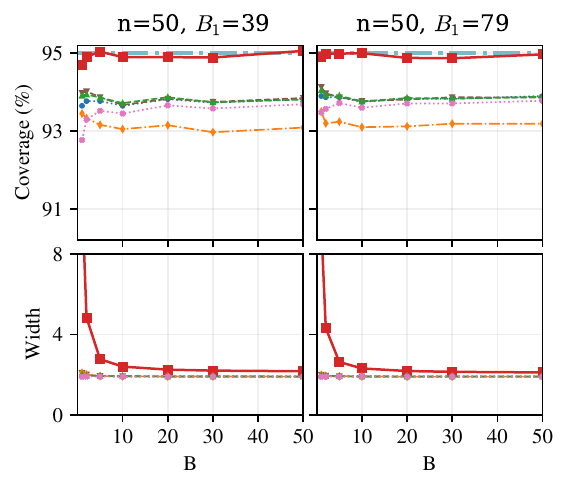}
      \vfill
      \includegraphics[width=0.98\linewidth]{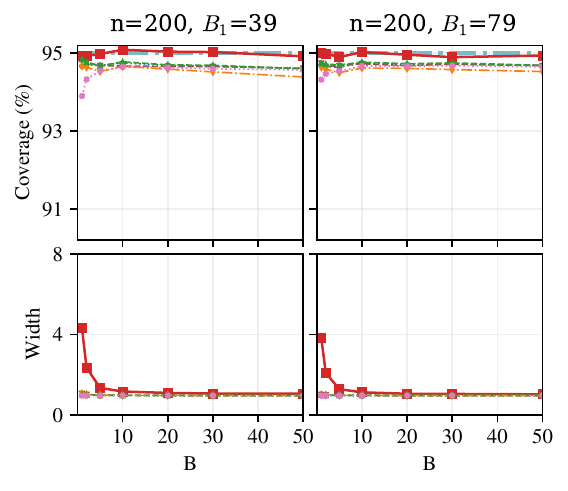}
    \end{minipage}
  \end{adjustbox}
  \caption{Coverage probabilities and interval widths for SCB and benchmark methods in Experiment~3. Benchmark methods use \(B_{\mathrm{bench}}=(B_1+1)(B+1)\), matching the same computational budget as SCB. The SCB restart threshold is \(\epsilon=0.01\). Each result is based on \(100{,}000\) independent experiments.}
  \label{figure example3}
\end{figure}

\subsection{V-statistic Experiment}

We next apply SCB to a V-statistic with target parameter
\[
\theta(P)=\mathbb E_P[h(X_1,X_2)],
\]
where $X_1$ and $X_2$ are i.i.d. samples from $P$. We take $B\in\{1,2,5,10,20\}$, $B_1\in\{39,79\}$, and $n\in\{10,20,30,40,50\}$.

\begin{example}
Let $h(x,y)=\min\left(12,(x-y)^2+x+y\right)$ and let $P=\Gamma(1,1)$. The ground truth is $\theta\approx 3.493254$, obtained through direct calculus.
\end{example}

The results are shown in Figure~\ref{figure example4}. The same coverage--width tradeoff observed in the function-of-mean experiments persists in this V-statistic example, but the finite-sample effects are more pronounced. The benchmark methods exhibit substantial undercoverage, especially for \(n=10\) and \(20\), while SCB approaches the nominal level more quickly and is already close by \(n=30\). As before, very small values of \(B\) lead to wider SCB intervals, but the widths decrease rapidly as \(B\) increases, while the coverage advantage is largely retained.

\begin{figure}[!tbp]
  \centering
  \begin{minipage}{0.90\linewidth}
    \centering
    \setlength{\tabcolsep}{0pt}
    \begin{tabular}{@{}c@{\hspace{0.015\linewidth}}c@{}}
      \begin{minipage}[t]{0.49\linewidth}\vspace{0pt}
        \centering
        \includegraphics[width=0.98\linewidth]{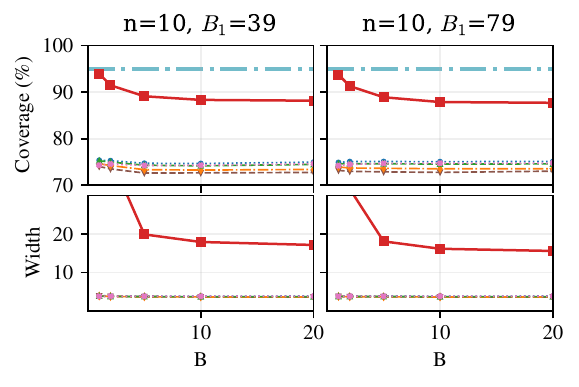}
      \end{minipage}
      &
      \begin{minipage}[t]{0.49\linewidth}\vspace{0pt}
        \centering
        \vspace*{0.9cm}
        \includegraphics[width=0.74\linewidth]{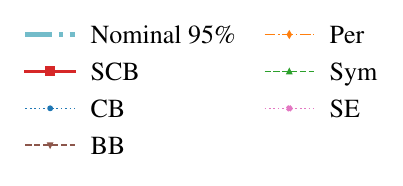}
      \end{minipage}
      \\
      \includegraphics[width=0.49\linewidth]{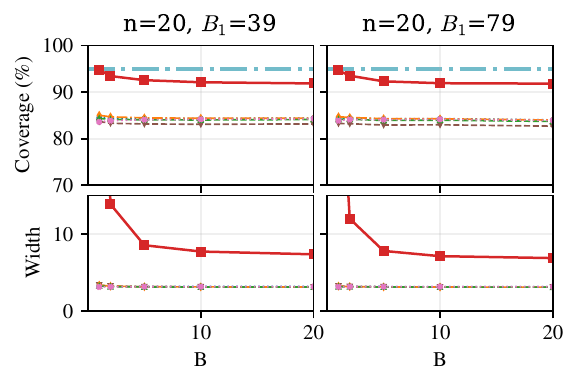}
      &
      \includegraphics[width=0.49\linewidth]{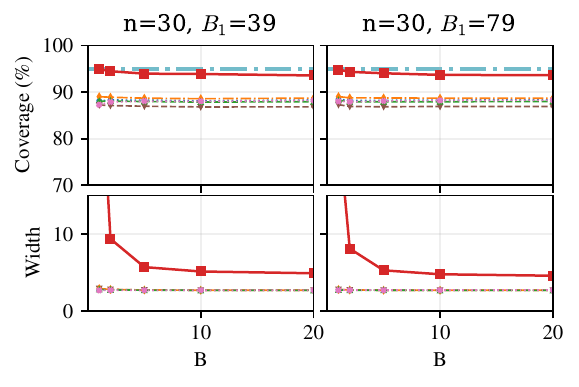}
      \\
      \includegraphics[width=0.49\linewidth]{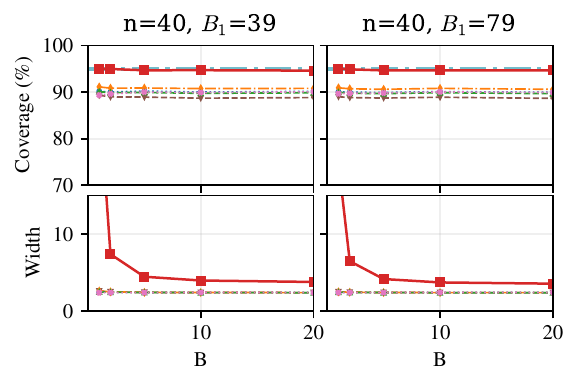}
      &
      \includegraphics[width=0.49\linewidth]{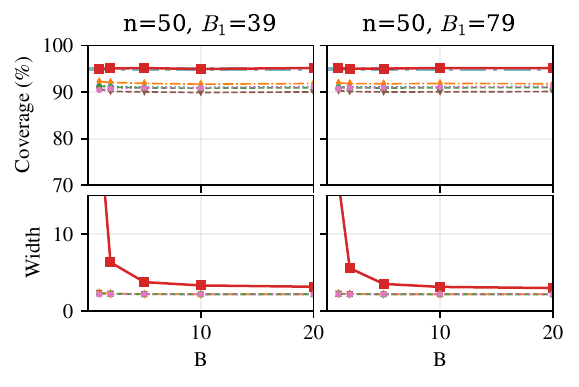}
    \end{tabular}
  \end{minipage}
  \caption{Coverage probabilities and interval widths for SCB and benchmark methods in Experiment~4. Benchmark methods use \(B_{\mathrm{bench}}=(B_1+1)(B+1)\), matching the same computational budget as SCB. The SCB restart threshold is \(\epsilon=10^{-4}\). Each result is based on \(100{,}000\) independent experiments.}
  \label{figure example4}
\end{figure}

\subsection{Input Uncertainty Quantification for M/M/k Queueing}

Finally, we apply SCB to input uncertainty quantification for queueing systems. We consider $M/M/k$ systems in which the mean inter-arrival time is denoted by $\lambda$ and the mean service time by $\mu$. In constructing the intervals, we use the empirical distributions of inter-arrival and service times, without assuming that the true system structure is known. Since the systems considered below are not stable in steady state, the target statistic is instead the expected total waiting time of the first $N=10$ customers. Each target-statistic evaluation in the bootstrap procedures is computed as the average of $1000$ Monte Carlo runs using either the empirical input distribution or the corresponding bootstrap input distribution. This number of replications is chosen so that input uncertainty dominates simulation noise.

\begin{example}
An $M/M/2$ system with inter-arrival mean $\lambda=1$ and service-time mean $\mu=8$.
\end{example}

For Experiment~5, the true value is \(\theta_5=10.66587\pm 0.00035\) with \(95\%\) confidence, obtained from \(10^9\) simulation replications. We take \(B\in\{1,2,5,10,15\}\), \(B_1\in\{39,79\}\), and \(n\in\{10,15,20,50\}\). The results are shown in Figure~\ref{figure example5}. This queueing example shows the same coverage--width tradeoff as above, but in a more pronounced form: the benchmark methods are substantially below the nominal level, whereas SCB stays close to nominal at the price of noticeably wider intervals when \(B\) is small.

\begin{figure}[!tbp]
  \centering
  \begin{minipage}{0.86\linewidth}
    \centering
    \includegraphics[width=0.88\linewidth,trim=0 0.3cm 0 0,clip]{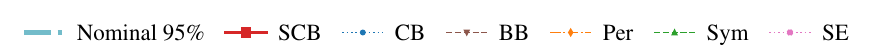}\\[0cm]
    \begin{minipage}[t]{0.485\linewidth}\vspace{0pt}
      \centering
      \includegraphics[width=0.96\linewidth]{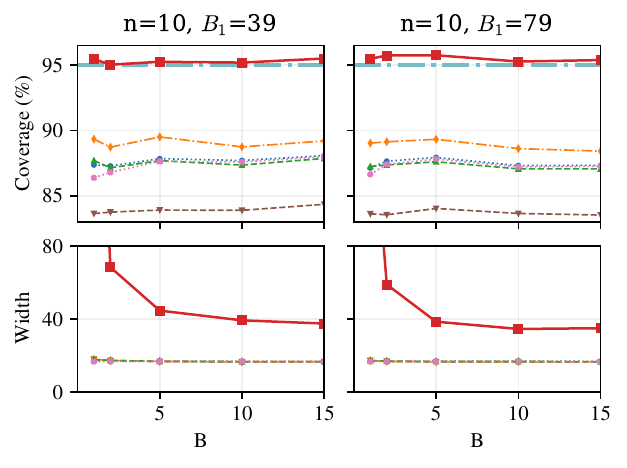}\\[0.12cm]
      \includegraphics[width=0.96\linewidth]{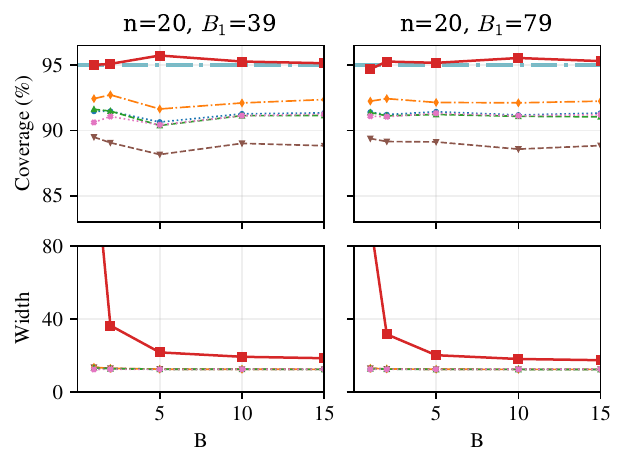}
    \end{minipage}
    \hfill
    \begin{minipage}[t]{0.485\linewidth}\vspace{0pt}
      \centering
      \includegraphics[width=0.96\linewidth]{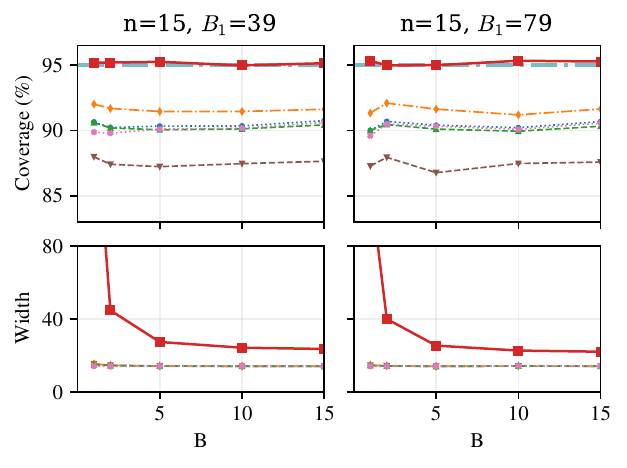}\\[0.12cm]
      \includegraphics[width=0.96\linewidth]{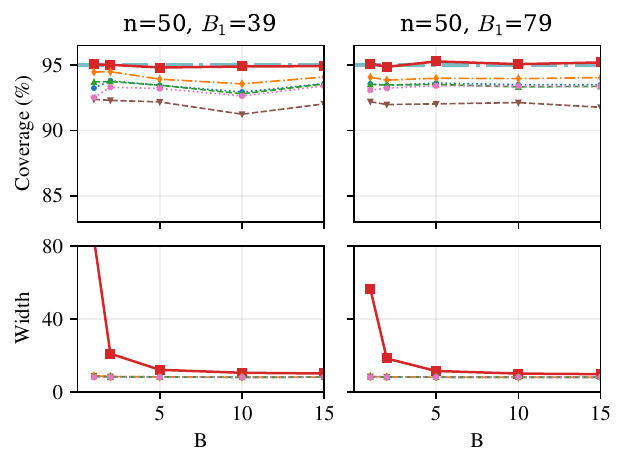}
    \end{minipage}
  \end{minipage}
  \caption{Coverage probabilities and interval widths for SCB and benchmark methods in Experiment~5, where the target statistic is the expected total waiting time of the first \(N=10\) customers in an \(M/M/2\) system. Benchmark methods use \(B_{\mathrm{bench}}=(B_1+1)(B+1)\), matching the same computational budget as SCB. The SCB restart threshold is \(\epsilon=10^{-3}\). Each result is based on \(10{,}000\) independent experiments.}
  \label{figure example5}
\end{figure}

\begin{example}
An $M/M/1$ system with inter-arrival mean $\lambda=1$ and service-time mean $\mu=1.1$.
\end{example}

For Experiment~6, the true value is \(\theta_6=1.77567\pm 0.00009\) with \(95\%\) confidence, obtained from \(10^9\) simulation replications. We take \(B\in\{1,2,5,10,20,30\}\), \(B_1\in\{39,79\}\), and \(n\in\{25,50,100,200,300,400\}\). The results are shown in Figure~\ref{figure example6}. Compared with Experiment~5, the benchmark methods are closer to the nominal level, especially as \(n\) increases, and hence the coverage advantage of SCB is more moderate. SCB nevertheless remains the most consistently calibrated method across the configurations. 

\begin{figure}[!tbp]
  \centering
  \begin{minipage}{0.84\linewidth}
    \centering
    \begin{minipage}[t]{0.485\linewidth}\vspace{0pt}
      \centering
      \includegraphics[width=0.95\linewidth]{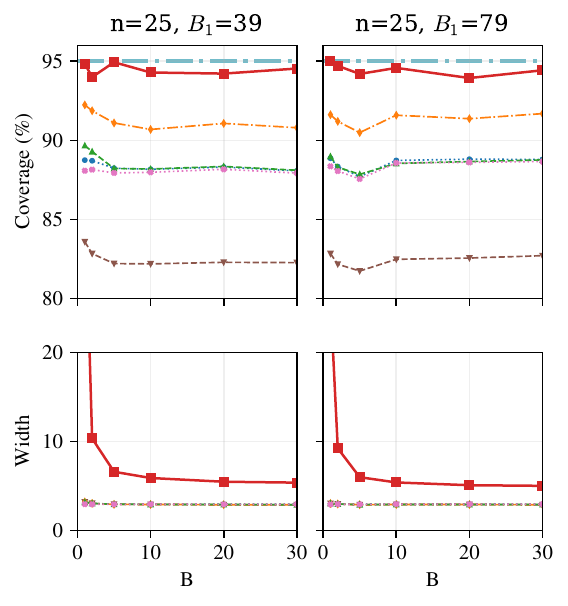}\\[0.12cm]
      \includegraphics[width=0.95\linewidth]{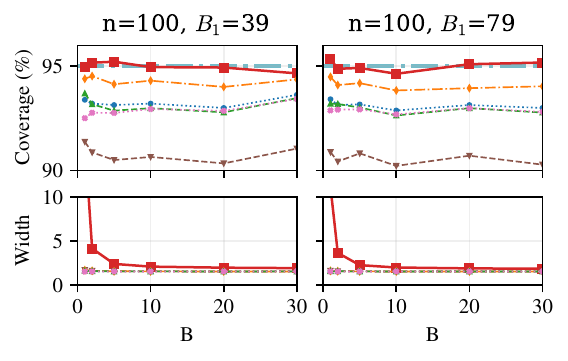}\\[0.12cm]
      \includegraphics[width=0.95\linewidth]{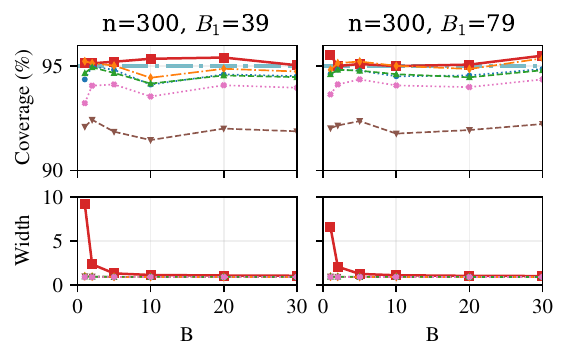}
    \end{minipage}
    \hfill
    \begin{minipage}[t]{0.485\linewidth}\vspace{0pt}
      \centering
      \includegraphics[width=0.82\linewidth]{images/example1legend.pdf}\\[0.62cm]
      \includegraphics[width=0.95\linewidth]{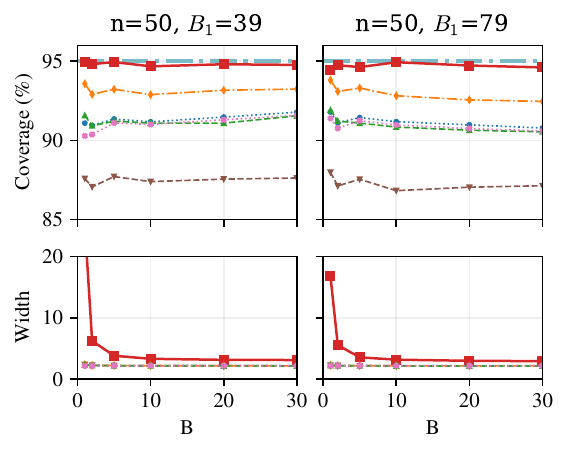}\\[0.12cm]
      \includegraphics[width=0.95\linewidth]{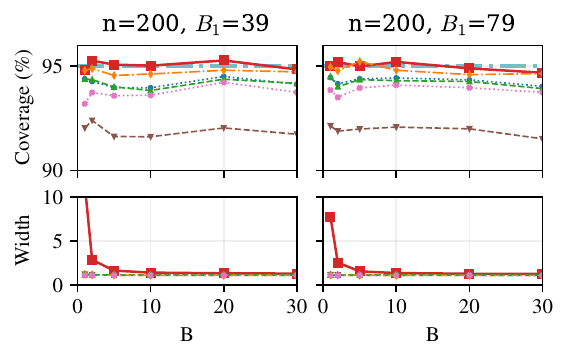}\\[0.12cm]
      \includegraphics[width=0.95\linewidth]{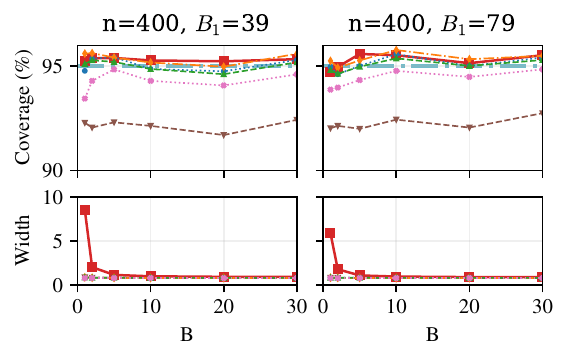}
    \end{minipage}
  \end{minipage}
  \caption{Coverage probabilities and interval widths for SCB and benchmark methods in Experiment~6, where the target statistic is the expected total waiting time of the first \(N=10\) customers in an \(M/M/1\) system. Benchmark methods use \(B_{\mathrm{bench}}=(B_1+1)(B+1)\), matching the same computational budget as SCB. The SCB restart threshold is \(\epsilon=10^{-8}\). Each result is based on \(10{,}000\) independent experiments.}
  \label{figure example6}
\end{figure}

\medskip
\noindent\emph{Practical recommendation.}
Taken together, the numerical results support the intended coverage--computation tradeoff of SCB. The coverage performance is robust across the range of inner bootstrap sizes considered; in particular, good coverage does not require taking \(B\) large. The choice of \(B\) mainly affects the conservativeness of the intervals. Very small choices such as \(B=1\) or \(2\) keep the computation minimal but can lead to wide intervals, necessitated to counteract the high variability of the cheap scale estimators \(\hat S\) and \(S_{b_1}^*\). The outer bootstrap size \(B_1\) controls the Monte Carlo stability of the empirical pivotal quantiles; increasing \(B_1\) from \(39\) to \(79\) tends to make the intervals slightly more stable and sometimes shorter, but this effect is not substantial.

We therefore recommend using a moderate outer bootstrap size, such as \(B_1=39\), together with a small but not minimal inner bootstrap size, typically \(B\in\{3,4,5\}\), with \(B=5\) as a conservative default choice. If additional computational budget is available, \(B\) can be increased modestly to further reduce interval width.

\begin{acks}[Acknowledgments]
We gratefully acknowledge support from the InnoHK initiative of the Innovation and Technology Commission of the Hong Kong Special Administrative Region Government, Laboratory for AI-Powered Financial Technologies, and Columbia Innovation Hub Award.
\end{acks}

\begin{appendix}
\section{Auxiliary Edgeworth Expansion Results}\label{appendix a}

This appendix collects the Edgeworth expansion results used as background tools in the proofs. We first state the population Edgeworth expansion for smooth functions of sample means, then its bootstrap analogue, and finally the corresponding resampled bootstrap analogue needed for the nested resampling arguments.

\subsection{Population Edgeworth Expansion}

\begin{theorem}[Adapted from Hall (2013), Theorem 2.2]\label{Theo Hall1}
Assume that, for an integer $\nu \ge 1$, the function $\widetilde A$ has $\nu+3$ continuous derivatives in a neighborhood of $\mu$, that $\widetilde A(\mu)=0$, that $\mathbb E\|\mathbf X\|^{\nu+3}<\infty$, that the characteristic function $\chi$ of $\mathbf X$ satisfies Cram\'er's condition
\[
\limsup_{\|t\|\to\infty} \abs{\chi(t)} < 1,
\]
and that the asymptotic variance of $\sqrt n\,\widetilde A(\bar{\mathbf X})$ equals $1$. Then there exists a constant $\mathsf C_\nu>0$ such that
\[
\mathbb P_n\!\left(\sqrt n\,\widetilde A(\bar{\mathbf X})\le x\right)
=
\Phi(x)
+
\sum_{j=1}^{\nu} n^{-j/2}\pi_j(x)\phi(x)
+
\varepsilon_n^\nu(x),
\]
uniformly in $x\in\mathbb R$, where
\[
\sup_{x\in\mathbb R}\abs{\varepsilon_n^\nu(x)}
\le
\mathsf C_\nu n^{-(\nu+1)/2}.
\]
Moreover, for each $j=1,\ldots,\nu$, $\pi_j$ is a polynomial of degree $3j-1$, odd when $j$ is even and even when $j$ is odd, whose coefficients depend polynomially on the moments of $\mathbf X$ up to order $j+2$ and on $\widetilde A$.
\end{theorem}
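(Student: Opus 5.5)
The plan follows the classical route of \cite{BhattacharyaGhosh1978} and Chapter 2 of \cite{hall2013bootstrap}. First transfer an Edgeworth expansion for the standardized mean $\sqrt n(\bar{\mathbf X}-\bm\mu)$, then pass it through the smooth map $\widetilde A$ via the delta method.

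\emph{Step 1: multivariate expansion for the mean.} Take the expansion of $\mathbb P(\sqrt n(\bar{\mathbf X}-\bm\mu)\in \mathcal B)$ over Borel sets $\mathcal B$ with boundaries of small Gaussian measure. The approximating signed measure has density $\phi_{\Sigma}(\mathbf x)\bigl(1+\sum_{j\le\nu}n^{-j/2}P_j(\mathbf x)\bigr)$, and the remainder is $O(n^{-(\nu+1)/2})$. This uses Cram\'er's condition to control the characteristic function of $\bar{\mathbf X}$ away from the origin. It also uses the moment condition, truncating at level $\sqrt n$ so that only $\nu+3$ moments are needed. Esseen's smoothing inequality then gives the uniform bound.

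\emph{Step 2: Taylor expansion of the statistic.} Expand $\sqrt n\widetilde A(\bar{\mathbf X})=S_n+n^{-1/2}Q_2+\dots+n^{-\nu/2}Q_{\nu+1}+R_n$, where $S_n$ is the linear term with unit asymptotic variance and the $Q_k$ are polynomials in $\sqrt n(\bar{\mathbf X}-\bm\mu)$. This uses the $\nu+3$ derivatives of $\widetilde A$. On the event $\|\bar{\mathbf X}-\bm\mu\|\le n^{-1/2}\log n$, the remainder satisfies $R_n=O(n^{-(\nu+1)/2})$ up to logarithmic factors; I would sharpen this by taking one further Taylor term. The complement event has probability $o(n^{-(\nu+1)/2})$ by Markov's inequality and the moment bounds. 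I would then compute the cumulants of the polynomial approximant $U_n$ up to order $\nu+2$. Since $U_n$ is a polynomial in the means, its $k$-th cumulant has the form $n^{-(k-2)/2}(k_{k,1}+n^{-1}k_{k,2}+\dots)$. Substituting into the formal expansion of the characteristic function gives $e^{-t^2/2}(1+\sum n^{-j/2}r_j(it))$. Inverting yields $\Phi+\sum n^{-j/2}\pi_j\phi$.

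\emph{Step 3: structural properties and the main obstacle.} The parity and degree of $\pi_j$ follow from the structure of $r_j$: it is a polynomial of degree $3j$ with the parity of $j$. Applying $(-D)$ and dividing out $\phi$ gives degree $3j-1$ with the opposite parity. The coefficients depend only on cumulants up to order $j+2$, hence on moments up to order $j+2$ and on derivatives of $\widetilde A$ at $\bm\mu$. The main obstacle is rigour in Step 2: showing that the set $\{\mathbf x:\sqrt n\widetilde A(\bm\mu+\mathbf x/\sqrt n)\le x\}$ is one to which Step 1 applies uniformly in $x$. I would handle this by noting that its boundary is a smooth hypersurface, as in Bhattacharya--Ghosh. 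The Gaussian measure of its $\delta$-neighbourhood is then $O(\delta)$ uniformly in $x$, which lets the transformed measure be integrated term by term. Finally, the identification of that integral with the formal cumulant expansion is an algebraic identity between polynomials, so both computations produce the same $\pi_j$.
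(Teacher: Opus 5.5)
\textbf{Comparison with the paper.} The paper does not reprove this result. It applies Hall's Theorem 2.2 with $\nu+1$ in place of $\nu$, which is exactly why the derivative and moment orders are raised from $\nu+2$ to $\nu+3$. That gives a remainder $o(n^{-(\nu+1)/2})$ after the $(\nu+1)$st term. The term $n^{-(\nu+1)/2}\pi_{\nu+1}(x)\phi(x)$ is uniformly bounded (a polynomial times $\phi$), so it is absorbed into the remainder, and finitely many small $n$ are absorbed into $\mathsf C_\nu$. Your from-scratch Bhattacharya--Ghosh outline is a legitimate alternative in principle, but one step in Step 2 fails.

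\textbf{The gap.} You claim $\mathbb P(\|\bar{\mathbf X}-\bm\mu\|>n^{-1/2}\log n)=o(n^{-(\nu+1)/2})$ by Markov's inequality. Under only $\mathbb E\|\mathbf X\|^{\nu+3}<\infty$, Markov plus a Marcinkiewicz--Zygmund bound gives just $\mathbb E\|\sqrt n(\bar{\mathbf X}-\bm\mu)\|^{\nu+3}/(\log n)^{\nu+3}=O((\log n)^{-(\nu+3)})$. That is not even polynomially small.

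No other cut-off rescues plain Markov. On $\|\sqrt n(\bar{\mathbf X}-\bm\mu)\|\le n^{\epsilon}$, the tail needs $(\nu+3)\epsilon\ge(\nu+1)/2\ge 1$. But the Taylor remainder $\|\sqrt n(\bar{\mathbf X}-\bm\mu)\|^{\nu+3}n^{-(\nu+2)/2}$ is $O(n^{-(\nu+1)/2})$ only if $(\nu+3)\epsilon\le 1/2$. These are incompatible. Since this tail estimate is what justifies replacing $\sqrt n\widetilde A(\bar{\mathbf X})$ by the polynomial approximant $U_n$, Step 2 is unsupported as written.

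\textbf{Two standard repairs.}
\begin{enumerate}
\item Replace Markov by a moderate-deviation (Fuk--Nagaev type) inequality. Under $\nu+3$ moments it gives $\mathbb P(\|\sqrt n(\bar{\mathbf X}-\bm\mu)\|>\log n)\le C n^{-(\nu+1)/2}(\log n)^{-(\nu+3)}+Ce^{-c(\log n)^2}$.
\item Follow Bhattacharya and Ghosh more closely and truncate the \emph{probability} only at a fixed radius $\|\bar{\mathbf X}-\bm\mu\|\le\delta$. There Markov does give $O(n^{-(\nu+3)/2})$. Also $\nabla\widetilde A$ stays bounded away from zero there, since $\nabla\widetilde A(\bm\mu)^\top\Sigma\nabla\widetilde A(\bm\mu)=1$, so the level set is a smooth hypersurface. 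Apply Step 1 directly to $\{\mathbf z:\|\mathbf z\|\le\delta\sqrt n,\ \sqrt n\widetilde A(\bm\mu+\mathbf z/\sqrt n)\le x\}$. Then do the $\log n$ truncation and the Taylor expansion only inside the integral of the explicit signed measure $\phi_\Sigma(1+\sum_j n^{-j/2}P_j)$, whose tails are Gaussian.
\end{enumerate}

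The second route also removes a smaller issue. Under only $\nu+3$ moments, the genuine cumulants of $U_n$ (a polynomial of degree $\nu+2$ in $\bar{\mathbf X}$) need not exist. So your cumulant computation must be read as purely formal, with rigour supplied by the signed-measure integration, as your final sentence implicitly assumes.
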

Compared with the original statement in \cite{hall2013bootstrap}, we strengthen the smoothness and moment assumptions from order $\nu+2$ to order $\nu+3$, so that the remainder satisfies the sharper bound
\[
\sup_{x\in\mathbb R}\abs{\varepsilon_n^\nu(x)}
\le
\mathsf C_\nu n^{-(\nu+1)/2}.
\]

\subsection{Bootstrap Edgeworth Expansion}

\begin{theorem}[Adapted from Hall (2013), Theorem 5.1]\label{Theo Hall2}
Let $\lambda>0$ be given, and let $l=l(\lambda)$ be a sufficiently large positive number. Assume that $g$ and $h$ each have $\nu+3$ bounded derivatives in a neighborhood of $\mu$, that $\mathbb E\|\mathbf X\|^l<\infty$, and that the characteristic function $\chi$ of $\mathbf X$ satisfies Cram\'er's condition
\[
\limsup_{\|t\|\to\infty} \abs{\chi(t)} < 1.
\]
Then there exists a constant $C>0$ such that
\begin{align*}
\mathbb P_n\!\left(
\sup_{x\in\mathbb R}
\left|
\mathbb P_n\!\left(\sqrt n\,\tilde A(\bar{\mathbf X}^*)\le x \mid \mathcal X\right)
-
\Phi(x)
-
\sum_{j=1}^{\nu} n^{-j/2}\hat\pi_j(x)\phi(x)
\right|
>
C n^{-(\nu+1)/2}
\right)
=
O(n^{-\lambda}),    
\end{align*}
\[
\mathbb P_n\!\left(
\max_{1\le j\le \nu}
\sup_{x\in\mathbb R}
(1+\abs{x})^{-(3j-1)}\abs{\hat\pi_j(x)}
>
C
\right)
=
O(n^{-\lambda}),
\]
and
\begin{align}\label{eqtemp}
\mathbb P_n\!\left(
\max_{1\le j\le \nu}
\sup_{x\in\mathbb R}
(1+\abs{x})^{-(3j-1)}\abs{\hat\pi_j'(x)}
>
C
\right)
=
O(n^{-\lambda}).
\end{align}
Here, $\tilde A$ is either $A$ or $A_s$. Correspondingly, we write $\pi_j=p_j$ in the former case and $\pi_j=q_j$ in the latter case, and $\hat\pi_j$ denotes the corresponding bootstrap version obtained by replacing the population moments in $\pi_j$ with the sample moment estimators.
\end{theorem}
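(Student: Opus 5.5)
The plan is to follow the proof of Theorem 5.1 in \cite{hall2013bootstrap}, working on a high-probability ``good event'' $\mathcal E_n$ defined in terms of the data $\mathcal X$, on which the empirical distribution $\hat P_n$ behaves enough like $P$ for the argument of Theorem~\ref{Theo Hall1} to go through conditionally. Taking $\lambda>0$ as given, I would choose $\mathcal E_n$ as the intersection of the following events:
\begin{itemize}
\item[(i)] every sample moment $n^{-1}\sum_i \mathbf X_i^{\mathbf r}$ with $|\mathbf r|\le \nu+3$, together with the higher absolute moments needed for truncation, lies within a fixed distance $\delta$ of its population value;
\item[(ii)] $\overline{\mathbf X}$ lies in the neighborhood of $\bm\mu$ where $f$ and $h$, and hence $\tilde A$, have $\nu+3$ bounded derivatives;
\item[(iii)] the empirical characteristic function $\hat\chi(\mathbf t)=n^{-1}\sum_i e^{i\mathbf t^\top\mathbf X_i}$ satisfies $\sup_{\delta'\le\|\mathbf t\|\le n^{c}}|\hat\chi(\mathbf t)|\le 1-\eta$ for some fixed $\eta>0$ and a suitable power $c$.
\end{itemize}

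To show $\PP(\mathcal E_n^c)=O(n^{-\lambda})$, I would handle (i) and (ii) with Markov's inequality applied to high-order centered moments (Rosenthal/Marcinkiewicz--Zygmund). This is where $\E\|\mathbf X\|^l<\infty$ with $l=l(\lambda)$ large is used: the $2m$-th moment of a centered sample mean is $O(n^{-m})$, so taking $m$ large gives any polynomial rate. For (iii), Cram\'er's condition gives $|\chi(\mathbf t)|\le 1-2\eta$ on $\|\mathbf t\|\ge\delta'$. I would then cover the annulus $\delta'\le\|\mathbf t\|\le n^c$ by a polynomial-size grid and apply a Hoeffding bound to $\hat\chi-\chi$ at each grid point. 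Between grid points I would use the Lipschitz bound $|\hat\chi(\mathbf t)-\hat\chi(\mathbf s)|\le \|\mathbf t-\mathbf s\|\,n^{-1}\sum_i\|\mathbf X_i\|$, which is controlled on event (i). I expect this uniform control of the empirical characteristic function to be the main obstacle. The difficulty is that $\hat P_n$ is lattice-like at scales $\|\mathbf t\|\gg n$, so the conditional smoothing argument has to be truncated at $\|\mathbf t\|\le n^{c}$ with $c$ chosen so that the Esseen smoothing inequality still yields a remainder of order $n^{-(\nu+1)/2}$. I would follow Hall's choice, taking $c$ slightly larger than $(\nu+1)/2$, and absorb the gap using the discreteness bounds already available there.

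On $\mathcal E_n$, I would repeat the proof of Theorem~\ref{Theo Hall1} conditionally on $\mathcal X$. First, Taylor-expand $\sqrt n\,\tilde A(\overline{\mathbf X}^*)$ around $\overline{\mathbf X}$ to order $\nu+2$; since $\tilde A$ is centered at $\bm\mu$ rather than at $\overline{\mathbf X}$, the centering terms enter the cumulants through the $\hat\pi_j$. Second, compute the conditional cumulants of the resulting polynomial in $\sqrt n(\overline{\mathbf X}^*-\overline{\mathbf X})$ in terms of sample moments. Third, write down the formal Edgeworth series. Fourth, bound the difference between the conditional distribution and this series via the smoothing inequality, splitting the frequency range into $\|\mathbf t\|\le\delta'\sqrt n$, where the cumulant expansion and the moment bounds from (i) apply, and $\delta'\sqrt n\le\|\mathbf t\|\le n^c$, where (iii) gives geometric decay $(1-\eta)^n$. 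All constants are uniform on $\mathcal E_n$ because they depend only on moment bounds and on $\eta$. This yields the first display with a deterministic $C$.

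The two polynomial bounds are then routine. Each $\hat\pi_j$ is a polynomial of degree $3j-1$ whose coefficients are fixed polynomials in sample moments of order at most $j+2$ and in derivatives of $\tilde A$ at $\overline{\mathbf X}$. On $\mathcal E_n$ these coefficients are bounded by a constant, so $(1+|x|)^{-(3j-1)}|\hat\pi_j(x)|\le C$. Since $\hat\pi_j'$ has the same coefficients, multiplied by integer factors, and degree $3j-2$, the same constant (enlarged if necessary) gives \eqref{eqtemp}. Enlarging $C$ so that one constant serves all three statements finishes the proof.
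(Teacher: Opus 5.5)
Your plan follows the route the paper relies on. The paper gives no independent argument; it defers to Hall's proof of Theorem 5.1, which works on high-probability data events (the paper later refers to them as $\mathcal E_1,\dots,\mathcal E_4$) on which $\hat P_n$ inherits moment bounds and a truncated Cram\'er condition. It obtains \eqref{eqtemp} exactly as you do, by noting that the same coefficient control applies to $\hat\pi_j'$.

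One explicit step in your plan is wrong: the claim that, because $\tilde A$ is centered at $\bm\mu$, ``the centering terms enter the cumulants through the $\hat\pi_j$''. Suppose the resampled statistic really were $\sqrt n\,\tilde A(\bar{\mathbf X}^*)$ with $\tilde A(\mathbf x)=\{f(\mathbf x)-f(\bm\mu)\}/h(\cdot)$. Then the zeroth-order term of your Taylor expansion about $\bar{\mathbf X}$ is $\sqrt n\,\tilde A(\bar{\mathbf X})$, which is $O_p(1)$.
\begin{itemize}
\item Conditionally on $\mathcal X$ this term is a random shift of order one, so $\PP(\sqrt n\,\tilde A(\bar{\mathbf X}^*)\le x\mid\mathcal X)\approx\Phi\bigl(x-\sqrt n\,\tilde A(\bar{\mathbf X})\bigr)$.
\item No terms $n^{-j/2}\hat\pi_j(x)\phi(x)$ whose coefficients are bounded with probability $1-O(n^{-\lambda})$ can absorb such a shift.
\item Similarly, for $\tilde A=A$ the fixed scale $h(\bm\mu)$ would leave an $n^{-1/2}$ term with the non-degenerate random coefficient $\sqrt n\{h(\bar{\mathbf X})/h(\bm\mu)-1\}$.
\end{itemize}
The statement has to be read as in Hall's theorem for $\sqrt n(\hat\theta^*-\hat\theta)/\hat\sigma$ and $\sqrt n(\hat\theta^*-\hat\theta)/\hat\sigma^*$. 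That is, use the bootstrap analogue of $\tilde A$ in which $\bm\mu$ is replaced by $\bar{\mathbf X}$ throughout: $\{f(\mathbf x)-f(\bar{\mathbf X})\}/h(\bar{\mathbf X})$ or $\{f(\mathbf x)-f(\bar{\mathbf X})\}/h(\mathbf x)$. With that reading the expansion has no constant term, the conditional asymptotic variance is exactly one, and your argument goes through.

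Separately, your smoothing step should be the multivariate one for $\sqrt n(\bar{\mathbf X}^*-\bar{\mathbf X})$, transferred to the nonlinear statistic by the Bhattacharya--Ghosh argument. The statistic's own characteristic function is not $\hat\chi(\mathbf t/\sqrt n)^n$, so a univariate Esseen bound on it would not give the $(1-\eta)^n$ decay you invoke. Your frequency split in $\|\mathbf t\|$ is the right one for the multivariate version.
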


Theorem \ref{Theo Hall2} is a slight strengthening of \cite{hall2013bootstrap}, in that it includes \eqref{eqtemp} as an additional conclusion. This extension is immediate from the proof in \cite{hall2013bootstrap}, since the argument applies to any polynomial $\pi_j$ of degree $3j-1$, and therefore also to its derivative $\pi_j'$.

\subsection{Resampled Bootstrap Edgeworth Expansion}

\begin{theorem}[Resampled version of Hall (2013), Theorem 5.1]\label{Theo HallHe}
Let $\lambda>0$ be given, and let $l=l(\lambda)$ be a sufficiently large positive number. Assume that $g$ and $h$ each have $\nu+3$ bounded derivatives in a neighborhood of $\mu$, that $\mathbb E\|\mathbf X\|^l<\infty$, and that the characteristic function $\chi$ of $\mathbf X$ satisfies Cram\'er's condition. Then there exists a constant $C>0$ such that
\[
\mathbb P_n\!\left(
\sup_{x\in\mathbb R}
\left|
\mathbb P_n\!\left(\sqrt n\,\widetilde A^*(\bar{\mathbf X}^{**}) \le x \mid \mathcal X^*,\mathcal X\right)
-\Phi(x)
-\sum_{j=1}^{\nu} n^{-j/2}\pi_j^*(x)\phi(x)
\right|
>
C n^{-(\nu+1)/2}
\right)
=
O(n^{-\lambda}),
\]
\[
\mathbb P_n\!\left(
\max_{1\le j\le \nu}
\sup_{x\in\mathbb R}
(1+\abs{x})^{-(3j-1)}\abs{\pi_j^*(x)}
>
C
\right)
=
O(n^{-\lambda}),
\]
and
\[
\mathbb P_n\!\left(
\max_{1\le j\le \nu}
\sup_{x\in\mathbb R}
(1+\abs{x})^{-(3j-1)}\abs{(\pi_j^*)'(x)}
>
C
\right)
=
O(n^{-\lambda}).
\]
Here, $\widetilde A^*$ is either $A^*$ or $A_s^*$. Correspondingly, we write $\pi_j^*=p_j^*$ in the former case and $\pi_j^*=q_j^*$ in the latter case. The polynomials $p_j^*$ and $q_j^*$ are obtained by replacing the population moments in $p_j$ and $q_j$ with the corresponding moment estimators computed from the resampled data.
\end{theorem}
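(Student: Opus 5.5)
We obtain the statement by conditioning twice: first on the outer resample $\mathcal X^*$ (and $\mathcal X$), then on $\mathcal X$. The double-resampled statistic is treated as a smooth function of a sample mean drawn from the empirical law of $\mathcal X^*$. The structure mirrors Hall's proof of Theorem~5.1 (our Theorem~\ref{Theo Hall2}). The difference is that the ``population'' is now the resampled empirical distribution $P_n^*$. Therefore every ingredient of that proof (moment bounds, the smoothness region, Cram\'er-type control) must hold for $P_n^*$ on an event of probability $1-O(n^{-\lambda})$, measured under the joint law of $(\mathcal X,\mathcal X^*)$.

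Concretely, I would carry out the following steps in order.

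\textbf{Step 1: good event for the resample.} Define a ``good'' event $\mathcal E_n$ on which three things hold. First, the resampled means $\bar{\mathbf X}^*$ lie in a fixed neighborhood of $\bm\mu$ where $f$ and $h$ have $\nu+3$ bounded derivatives. Second, the resampled moments $n^{-1}\sum_i\|\mathbf X_i^*\|^{j}$, $j\le \nu+3$, and the mixed moments are within a constant of their population values. Third, the empirical characteristic function of $\mathcal X^*$ satisfies a truncated Cram\'er bound $\sup_{\delta\le\|\mathbf t\|\le n^{c}}|\chi^*(\mathbf t)|\le 1-\epsilon$. The first two properties follow in two layers. Markov/Rosenthal inequalities with $\mathbb E\|\mathbf X\|^l<\infty$ and $l=l(\lambda)$ large give the bound for $\mathcal X$. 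Conditional Rosenthal inequalities for sampling with replacement then give it for $\mathcal X^*$ given $\mathcal X$, so that $\PP(\mathcal E_n^c)=O(n^{-\lambda})$.

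\textbf{Step 2: Cram\'er condition for the resample.} For the characteristic-function bound, write $\chi^*(\mathbf t)-\chi(\mathbf t)=(\chi^*-\hat\chi)+(\hat\chi-\chi)$, where $\hat\chi$ is the empirical characteristic function of $\mathcal X$. Each difference is an average of bounded i.i.d. (conditionally i.i.d.) terms. Bernstein's inequality on a polynomial-size grid, combined with Lipschitz continuity in $\mathbf t$ (via the moment bound), controls both uniformly over $\|\mathbf t\|\le n^{c}$. This is exactly Hall's argument, applied twice.

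\textbf{Step 3: expansion on the good event.} On $\mathcal E_n$, rerun Hall's Edgeworth derivation for $\sqrt n\,\widetilde A^*(\bar{\mathbf X}^{**})$ conditionally on $\mathcal X^*$. Hall's argument has three parts. The delta-method Taylor expansion of $\widetilde A^*$ to order $\nu+2$ replaces the statistic by a polynomial in $\sqrt n(\bar{\mathbf X}^{**}-\bar{\mathbf X}^*)$, with an $O(n^{-(\nu+1)/2})$ remainder outside a set of small conditional probability. Smoothing-inequality (Esseen) bounds are then applied over $|\mathbf t|\le n^{c}$, using the truncated Cram\'er bound from Step 2. Finally, the lattice-free multivariate Edgeworth expansion for discrete sums with that truncated Cram\'er control is used, as in Hall's Section~5.2. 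The resulting polynomials $\pi_j^*$ are, by construction, $\pi_j$ with population moments replaced by $\mathcal X^*$-moments.

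\textbf{Step 4: coefficient and derivative bounds.} The second and third displays follow because the coefficients of $\pi_j^*$ and $(\pi_j^*)'$ are polynomials in the resampled moments. These are bounded on $\mathcal E_n$, giving $|\pi_j^*(x)|,|(\pi_j^*)'(x)|\le C(1+|x|)^{3j-1}$. The constant $C$ is chosen uniformly from the moment bounds in Step 1.

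\textbf{Main obstacle.} I expect the hardest step to be Step 2, the truncated Cram\'er bound for the doubly resampled empirical distribution. $P_n^*$ is discrete, so the bound can hold only up to $\|\mathbf t\|\le n^{c}$, and the exponent $c$ must be large enough for the smoothing inequality to reach accuracy $n^{-(\nu+1)/2}$. I would first try Hall's route: the union bound over a grid with spacing $n^{-c'}$ gives failure probability $n^{dc+dc'}e^{-cn}$ at each layer, which is negligible. A second issue is ensuring that the conditional expansion remainder is uniform in $\mathcal X^*$ on $\mathcal E_n$. Hall's constants depend only on moment bounds, the neighborhood, and the Cram\'er gap, so they should be uniform once these are fixed, but this needs tracking.
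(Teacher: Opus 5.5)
Your proposal is correct and takes essentially the same route as the paper. The paper's proof simply reruns Hall's proof of Theorem 5.1 with $\hat P$ replaced by the outer-resample empirical law $P_n^*$, after noting that analogues of Hall's good events $\mathcal E_1,\dots,\mathcal E_4$ (moment bounds, localization of the mean, truncated Cram\'er bound) still hold with probability $1-O(n^{-\lambda})$. Your two-layer concentration argument supplies the verification of those event probabilities that the paper omits: first over $\mathcal X$, then over $\mathcal X^*$ given $\mathcal X$, including the grid-plus-Lipschitz control of the resample characteristic function on $\|\mathbf t\|\le n^{c}$.
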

\begin{proof}
Although this version is not explicitly stated in \cite{hall2013bootstrap}, it is implicit in the discussion in Section 3.11 therein. The result follows by a straightforward modification of the proof of Theorem 5.1 in \cite{hall2013bootstrap}, replacing $\hat P$ by $P^*$ throughout. In particular, the corresponding events $\mathcal E_1,\dots,\mathcal E_4$ can be defined analogously, and their probabilities remain $1-O(n^{-\lambda})$. Conditional on these events, the remainder of the argument proceeds identically to that of Theorem 5.1. We therefore omit the details.
\end{proof}
\section{Proof of Edgeworth Expansion for $t$-Distributions}\label{appendix B}
Throughout this section, we use the notation
\[
C_{[a,b,c,\cdots]}
\]
to denote a generic constant depending \emph{only} on the quantities $a,b,c,\cdots$. Its value may change from line to line, even when the notation appears unchanged.

\subsection{Expectation Cancellation of Monomial Moment Errors under an Even Weight}\label{secprop1}

This subsection proves Proposition~\ref{theo:expect-cancellation-good}. The proof relies on three auxiliary lemmas. The first lemma linearizes monomial functions of empirical moments and controls the expectation of the nonlinear remainder. The second lemma gives moment bounds for the remainder in the jackknife expansion of the pivotal statistic on a high-probability Lipschitz event. The third lemma converts a uniform Edgeworth remainder into an expectation bound for a restricted class of test functions. These ingredients are then combined to show that the monomial sample-moment estimation error cancels in expectation after being multiplied by an even smooth function of the pivotal statistic.

Throughout this subsection, \(\tilde A\) denotes the smooth function defined in either \eqref{A def} or \eqref{A def2}.
\subsubsection{Linearization of Monomial Sample-Moment Estimators}\label{secProp1lemma1}
Let \(\mathbf X_1,\mathbf X_2,\ldots \in \mathbb R^d\) be i.i.d. random vectors. Let \(\mathbb N_0\) be the set of nonnegative integers. For a vector \(\mathbf x=(x_1,\ldots,x_d)\in\mathbb R^d\) and a multi-index \(\alpha=(\alpha_1,\ldots,\alpha_d)\in\mathbb N_0^d\), write
\[
\abs{\alpha}\coloneqq \alpha_1+\cdots+\alpha_d,
\qquad
\mathbf x^\alpha\coloneqq x_1^{\alpha_1}\cdots x_d^{\alpha_d}.
\]
For each multi-index \(\alpha\in\mathbb N_0^d\), define the mixed moment, the corresponding sample moment, and their difference by
\[
m_\alpha\coloneqq\mathbb E[\mathbf X_1^\alpha],
\qquad
\hat m_{n,\alpha}\coloneqq\frac1n\sum_{i=1}^n \mathbf X_i^\alpha,
\qquad
\Delta_{n,\alpha}
\coloneqq
\hat m_{n,\alpha}-m_\alpha
=
\frac1n\sum_{i=1}^n \bigl(\mathbf X_i^\alpha-m_\alpha\bigr).
\]
Let \(\mathcal A\subset \mathbb N_0^d\) be a finite collection of multi-indices, and write
\[
\bm m\coloneqq\bigl(m_\alpha\bigr)_{\alpha\in\mathcal A},
\qquad
\hat{\bm m}_n\coloneqq\bigl(\hat m_{n,\alpha}\bigr)_{\alpha\in\mathcal A}.
\]
We regard a vector \(\bm z\in\mathbb R^{\abs{\mathcal A}}\) as $\bm z=(z_\alpha)_{\alpha\in\mathcal A}$. Let \(H:\mathbb R^{\abs{\mathcal A}}\to\mathbb R\) be a monomial with unit coefficient. Thus, for some integers \(r_\alpha\in\mathbb N_0\),
\[
H(\bm z)=\prod_{\alpha\in\mathcal A}z_\alpha^{r_\alpha}.
\]
Define
\begin{align}\label{def ell}
    d_H\coloneqq \sum_{\alpha\in\mathcal A}r_\alpha,
\qquad
a_{\max}\coloneqq \max_{\alpha\in\mathcal A}\abs{\alpha},
\qquad
\ell\coloneqq d_H\times a_{\max}.
\end{align}
Assume that $\mathbb E\|\mathbf X_1\|^\ell<\infty$. For notational simplicity, for \(\alpha\in\mathcal A\), write
\[
\partial_\alpha H(\bm z)
\coloneqq
\frac{\partial}{\partial z_\alpha}H(\bm z).
\]
Finally, define
\begin{align}\label{eq MMM}
    M\coloneqq H(\bm m),
\qquad
M_n\coloneqq H(\hat{\bm m}_n).
\end{align}
Throughout this subsection, \(C_M\) denotes a finite constant depending only on the algebraic form of the monomial \(H\) and on \(\mathcal A\), but not on \(n\), \(\bm m\), or the underlying distribution of the i.i.d. sequence \(\mathbf X_1,\mathbf X_2,\ldots\). Its value may change from line to line.

\begin{lemma}\label{lm linearization}
There exists a measurable function \(\varrho_M:\mathbb R^d\to\mathbb R\) such that
\[
M_n-M=\frac1n\sum_{i=1}^n \varrho_M(\mathbf X_i)+R_{[4]},
\]
where
\[
\varrho_M(\mathbf x)
=
\sum_{\alpha\in\mathcal A}
\partial_\alpha H(\bm m)\times
\bigl(\mathbf x^\alpha-m_\alpha\bigr),
\qquad
\mathbb E[\varrho_M(\mathbf X_1)]=0,
\]
and
\[
\mathbb E\abs{R_{[4]}}
\le
\frac{C_M}{n}
\left(1+\mathbb E\|\mathbf X_1\|^\ell\right)^{d_H-1}.
\]
\end{lemma}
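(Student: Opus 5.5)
We will prove Lemma~\ref{lm linearization} by an exact algebraic expansion of the monomial $H$ around $\bm m$, followed by a moment bound on each higher-order term. Because $H$ is a polynomial of total degree $d_H$, Taylor's formula is exact:
\[
H(\bm m+\bm\Delta_n)=H(\bm m)+\sum_{\alpha\in\mathcal A}\partial_\alpha H(\bm m)\,\Delta_{n,\alpha}+\sum_{2\le|\beta|\le d_H}c_\beta(\bm m)\,\bm\Delta_n^\beta ,
\]
where $\bm\Delta_n=(\Delta_{n,\alpha})_{\alpha\in\mathcal A}$. The multi-indices $\beta\in\mathbb N_0^{|\mathcal A|}$ have $\beta_\alpha\le r_\alpha$, and each coefficient $c_\beta(\bm m)$ is a binomial constant times $\prod_\alpha m_\alpha^{r_\alpha-\beta_\alpha}$.

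The linear term equals $\frac1n\sum_i\varrho_M(\mathbf X_i)$ with $\varrho_M$ as stated, and $\mathbb E[\varrho_M(\mathbf X_1)]=0$ holds trivially. We therefore define $R_{[4]}$ as the sum of the terms with $|\beta|\ge2$. The binomial constants depend only on the form of $H$, so they can be absorbed into $C_M$.

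Next we bound $\mathbb E|c_\beta(\bm m)\bm\Delta_n^\beta|$ for a fixed $\beta$ with $k\coloneqq|\beta|\ge2$.

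For the coefficient, Jensen/Lyapunov gives $|m_\alpha|\le\mathbb E\|\mathbf X_1\|^{|\alpha|}\le 1+\mathbb E\|\mathbf X_1\|^{\ell}$, since $|\alpha|\le a_{\max}\le\ell$. Hence
\[
|c_\beta(\bm m)|\le C_M\bigl(1+\mathbb E\|\mathbf X_1\|^\ell\bigr)^{d_H-k}.
\]

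For the random factor, generalized Hölder with exponents $k/\beta_\alpha$ gives
\[
\mathbb E\Bigl|\prod_\alpha\Delta_{n,\alpha}^{\beta_\alpha}\Bigr|\le\prod_\alpha\bigl(\mathbb E|\Delta_{n,\alpha}|^{k}\bigr)^{\beta_\alpha/k}.
\]
By Rosenthal's (or Marcinkiewicz--Zygmund's) inequality,
\[
\mathbb E|\Delta_{n,\alpha}|^k\le C_k\, n^{-k/2}\,\mathbb E|\mathbf X_1^\alpha-m_\alpha|^k\le C_k\,n^{-k/2}\bigl(1+\mathbb E\|\mathbf X_1\|^{k|\alpha|}\bigr).
\]
Here $k|\alpha|\le d_Ha_{\max}=\ell$, so this moment is finite and is at most $1+\mathbb E\|\mathbf X_1\|^\ell$. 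Combining, each term is bounded by
\[
C_M\,n^{-k/2}\bigl(1+\mathbb E\|\mathbf X_1\|^\ell\bigr)^{d_H-k+1}\le \frac{C_M}{n}\bigl(1+\mathbb E\|\mathbf X_1\|^\ell\bigr)^{d_H-1},
\]
using $k\ge2$. Summing over the finitely many $\beta$ gives the claimed bound on $\mathbb E|R_{[4]}|$.

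The main obstacle is keeping the exponent of $(1+\mathbb E\|\mathbf X_1\|^\ell)$ equal to exactly $d_H-1$ while using only the $\ell$-th moment. Each Rosenthal bound contributes a full factor $(1+\mathbb E\|\mathbf X_1\|^\ell)$ rather than its $k/\ell$-th root. The exponents still work out: the Hölder product contributes a total exponent of $\sum_\alpha\beta_\alpha/k=1$, the coefficient contributes $d_H-k$, and $d_H-k+1\le d_H-1$ because $k\ge2$. The Rosenthal constant $C_k$ is distribution-free, so it can also be absorbed into $C_M$.
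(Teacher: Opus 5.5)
Your proposal is correct and follows essentially the same route as the paper: an exact binomial expansion of the monomial around $\bm m$, the coefficient bound $C_M(1+\mathbb E\|\mathbf X_1\|^\ell)^{d_H-k}$, generalized H\"older followed by Marcinkiewicz--Zygmund on each $\Delta_{n,\alpha}$ using $k|\alpha|\le\ell$, and the same exponent bookkeeping $d_H-k+1\le d_H-1$ for $k\ge 2$.
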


\begin{proof}
Let $\bm{\Delta}_n\coloneqq \hat{\bm m}_n-\bm m
$. Since \(H\) is a monomial, its multivariate Taylor expansion at \(\bm m\) terminates after finitely many terms. Therefore,
\[
H(\hat{\bm m}_n)-H(\bm m)
=
\sum_{\alpha\in\mathcal A}
\partial_\alpha H(\bm m)\times 
\Delta_{n,\alpha}
+
R_{[4]},
\]
where \(R_{[4]}\) is a finite linear combination of monomials in the coordinates of \(\bm{\Delta}_n\) with total degree at least two and at most \(d_H\). Hence
\begin{align*}
H(\hat{\bm m}_n)-H(\bm m)
&=
\sum_{\alpha\in\mathcal A}
\partial_\alpha H(\bm m)
\times\frac1n\sum_{i=1}^n
\bigl(\mathbf X_i^\alpha-m_\alpha\bigr)
+
R_{[4]} \\
&=
\frac1n\sum_{i=1}^n
\sum_{\alpha\in\mathcal A}
\partial_\alpha H(\bm m)
\bigl(\mathbf X_i^\alpha-m_\alpha\bigr)
+
R_{[4]} \\
&=
\frac1n\sum_{i=1}^n \varrho_M(\mathbf X_i)+R_{[4]}.
\end{align*}
The identity \(\mathbb E[\mathbf X_1^\alpha]=m_\alpha\) implies $\mathbb E[\varrho_M(\mathbf X_1)]=0$. It remains to bound the remainder. For a vector $\bm k=(k_\alpha)_{\alpha\in\mathcal A}\in\mathbb N_0^{\abs{\mathcal A}}$, write $\abs{\bm k}\coloneqq \sum_{\alpha\in\mathcal A}k_\alpha$ and $\bm{\Delta}_n^{\bm k}\coloneqq
\prod_{\alpha\in\mathcal A}\Delta_{n,\alpha}^{k_\alpha}$. By the binomial expansion,
\[
H(\bm m+\bm{\Delta}_n)
=
\prod_{\alpha\in\mathcal A}
(m_\alpha+\Delta_{n,\alpha})^{r_\alpha}
=
\sum_{0\le k_\alpha\le r_\alpha}
b_{\bm k}(\bm m)\times\bm{\Delta}_n^{\bm k},\quad\text{with}\quad b_{\bm k}(\bm m)
\coloneqq
\prod_{\alpha\in\mathcal A}
\binom{r_\alpha}{k_\alpha}
m_\alpha^{r_\alpha-k_\alpha}.
\]
The terms with \(\abs{\bm k}=0\) give \(H(\bm m)\), and the terms with \(\abs{\bm k}=1\) give the linear part $\frac1n\sum_{i=1}^n \varrho_M(\mathbf X_i)$. Therefore,
\[
R_{[4]}
=
\sum_{\substack{0\le k_\alpha\le r_\alpha\\ \abs{\bm k}\ge2}}
b_{\bm k}(\bm m)\bm{\Delta}_n^{\bm k}.
\]
We now bound each term in this finite sum. First, for every \(\alpha\in\mathcal A\),
\[
\abs{m_\alpha}
=
\left|\mathbb E[\mathbf X_1^\alpha]\right|
\le
\mathbb E\abs{\mathbf X_1^\alpha}
\le
\mathbb E\|\mathbf X_1\|^{\abs{\alpha}}
\le
1+\mathbb E\|\mathbf X_1\|^\ell.
\]
Define $s\coloneqq\abs{\bm k}$. Then $\sum_{\alpha\in\mathcal A}(r_\alpha-k_\alpha)
=
d_H-s$. Therefore, 
\begin{align}\label{eq ag0}
    \abs{b_{\bm k}(\bm m)}
\le
C_M
\left(1+\mathbb E\|\mathbf X_1\|^\ell\right)^{d_H-s}.
\end{align}
Next, by H\"older's inequality,
\begin{align}\label{eq ag1}
    \mathbb E\abs{\bm{\Delta}_n^{\bm k}}
=
\mathbb E\left[
\prod_{\alpha\in\mathcal A}
\abs{\Delta_{n,\alpha}}^{k_\alpha}
\right]
\le
\prod_{\alpha:k_\alpha>0}
\left(
\mathbb E\abs{\Delta_{n,\alpha}}^s
\right)^{k_\alpha/s}.
\end{align}
For each \(\alpha\in\mathcal A\), \(\Delta_{n,\alpha}\) is the sample average of the centered random variables $\mathbf X_i^\alpha-m_\alpha$. By the Marcinkiewicz--Zygmund inequality for centered sums, for every integer \(s\ge2\), there exists a finite constant \(C_s\), depending only on \(s\), such that
\begin{align}\label{eq ag2}
    \mathbb E\abs{\Delta_{n,\alpha}}^s
\le
\frac{C_s}{n^{s/2}}\times 
\mathbb E\abs{\mathbf X_1^\alpha-m_\alpha}^s.
\end{align}
Since \(s\le d_H\), \(\abs{\alpha}\le a_{\max}\), and \(\ell=d_H\times a_{\max}\), we have \(s\times\abs{\alpha}\le \ell\). Hence
\begin{align}\label{eq ag3}
    \mathbb E\abs{\mathbf X_1^\alpha-m_\alpha}^s
\le
C_M\left(1+\mathbb E\abs{\mathbf X_1^\alpha}^s+\abs{m_\alpha}^s\right)
\le
C_M\left(1+\mathbb E\|\mathbf X_1\|^\ell\right).
\end{align}
Since \(2\le s\le d_H\), \(\max_{2\le j\le d_H}C_j\) depends only on \(d_H\), and hence only on the algebraic form of \(H\). Absorbing this finite maximum into \(C_M\), it follows from \eqref{eq ag1}, \eqref{eq ag2} and \eqref{eq ag3} that
\[
\mathbb E\abs{\bm{\Delta}_n^{\bm k}}
\le
C_M \times n^{-s/2}
\left(1+\mathbb E\|\mathbf X_1\|^\ell\right).
\]
Combining this with the bound on \(b_{\bm k}(\bm m)\), i.e., \eqref{eq ag0}, we obtain
\[
\mathbb E\left[
\left|b_{\bm k}(\bm m)\bm{\Delta}_n^{\bm k}\right|
\right]
\le
C_M
\left(1+\mathbb E\|\mathbf X_1\|^\ell\right)^{d_H-s}\times 
n^{-s/2}
\left(1+\mathbb E\|\mathbf X_1\|^\ell\right).
\]
Also, since \(s\ge2\), \(1+\mathbb E\|\mathbf X_1\|^\ell\ge1\) and \(d_H-s+1\le d_H-1\), we get $\mathbb E\left[
\left|b_{\bm k}(\bm m)\bm{\Delta}_n^{\bm k}\right|
\right]
\le
\frac{C_M}{n}
\left(1+\mathbb E\|\mathbf X_1\|^\ell\right)^{d_H-1}$. Finally, the sum defining \(R_{[4]}\) contains only finitely many terms, and the number of terms depends only on the monomial \(H\) and on \(\mathcal A\). Therefore,
\[
\mathbb E\abs{R_{[4]}}
\le
\frac{C_M}{n}
\left(1+\mathbb E\|\mathbf X_1\|^\ell\right)^{d_H-1}.
\]
This proves the lemma.
\end{proof}

\begin{exam}
To illustrate the notation and Lemma~\ref{lm linearization}, consider the one-dimensional case \(d=1\), so that we simply write \(X_1\in\mathbb R\). Let $M=m_1m_3^2$, $H(z_1,z_3)=z_1z_3^2$, $\mathcal A=\{1,3\}$, $d_H=3$, $a_{\max}=3$ and $\ell=d_H\times a_{\max}=9$. Moreover, $\varrho_M(x)
=
m_3^2(x-m_1)
+
2m_1m_3(x^3-m_3)$. Lemma~\ref{lm linearization} yields
\[
\left(\frac{1}{n}\sum_{i=1}^n X_i\right)
\left(\frac{1}{n}\sum_{i=1}^n X_i^3\right)^2
-
M
=
\frac1n\sum_{i=1}^n \varrho_M(X_i)+R_{[4]}.
\]
More explicitly, $R_{[4]}
=
2m_3\Delta_{n,1}\Delta_{n,3}
+
m_1\Delta_{n,3}^2
+
\Delta_{n,1}\Delta_{n,3}^2$. If \(\mathbb E\abs{X_1}^9<\infty\), $\mathbb E\abs{R_{[4]}}
\le
\frac{C_M}{n}
\left(1+\mathbb E\abs{X_1}^9\right)^2$.
\end{exam}

\subsubsection{Jackknife Expansion and Remainder Bounds for the Pivotal Statistic}

We next record a decomposition of the jackknife difference of the pivotal statistic in the smooth-function model. The purpose is to separate the leading first-order term from the higher-order remainder. Assume that Assumption~\ref{assume1} holds, and define
\[
\bm{\mu}\coloneqq \mathbb E[\mathbf X_1],\qquad
\bar{\mathbf X}_n\coloneqq \frac1n\sum_{j=1}^n \mathbf X_j,
\qquad
W_n\coloneqq \sqrt n\,\tilde A(\bar{\mathbf X}_n).
\]
For each \(i\), define the leave-one-out sample mean
\[
\bar{\mathbf X}_{-i}\coloneqq \frac1{n-1}\sum_{j\ne i}\mathbf X_j,
\qquad
W_{n,-i}\coloneqq \sqrt{\,n-1\,}\,\tilde A(\bar{\mathbf X}_{-i}),
\]
and set $\varphi(\mathbf x)\coloneqq \nabla \tilde A(\bm{\mu})^\top(\mathbf x-\bm{\mu})$. Then $\mathbb E[\varphi(\mathbf X_1)]=\mathbb E[\varphi(\bar{\mathbf X}_{-i})]=0$. The decomposition we need is
\begin{equation}\label{eq:jackknife-pivotal-expansion}
W_n-W_{n,-i}
=
\frac{1}{\sqrt{n}}\bigl(\varphi(\mathbf X_i)-\varphi(\bar{\mathbf X}_{-i})\bigr)
+R_{[5]},
\end{equation}
where \(R_{[5]}\) denotes the higher-order remainder term corresponding to the fixed index \(i\) under consideration. Since \(\tilde A\in C^2\) in a neighborhood of \(\bm{\mu}\), there exist \(\delta>0\) and \(L>0\) such that
\[
\|\nabla \tilde A(\mathbf u)-\nabla \tilde A(\mathbf v)\|
\le
L\|\mathbf u-\mathbf v\|,
\]
whenever \(\|\mathbf u-\bm{\mu}\|\le \delta\) and \(\|\mathbf v-\bm{\mu}\|\le \delta\), where \(\|\cdot\|\) denotes the Euclidean norm. We therefore introduce the event on which all leave-one-out sample means remain inside this Lipschitz neighborhood:
\begin{align}\label{Eq goodevent Lipschitz}
\mathcal E_n^{\mathrm{lip}}
\coloneqq 
\Bigl\{
\|\bar{\mathbf X}_{-i}-\bm{\mu}\|\le \delta
\text{ for all }1\le i\le n
\Bigr\}.
\end{align}
Note that \(\mathcal E_n^{\mathrm{lip}}\) implies \(\|\bar{\mathbf X}_n-\bm{\mu}\|\le \delta\), since $\bar{\mathbf X}_n=\frac1n\sum_{i=1}^n \bar{\mathbf X}_{-i}$. For any prescribed \(\lambda>0\), if \(\mathbb E\|\mathbf X_1\|^l<\infty\) for some sufficiently large \(l=l(\lambda)\), then Markov's inequality, together with a union bound, gives
\[
\mathbb P(\mathcal E_n^{\mathrm{lip}})=1-O(n^{-\lambda}).
\]

\begin{lemma}[Bounds for the jackknife expansion terms]\label{lm:gfunc}
Under Assumption~\ref{assume1}, suppose further that \(\nabla\tilde A\) is \(L\)-Lipschitz on the neighborhood $\{\mathbf x:\|\mathbf x-\bm\mu\|\le \delta\}$. Then, in the decomposition \eqref{eq:jackknife-pivotal-expansion}, for any fixed integer \(r\ge 1\), provided that the moment order \(l\) in Assumption~\ref{assume1} is chosen sufficiently large,
\[
\mathbb E\left[
\ind_{\mathcal E_n^{\mathrm{lip}}}\times\abs{R_{[5]}}^r
\right]
\le
\frac{1}{n^r}\times 
C_{[r,L,\|{\nabla\tilde A(\bm{\mu})}\|,\{\mathbb E\|\mathbf X_1\|^k\}_{k=1,\dots,l}]}.
\]
In particular, the constant depends on $r$, $L$, $\|\nabla \tilde A(\bm\mu)\|$, and finite moments of \(\|\mathbf X_1\|\).
\end{lemma}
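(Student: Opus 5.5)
We must bound $R_{[5]}$ in the jackknife decomposition \eqref{eq:jackknife-pivotal-expansion}. The plan is to Taylor-expand $\tilde A$ about $\bm\mu$ to first order with a Lipschitz-gradient remainder, at both $\bar{\mathbf X}_n$ and $\bar{\mathbf X}_{-i}$, on the event $\mathcal E_n^{\mathrm{lip}}$, and then separate exactly the linear piece $n^{-1/2}(\varphi(\mathbf X_i)-\varphi(\bar{\mathbf X}_{-i}))$.

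Write $\mathbf D_n=\bar{\mathbf X}_n-\bm\mu$ and $\mathbf D_{-i}=\bar{\mathbf X}_{-i}-\bm\mu$. Since $\tilde A(\bm\mu)=0$, on $\mathcal E_n^{\mathrm{lip}}$ (which also forces $\|\mathbf D_n\|\le\delta$) we have
\[
\tilde A(\bm\mu+\mathbf D)=\varphi(\bm\mu+\mathbf D)+\rho(\mathbf D),\qquad \abs{\rho(\mathbf D)}\le \tfrac{L}{2}\|\mathbf D\|^2 .
\]
Hence
\[
W_n-W_{n,-i}=\sqrt n\,\nabla\tilde A(\bm\mu)^\top \mathbf D_n-\sqrt{n-1}\,\nabla\tilde A(\bm\mu)^\top\mathbf D_{-i}+\sqrt n\,\rho(\mathbf D_n)-\sqrt{n-1}\,\rho(\mathbf D_{-i}).
\]
We use the identity $\mathbf D_n=\frac1n(\mathbf X_i-\bm\mu)+\frac{n-1}{n}\mathbf D_{-i}$. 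The linear part is then
\[
\frac{1}{\sqrt n}\varphi(\mathbf X_i)+\Bigl(\frac{n-1}{\sqrt n}-\sqrt{n-1}\Bigr)\varphi(\bar{\mathbf X}_{-i}).
\]
The coefficient $\frac{n-1}{\sqrt n}-\sqrt{n-1}=-\frac{1}{\sqrt n}+O(n^{-3/2})$, so this equals the displayed leading term plus $c_n\varphi(\bar{\mathbf X}_{-i})$ with $\abs{c_n}\le C n^{-3/2}$. Thus
\[
R_{[5]}=c_n\varphi(\bar{\mathbf X}_{-i})+\sqrt n\,\rho(\mathbf D_n)-\sqrt{n-1}\,\rho(\mathbf D_{-i}).
\]

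The first piece is easy. By Marcinkiewicz--Zygmund, $\mathbb E\|\mathbf D_{-i}\|^r\le Cn^{-r/2}$, so $\mathbb E\abs{c_n\varphi(\bar{\mathbf X}_{-i})}^r\le C\|\nabla\tilde A(\bm\mu)\|^r n^{-2r}$.

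The remainder difference is the delicate part, and the main obstacle. Naively, $\sqrt n\,\rho(\mathbf D_n)$ is of order $n^{-1/2}$, not $n^{-1}$, so we must exploit cancellation between $\rho(\mathbf D_n)$ and $\rho(\mathbf D_{-i})$. Split
\[
\sqrt n\,\rho(\mathbf D_n)-\sqrt{n-1}\,\rho(\mathbf D_{-i})=\sqrt n\bigl(\rho(\mathbf D_n)-\rho(\mathbf D_{-i})\bigr)+(\sqrt n-\sqrt{n-1})\,\rho(\mathbf D_{-i}).
\]
The second term is bounded by $Cn^{-1/2}\|\mathbf D_{-i}\|^2$, which has $r$-th moment of order $n^{-3r/2}$.

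For the first term, note that $\rho(\mathbf D)=\tilde A(\bm\mu+\mathbf D)-\nabla\tilde A(\bm\mu)^\top\mathbf D$ and $\nabla\rho(\mathbf D)=\nabla\tilde A(\bm\mu+\mathbf D)-\nabla\tilde A(\bm\mu)$. The segment from $\bar{\mathbf X}_{-i}$ to $\bar{\mathbf X}_n$ stays in the convex ball of radius $\delta$, since both endpoints lie there. The mean value theorem then gives
\[
\abs{\rho(\mathbf D_n)-\rho(\mathbf D_{-i})}\le L\max(\|\mathbf D_n\|,\|\mathbf D_{-i}\|)\,\|\mathbf D_n-\mathbf D_{-i}\|,
\]
with
\[
\mathbf D_n-\mathbf D_{-i}=\frac1n\bigl(\mathbf X_i-\bar{\mathbf X}_{-i}\bigr).
\]
Therefore
\[
\sqrt n\,\abs{\rho(\mathbf D_n)-\rho(\mathbf D_{-i})}\le L\,n^{-1/2}\bigl(\|\mathbf D_n\|+\|\mathbf D_{-i}\|\bigr)\,\|\mathbf X_i-\bar{\mathbf X}_{-i}\|.
\]
Taking the $r$-th moment via Cauchy--Schwarz, using $\mathbb E\|\mathbf D\|^{2r}\le Cn^{-r}$ and $\mathbb E\|\mathbf X_i-\bar{\mathbf X}_{-i}\|^{2r}\le C$, gives order $n^{-r/2}\cdot n^{-r/2}=n^{-r}$. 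The constants depend only on $r$, $L$, and moments of $\|\mathbf X_1\|$ up to order $2r$ or so, so it suffices to take $l$ large enough.

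Combining the three pieces with $\abs{a+b+c}^r\le 3^{r-1}(\abs a^r+\abs b^r+\abs c^r)$ yields $\mathbb E[\ind_{\mathcal E_n^{\mathrm{lip}}}\abs{R_{[5]}}^r]\le Cn^{-r}$, which is the claim.
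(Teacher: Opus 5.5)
Your argument is correct and is essentially the paper's proof. Since $\rho(\mathbf D_n)-\rho(\mathbf D_{-i})=\tilde A(\bar{\mathbf X}_n)-\tilde A(\bar{\mathbf X}_{-i})-\nabla\tilde A(\bm\mu)^\top(\bar{\mathbf X}_n-\bar{\mathbf X}_{-i})$, your three pieces are the paper's two pieces, $\sqrt n\bigl(\rho(\mathbf D_n)-\rho(\mathbf D_{-i})\bigr)$ and $(\sqrt n-\sqrt{n-1})\tilde A(\bar{\mathbf X}_{-i})$, with the second split into its linear and quadratic parts, and you bound them the same way.

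There is one arithmetic slip. In fact $\frac{n-1}{\sqrt n}-\sqrt{n-1}=-\frac{1}{2\sqrt n}+O(n^{-3/2})$, so $c_n=\sqrt n-\sqrt{n-1}\sim\frac{1}{2\sqrt n}$ rather than $O(n^{-3/2})$. Consequently $\mathbb E\abs{c_n\varphi(\bar{\mathbf X}_{-i})}^r=O(n^{-r})$, not $O(n^{-2r})$, which is still enough for the lemma.
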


\begin{proof}
We first prove the result for \(r=1\) in Steps 1--3, and then extend the argument to \(r\ge 2\) in Step 4. We begin with the identity
\begin{align}\label{Eq lm2start}
W_n-W_{n,-i}
=
\sqrt n\bigl(\tilde A(\bar{\mathbf X}_n)-\tilde A(\bar{\mathbf X}_{-i})\bigr)
+
(\sqrt n-\sqrt{n-1})\tilde A(\bar{\mathbf X}_{-i}).
\end{align}

\medskip
\noindent
\emph{Step 1: The first term.}
On the event \(\mathcal E_n^{\mathrm{lip}}\), we have
\begin{align*}
&\left|
\tilde A(\bar{\mathbf X}_n)-\tilde A(\bar{\mathbf X}_{-i})
-\nabla \tilde A(\bm{\mu})^\top(\bar{\mathbf X}_n-\bar{\mathbf X}_{-i})
\right|
\\
=&
\left|
\int_0^1
\Bigl[
\nabla \tilde A\bigl(\bar{\mathbf X}_{-i}+t(\bar{\mathbf X}_n-\bar{\mathbf X}_{-i})\bigr)
-\nabla \tilde A(\bm{\mu})
\Bigr]^\top
(\bar{\mathbf X}_n-\bar{\mathbf X}_{-i})\,dt
\right|
\\
\le&
\int_0^1
\left\|
\nabla \tilde A\bigl(\bar{\mathbf X}_{-i}+t(\bar{\mathbf X}_n-\bar{\mathbf X}_{-i})\bigr)
-\nabla \tilde A(\bm{\mu})
\right\|
\left\|
\bar{\mathbf X}_n-\bar{\mathbf X}_{-i}
\right\|\,dt
\\
\le&
L\int_0^1
\left\|
\bar{\mathbf X}_{-i}+t(\bar{\mathbf X}_n-\bar{\mathbf X}_{-i})-\bm{\mu}
\right\|
\left\|
\bar{\mathbf X}_n-\bar{\mathbf X}_{-i}
\right\|\,dt
\\
\le&
L\int_0^1
\left\|
(1-t)(\bar{\mathbf X}_{-i}-\bm{\mu})
+t(\bar{\mathbf X}_n-\bm{\mu})
\right\|
\left\|
\bar{\mathbf X}_n-\bar{\mathbf X}_{-i}
\right\|\,dt
\\
\le&
L
\left(
\|\bar{\mathbf X}_n-\bm{\mu}\|
+
\|\bar{\mathbf X}_{-i}-\bm{\mu}\|
\right)
\|\bar{\mathbf X}_n-\bar{\mathbf X}_{-i}\|.
\end{align*}
It follows that
\begin{equation}\label{eq lm2first}
\begin{split}
&\mathbb E\left[
\ind_{\mathcal E_n^{\mathrm{lip}}}
\sqrt n
\left|
\tilde A(\bar{\mathbf X}_n)-\tilde A(\bar{\mathbf X}_{-i})
-\nabla \tilde A(\bm{\mu})^\top(\bar{\mathbf X}_n-\bar{\mathbf X}_{-i})
\right|
\right]
\\
\le&
L\sqrt n
\mathbb E\left[
\left(
\|\bar{\mathbf X}_n-\bm{\mu}\|
+
\|\bar{\mathbf X}_{-i}-\bm{\mu}\|
\right)
\|\bar{\mathbf X}_n-\bar{\mathbf X}_{-i}\|
\right]
\\
\le&
L\sqrt n
\left(
\mathbb E
\left(
\|\bar{\mathbf X}_n-\bm{\mu}\|
+
\|\bar{\mathbf X}_{-i}-\bm{\mu}\|
\right)^2
\right)^{1/2}
\left(
\mathbb E
\|\bar{\mathbf X}_n-\bar{\mathbf X}_{-i}\|^2
\right)^{1/2}
\\
\le&
L\sqrt n
\left(
2\mathbb E\|\bar{\mathbf X}_n-\bm{\mu}\|^2
+
2\mathbb E\|\bar{\mathbf X}_{-i}-\bm{\mu}\|^2
\right)^{1/2}
\left(
\frac{\sigma_2^2}{n(n-1)}
\right)^{1/2},
\qquad
\sigma_2^2\coloneqq \mathbb E\|\mathbf X_1-\bm{\mu}\|^2
\\
\le&
L\sqrt n
\left(
2\sigma_2^2\left(\frac{2n-1}{n(n-1)}\right)
\times
\frac{\sigma_2^2}{n(n-1)}
\right)^{1/2}
\le
\frac{4L}{n}\sigma_2^2.
\end{split}
\end{equation}
Also note that
\begin{align*}
\nabla \tilde A(\bm{\mu})^\top(\bar{\mathbf X}_n-\bar{\mathbf X}_{-i})
=
\frac{1}{n}\nabla \tilde A(\bm{\mu})^\top(\mathbf X_i-\bar{\mathbf X}_{-i})
=
\frac{1}{n}
\left(
\varphi(\mathbf X_i)-\varphi(\bar{\mathbf X}_{-i})
\right).
\end{align*}

\medskip
\noindent
\emph{Step 2: The second term.}
We next consider the second term in \eqref{Eq lm2start}. By the gradient Lipschitz condition,
\begin{equation}\label{eq Liprelax}
\begin{split}
&\mathbb E
\left|
\ind_{\mathcal E_n^{\mathrm{lip}}}
\left(\sqrt n-\sqrt{n-1}\right)
\tilde A(\bar{\mathbf X}_{-i})
\right|
\\
\le&
\frac{2}{\sqrt n}
\mathbb E
\left|
\ind_{\mathcal E_n^{\mathrm{lip}}}
\tilde A(\bar{\mathbf X}_{-i})
\right|
\\
\le&
\frac{2}{\sqrt n}
\left(
\|\nabla \tilde A(\bm{\mu})\|
\mathbb E\|\bar{\mathbf X}_{-i}-\bm{\mu}\|
+
\frac{L}{2}
\mathbb E\|\bar{\mathbf X}_{-i}-\bm{\mu}\|^2
\right)
\\
\le&
\frac{2}{\sqrt n}
\left(
\|\nabla \tilde A(\bm{\mu})\|
\frac{\sigma_2}{\sqrt{n-1}}
+
\frac{L}{2(n-1)}
\sigma_2^2
\right)
\\
\le&
\frac{4}{n}
\left(
\|\nabla \tilde A(\bm{\mu})\|\sigma_2
+
\frac{L}{2\sqrt n}\sigma_2^2
\right).
\end{split}
\end{equation}

\medskip
\noindent
\emph{Step 3: Combining the two terms.}
By definition,
\begin{align*}
R_{[5]}
=
W_n-W_{n,-i}
-
\frac{1}{\sqrt n}
\bigl(
\varphi(\mathbf X_i)-\varphi(\bar{\mathbf X}_{-i})
\bigr).
\end{align*}
Therefore,
\begin{align*}
&\mathbb E
\left|
\ind_{\mathcal E_n^{\mathrm{lip}}}
R_{[5]}
\right|
\\
\le&
\mathbb E
\left|
\ind_{\mathcal E_n^{\mathrm{lip}}}
\sqrt n
\bigl(
\tilde A(\bar{\mathbf X}_n)-\tilde A(\bar{\mathbf X}_{-i})
\bigr)
-
\frac{1}{\sqrt n}
\bigl(
\varphi(\mathbf X_i)-\varphi(\bar{\mathbf X}_{-i})
\bigr)
\right|+
\mathbb E
\left|
\ind_{\mathcal E_n^{\mathrm{lip}}}
(\sqrt n-\sqrt{n-1})
\tilde A(\bar{\mathbf X}_{-i})
\right|
\\
\le&
\frac{4L}{n}\sigma_2^2
+
\frac{4}{n}
\left(
\|\nabla \tilde A(\bm{\mu})\|\sigma_2
+
\frac{L}{2\sqrt n}\sigma_2^2
\right).
\end{align*}
Thus, $\mathbb E
\left[
\ind_{\mathcal E_n^{\mathrm{lip}}}\times
\abs{R_{[5]}}
\right]
\le
\frac{1}{n}
C_{[L,\|\nabla\tilde A(\bm{\mu})\|,\{\mathbb E\|\mathbf X_1\|^k\}_{k=1,\dots,l}]}$. This proves the claim for \(r=1\).

\medskip
\noindent
\emph{Step 4: Higher moments.}
We now extend the preceding argument to \(r\ge 2\). In this case, the estimate in \eqref{eq lm2first} becomes
\begin{equation}
\begin{split}
&\mathbb E\left[
\ind_{\mathcal E_n^{\mathrm{lip}}}
n^{r/2}
\left|
\tilde A(\bar{\mathbf X}_n)-\tilde A(\bar{\mathbf X}_{-i})
-\nabla \tilde A(\bm{\mu})^\top(\bar{\mathbf X}_n-\bar{\mathbf X}_{-i})
\right|^r
\right]
\\
\le&
L^r n^{r/2}
\left(
\mathbb E
\left(
\|\bar{\mathbf X}_n-\bm{\mu}\|
+
\|\bar{\mathbf X}_{-i}-\bm{\mu}\|
\right)^{2r}
\right)^{1/2}
\left(
\mathbb E
\|\bar{\mathbf X}_n-\bar{\mathbf X}_{-i}\|^{2r}
\right)^{1/2}
\\
=&
L^r n^{r/2}
\left(
\mathbb E
\left(
\|\bar{\mathbf X}_n-\bm{\mu}\|
+
\|\bar{\mathbf X}_{-i}-\bm{\mu}\|
\right)^{2r}
\right)^{1/2}
n^{-r}
\left(
\mathbb E
\|\mathbf X_i-\bar{\mathbf X}_{-i}\|^{2r}
\right)^{1/2}
\\
\le&
L^r n^{-r/2}
\left(
\mathbb E
\left(
\|\bar{\mathbf X}_n-\bm{\mu}\|
+
\|\bar{\mathbf X}_{-i}-\bm{\mu}\|
\right)^{2r}
\right)^{1/2}
2^{(2r-1)/2}
\left(
\mathbb E\|\mathbf X_i-\bm{\mu}\|^{2r}
+
\mathbb E\|\bar{\mathbf X}_{-i}-\bm{\mu}\|^{2r}
\right)^{1/2}
\\
\le&
2^{(2r-1)/2}
L^r n^{-r/2}
\left(
\mathbb E
\left(
\|\bar{\mathbf X}_n-\bm{\mu}\|
+
\|\bar{\mathbf X}_{-i}-\bm{\mu}\|
\right)^{2r}
\right)^{1/2}\times
\left(
\mathbb E\|\mathbf X_i-\bm{\mu}\|^{2r}
+
\mathbb E\|\bar{\mathbf X}_{-i}-\bm{\mu}\|^{2r}
\right)^{1/2}
\\
\le&
2^{2r-1}
L^r n^{-r/2}
\left(
\mathbb E\|\bar{\mathbf X}_n-\bm{\mu}\|^{2r}
+
\mathbb E\|\bar{\mathbf X}_{-i}-\bm{\mu}\|^{2r}
\right)^{1/2}\times
\left(
\mathbb E\|\mathbf X_i-\bm{\mu}\|^{2r}
+
\mathbb E\|\bar{\mathbf X}_{-i}-\bm{\mu}\|^{2r}
\right)^{1/2}
\\
\le&
2^{2r-1}
L^r n^{-r/2}
\left(
2C_{[r]}n^{-r}
\left(
\mathbb E\|\mathbf X_1-\bm{\mu}\|^{2r}
+
\mathbb E\|\mathbf X_1-\bm{\mu}\|^{2r}
\right)
\right)^{1/2}
\left(
2\mathbb E\|\mathbf X_1-\bm{\mu}\|^{2r}
\right)^{1/2}
\\
\le&
n^{-r}
C_{[r,L,\|\nabla\tilde A(\bm{\mu})\|,\{\mathbb E\|\mathbf X_1\|^k\}_{k=1,\dots,l}]}.
\end{split}
\end{equation}
Here we have used a standard Rosenthal--Marcinkiewicz--Zygmund type inequality,
\begin{align*}
\mathbb E\|\bar{\mathbf X}_n-\bm{\mu}\|^r
\le
C_{[r]}n^{-r/2}
\mathbb E\|\mathbf X_1-\bm{\mu}\|^r,
\end{align*}
together with $\abs{x+y}^r\le 2^{r-1}\bigl(\abs{x}^r+\abs{y}^r\bigr)$. Similar to \eqref{eq Liprelax}, we have
\begin{equation}\label{eq eq20}
\begin{split}
&\mathbb E
\left|
\ind_{\mathcal E_n^{\mathrm{lip}}}
\left(\sqrt n-\sqrt{n-1}\right)
\tilde A(\bar{\mathbf X}_{-i})
\right|^r
\\
\le&
2^r n^{-r/2}
\mathbb E
\left(
\|\nabla \tilde A(\bm{\mu})\|
\|\bar{\mathbf X}_{-i}-\bm{\mu}\|
+
\frac{L}{2}
\|\bar{\mathbf X}_{-i}-\bm{\mu}\|^2
\right)^r
\\
\le&
2^{2r-1}n^{-r/2}
\left(
\|\nabla \tilde A(\bm{\mu})\|^r
\mathbb E\|\bar{\mathbf X}_{-i}-\bm{\mu}\|^r
+
\frac{L^r}{2^r}
\mathbb E\|\bar{\mathbf X}_{-i}-\bm{\mu}\|^{2r}
\right)
\\
\le&
2^{2r-1}n^{-r/2}
\left(
C_{[r]}(n-1)^{-r/2}
\|\nabla \tilde A(\bm{\mu})\|^r
\mathbb E\|\mathbf X_1-\bm{\mu}\|^r
+
\frac{L^r}{2^r}
C_{[r]}(n-1)^{-r}
\mathbb E\|\mathbf X_1-\bm{\mu}\|^{2r}
\right)
\\
\le&
n^{-r}
C_{[r,L,\|\nabla\tilde A(\bm{\mu})\|,\{\mathbb E\|\mathbf X_1\|^k\}_{k=1,\dots,l}]}.
\end{split}
\end{equation}
Combining the two estimates above gives
\[
\mathbb E\left[
\ind_{\mathcal E_n^{\mathrm{lip}}}\times
\abs{R_{[5]}}^r
\right]
\le
\frac{1}{n^r}
C_{[r,L,\|\nabla\tilde A(\bm{\mu})\|,\{\mathbb E\|\mathbf X_1\|^k\}_{k=1,\dots,l}]}.
\]
This completes the proof.
\end{proof}

\begin{remark}
The indicator can be removed if an additional moment condition is imposed on the pivotal statistic. For instance, in the case $r=1$, assume that there exists $\epsilon>0$ such that
\[
\sup_n
\mathbb E
\left|
\sqrt n\,\tilde A(\bar{\mathbf X}_n)
\right|^{1+\epsilon}
<\infty.
\]
Then H\"older's inequality can be applied directly to the pivotal statistic, without restricting to the event $\mathcal E_n^{\mathrm{lip}}$. However, this condition need not hold in general, as the following example shows.
\end{remark}

\begin{exam}
Consider the one-dimensional case \(d=1\), where \(\mathbf X_i=X_i\sim N(0,1)\). Let
\[
\tilde A(x)\coloneqq x+x^2+x^4\exp(x^4),
\]
which grows explosively for large \(\abs{x}\). The smooth-function model still applies, and the Edgeworth expansion remains valid, but \(R_{[5]}\) is not integrable. Indeed, $(\sqrt n-\sqrt{n-1})\times
\mathbb E
\abs{\tilde A(\bar{\mathbf X}_{-i})}
=
\infty$.
\end{exam}

\subsubsection{Expectation Bounds from Uniform Edgeworth Remainders}

The final auxiliary lemma turns the uniform Edgeworth remainder into a bound for expectations. The argument uses only the uniform remainder bound and integration by parts. In particular, we restrict attention to absolutely continuous functions satisfying $f'\in L^1(\mathbb R)$, and we do not attempt to handle functions with polynomial growth. Thus, the result is considerably weaker than the classical Edgeworth expansion results for expectations, such as Theorem~20.1 in \cite{bhattacharya2010normal} and the subsequent corollaries. Nevertheless, this restricted form is sufficient for our purpose.

\begin{lemma}\label{lm lemma3 expectation}
Let $W_n \coloneqq \sqrt{n}\widetilde A(\bar{\mathbf X})$ be the pivotal statistic under Assumption \ref{assume1}. Define
\[
F_n(x)\coloneqq \mathbb P_n(W_n\le x),
\qquad
G_n^\nu(x)\coloneqq \Phi(x)+\sum_{j=1}^{\nu} n^{-j/2}\pi_j(x)\phi(x),
\]
and let $\varepsilon_n^\nu(x)\coloneqq F_n(x)-G_n^\nu(x)$. Let $\lambda$ denote the Lebesgue measure. Suppose that $f$ is absolutely continuous with respect to $\lambda$ and satisfies $f'\in L^1(\lambda)$. Let $Z\sim N(0,1)$. Then
\begin{align*}
   \left|
   \EE[f(W_n)]- \EE[f(Z)]- \sum_{j=1}^{\nu} n^{-j/2}\int f(x) d[\pi_j(x)\phi(x)]
   \right|
   \le
   \mathsf{C}_\nu \|f'\|_{L^1(\mathbb R)} n^{-(\nu+1)/2},
\end{align*}
where $\mathsf{C}_\nu$ is defined in \eqref{eq theorem1 rate}.
\end{lemma}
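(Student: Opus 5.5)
The approach is to write both expectations as Lebesgue--Stieltjes integrals against distribution-type functions, and then integrate by parts so that the derivative lands on $f$ and the uniform Edgeworth remainder appears inside an integral against $f'$. With the notation of the lemma, $\EE[f(W_n)]=\int f\,dF_n$. Likewise, $\EE[f(Z)]+\sum_{j=1}^{\nu} n^{-j/2}\int f\,d[\pi_j\phi]=\int f\,dG_n^\nu$. The difference to be bounded is therefore $\int f\,d\varepsilon_n^\nu$, where $\varepsilon_n^\nu=F_n-G_n^\nu$ is a function of bounded variation on $\mathbb R$. It has bounded variation because $F_n$ is a cdf, and each $\pi_j\phi$ is a polynomial times a Gaussian density, hence smooth with integrable derivative.

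First I record the properties of $f$ and $\varepsilon_n^\nu$ that are needed. Since $f$ is absolutely continuous with $f'\in L^1$, $f$ is bounded, and the limits $f(\pm\infty)$ exist; more precisely $f(x)=f(0)+\int_0^x f'$. Next, $\varepsilon_n^\nu(x)\to0$ as $x\to\pm\infty$. This holds because $F_n$ and $\Phi$ tend to the same limits $0$ and $1$, and $\pi_j(x)\phi(x)\to0$. Finally, Theorem~\ref{Theo Hall1} gives $\sup_x|\varepsilon_n^\nu(x)|\le \mathsf C_\nu n^{-(\nu+1)/2}$; this is the constant referred to in \eqref{eq theorem1 rate}.

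Next I carry out the integration by parts on $[-R,R]$ for Stieltjes integrals:
\[
\int_{(-R,R]} f\,d\varepsilon_n^\nu=f(R)\varepsilon_n^\nu(R)-f(-R)\varepsilon_n^\nu(-R)-\int_{-R}^{R}\varepsilon_n^\nu(x)f'(x)\,dx .
\]
This formula is valid without caveats. The function $f$ is continuous, so no common-jump corrections arise even though $F_n$ may have atoms. Moreover, $f$ is absolutely continuous, so $\int\varepsilon\,df=\int\varepsilon f'\,dx$. Letting $R\to\infty$, the boundary terms vanish because $f$ is bounded and $\varepsilon_n^\nu(\pm R)\to0$. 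The integral converges by dominated convergence, using the dominating function $|f'|\sup|\varepsilon_n^\nu|$. The left side converges to $\int f\,d\varepsilon_n^\nu$ because $f$ is bounded and $\varepsilon_n^\nu$ has finite total variation.

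This yields
\[
\left|\int f\,d\varepsilon_n^\nu\right|\le \sup_x|\varepsilon_n^\nu(x)|\,\|f'\|_{L^1}\le \mathsf C_\nu\|f'\|_{L^1}n^{-(\nu+1)/2}.
\]
The main point requiring care is the justification of integration by parts when $F_n$ has atoms, which the continuity of $f$ handles, together with the vanishing of the boundary terms at infinity. Neither step should be a real obstacle.
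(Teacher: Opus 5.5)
Your proof is correct and follows the paper's argument: write the difference as $\int f\,d\varepsilon_n^\nu$, integrate by parts using that $\varepsilon_n^\nu$ has bounded variation and vanishes at $\pm\infty$, and bound the result by $\sup_x|\varepsilon_n^\nu(x)|\,\|f'\|_{L^1(\mathbb R)}$. Your truncation to $[-R,R]$, the check that the boundary terms vanish, and the remark that continuity of $f$ removes jump corrections from atoms of $F_n$ make explicit steps that the paper leaves implicit.
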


\begin{proof} Since \(F_n\), \(\Phi\), and each \(\pi_j(x)\phi(x)\) correspond to probability measures or signed measures of finite total variation, the remainder \(\varepsilon_n\), as their finite linear combination, is itself of bounded variation. Therefore, the following Lebesgue--Stieltjes integral is well-defined:
\begin{align*}
    \int_{\mathbb R} f(x)\,d\varepsilon_n^\nu(x)
    =
    -
    \int_{\mathbb R} f'(x)\varepsilon_n^\nu(x)\,dx,
\end{align*}
where we used the integration-by-parts, together with the facts that \(f\) is absolutely continuous, \(f'\in L^1(\mathbb R)\), and \(\varepsilon_n^\nu(\pm\infty)=0\). Hence
\begin{align*}
    \left|
\int_{\mathbb R} f(x)\,d\varepsilon_n^\nu(x)
\right|
\le
\int_{\mathbb R}\abs{f'(x)}\times\abs{\varepsilon_n^\nu(x)}\,dx
\le
\mathsf{C}_\nu\times\|f'\|_{L^1(\mathbb R)}\mathsf \times n^{-(\nu+1)/2}.
\end{align*}
\end{proof}

\subsubsection{Proof of Proposition~\ref{theo:expect-cancellation-good}: Expectation Cancellation of Monomial Moment Errors under an Even Weight}

We prove Proposition~\ref{theo:expect-cancellation-good} by establishing a slightly more quantitative bound, in which the dependence of the implicit constant is made explicit. Since \(g\), \(g'\), and \(g''\) are uniformly bounded and \(g''\in L^1(\mathbb R)\), there exists a finite constant \(C_{[g]}\) such that
\begin{align}\label{eq cgbound}
    \sup_{x\in\mathbb R}\left\{\abs{g(x)}+\abs{g'(x)}+\abs{g''(x)}\right\}
    +
    \|g''\|_{L^1(\mathbb R)}
    \le C_{[g]}.
\end{align}
We also fix the local constants used in the jackknife expansion of the pivotal statistic. Under Assumption~\ref{assume1}, \(\tilde A\) is \(C^2\) in a neighborhood of \(\bm\mu\). Hence there exist constants \(\delta>0\) and \(L<\infty\) such that
\[
\|\nabla \tilde A(\mathbf u)-\nabla \tilde A(\mathbf v)\|
\le
L\|\mathbf u-\mathbf v\|
\]
whenever $\|\mathbf u-\bm\mu\|\le \delta$ and $\|\mathbf v-\bm\mu\|\le \delta$. Let \(C_M\) denote the monomial constant from Lemma~\ref{lm linearization}, associated with the monomial \(M\) and its plug-in estimator \(M_n\). We use \(C_M\) exactly as provided by Lemma~\ref{lm linearization}; no further expansion of this constant is needed here. It suffices to prove the following explicit bound:
\begin{align}\label{eq cancel}
\left|
\mathbb E\left[g(W_n)\times(M_n-M)\right]
\right|
\le
\frac{1}{n}\times 
C_{[\mathsf C_1,\delta,L,\|\nabla\tilde A(\bm{\mu})\|,C_M,C_{[g]},\{\mathbb E\|\mathbf X_1\|^k\}_{k=1,\ldots,l}]},
\end{align}
where the constant on the right-hand side depends on the first-order Edgeworth remainder constant \(\mathsf C_1\) from Theorem~\ref{Theo Hall1}, the local neighborhood radius \(\delta\) and the local Lipschitz constant \(L\) from Lemma~\ref{lm:gfunc}, the gradient norm \(\|\nabla\tilde A(\bm{\mu})\|\), the monomial constant \(C_M\) from Lemma~\ref{lm linearization}, the regularity constant \(C_{[g]}\) associated with \(g\), and finitely many moments \(\{\mathbb E\|\mathbf X_1\|^k\}_{k=1,\ldots,l}\).
\begin{proof}
By Lemma~\ref{lm linearization},
\[
M_n-M
=
\frac1n\sum_{i=1}^n \varrho_M(\mathbf X_i)+R_{[4]},
\qquad
\mathbb E[\varrho_M(\mathbf X_1)]=0,
\qquad
\mathbb E\abs{R_{[4]}}
\le
\frac{C_M}{n}
\left(1+\mathbb E\|\mathbf X_1\|^\ell\right)^{d_H-1}.
\]
Therefore,
\begin{align*}
\mathbb E\left[g(W_n)(M_n-M)\right]
&=
\mathbb E\left[g(W_n)\varrho_M(\mathbf X_1)\right]
+
\mathbb E\left[g(W_n)R_{[4]}\right]  \\
&=
\mathbb E\left[
\bigl(g(W_n)-g(W_{n,-1})\bigr)\varrho_M(\mathbf X_1)
\right]
+
\mathbb E\left[
g(W_{n,-1})\varrho_M(\mathbf X_1)
\right]
+
\mathbb E\left[
g(W_n)R_{[4]}
\right].
\end{align*}
The middle term vanishes because \(W_{n,-1}\) is independent of \(\mathbf X_1\) and
\(\mathbb E[\varrho_M(\mathbf X_1)]=0\). Moreover, using \(\abs{g}\le C_{[g]}\) and Lemma~\ref{lm linearization},
\[
\left|
\mathbb E\left[g(W_n)R_{[4]}\right]
\right|
\le
C_{[g]}\mathbb E\abs{R_{[4]}}
\le
\frac{1}{n}
C_{[C_M,C_{[g]},\{\mathbb E\|\mathbf X_1\|^k\}_{k=1,\ldots,l}]}.
\]
It remains to bound
\[
J\coloneqq
\mathbb E\left[
\bigl(g(W_n)-g(W_{n,-1})\bigr)\varrho_M(\mathbf X_1)
\right].
\]
Let
\[
D_{n,1}\coloneqq W_n-W_{n,-1}.
\]
By the jackknife expansion \eqref{eq:jackknife-pivotal-expansion},
\[
D_{n,1}
=
\frac{1}{\sqrt n}
\bigl(\varphi(\mathbf X_1)-\varphi(\bar{\mathbf X}_{-1})\bigr)
+
R_{[5]}.
\]
Taylor's theorem gives
\[
g(W_n)-g(W_{n,-1})
=
g'(W_{n,-1})D_{n,1}
+
\frac12 g''(\xi_{n,1})D_{n,1}^2,
\]
where \(\xi_{n,1}\) is an intermediate point between \(W_n\) and \(W_{n,-1}\). Hence
\begin{align*}
g(W_n)-g(W_{n,-1})
=&
\frac{1}{\sqrt n}
g'(W_{n,-1})\varphi(\mathbf X_1)
-
\frac{1}{\sqrt n}
g'(W_{n,-1})\varphi(\bar{\mathbf X}_{-1})
+
g'(W_{n,-1})R_{[5]}
+
\frac12 g''(\xi_{n,1})D_{n,1}^2.
\end{align*}
Let \(\mathcal E_n^{\mathrm{lip}}\) be the event in \eqref{Eq goodevent Lipschitz}. Choose \(\lambda\ge2\). By choosing the moment order \(l\) sufficiently large,
\[
\mathbb P(\mathcal E_n^{\mathrm{lip}})=1-O(n^{-\lambda}).
\]
We decompose \(J\) according to this event and the four terms in the Taylor expansion:
\[
J=\mathcal T_1+\mathcal T_2+\mathcal T_3+\mathcal T_4+\mathcal T_5,
\]
where
\begin{align*}
\mathcal T_1
&\coloneqq
\frac{1}{\sqrt n}
\mathbb E\left[
\ind_{\mathcal E_n^{\mathrm{lip}}}
\varrho_M(\mathbf X_1)
g'(W_{n,-1})
\varphi(\mathbf X_1)
\right], \\
\mathcal T_2
&\coloneqq
-\frac{1}{\sqrt n}
\mathbb E\left[
\ind_{\mathcal E_n^{\mathrm{lip}}}
\varrho_M(\mathbf X_1)
g'(W_{n,-1})
\varphi(\bar{\mathbf X}_{-1})
\right], \\
\mathcal T_3
&\coloneqq
\mathbb E\left[
\ind_{\mathcal E_n^{\mathrm{lip}}}
\varrho_M(\mathbf X_1)
g'(W_{n,-1})
R_{[5]}
\right], \\
\mathcal T_4
&\coloneqq
\frac12
\mathbb E\left[
\ind_{\mathcal E_n^{\mathrm{lip}}}
\varrho_M(\mathbf X_1)
g''(\xi_{n,1})
D_{n,1}^2
\right], \\
\mathcal T_5
&\coloneqq
\mathbb E\left[
\left(1-\ind_{\mathcal E_n^{\mathrm{lip}}}\right)
\bigl(g(W_n)-g(W_{n,-1})\bigr)
\varrho_M(\mathbf X_1)
\right].
\end{align*}
We first control the contribution from the complement of the good event. Since \(\abs{g}\le C_{[g]}\),
\begin{align*}
\abs{\mathcal T_5}
&\le
2C_{[g]}
\mathbb E\left[
\left(1-\ind_{\mathcal E_n^{\mathrm{lip}}}\right)
\abs{\varrho_M(\mathbf X_1)}
\right] \\
&\le
2C_{[g]}
\sqrt{1-\mathbb P(\mathcal E_n^{\mathrm{lip}})}
\left(\mathbb E\abs{\varrho_M(\mathbf X_1)}^2\right)^{1/2} \\
&\le
\frac{1}{n}
C_{[\delta,C_M,C_{[g]},\{\mathbb E\|\mathbf X_1\|^k\}_{k=1,\ldots,l}]}.
\end{align*}
We next handle the term involving \(\varphi(\bar{\mathbf X}_{-1})\). Since \(\mathbf X_1\) is independent of \(W_{n,-1}\) and \(\bar{\mathbf X}_{-1}\), and since \(\mathbb E[\varrho_M(\mathbf X_1)]=0\),
\[
\mathbb E\left[
\varrho_M(\mathbf X_1)
g'(W_{n,-1})
\varphi(\bar{\mathbf X}_{-1})
\right]
=0.
\]
Therefore,
\begin{align*}
\abs{\mathcal T_2}
&\le
\frac{1}{\sqrt n}
\mathbb E\left[
\left(1-\ind_{\mathcal E_n^{\mathrm{lip}}}\right)
\left|
\varrho_M(\mathbf X_1)
g'(W_{n,-1})
\varphi(\bar{\mathbf X}_{-1})
\right|
\right] \\
&\le
\frac{C_{[g]}}{\sqrt n}
\sqrt{1-\mathbb P(\mathcal E_n^{\mathrm{lip}})}
\left(\mathbb E\abs{\varrho_M(\mathbf X_1)}^4\right)^{1/4}
\left(\mathbb E\abs{\varphi(\bar{\mathbf X}_{-1})}^4\right)^{1/4} \\
&\le
\frac{C_{[g]}}{\sqrt n}
\sqrt{1-\mathbb P(\mathcal E_n^{\mathrm{lip}})}
\left(\mathbb E\abs{\varrho_M(\mathbf X_1)}^4\right)^{1/4}
\|\nabla\tilde A(\bm\mu)\|
\left(\mathbb E\|\bar{\mathbf X}_{-1}-\bm\mu\|^4\right)^{1/4} \\
&\le
\frac{1}{n}
C_{[\delta,\|\nabla\tilde A(\bm{\mu})\|,C_M,C_{[g]},\{\mathbb E\|\mathbf X_1\|^k\}_{k=1,\ldots,l}]}.
\end{align*}
For \(\mathcal T_3\), H\"older's inequality gives
\begin{align*}
\abs{\mathcal T_3}
&\le
C_{[g]}
\left(\mathbb E\abs{\varrho_M(\mathbf X_1)}^2\right)^{1/2}
\left(
\mathbb E\left[
\ind_{\mathcal E_n^{\mathrm{lip}}}\abs{R_{[5]}}^2
\right]
\right)^{1/2} \le
\frac{1}{n}
C_{[L,\|\nabla\tilde A(\bm{\mu})\|,C_M,C_{[g]},\{\mathbb E\|\mathbf X_1\|^k\}_{k=1,\ldots,l}]}.
\end{align*}
For \(\mathcal T_4\), using \(\abs{g''}\le C_{[g]}\),
\begin{align*}
\abs{\mathcal T_4}
&\le
C_{[g]}
\left(\mathbb E\abs{\varrho_M(\mathbf X_1)}^2\right)^{1/2}
\left(
\mathbb E\left[
\ind_{\mathcal E_n^{\mathrm{lip}}}D_{n,1}^4
\right]
\right)^{1/2}.
\end{align*}
From the decomposition of \(D_{n,1}\),
\begin{align*}
\mathbb E\left[
\ind_{\mathcal E_n^{\mathrm{lip}}}D_{n,1}^4
\right]
&\le
\frac{8}{n^2}
\mathbb E\left[
\left|\varphi(\mathbf X_1)-\varphi(\bar{\mathbf X}_{-1})\right|^4
\right]
+
8
\mathbb E\left[
\ind_{\mathcal E_n^{\mathrm{lip}}}\abs{R_{[5]}}^4
\right] \le
\frac{1}{n^2}
C_{[L,\|\nabla\tilde A(\bm{\mu})\|,\{\mathbb E\|\mathbf X_1\|^k\}_{k=1,\ldots,l}]}.
\end{align*}
Consequently,
\[
\abs{\mathcal T_4}
\le
\frac{1}{n}
C_{[L,\|\nabla\tilde A(\bm{\mu})\|,C_M,C_{[g]},\{\mathbb E\|\mathbf X_1\|^k\}_{k=1,\ldots,l}]}.
\]
It remains to bound the leading term \(\mathcal T_1\). We write it as the full independent product minus its bad-event correction:
\begin{align*}
\mathcal T_1
=&
\frac{1}{\sqrt n}
\mathbb E\left[
\varrho_M(\mathbf X_1)
g'(W_{n,-1})
\varphi(\mathbf X_1)
\right] -
\frac{1}{\sqrt n}
\mathbb E\left[
\left(1-\ind_{\mathcal E_n^{\mathrm{lip}}}\right)
\varrho_M(\mathbf X_1)
g'(W_{n,-1})
\varphi(\mathbf X_1)
\right].
\end{align*}
The bad-event correction is bounded by
\begin{align*}
&\frac{1}{\sqrt n}
\mathbb E\left[
\left(1-\ind_{\mathcal E_n^{\mathrm{lip}}}\right)
\left|
\varrho_M(\mathbf X_1)
g'(W_{n,-1})
\varphi(\mathbf X_1)
\right|
\right] \\
&\qquad\le
\frac{C_{[g]}}{\sqrt n}
\sqrt{1-\mathbb P(\mathcal E_n^{\mathrm{lip}})}
\left(\mathbb E\abs{\varrho_M(\mathbf X_1)}^4\right)^{1/4}
\left(\mathbb E\abs{\varphi(\mathbf X_1)}^4\right)^{1/4} \\
&\qquad\le
\frac{1}{n}\times 
C_{[\delta,\|\nabla\tilde A(\bm{\mu})\|,C_M,C_{[g]},\{\mathbb E\|\mathbf X_1\|^k\}_{k=1,\ldots,l}]}.
\end{align*}
For the full expectation, independence between \(W_{n,-1}\) and \(\mathbf X_1\) gives
\[
\mathbb E\left[
\varrho_M(\mathbf X_1)
g'(W_{n,-1})
\varphi(\mathbf X_1)
\right]
=
\mathbb E[g'(W_{n,-1})]
\mathbb E[\varrho_M(\mathbf X_1)\varphi(\mathbf X_1)].
\]
Since \(\varrho_M\) is a polynomial and \(\varphi\) is linear,
\[
\left|
\mathbb E[\varrho_M(\mathbf X_1)\varphi(\mathbf X_1)]
\right|
\le
C_{[\|\nabla\tilde A(\bm{\mu})\|,C_M,\{\mathbb E\|\mathbf X_1\|^k\}_{k=1,\ldots,l}]}.
\]
It remains only to bound \(\mathbb E[g'(W_{n,-1})]\). The random variable \(W_{n,-1}\) has the same law as the pivotal statistic constructed from a sample of size \(n-1\). Since \(g\) is even, \(g'\) is odd. Moreover, \(g'\) is absolutely continuous and \((g')'=g''\in L^1(\mathbb R)\). Applying Lemma~\ref{lm lemma3 expectation} to \(f=g'\), with sample size \(n-1\) and \(\nu=1\), yields
\begin{align*}
\mathbb E[g'(W_{n,-1})]
=&
\int_{\mathbb R} g'(x)\,d\Phi(x)
+
\frac{1}{\sqrt{n-1}}
\int_{\mathbb R} g'(x)\,d[\pi_1(x)\phi(x)] +
\int_{\mathbb R} g'(x)\,d\varepsilon_{n-1}^1(x).
\end{align*}
The first term is zero because \(g'\) is odd. The second term is of order \((n-1)^{-1/2}\), with its constant absorbed into
\[
C_{[\mathsf C_1,C_{[g]},\{\mathbb E\|\mathbf X_1\|^k\}_{k=1,\ldots,l}]}.
\]
Here \(\mathsf C_1\) denotes the first-order Edgeworth remainder constant from Theorem~\ref{Theo Hall1}. For the last term, Lemma~\ref{lm lemma3 expectation} and the definition of \(C_{[g]}\) give
\[
\left|
\int_{\mathbb R} g'(x)\,d\varepsilon_{n-1}^1(x)
\right|
\le
\mathsf C_1\|g''\|_{L^1(\mathbb R)}(n-1)^{-1}
\le
\mathsf C_1 C_{[g]}(n-1)^{-1}.
\]
Thus
\[
\left|
\mathbb E[g'(W_{n,-1})]
\right|
\le
\frac{1}{\sqrt n}
C_{[\mathsf C_1,C_{[g]},\{\mathbb E\|\mathbf X_1\|^k\}_{k=1,\ldots,l}]},
\]
and therefore
\[
\abs{\mathcal T_1}
\le
\frac{1}{n}
C_{[\mathsf C_1,\delta,\|\nabla\tilde A(\bm{\mu})\|,C_M,C_{[g]},\{\mathbb E\|\mathbf X_1\|^k\}_{k=1,\ldots,l}]}.
\]
Combining the bounds for \(\mathcal T_1,...,\mathcal T_5\), we obtain
\[
\abs{J}
\le
\frac{1}{n}
C_{[\mathsf C_1,\delta,L,\|\nabla\tilde A(\bm{\mu})\|,C_M,C_{[g]},\{\mathbb E\|\mathbf X_1\|^k\}_{k=1,\ldots,l}]}.
\]
Together with the bounds for the two preliminary terms, this gives
\begin{align*}
\left|
\mathbb E\left[g(W_n)(M_n-M)\right]
\right|
&\le
\left|
\mathbb E\left[
\bigl(g(W_n)-g(W_{n,-1})\bigr)\varrho_M(\mathbf X_1)
\right]
\right|
+
\left|
\mathbb E\left[
g(W_{n,-1})\varrho_M(\mathbf X_1)
\right]
\right| +
\left|
\mathbb E\left[
g(W_n)R_{[4]}
\right]
\right| \\
&\le
\frac{1}{n}
C_{[\mathsf C_1,\delta,L,\|\nabla\tilde A(\bm{\mu})\|,C_M,C_{[g]},\{\mathbb E\|\mathbf X_1\|^k\}_{k=1,\ldots,l}]}.
\end{align*}
This proves \eqref{eq cancel}, and hence Proposition~\ref{theo:expect-cancellation-good}.
\end{proof}

\subsection{Resampled Analogues of Expectation Cancellation Proposition and its Three Auxiliary Lemmas}\label{sec:resampled-three-lemmas}
We now formulate the conditional versions of the three auxiliary lemmas used in Section \ref{secprop1}. The purpose is to make explicit that, on suitable empirical good events, all constants appearing in the resampled arguments can be bounded by deterministic constants depending only on finitely many \emph{population} norm moments rather than the empirical norm moments, up to arbitrary fixed tolerances.

We now formulate two good events. First, fix a sufficiently large integer $l\ge 1$ and a tolerance $\varepsilon_{\mathrm m}>0$. Define the empirical norm-moment event
\begin{align}\label{good event emp moment}
\mathcal E_{\mathrm m}
=
\left\{
\frac1n\sum_{i=1}^n\|\mathbf X_i\|^k
\le
\mathbb E\|\mathbf X_1\|^k+\varepsilon_{\mathrm m},
\quad k=1,\ldots,l
\right\}.
\end{align}
This event requires the empirical norm moments up to order $l$ to be bounded above by their population counterparts, up to the fixed tolerance $\varepsilon_{\mathrm m}$. Second, recall the notations from Lemma~\ref{lm:gfunc}. For a fixed \(\varepsilon_\mu>0\), define
\begin{align}\label{good event mean}
\mathcal E_{\bm\mu}
\coloneqq
\left\{
\|\hat{\bm\mu}_n-\bm\mu\|\le \varepsilon_\mu
\right\},\quad \text{where}\quad \hat{\bm\mu}_n
\coloneqq
\bar{\mathbf X}_n
=
\frac1n\sum_{i=1}^n\mathbf X_i.
\end{align}
Both $\mathcal E_{\mathrm m}$ and $\mathcal E_{\bm\mu}$ are measurable with respect to $\sigma(\mathcal X)$. Moreover, their complements have arbitrarily high polynomial decay, provided sufficiently high moments of $\|\mathbf X_1\|$ exist, as stated in the next two lemmas.
\begin{lemma}\label{emevent1}
Fix \(l\ge1\) and \(\varepsilon_{\mathrm m}>0\). For any prescribed \(\lambda>0\), if the moment order in Assumption~\ref{assume1} is chosen sufficiently large, then $\mathbb P\bigl(\mathcal E_{\mathrm m}\bigr)
=
1-O(n^{-\lambda})$.
\end{lemma}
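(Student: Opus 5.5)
The plan is to apply a union bound over the finitely many orders $k=1,\dots,l$ and, for each $k$, a polynomial-rate deviation inequality for the centered sample average of $Y_i^{(k)}\coloneqq \|\mathbf X_i\|^k-\mathbb E\|\mathbf X_1\|^k$. Since the complement of $\mathcal E_{\mathrm m}$ is
\[
\bigcup_{k=1}^{l}\left\{\frac1n\sum_{i=1}^n Y_i^{(k)}>\varepsilon_{\mathrm m}\right\},
\]
we get
\[
1-\mathbb P(\mathcal E_{\mathrm m})\le \sum_{k=1}^l \mathbb P\left(\left|\frac1n\sum_{i=1}^n Y_i^{(k)}\right|>\varepsilon_{\mathrm m}\right).
\]
It therefore suffices to show that each summand is $O(n^{-\lambda})$, because $l$ and $\varepsilon_{\mathrm m}$ are fixed.

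For a fixed $k$, choose an even integer $p\ge 2$ with $p/2\ge\lambda$, i.e.\ $p=2\lceil\lambda\rceil$. The $Y_i^{(k)}$ are i.i.d., centered, and satisfy
\[
\mathbb E|Y_1^{(k)}|^p\le 2^{p}\,\mathbb E\|\mathbf X_1\|^{kp},
\]
which is finite provided the moment order in Assumption~\ref{assume1} is at least $lp$. This is exactly the sense in which the moment order must be "chosen sufficiently large", depending on $\lambda$ and $l$.

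Next combine Markov's inequality with the Marcinkiewicz--Zygmund (Rosenthal) inequality, the same tool used in \eqref{eq ag2}:
\[
\mathbb P\left(\left|\frac1n\sum_{i=1}^n Y_i^{(k)}\right|>\varepsilon_{\mathrm m}\right)
\le \varepsilon_{\mathrm m}^{-p}\,\mathbb E\left|\frac1n\sum_{i=1}^n Y_i^{(k)}\right|^p
\le \varepsilon_{\mathrm m}^{-p}\,C_p\, n^{-p/2}\,\mathbb E|Y_1^{(k)}|^p .
\]
This is $O(n^{-p/2})=O(n^{-\lambda})$. Summing over $k\le l$ gives the claim, with a constant depending only on $l$, $\lambda$, $\varepsilon_{\mathrm m}$ and the population moments up to order $lp$.

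There is no real obstacle here. The only point needing care is bookkeeping: the moment order required is $l\cdot 2\lceil\lambda\rceil$, which is why the lemma allows the moment order to depend on $\lambda$. One could even prove the one-sided version only, since $\mathcal E_{\mathrm m}$ requires just the upper bound.
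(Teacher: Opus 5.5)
Your proposal is correct and follows the paper's proof: for each fixed $k\le l$ it applies Markov's inequality with the Marcinkiewicz--Zygmund bound to the centered averages of $\|\mathbf X_i\|^k$, chooses $p$ with $p/2\ge\lambda$, and takes a union bound over $k=1,\dots,l$. It needs the same moment order, namely $\mathbb E\|\mathbf X_1\|^{lp}<\infty$.
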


\begin{proof}
Fix \(k\le l\) and set $Y_i^{(k)}
=
\|\mathbf X_i\|^k-\mathbb E\|\mathbf X_1\|^k$. For every integer \(p\ge2\), Markov's inequality and the Marcinkiewicz--Zygmund inequality imply
\[
\begin{aligned}
\mathbb P\left(
\frac1n\sum_{i=1}^n\|\mathbf X_i\|^k
>
\mathbb E\|\mathbf X_1\|^k+\varepsilon_{\mathrm m}
\right)
\le
\varepsilon_{\mathrm m}^{-p}
\mathbb E\left|
\frac1n\sum_{i=1}^nY_i^{(k)}
\right|^p  \le
\frac{C_p}{\varepsilon_{\mathrm m}^{p}n^{p/2}}
\mathbb E\abs{Y_1^{(k)}}^p \le
\frac{C_{[p,k,\varepsilon_{\mathrm m}]}}{n^{p/2}}
\left(1+\mathbb E\|\mathbf X_1\|^{kp}\right).
\end{aligned}
\]
Taking a union bound over \(k=1,\ldots,l\), we obtain
\[
\mathbb P\bigl(\mathcal E_{\mathrm m}^c\bigr)
\le
\frac{1
}{n^{p/2}}\times C_{[p,l,\varepsilon_{\mathrm m},\{\mathbb E\|\mathbf X_1\|^j\}_{j=1,\ldots,lp}]}.
\]
Choosing \(p\) such that \(p/2\ge\lambda\), and assuming the moment order in Assumption~\ref{assume1} is large enough to ensure \(\mathbb E\|\mathbf X_1\|^{lp}<\infty\), gives $\mathbb P\bigl(\mathcal E_{\mathrm m}^c\bigr)
=
O(n^{-\lambda})$ as claimed.
\end{proof}
\begin{lemma}
Fix \(\varepsilon_\mu>0\). For any prescribed \(\lambda>0\), if the moment order in Assumption~\ref{assume1} is chosen sufficiently large, then $\mathbb P\bigl(\mathcal E_{\bm\mu}\bigr)
=
1-O(n^{-\lambda})$.
\end{lemma}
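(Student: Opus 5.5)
The plan is to mirror the proof of Lemma~\ref{emevent1}: a Markov inequality applied to a high moment of the centered sample mean, combined with a Marcinkiewicz--Zygmund (or Rosenthal) bound. Write $\hat{\bm\mu}_n-\bm\mu=\frac1n\sum_{i=1}^n(\mathbf X_i-\bm\mu)$, a normalized sum of i.i.d.\ centered random vectors in $\mathbb R^d$. For any integer $p\ge2$, Markov's inequality gives
\[
\mathbb P\bigl(\mathcal E_{\bm\mu}^c\bigr)
=
\mathbb P\bigl(\|\bar{\mathbf X}_n-\bm\mu\|>\varepsilon_\mu\bigr)
\le
\varepsilon_\mu^{-p}\,\mathbb E\|\bar{\mathbf X}_n-\bm\mu\|^p .
\]

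To bound the moment on the right, I will reduce to coordinates. The norm satisfies $\|\mathbf v\|^p\le d^{p/2-1}\sum_{j=1}^d |v_j|^p$, or alternatively I can invoke directly the vector Rosenthal--Marcinkiewicz--Zygmund inequality already used in the proof of Lemma~\ref{lm:gfunc}. Either way, applying the scalar Marcinkiewicz--Zygmund inequality coordinatewise yields
\[
\mathbb E\|\bar{\mathbf X}_n-\bm\mu\|^p
\le
C_{[p,d]}\,n^{-p/2}\,\mathbb E\|\mathbf X_1-\bm\mu\|^p
\le
C_{[p,d]}\,n^{-p/2}\bigl(1+\mathbb E\|\mathbf X_1\|^p\bigr).
\]
The last step uses $\|\bm\mu\|\le\mathbb E\|\mathbf X_1\|$ together with the elementary inequality $|x+y|^p\le2^{p-1}(|x|^p+|y|^p)$.

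Combining the two displays gives $\mathbb P(\mathcal E_{\bm\mu}^c)\le C_{[p,d,\varepsilon_\mu,\mathbb E\|\mathbf X_1\|^p]}\,n^{-p/2}$. Choosing the integer $p\ge\max(2,2\lambda)$ and requiring the moment order in Assumption~\ref{assume1} to satisfy $l\ge p$ then yields $\mathbb P(\mathcal E_{\bm\mu})=1-O(n^{-\lambda})$.

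There is no real obstacle here. The only points needing care are that the constant depends only on $p$, $d$, $\varepsilon_\mu$ and one population moment, and that the moment requirement $l\ge 2\lambda$ is compatible with Assumption~\ref{assume1}'s ``sufficiently large'' moment order.
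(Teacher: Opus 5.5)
Your proof is correct and follows the paper's route. The paper omits this proof as a simpler case of Lemma~\ref{emevent1}, whose argument is exactly Markov's inequality on a $p$-th moment combined with the Marcinkiewicz--Zygmund inequality, choosing $p/2\ge\lambda$; your coordinatewise reduction via $\|\mathbf v\|^p\le d^{p/2-1}\sum_j|v_j|^p$ just spells out the vector-valued case.
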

\begin{proof}
This is a simpler case of Lemma~\ref{emevent1}; hence the proof is omitted.
\end{proof}
Recall that the sample is denoted by
$\mathcal X=\{\mathbf X_1,\ldots,\mathbf X_n\}$, and let
\(\mathcal X^*=\{\mathbf X_1^*,\ldots,\mathbf X_n^*\}\) be i.i.d.\ draws from the empirical distribution of \(\mathcal X\). In the remaining part of this section, we write
\[
\hat{\mathbb E}[\cdot]\coloneqq \mathbb E[\cdot\mid \mathcal X],
\qquad
\hat{\mathbb P}(\cdot)\coloneqq \mathbb P(\cdot\mid \mathcal X).
\]

\subsubsection{Lemma~\ref{lm linearization} revisited}
Recall the notation from the subsection of Lemma~\ref{lm linearization}. Analogously, define
\[
\hat m_{n,\alpha}^*\coloneqq \frac1n\sum_{i=1}^n(\mathbf X_i^*)^\alpha,
\qquad
\hat{\bm m}_n^*\coloneqq(\hat m_{n,\alpha}^*)_{\alpha\in\mathcal A},
\qquad
M_n^*\coloneqq H(\hat{\bm m}_n^*).
\]
Also define
\[
\varrho_{M_n}(\mathbf x)
\coloneqq
\sum_{\alpha\in\mathcal A}
\partial_\alpha H(\hat{\bm m}_n)\times
\left(\mathbf x^\alpha-\hat m_{n,\alpha}\right).
\]
Then, analogously to the population version, we have $\hat{\mathbb E}[\varrho_{M_n}(\mathbf X_1^*)]=0$.

\begin{lemma}[Resampled version of Lemma~\ref{lm linearization}]\label{resample lemma1}
For a given monomial \(M\) as defined in \eqref{eq MMM}, define \(\hat R_{[4]}\) by
\[
M_n^*-M_n
=
\frac1n\sum_{i=1}^n
\varrho_{M_n}(\mathbf X_i^*)
+
\hat R_{[4]}.
\]
For every fixed integer \(p\ge1\), if \(l\) is chosen sufficiently large depending on \(M\) and \(p\), then on the event \(\mathcal E_{\mathrm m}\),
\begin{align}\label{lm6 eqa}
    \ind_{\mathcal E_{\mathrm m}}\times
\hat{\mathbb E}[\abs{\hat R_{[4]}}]
\le
\frac{1}{n}
C_{[M,\{\mathbb E\|\mathbf X_1\|^k\}_{k=1,\ldots,l}]}.
\end{align}
Moreover,
\begin{align}\label{lm6 eqb}
    \ind_{\mathcal E_{\mathrm m}}\times
\hat{\mathbb E}
\left[
\abs{\varrho_{M_n}(\mathbf X_1^*)}^p
\right]
\le
C_{[p,M,\{\mathbb E\|\mathbf X_1\|^k\}_{k=1,\ldots,l}]}.
\end{align}
\end{lemma}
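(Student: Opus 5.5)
The plan is to repeat the proof of Lemma~\ref{lm linearization} conditionally on $\mathcal X$, with the population law replaced by the empirical law $\hat P_n$, and to control every empirical quantity on $\mathcal E_{\mathrm m}$.

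\textbf{Step 1: algebraic expansion.} The expansion is purely algebraic. With $\hat{\bm\Delta}^*_n\coloneqq\hat{\bm m}^*_n-\hat{\bm m}_n$, the binomial expansion of $H(\hat{\bm m}_n+\hat{\bm\Delta}^*_n)$ gives exactly the decomposition in the statement. The remainder is
\[
\hat R_{[4]}=\sum_{|\bm k|\ge2} b_{\bm k}(\hat{\bm m}_n)\,(\hat{\bm\Delta}_n^{*})^{\bm k},
\]
and it has the same finitely many terms as before. The centering identity $\hat{\mathbb E}[\varrho_{M_n}(\mathbf X_1^*)]=0$ holds because $\hat{\mathbb E}[(\mathbf X_1^*)^\alpha]=\hat m_{n,\alpha}$.

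\textbf{Step 2: bounds under $\hat P_n$.} I will bound the coefficients and the moment differences. First, $|\hat m_{n,\alpha}|\le \frac1n\sum_i\|\mathbf X_i\|^{|\alpha|}$. On $\mathcal E_{\mathrm m}$ this is at most $\mathbb E\|\mathbf X_1\|^{|\alpha|}+\varepsilon_{\mathrm m}$, so $|b_{\bm k}(\hat{\bm m}_n)|\le C_{[M,\{\mathbb E\|\mathbf X_1\|^k\}]}$. Next, given $\mathcal X$, each coordinate $\hat\Delta^*_{n,\alpha}$ is an average of $n$ i.i.d.\ centered variables $(\mathbf X_i^*)^\alpha-\hat m_{n,\alpha}$. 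I will apply Hölder's inequality exactly as in \eqref{eq ag1}, and then the Marcinkiewicz--Zygmund inequality under $\hat{\mathbb P}$. Its constant $C_s$ depends only on $s$ and is distribution-free, so it holds for the empirical law. This gives
\[
\hat{\mathbb E}|\hat\Delta^*_{n,\alpha}|^s\le C_s n^{-s/2}\,\hat{\mathbb E}|(\mathbf X_1^*)^\alpha-\hat m_{n,\alpha}|^s\le C_s n^{-s/2}\,2^{s}\frac1n\sum_i\|\mathbf X_i\|^{s|\alpha|}.
\]
On $\mathcal E_{\mathrm m}$, provided $l\ge d_H a_{\max}$, the right-hand side is bounded by population moments plus $\varepsilon_{\mathrm m}$. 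Since $s\ge2$, each term of $\hat R_{[4]}$ then has conditional expectation at most $C n^{-1}$, which proves \eqref{lm6 eqa}.

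\textbf{Step 3: the bound \eqref{lm6 eqb}.} Write $\varrho_{M_n}(\mathbf x)=\sum_\alpha\partial_\alpha H(\hat{\bm m}_n)(\mathbf x^\alpha-\hat m_{n,\alpha})$. The derivatives $\partial_\alpha H(\hat{\bm m}_n)$ are monomials in the $\hat m_{n,\alpha}$, so on $\mathcal E_{\mathrm m}$ they are bounded as in Step 2. By $|\sum_{j=1}^J a_j|^p\le J^{p-1}\sum_j|a_j|^p$,
\[
\hat{\mathbb E}|\varrho_{M_n}(\mathbf X_1^*)|^p\le C\sum_\alpha\Bigl(\frac1n\sum_i\|\mathbf X_i\|^{p|\alpha|}+|\hat m_{n,\alpha}|^p\Bigr).
\]
This is bounded on $\mathcal E_{\mathrm m}$ as long as $l\ge p\,a_{\max}$.

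\textbf{Main obstacle.} The only delicate point is bookkeeping. The chosen $l$ must dominate every empirical moment order that appears, namely $s|\alpha|\le d_H a_{\max}$ and $p\,a_{\max}$. The resulting constants must also depend only on population moments, through $\mathcal E_{\mathrm m}$ with its fixed tolerance, and never on $n$. Choosing $l\ge \max(d_H,p)\,a_{\max}$ settles this.
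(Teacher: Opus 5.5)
Your proposal is correct and follows the same route as the paper. You rerun Lemma~\ref{lm linearization} conditionally under $\hat P_n$, using that the Marcinkiewicz--Zygmund constant is distribution-free, and bound every empirical moment on $\mathcal E_{\mathrm m}$; your explicit choice $l\ge\max(d_H,p)\,a_{\max}$ simply makes precise the paper's ``$l$ sufficiently large depending on $M$ and $p$.''
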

\begin{proof}
Conditional on \(\mathcal X\), Lemma~\ref{lm linearization} applied to the empirical distribution \(\hat P_n\) gives the same remainder bound with constant
\(C_{[M,\{\hat{\mathbb E}\|\mathbf X_1^*\|^k\}_{k=1,\ldots,l}]}\).
On \(\mathcal E_{\mathrm m}\), this constant is bounded by
\(C_{[M,\{\mathbb E\|\mathbf X_1\|^k\}_{k=1,\ldots,l}]}\), up to absorbing the fixed tolerance \(\varepsilon_{\mathrm m}\), and hence \eqref{lm6 eqa} follows. It remains to prove \eqref{lm6 eqb}. By definition,
\[
\varrho_{M_n}(\mathbf x)
=
\sum_{\alpha\in\mathcal A}
\partial_\alpha H(\hat{\bm m}_n)
\left(\mathbf x^\alpha-\hat m_{n,\alpha}\right).
\]
Since \(H\) is a fixed monomial, its derivatives are polynomial functions of their arguments. Therefore, on \(\mathcal E_{\mathrm m}\), the coefficients
\(\partial_\alpha H(\hat{\bm m}_n)\) are bounded by deterministic constants depending only on \(M\) and the population norm moments up to order \(l\). Moreover, for each \(\alpha\in\mathcal A\), if \(l\) is chosen sufficiently large depending on \(M\) and \(p\), then the conditional \(p\)-moment of
\((\mathbf X_1^*)^\alpha-\hat m_{n,\alpha}\) is bounded by the empirical norm moments controlled on \(\mathcal E_{\mathrm m}\). Since \(\mathcal A\) is finite, combining these coefficient bounds and moment bounds proves \eqref{lm6 eqb}.
\end{proof}

\subsubsection{Lemma~\ref{lm:gfunc} revisited}

We next formulate the resampled version of Lemma~\ref{lm:gfunc}. Recall from \eqref{good event mean} that \(\mathcal E_{\bm\mu}=\{\|\hat{\bm\mu}_n-\bm\mu\|\le \varepsilon_\mu\}\), where \(\hat{\bm\mu}_n=\bar{\mathbf X}_n\). In this subsection, we set \(\varepsilon_\mu=\delta/2\), where \(\delta>0\) is chosen so that \(\nabla\tilde A\) is \(L\)-Lipschitz on \(\{\mathbf x:\|\mathbf x-\bm\mu\|\le\delta\}\). Conditional on \(\mathcal X\), let \(\bar{\mathbf X}_n^*=n^{-1}\sum_{j=1}^n\mathbf X_j^*\) and \(\bar{\mathbf X}_{-i}^*=(n-1)^{-1}\sum_{j\ne i}\mathbf X_j^*\). Define
\[
W_n^*
=
\sqrt n\{\tilde A(\bar{\mathbf X}_n^*)-\tilde A(\hat{\bm\mu}_n)\},
\qquad
W_{n,-i}^*
=
\sqrt{n-1}\{\tilde A(\bar{\mathbf X}_{-i}^*)-\tilde A(\hat{\bm\mu}_n)\},
\]
and set \(\hat\varphi(\mathbf x)=\nabla\tilde A(\hat{\bm\mu}_n)^\top(\mathbf x-\hat{\bm\mu}_n)\). We also define the resampled Lipschitz event
\begin{align}\label{Eq newLipschitz}
\mathcal E_n^{*,\mathrm{lip}}
\coloneqq
\left\{
\|\bar{\mathbf X}_{-i}^*-\hat{\bm\mu}_n\|
\le
\delta/2
\text{ for all }1\le i\le n
\right\}.
\end{align}
On \(\mathcal E_{\bm\mu}\cap\mathcal E_n^{*,\mathrm{lip}}\), all \(\bar{\mathbf X}_{-i}^*\) lie in the \(\delta\)-neighborhood of \(\bm\mu\). Since \(\bar{\mathbf X}_n^*=n^{-1}\sum_{i=1}^n\bar{\mathbf X}_{-i}^*\), the same is true for \(\bar{\mathbf X}_n^*\). We emphasize that, unlike \(\mathcal E_{\bm\mu}\) and \(\mathcal E_{\mathrm m}\), the event \(\mathcal E_n^{*,\mathrm{lip}}\) is not measurable with respect to the original sample \(\mathcal X\) alone; it is measurable with respect to the joint randomness of \((\mathcal X,\mathcal X^*)\).

\begin{lemma}[Resampled version of Lemma~\ref{lm:gfunc}]\label{lm:gfunc resampled}
Define \(\hat R_{[5]}\) through the decomposition
\begin{align}\label{eq:resampled-jackknife-pivotal-expansion}
W_n^*-W_{n,-1}^*
=
\frac{1}{\sqrt n}
\left(
\hat\varphi(\mathbf X_1^*)-\hat\varphi(\bar{\mathbf X}_{-1}^*)
\right)
+
\hat R_{[5]}.
\end{align}
For every fixed integer \(r\ge1\), if \(l\) in Assumption~\ref{assume1} is chosen sufficiently large depending on \(r\), then
\begin{align}\label{eq resampled lm gfunc}
\ind_{\mathcal E_{\bm\mu}\cap\mathcal E_{\mathrm m}}\times 
\hat{\mathbb E}
\left[
\ind_{\mathcal E_n^{*,\mathrm{lip}}}\times 
\abs{\hat R_{[5]}}^r
\right]
\le
\frac{1}{n^r}
C_{[r,L,\|\nabla\tilde A(\bm\mu)\|,\{\mathbb E\|\mathbf X_1\|^k\}_{k=1,\ldots,l}]}.
\end{align}
\end{lemma}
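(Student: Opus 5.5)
The proof will mirror Lemma~\ref{lm:gfunc}, carried out conditionally on $\mathcal X$ with the empirical distribution $\hat P_n$ in place of $P$, $\hat{\bm\mu}_n$ in place of $\bm\mu$, and $\hat\varphi$ in place of $\varphi$. We then check that every constant arising in that argument is dominated, on $\mathcal E_{\bm\mu}\cap\mathcal E_{\mathrm m}$, by a deterministic constant depending only on population quantities.

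\emph{Step 1.} Start from the analogue of \eqref{Eq lm2start}:
\[
W_n^*-W_{n,-1}^*=\sqrt n\{\tilde A(\bar{\mathbf X}_n^*)-\tilde A(\bar{\mathbf X}_{-1}^*)\}+(\sqrt n-\sqrt{n-1})\{\tilde A(\bar{\mathbf X}_{-1}^*)-\tilde A(\hat{\bm\mu}_n)\}.
\]
On $\mathcal E_{\bm\mu}\cap\mathcal E_n^{*,\mathrm{lip}}$, the points $\hat{\bm\mu}_n$, $\bar{\mathbf X}_{-1}^*$, $\bar{\mathbf X}_n^*$ and every segment joining them lie in the $\delta$-ball around $\bm\mu$, because that ball is convex. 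Hence $\nabla\tilde A$ is $L$-Lipschitz along these segments, and also $\|\nabla\tilde A(\hat{\bm\mu}_n)\|\le\|\nabla\tilde A(\bm\mu)\|+L\delta/2$. With this, the integral-remainder computation of Step~1 of Lemma~\ref{lm:gfunc}, now centered at $\hat{\bm\mu}_n$, gives the pointwise bound
\[
L\bigl(\|\bar{\mathbf X}_n^*-\hat{\bm\mu}_n\|+\|\bar{\mathbf X}_{-1}^*-\hat{\bm\mu}_n\|\bigr)\,\|\bar{\mathbf X}_n^*-\bar{\mathbf X}_{-1}^*\|.
\]
Here the linear term is $n^{-1}(\hat\varphi(\mathbf X_1^*)-\hat\varphi(\bar{\mathbf X}_{-1}^*))$. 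The second summand is bounded by a first-order Taylor expansion around $\hat{\bm\mu}_n$, exactly as in \eqref{eq eq20}.

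\emph{Step 2.} Raise to the $r$-th power and take $\hat{\mathbb E}$, using Cauchy--Schwarz as in Step~4 of Lemma~\ref{lm:gfunc}. The only probabilistic inputs are the Marcinkiewicz--Zygmund bounds
\[
\hat{\mathbb E}\|\bar{\mathbf X}_n^*-\hat{\bm\mu}_n\|^{2r}\le C_{[r]}n^{-r}\,\hat{\mathbb E}\|\mathbf X_1^*-\hat{\bm\mu}_n\|^{2r},
\]
the same bound for $\bar{\mathbf X}_{-1}^*$ with $n-1$, and $\bar{\mathbf X}_n^*-\bar{\mathbf X}_{-1}^*=n^{-1}(\mathbf X_1^*-\bar{\mathbf X}_{-1}^*)$. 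These hold conditionally because the $\mathbf X_j^*$ are i.i.d.\ given $\mathcal X$ with mean $\hat{\bm\mu}_n$. The resulting bound is $n^{-r}$ times a polynomial in $L$, $\|\nabla\tilde A(\hat{\bm\mu}_n)\|$ and $\hat{\mathbb E}\|\mathbf X_1^*-\hat{\bm\mu}_n\|^{2r}$.

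\emph{Step 3.} Control the empirical constants. We have
\[
\hat{\mathbb E}\|\mathbf X_1^*-\hat{\bm\mu}_n\|^{2r}\le 2^{2r}\,\frac1n\sum_i\|\mathbf X_i\|^{2r},
\]
which on $\mathcal E_{\mathrm m}$ (with $l\ge 2r$) is at most $2^{2r}(\mathbb E\|\mathbf X_1\|^{2r}+\varepsilon_{\mathrm m})$. Together with the gradient bound from Step~1, multiplying by $\ind_{\mathcal E_{\bm\mu}\cap\mathcal E_{\mathrm m}}$ yields \eqref{eq resampled lm gfunc}.

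The main obstacle is Step~1. The paper's Lemma~\ref{lm:gfunc} linearizes at the fixed point $\bm\mu$, whereas the resampled version must linearize at the random point $\hat{\bm\mu}_n$. This is why the half-radius events $\mathcal E_{\bm\mu}$ and $\mathcal E_n^{*,\mathrm{lip}}$ (each with radius $\delta/2$) are needed: they let us apply the population Lipschitz constant $L$ uniformly and convert $\|\nabla\tilde A(\hat{\bm\mu}_n)\|$ into a deterministic bound. The remaining estimates are routine once this localization is in place.
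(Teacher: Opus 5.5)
Your proposal is correct and takes the same route as the paper. You run the argument of Lemma~\ref{lm:gfunc} conditionally on $\mathcal X$, with $\hat{\bm\mu}_n$ and $\hat\varphi$ in place of $\bm\mu$ and $\varphi$. The half-radius events $\mathcal E_{\bm\mu}\cap\mathcal E_n^{*,\mathrm{lip}}$ keep all relevant segments in the $\delta$-ball, so $L$ applies and $\|\nabla\tilde A(\hat{\bm\mu}_n)\|\le\|\nabla\tilde A(\bm\mu)\|+L\delta/2$. The event $\mathcal E_{\mathrm m}$ then replaces empirical moments by population moments up to the tolerance. Your write-up additionally spells out the conditional Marcinkiewicz--Zygmund and moment-comparison steps that the paper leaves implicit.
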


\begin{proof}
Conditional on \(\mathcal X\), Lemma~\ref{lm:gfunc} applied to the empirical distribution \(\hat P_n\) gives the same bound with constant 
\[
C_{[r,L,\|\nabla\tilde A(\hat{\bm\mu}_n)\|,
\{\hat{\mathbb E}\|\mathbf X_1^*\|^k\}_{k=1,\ldots,l}]}.
\]
On \(\mathcal E_{\bm\mu}\cap\mathcal E_n^{*,\mathrm{lip}}\), the points appearing in the decomposition \eqref{eq:resampled-jackknife-pivotal-expansion} lie in the \(\delta\)-neighborhood of \(\bm\mu\), so the same Lipschitz constant \(L\) applies. On \(\mathcal E_{\bm\mu}\), since \(\varepsilon_\mu=\delta/2\),
\[
\|\nabla\tilde A(\hat{\bm\mu}_n)\|
\le
\|\nabla\tilde A(\bm\mu)\|+L\delta/2.
\]
On \(\mathcal E_{\mathrm m}\), the empirical moments
\(\hat{\mathbb E}\|\mathbf X_1^*\|^k=n^{-1}\sum_{i=1}^n\|\mathbf X_i\|^k\)
are bounded by the corresponding population moments up to the fixed tolerance. Absorbing fixed tolerances into the constant gives \eqref{eq resampled lm gfunc}.
\end{proof}

\subsubsection{Lemma~\ref{lm lemma3 expectation} revisited}
Now we discuss the conditional analogue of Lemma~\ref{lm lemma3 expectation}. Let \(F_n^*(x)\coloneqq \hat{\mathbb P}(W_n^*\le x)\), let
\[
G_n^{*,\nu}(x)\coloneqq \Phi(x)+\sum_{j=1}^{\nu}n^{-j/2}\hat\pi_j(x)\phi(x),
\qquad
\hat\varepsilon_n^\nu(x)\coloneqq F_n^*(x)-G_n^{*,\nu}(x),
\]
and recall the event \(\hat{\mathcal E}_{\nu}\) defined via \eqref{eq goodevent0}, \eqref{eq goodevent1} and \eqref{eq goodevent2}, on which
\[
\sup_{x\in\mathbb R}\abs{\hat\varepsilon_n^\nu(x)}
\le
\widehat{\mathsf C}_{\nu}n^{-(\nu+1)/2}.
\]
Here \(\widehat{\mathsf C}_{\nu}\) denotes the deterministic envelope constant appearing in the definition of \(\hat{\mathcal E}_{\nu}\).

\begin{lemma}[Resampled expectation bound from the conditional Edgeworth remainder]\label{lm:resampled-lemma3-expectation}
Suppose the conditional Edgeworth expansion in Theorem~\ref{Theo Hall2} holds on \(\hat{\mathcal E}_{\nu}\). If \(f\) is absolutely continuous with respect to the Lebesgue measure and \(f'\in L^1(\mathbb R)\), then
\begin{align}\label{eq resampled expectation remainder}
\ind_{\hat{\mathcal E}_{\nu}}
\left|
\hat{\mathbb E}[f(W_n^*)]
-
\mathbb E[f(Z)]
-
\sum_{j=1}^{\nu}n^{-j/2}\int_{\mathbb R} f(x)d[\hat\pi_j(x)\phi(x)]
\right|
\le
\widehat{\mathsf C}_{\nu}\|f'\|_{L^1(\mathbb R)}n^{-(\nu+1)/2},
\end{align}
where \(Z\sim N(0,1)\), and \(\widehat{\mathsf C}_{\nu}\) is the same deterministic envelope constant as above.
\end{lemma}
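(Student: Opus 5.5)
This lemma is the conditional counterpart of Lemma~\ref{lm lemma3 expectation}, and I would prove it the same way: integration by parts against the conditional Edgeworth remainder, carried out for each fixed realization of $\mathcal X$ in $\hat{\mathcal E}_{\nu}$. On $\hat{\mathcal E}_{\nu}^c$ the left-hand side vanishes because of the indicator, so only $\hat{\mathcal E}_{\nu}$ needs attention.

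Fix a sample $\mathcal X$ in $\hat{\mathcal E}_{\nu}$. Under $\hat{\mathbb P}$, $F_n^*$ is the distribution function of a probability measure; because the resample is drawn from a finite empirical distribution, this measure is purely atomic. Each $\hat\pi_j\phi$ is a polynomial times a Gaussian density. It is therefore smooth, has total variation $\int\abs{(\hat\pi_j\phi)'}<\infty$, and tends to $0$ at $\pm\infty$. Consequently $\hat\varepsilon_n^\nu=F_n^*-G_n^{*,\nu}$ has bounded variation and satisfies $\hat\varepsilon_n^\nu(\pm\infty)=0$. With $\mathbb E[f(Z)]=\int f\,d\Phi$ and $\hat{\mathbb E}[f(W_n^*)]=\int f\,dF_n^*$, the quantity inside the absolute value equals $\int f\,d\hat\varepsilon_n^\nu$.

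Next, $f$ is absolutely continuous with $f'\in L^1$, so $f$ is bounded and has finite limits at $\pm\infty$. Integration by parts for Lebesgue--Stieltjes integrals then gives
\[
\int f\,d\hat\varepsilon_n^\nu=-\int f'(x)\hat\varepsilon_n^\nu(x)\,dx .
\]
The boundary terms are $f(\pm\infty)\hat\varepsilon_n^\nu(\pm\infty)=0$. Bounding $\abs{\hat\varepsilon_n^\nu}$ by its supremum, which is at most $\widehat{\mathsf C}_{\nu}n^{-(\nu+1)/2}$ on $\hat{\mathcal E}_{\nu}$, yields \eqref{eq resampled expectation remainder}. The constant $\widehat{\mathsf C}_{\nu}$ is deterministic, so the bound is uniform over $\mathcal X$ in the event.

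The only step that needs care is the integration by parts, because $F_n^*$ has jumps. The product rule for functions of bounded variation still applies, since $f$ is continuous and so has no common discontinuities with $\hat\varepsilon_n^\nu$. Alternatively, one can write $\int f\,d\hat\varepsilon_n^\nu=\int\!\int f'(t)\bigl[\ind\{t\le x\}-\ind\{t\le -\infty\}\bigr]\,dt\,d\hat\varepsilon_n^\nu(x)$ and apply Fubini. This is justified because $\int\abs{f'}\,dt<\infty$ and $\abs{d\hat\varepsilon_n^\nu}$ has finite total mass. Carrying out the $x$-integral then gives $-\int f'(t)\,\hat\varepsilon_n^\nu(t)\,dt$, using $\hat\varepsilon_n^\nu(\infty)=0$. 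Once this identity is justified, the rest follows immediately.
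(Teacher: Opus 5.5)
Your proposal is correct and follows the same route as the paper. Conditional on $\mathcal X\in\hat{\mathcal E}_\nu$, you write the difference as $\int f\,d\hat\varepsilon_n^\nu$, integrate by parts to get $-\int f'(x)\hat\varepsilon_n^\nu(x)\,dx$, and bound this by $\sup_x|\hat\varepsilon_n^\nu(x)|\,\|f'\|_{L^1(\mathbb R)}$. Your extra care over the atoms of $F_n^*$ (continuity of $f$, or the Fubini route using $\hat\varepsilon_n^\nu(t-)=\hat\varepsilon_n^\nu(t)$ off a Lebesgue-null set) and over the vanishing boundary terms supplies details that the paper's one-line proof leaves implicit.
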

\begin{proof}
The proof is almost identical to that of Lemma~\ref{lm lemma3 expectation}, conditional on \(\mathcal X\). The event \(\hat{\mathcal E}_{\nu}\) not only gives the uniform bound \(\sup_{x\in\mathbb R}\abs{\hat\varepsilon_n^\nu(x)}\le \widehat{\mathsf C}_{\nu}n^{-(\nu+1)/2}\), but also ensures that \(\hat\varepsilon_n^\nu\) is of finite variation, so that the integration-by-parts argument is well-defined. The desired bound then follows from the same integration-by-parts argument, the uniform bound on \(\hat\varepsilon_n^\nu\), and \(f'\in L^1(\mathbb R)\).
\end{proof}

\subsubsection{Proposition~\ref{theo:expect-cancellation-good} revisited}

We now formulate the resampled version of Proposition~\ref{theo:expect-cancellation-good}. The conclusion is stated on the empirical good event \({\hat{\mathcal E}_{\nu=1}\cap\mathcal E_{\mathrm m}\cap\mathcal E_{\bm\mu}}\), where \(\hat{\mathcal E}_{\nu=1}\) is defined via \eqref{eq goodevent0}, \eqref{eq goodevent1} and \eqref{eq goodevent2} with \(\nu=1\). We emphasize that in \eqref{eq resampled proposition cancellation}, the LHS is a random variable measurable in \(\sigma(\mathcal{X})\) while the RHS is anchored in constants.
\begin{proposition}[Resampled version of Proposition~\ref{theo:expect-cancellation-good}]\label{theo:resample-expect-cancellation-good}
Under Assumption~\ref{assume1}, let \(W_n^*\) be the resampled pivotal quantity defined above. Let \(M\) be a monomial in the marginal and mixed moments of \(\mathbf X_1\), and let \(M_n\) and \(M_n^*\) be its plug-in estimators based on the sample and resample moments, respectively. Let \(g\) be as in Proposition~\ref{theo:expect-cancellation-good}. Then, for \(l\) sufficiently large,
\begin{align}\label{eq resampled proposition cancellation}
\ind_{\hat{\mathcal E}_{\nu=1}\cap\mathcal E_{\mathrm m}\cap\mathcal E_{\bm\mu}}\times 
\left|
\hat{\mathbb E}
\left[
g(W_n^*)(M_n^*-M_n)
\right]
\right|
\le
\frac1n
C_{[\widehat{\mathsf C}_1,\delta,L,\|\nabla\tilde A(\bm\mu)\|,C_M,C_{[g]},\{\mathbb E\|\mathbf X_1\|^k\}_{k=1,\ldots,l}]}.
\end{align}
Here \(\widehat{\mathsf C}_1\) is the deterministic envelope constant appearing in the definition of \(\hat{\mathcal E}_{\nu=1}\).
\end{proposition}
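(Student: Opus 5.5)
The proof repeats the argument for Proposition~\ref{theo:expect-cancellation-good}, conditionally on \(\mathcal X\), with the bootstrap world in place of the population world. Every constant is then controlled on the good event \(\hat{\mathcal E}_{\nu=1}\cap\mathcal E_{\mathrm m}\cap\mathcal E_{\bm\mu}\) through the three resampled lemmas.

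\textbf{Step 1: linearize the monomial.} Fix \(\mathcal X\) in the good event. By Lemma~\ref{resample lemma1},
\[
M_n^*-M_n=\frac1n\sum_{i=1}^n\varrho_{M_n}(\mathbf X_i^*)+\hat R_{[4]},
\]
where \(\hat{\mathbb E}\abs{\hat R_{[4]}}\le C/n\). Since \(\abs{g}\le C_{[g]}\), the \(\hat R_{[4]}\) contribution is \(O(1/n)\). By exchangeability of the \(\mathbf X_i^*\) given \(\mathcal X\), the linear part reduces to \(\hat{\mathbb E}[g(W_n^*)\varrho_{M_n}(\mathbf X_1^*)]\). We then subtract \(\hat{\mathbb E}[g(W_{n,-1}^*)\varrho_{M_n}(\mathbf X_1^*)]\), which is zero by conditional independence and \(\hat{\mathbb E}\varrho_{M_n}(\mathbf X_1^*)=0\).

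\textbf{Step 2: Taylor expand.} With \(D^*=W_n^*-W_{n,-1}^*\), write
\[
g(W_n^*)-g(W_{n,-1}^*)=g'(W_{n,-1}^*)D^*+\tfrac12 g''(\xi^*)(D^*)^2,
\]
and insert the decomposition \eqref{eq:resampled-jackknife-pivotal-expansion}. Splitting on \(\mathcal E_n^{*,\mathrm{lip}}\) gives five terms \(\hat{\mathcal T}_1,\dots,\hat{\mathcal T}_5\), bounded as in the population proof.

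The bad-event terms \(\hat{\mathcal T}_2\), \(\hat{\mathcal T}_5\), and the correction inside \(\hat{\mathcal T}_1\) need \(\hat{\mathbb P}((\mathcal E_n^{*,\mathrm{lip}})^c)=O(n^{-\lambda})\) on \(\mathcal E_{\mathrm m}\). This follows from conditional Markov and Marcinkiewicz--Zygmund inequalities, together with a union bound over \(i\), using empirical moments bounded by population moments plus \(\varepsilon_{\mathrm m}\). Cauchy--Schwarz with \eqref{lm6 eqb} and empirical fourth moments then gives \(O(1/n)\).

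The remainder terms \(\hat{\mathcal T}_3\) and \(\hat{\mathcal T}_4\) follow from Lemma~\ref{lm:gfunc resampled} with \(r=2,4\), combined with \eqref{lm6 eqb} and \(\abs{g'},\abs{g''}\le C_{[g]}\).

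\textbf{Step 3: the leading term \(\hat{\mathcal T}_1\).} By conditional independence,
\[
\hat{\mathcal T}_1\approx n^{-1/2}\,\hat{\mathbb E}[g'(W_{n,-1}^*)]\;\hat{\mathbb E}[\varrho_{M_n}(\mathbf X_1^*)\hat\varphi(\mathbf X_1^*)].
\]
The second factor is bounded on \(\mathcal E_{\mathrm m}\cap\mathcal E_{\bm\mu}\); here \(\|\nabla\tilde A(\hat{\bm\mu}_n)\|\le\|\nabla\tilde A(\bm\mu)\|+L\delta/2\). For the first factor we need \(\hat{\mathbb E}[g'(W_{n,-1}^*)]=O(n^{-1/2})\). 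We apply Lemma~\ref{lm:resampled-lemma3-expectation} with \(f=g'\) and \(\nu=1\). The Gaussian term vanishes because \(g'\) is odd. The term \(n^{-1/2}\int g'\,d[\hat\pi_1\phi]\) is bounded using the \(O(n^{-\lambda})\)-controlled bounds on \(\hat\pi_1\) and \(\hat\pi_1'\) from Theorem~\ref{Theo Hall2}, which we fold into \(\hat{\mathcal E}_{\nu=1}\). The remainder is at most \(\widehat{\mathsf C}_1\|g''\|_{L^1}n^{-1}\).

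\textbf{Main obstacle.} A small mismatch arises here: \(W_{n,-1}^*\) is a resampled pivot built from \(n-1\) draws from \(\hat P_n\) and normalized by \(\sqrt{n-1}\). The event \(\hat{\mathcal E}_{\nu=1}\), as stated, controls the Edgeworth expansion at sample size \(n\). I would either define the event so that it also covers sample size \(n-1\); Theorem~\ref{Theo Hall2} yields this with the same \(O(n^{-\lambda})\) probability, since \(\hat P_n\) is fixed and only the resample size changes. Alternatively, I would compare \(\hat{\mathbb E}g'(W_{n,-1}^*)\) with \(\hat{\mathbb E}g'(W_n^*)\) via \(\abs{g''}\le C_{[g]}\) and the \(O(n^{-1/2})\) bound on \(\hat{\mathbb E}\abs{D^*}\). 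The second route is cleaner, since it uses only the event as defined. Collecting all terms gives \eqref{eq resampled proposition cancellation}, with a deterministic constant.
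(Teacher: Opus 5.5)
Your proposal is correct and follows the paper's proof. The paper likewise reruns the population argument conditionally on \(\mathcal X\) using Lemmas~\ref{resample lemma1}, \ref{lm:gfunc resampled}, and~\ref{lm:resampled-lemma3-expectation}, and makes the constants deterministic on \(\mathcal E_{\mathrm m}\cap\mathcal E_{\bm\mu}\) via \(\|\nabla\tilde A(\hat{\bm\mu}_n)\|\le\|\nabla\tilde A(\bm\mu)\|+L\delta/2\). Your explicit bound on \(\hat{\mathbb P}((\mathcal E_n^{*,\mathrm{lip}})^c)\) and your comparison of \(\hat{\mathbb E}g'(W_{n,-1}^*)\) with \(\hat{\mathbb E}g'(W_n^*)\) fill in the resample-size-\((n-1)\) detail that the paper passes over silently; for that comparison, split on \(\mathcal E_n^{*,\mathrm{lip}}\) and use \(|g'|\le C_{[g]}\) off that event, since \(\hat{\mathbb E}|D^*|\) need not be deterministically bounded without the split.
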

\begin{proof}
Conditional on \(\mathcal X\), the proof of Proposition~\ref{theo:expect-cancellation-good} applies with \(W_n\), \(M\), and \(\varrho_M\) replaced by \(W_n^*\), \(M_n\), and \(\varrho_{M_n}\), using Lemma~\ref{resample lemma1}, Lemma~\ref{lm:gfunc resampled}, and Lemma~\ref{lm:resampled-lemma3-expectation}. These lemmas give the same cancellation bound, with constants initially depending on empirical moments and on \(\|\nabla\tilde A(\hat{\bm\mu}_n)\|\). It remains only to make these constants deterministic. On \(\mathcal E_{\mathrm m}\), the empirical moments \(\hat{\mathbb E}\|\mathbf X_1^*\|^k\) are bounded by the corresponding population moments up to the fixed tolerance, and on \(\mathcal E_{\bm\mu}\), the local Lipschitz property gives \(\|\nabla\tilde A(\hat{\bm\mu}_n)\|\le \|\nabla\tilde A(\bm\mu)\|+L\delta/2\). The only additional change relative to the non-resampled proof is the replacement of \(\varrho_M\) by \(\varrho_{M_n}\) in the resampled analogues of the \(\mathcal T_j\)-terms; the required conditional moment bounds for this replacement are exactly \eqref{lm6 eqb}. The constant \(\widehat{\mathsf C}_1\) enters only through the conditional Edgeworth remainder bound on \(\hat{\mathcal E}_{\nu=1}\). Absorbing the fixed tolerances into the constant proves \eqref{eq resampled proposition cancellation}.
\end{proof}

\subsection{Proof of the Two-Sided Case in Theorem \ref{theorem uniform version}}\label{seca1}

Steps 1--4 largely follow \cite{lam2022}, with additional work to make the remainder bounds uniform and to track their dependence on the relevant constants. Steps 5 and 6 contain the main new argument, which improves the rate from \(O(n^{-3/2})\) to \(O(n^{-2})\) by using the cancellation of monomial moment errors established in Proposition~\ref{theo:expect-cancellation-good}. This explicit tracking of constants will also be used later for the resampled version of the theorem.

\medskip
\noindent
\emph{Step 1: Good event and uniform convergence constants.}
By Theorem~\ref{Theo Hall1}, for any $\nu\geq 1$ there exists a constant $\mathsf C_\nu$ such that
\begin{align}\label{eq theorem1 rate}
\sup_{x\in\mathbb R}
\left|
\PP\left(\sqrt n A_s(\bar{\mathbf X})\le x\right)
-\Phi(x)
-\sum_{j=1}^{\nu}n^{-j/2}\pi_j(x)\phi(x)
\right|
=
\mathsf C_\nu n^{-(\nu+1)/2}.
\end{align}
By Theorem~\ref{Theo Hall2}, we define the good event $\hat{\mathcal E}_\nu(\mathcal X)$, measurable with respect to the $\sigma$-algebra generated by $\mathcal X$, by
\begin{align}
&\sup_x
\left|
\PP\left(\sqrt n \tilde A(\bar{\mathbf X}_b^*)\le x\mid \mathcal X\right)
-\Phi(x)
-\sum_{k=1}^{\nu}n^{-k/2}\hat\pi_k(x)\phi(x)
\right|
\le
\hat{\mathsf C}_\nu n^{-(\nu+1)/2},
\label{eq goodevent0}\\
&\max_{1\le k\le \nu}
\sup_x
\abs{\hat\pi_k(x)}(1+\abs{x})^{-(3k-1)}
\le
\hat{\mathsf C}_\nu,
\label{eq goodevent1}\\
&\max_{1\le k\le \nu}
\sup_x
\abs{\hat\pi_k'(x)}(1+\abs{x})^{-(3k-1)}
\le
\hat{\mathsf C}_\nu.
\label{eq goodevent2}
\end{align}
This event occurs with probability $1-O(n^{-\lambda})$. Throughout this proof, $\mathsf C_\nu$ and $\hat{\mathsf C}_\nu$ denote the constants controlling the convergence rates of these Edgeworth expansions.

\medskip
\noindent
\emph{Step 2: Total variation bounds for signed measures induced by empirical Edgeworth expansions.}
Let $Q^*$ be the resample distribution of $\sqrt n A_s(\bar{\mathbf X}^*)$, which depends on the realization of $\mathcal X$. Let $Q_\nu^*$ be its approximation by the finite Edgeworth expansion
\begin{align}\label{eq hatq}
Q_\nu^*(x)
\coloneqq
\Phi(x)
+
\sum_{k=1}^{\nu}n^{-k/2}\hat p_k(x)\phi(x),
\end{align}
where $\hat p_k(\cdot)$ are the Edgeworth polynomials with coefficients estimated from the sample $\mathcal X$. The conditional two-sided coverage probability of cheap bootstrap is
\begin{equation*}
U(\mathcal X)
\coloneqq
\mathbb E\left[
\iiint
\ind
\left\{
\left|
\frac{\sqrt n A_s(\bar{\mathbf X})}
{\sqrt{\frac1B\sum_{b=1}^B z_b^2}}
\right|
\le \mathsf q
\right\}
dQ^*(z_B)\cdots dQ^*(z_1)
\mid
\mathcal X
\right].
\end{equation*}
On $\hat{\mathcal E}_\nu(\mathcal X)$, the error incurred by approximating the resample measure $Q^*$ by its finite Edgeworth expansion $Q_\nu^*$ is uniformly bounded by $2\hat{\mathsf C}_\nu n^{-(\nu+1)/2}$ over all intervals in $\mathbb R$. By symmetry of the integration region, it suffices to replace $Q_\nu^*(x)$ by its even part,
\begin{equation}
Q_{\mathrm{ts},\nu}^*(x)
=
\Phi(x)
+
\sum_{\substack{1\le k\le \nu\\ k\text{ even}}}
n^{-k/2}\hat p_k(x)\phi(x),
\end{equation}
where the subscript $\mathrm{ts}$ stands for ``two-sided''. Note that \eqref{eq goodevent1} and \eqref{eq goodevent2} imply that the total variation of the signed measures $\hat F_k\coloneqq \hat p_k(x)\phi(x)$ are uniformly bounded over all $\mathcal X\in \hat{\mathcal E}_\nu$:
\begin{align*}
\|\hat F_k\|_{\mathrm{TV}}
=
\int \abs{\hat F_k'(x)}dx \le
\int
\left(
\abs{\hat p_k'(x)}
+
\abs{x}\abs{\hat p_k(x)}
\right)\phi(x)dx 
\le
\hat{\mathsf C}_\nu
\int
(1+\abs{x})^{3k}\phi(x)dx.
\end{align*}
For the same reason,
\[
\left\|Q_{\mathrm{ts},\nu}^*\right\|_{\mathrm{TV}}
\le
\hat{\mathsf C}_\nu(1+\nu)
\int(1+\abs{x})^{3\nu}\phi(x)dx
=
C_{[\hat{\mathsf C}_\nu]}.
\]

\medskip
\noindent
\emph{Step 3: Replacement of $Q^*$ by $Q_{\mathrm{ts},\nu}^*$.}
Consider $U(\mathcal X)$ on $\hat{\mathcal E}_\nu$. We have
\begin{align*}
U(\mathcal X)\ind_{\hat{\mathcal E}_\nu}
=&
\mathbb E\left[
\iiint
\ind
\left\{
\left|
\frac{\sqrt n A_s(\bar{\mathbf X})}
{\sqrt{\frac1B\sum_{b=1}^B z_b^2}}
\right|
\le \mathsf q
\right\}
dQ_{\mathrm{ts},\nu}^*(z_B)\cdots dQ_{\mathrm{ts},\nu}^*(z_1)
\mid
\mathcal X
\right]\ind_{\hat{\mathcal E}_\nu}
+
R_{[1]}\ind_{\hat{\mathcal E}_\nu},
\end{align*}
where $R_{[1]}=\sum_{b=1}^B R_b$ and
\begin{align*}
R_b
\coloneqq
\mathbb E\left[
\iiint
\ind\{\ldots\}
\prod_{k=b+1}^B dQ_{\mathrm{ts},\nu}^*(z_k)
d\left(Q^*(z_b)-Q_{\mathrm{ts},\nu}^*(z_b)\right)
\prod_{k=1}^{b-1}dQ^*(z_k)
\mid
\mathcal X
\right]
\end{align*}
is the same as in the proof of Theorem 2 in \cite{lam2022}. Therefore, uniformly over all $\mathsf q$ and all $\mathcal X\in \hat{\mathcal E}_\nu$,
\begin{align*}
\abs{R_b}
&\le
\hat{\mathsf C}_\nu n^{-(\nu+1)/2}
\mathbb E\left[
\iiint
\ind\{\ldots\}
\prod_{k=b+1}^B dQ_{\mathrm{ts},\nu}^*(z_k)
\prod_{k=1}^{b-1}dQ^*(z_k)
\mid
\mathcal X
\right] \\
&\le
\hat{\mathsf C}_\nu n^{-(\nu+1)/2}
\left(\|Q_{\mathrm{ts},\nu}^*\|_{\mathrm{TV}}\right)^{B-b}
\le
C_{[\hat{\mathsf C}_\nu,B]}n^{-(\nu+1)/2}.
\end{align*}
Consequently,
\begin{align}\label{eq r1bound}
\left|R_{[1]}\times\ind_{\hat{\mathcal E}_\nu}\right|
\le
C_{[\hat{\mathsf C}_\nu,B]}n^{-(\nu+1)/2}.
\end{align}

\medskip
\noindent
\emph{Step 4: Integration decomposition.}
By the same arguments as in \cite{lam2022}, we have
\begin{align*}
U(\mathcal X)\ind_{\hat{\mathcal E}_\nu}
=
\left(
U_0(\mathcal X)
+
\frac1n U_1(\mathcal X)
+
R_{[1]}
+
R_{[2]}
\right)
\ind_{\hat{\mathcal E}_\nu},
\end{align*}
where
\begin{align}
U_0(\mathcal X)
&\coloneqq
\mathbb E\left[
\iiint
\ind
\left\{
\left|
\frac{\sqrt n A_s(\bar{\mathbf X})}
{\sqrt{\frac1B\sum_{b=1}^B z_b^2}}
\right|
\le \mathsf q
\right\}
d\Phi(z_B)\cdots d\Phi(z_1)
\mid
\mathcal X
\right],
\label{eq easy blue}\\
U_1(\mathcal X)
&\coloneqq
B\mathbb E\left[
\iiint
\ind
\left\{
\left|
\frac{\sqrt n A_s(\bar{\mathbf X})}
{\sqrt{\frac1B\sum_{b=1}^B z_b^2}}
\right|
\le \mathsf q
\right\}
d\left(\hat p_2(z_B)\phi(z_B)\right)
d\Phi(z_{B-1})\cdots d\Phi(z_1)
\mid
\mathcal X
\right],
\label{eq hard red}\\
R_{[2]}
&\coloneqq
\frac{B}{n^2}
\mathbb E\left[
\iiint
\ind\{\ldots\}
d\left(\hat p_4(z_B)\phi(z_B)\right)
d\Phi(z_{B-1})\cdots d\Phi(z_1)
\mid
\mathcal X
\right]
\notag\\
\quad
+
\frac{B(B-1)}{2n^2} &
\mathbb E\left[
\iiint
\ind\{\ldots\}
d\left(\hat p_2(z_B)\phi(z_B)\right)
d\left(\hat p_2(z_{B-1})\phi(z_{B-1})\right)
d\Phi(z_{B-2})\cdots d\Phi(z_1)
\mid
\mathcal X
\right]
+
O(n^{-3}).
\notag
\end{align}
The remainder term $R_{[2]}$ consists of a finite sum of integrals. Again, since $\|\hat F_k\|_{\mathrm{TV}}$ is bounded, we have
\begin{align}\label{Eq remainder2}
\left|R_{[2]}\times\ind_{\hat{\mathcal E}_\nu}\right|
\le
C_{[\hat{\mathsf C}_\nu,B]}n^{-2}.
\end{align}
Before proceeding to the next step, note that, for the final $O(n^{-2})$ bound, it suffices to take $\nu=3$. In that case, the precise definition of $R_{[2]}$ should be adjusted accordingly. Summarizing Steps 1--4, we obtain
\begin{align}\label{Eq r1r2error}
\left|(R_{[1]}+R_{[2]})\times\ind_{\hat{\mathcal E}_3}\right|
\le
C_{[\hat{\mathsf C}_3,B]}n^{-2}.
\end{align}

\medskip
\noindent
\emph{Step 5: Analysis of $U_0(\mathcal X)$.}
Let $U_0$ denote the unconditional expectation of $U_0(\mathcal X)$, i.e.,
$U_0\coloneqq \EE[U_0(\mathcal X)]$. We have
\begin{equation}\label{eq r3}
\begin{split}
U_0
=&
\iiint
\ind
\left\{
\left|
\frac{z_0}
{\sqrt{\frac1B\sum_{b=1}^B z_b^2}}
\right|
\le \mathsf q
\right\}
\prod_{b=0}^B d\Phi(z_b)
\\
&+
\frac1n
\iiint
\ind
\left\{
\left|
\frac{z_0}
{\sqrt{\frac1B\sum_{b=1}^B z_b^2}}
\right|
\le \mathsf q
\right\}
d\left(q_2(z_0)\phi(z_0)\right)
\prod_{b=1}^B d\Phi(z_b)
+
R_{[3]}.
\end{split}
\end{equation}
By applying the two-sided version of \eqref{eq theorem1 rate} with
$x=\pm\mathsf q\sqrt{\frac1B\sum_{b=1}^B z_b^2}$ and $\nu=3$, we obtain
\[
\abs{R_{[3]}}
\le
n^{-2}C_{[\mathsf C_3]}.
\]
This bound holds uniformly over all quantiles $\mathsf q$, where the constant $\mathsf C_3$ is the one appearing in \eqref{eq theorem1 rate}. We now evaluate the integrals in $U_0$. The first integral in $U_0$ is
\begin{align*}
\iiint
\ind
\left\{
\left|
\frac{z_0}
{\sqrt{\frac1B\sum_{b=1}^B z_b^2}}
\right|
\le \mathsf q
\right\}
\prod_{b=0}^B d\Phi(z_b)
=
\PP\left(\abs{t_B}\le \mathsf q\right)
=
2\Psi_B(\mathsf q)-1,
\end{align*}
where $t_B$ has the Student's $t$ distribution with $B$ degrees of freedom, and $\Psi_B$ denotes its cumulative distribution function. By substituting the density function of the $\chi_B^2$ distribution, the second integral is
\begin{align*}
&\iiint
\ind
\left\{
\left|
\frac{z_0}
{\sqrt{\frac1B\sum_{b=1}^B z_b^2}}
\right|
\le \mathsf q
\right\}
d\left(q_2(z_0)\phi(z_0)\right)
\prod_{b=1}^B d\Phi(z_b)
\\
&\qquad
=
\sum_{k=1,3,5}
b_{[2,k]}
\frac{
2^{(k+1)/2}\Gamma\left(\frac{B+k}{2}\right)
}
{\sqrt{\pi}\Gamma(B/2)}
\cdot
\frac{
B^{B/2}\mathsf q^k
}
{
(B+\mathsf q^2)^{(B+k)/2}
}.
\end{align*}
This is part of the expression for $\zeta_2(\mathsf q)$, and we denote it by $\zeta_2^q(\mathsf q)$. Consequently,
\[
U_0
=
2\Psi_B(\mathsf q)-1
+
\frac1n\zeta_2^q(\mathsf q)
+
R_{[3]}.
\]

\noindent
\emph{Step 6: Analysis of $U_1(\mathcal X)$.}
Denote the pivotal quantity by $W_n\coloneqq \sqrt n A_s(\bar{\mathbf X})$ and define the random variable $r\in\sigma(\mathcal X)$ by
\[
r
\coloneqq
\frac{\sqrt n A_s(\bar{\mathbf X})}{\mathsf q}
=
\frac{W_n}{\mathsf q}.
\]
Intuitively, $\sqrt B r$ is the radius of the random integration boundary. Recall the definition of the Edgeworth polynomials in \eqref{Eq edgepoly}, and define
\[
\hat p_2(x)
=
\sum_{k=1,3,5}\hat a_{[2,k]}x^k
\]
as the counterpart of $p_2(x)$ with coefficients estimated from the sample. By using the density of the $\chi^2_{B-1}$ distribution and substituting this expression for $\hat p_2(x)$, we obtain
\begin{equation}\label{eq32}
\begin{split}
U_1(\mathcal X)
=&
-\frac{4B}
{2^{(B-1)/2}\Gamma\left(\frac{B-1}{2}\right)\sqrt{2\pi}}
\times
e^{-Br^2/2}
\times
\int_0^{\abs{r}\sqrt B}
x\hat p_2(x)(Br^2-x^2)^{(B-3)/2}dx
\\
=&
-\sum_{k=1,3,5}
\hat a_{[2,k]}
\times
\frac{
B^{(B+k+1)/2}\Gamma\left(1+\frac{k}{2}\right)
}
{
2^{B/2-1}\sqrt{\pi}\Gamma\left(\frac{1+B+k}{2}\right)
}
\times
\abs{r}^{B+k-1}
\times
\exp\left(-\frac{Br^2}{2}\right).
\end{split}
\end{equation}
Equivalently, each \(\hat a_{[2,k]}\) is a polynomial in the sample moment estimators. Hence, $U_1(\mathcal X)$ can be written as a finite sum of terms of the form
\[
M_n\times g_{[B,k]}(r),
\]
where $M_n$ is a monomial in the sample moment estimators and
\begin{align}\label{eq gbkx}
g_{[B,k]}(x)
\coloneqq
\abs{x}^{B-1+k}\times\exp\left(-\frac{Bx^2}{2}\right).
\end{align}
Note that, for each \(k=1,3,5\), \(g_{[B,k]}\in C^2(\mathbb R)\) whenever \(B\geq 2\). Moreover, \(g_{[B,k]}\), \(g_{[B,k]}'\), and \(g_{[B,k]}''\) are uniformly bounded on \(\mathbb R\). Let $M$ denote the corresponding monomial evaluated at the true moments. Then each such term can be decomposed as
\[
M_n\times g_{[B,k]}(r)
=
M\times g_{[B,k]}(r)
+
(M_n-M)\times g_{[B,k]}(r).
\]
We restrict to $\mathsf q\ge \varepsilon_0>0$ in order to obtain a uniform bound for the second term. Therefore, for $B\geq 2$ and any $\varepsilon_0>0$, Proposition~\ref{theo:expect-cancellation-good}, applied with \(\tilde A=A_s\) and the even weight \(x\mapsto g_{[B,k]}(x/\mathsf q)\), yields
\begin{align}\label{eq 16}
&\sup_{\mathsf q\ge \varepsilon_0}
\left|
\EE\left[
(M_n-M)\times g_{[B,k]}\left(\frac{W_n}{\mathsf q}\right)
\right]
\right|
\le 
\frac1n
\times
\left(1+\frac{B^2}{\varepsilon_0^4}\right)
\times
C_{[\mathsf C_1,\delta,L,\|\nabla A_s(\bm\mu)\|,C_M,C_{[g]},\{\mathbb E\|\mathbf X_1\|^k\}_{k=1,\ldots,l}]}.
\end{align}
To treat all three values \(k=1,3,5\), we may take
\begin{align}\label{eq gbkxs}
C_{[g]}
=
\sum_{k=1,3,5}
\left(
\sup_{x\in\mathbb R}\abs{g_{[B,k]}(x)}
+
\sup_{x\in\mathbb R}\abs{g_{[B,k]}'(x)}
+
\sup_{x\in\mathbb R}\abs{g_{[B,k]}''(x)}
\right),
\end{align}
which depends only on \(B\).

Now recall the original definition of \(U_1(\mathcal X)\) in \eqref{eq hard red}. Define \(\tilde U_1(\mathcal X)\) by replacing \(\hat p_2\) with \(p_2\), namely,
\begin{align*}
\tilde U_1(\mathcal X)
\coloneqq
B\times\mathbb E\left[
\iiint
\ind
\left\{
\left|
\frac{\sqrt n A_s(\bar{\mathbf X})}
{\sqrt{\frac1B\sum_{b=1}^B z_b^2}}
\right|
\le \mathsf q
\right\}
d\left(p_2(z_B)\phi(z_B)\right)
d\Phi(z_{B-1})\cdots d\Phi(z_1)
\mid
\mathcal X
\right].
\end{align*}
Equation~\eqref{eq 16} implies that
\begin{align*}
\sup_{\mathsf q\ge \varepsilon_0}
\left|
\EE\left[
U_1(\mathcal X)-\tilde U_1(\mathcal X)
\right]
\right|
\le
\frac1n
\times
C_{[\varepsilon_0,\mathsf C_1,\delta,L,\|\nabla A_s(\bm\mu)\|,C_M,B,\{\mathbb E\|\mathbf X_1\|^k\}_{k=1,\ldots,l}]}.
\end{align*}
Moreover,
\begin{align*}
\EE[\tilde U_1(\mathcal X)]
=&
B
\times
\iiint
\ind
\left\{
\left|
\frac{z_0}
{\sqrt{\frac1B\sum_{b=1}^B z_b^2}}
\right|
\le \mathsf q
\right\}
d\left(p_2(z_B)\phi(z_B)\right)
d\Phi(z_{B-1})\cdots d\Phi(z_1)d\Phi(z_0)
+
R_{[7]}
\\
=&
-\sum_{k=1,3,5}
a_{[2,k]}
\times
\frac{
\Gamma\left(\frac{k}{2}+1\right)
\Gamma\left(\frac{B+k}{2}\right)
}
{
\pi\Gamma\left(\frac{B+k+1}{2}\right)
}
\times
\frac{
2^{(k+1)/2}B^{(B+1+k)/2}\mathsf q
}
{
(B+\mathsf q^2)^{(B+k)/2}
}
+
R_{[7]}
\eqqcolon
\zeta_2^p(\mathsf q)+R_{[7]}.
\end{align*}
As in the argument for $R_{[3]}$ in \eqref{eq r3}, we have
\[
\abs{R_{[7]}}
\le
n^{-1}\times C_{[\mathsf C_1]}.
\]
Note also that $\zeta_2^p(\mathsf q)$ and $\zeta_2^q(\mathsf q)$, defined in Step 5, together constitute the quantity $\zeta_2(\mathsf q)$ in Theorem~\ref{theorem uniform version}.

\medskip
\noindent
\emph{Step 7: Proof summary.}
Since $\PP(\abs{T}\le \mathsf q)=\EE[U(\mathcal X)]$, it remains to combine the preceding bounds. Choose $\nu=3$ and choose $\lambda$ sufficiently large. From Step 4,
\[
U(\mathcal X)\ind_{\hat{\mathcal E}_3}
=
\left(
U_0(\mathcal X)
+
\frac1n U_1(\mathcal X)
+
R_{[1]}
+
R_{[2]}
\right)
\ind_{\hat{\mathcal E}_3},
\]
with
\[
\left|(R_{[1]}+R_{[2]})\ind_{\hat{\mathcal E}_3}\right|
\le
C_{[\hat{\mathsf C}_3,B]}n^{-2}.
\]
The complement of $\hat{\mathcal E}_3$ contributes only $O(n^{-\lambda})$, because $0\le U(\mathcal X)\le 1$. From Step 5,
\[
\EE[U_0(\mathcal X)]
=
2\Psi_B(\mathsf q)-1
+
\frac1n\zeta_2^q(\mathsf q)
+
R_{[3]},
\qquad
\abs{R_{[3]}}
\le
n^{-2}C_{[\mathsf C_3]}.
\]
From Step 6,
\[
\sup_{\mathsf q\ge \varepsilon_0}
\left|
\EE\left[
U_1(\mathcal X)-\tilde U_1(\mathcal X)
\right]
\right|
\le
\frac1n
\times
C_{[\varepsilon_0,\mathsf C_1,\delta,L,\|\nabla A_s(\bm\mu)\|,C_M,B,\{\mathbb E\|\mathbf X_1\|^k\}_{k=1,\ldots,l}]}.
\]
Moreover,
\[
\EE[\tilde U_1(\mathcal X)]
=
\zeta_2^p(\mathsf q)+R_{[7]},
\qquad
\abs{R_{[7]}}
\le
n^{-1}C_{[\mathsf C_1]}.
\]
Therefore, with
\[
\zeta_2(\mathsf q)\coloneqq \zeta_2^q(\mathsf q)+\zeta_2^p(\mathsf q),
\]
we obtain
\begin{align*}
&\sup_{\mathsf q\ge \varepsilon_0}
\left|
\PP\left(\right|T\left|\le \mathsf q\right)
-
\left(2\Psi_B(\mathsf q)-1\right)
-
\frac1n\zeta_2(\mathsf q)
\right|\\
\le &
\frac1{n^2}
C_{[\varepsilon_0,\mathsf C_1,\hat{\mathsf C}_3,\mathsf C_3,\delta,L,\|\nabla A_s(\bm\mu)\|,C_M,B,\{\mathbb E\|\mathbf X_1\|^k\}_{k=1,\ldots,l}]}.
\end{align*}
This proves the desired two-sided uniform expansion.

\subsection{Proof of the One-Sided Case in Theorem \ref{theorem uniform version}}\label{sec theorem1B2}

As in the two-sided case, let \(Q^*\) be the resample distribution of \(\sqrt n A_s(\bar{\mathbf X}^*)\), and let \(Q^*_\nu\) be its finite Edgeworth approximation defined in \eqref{eq hatq}. For a given quantile \(\mathsf q\in\mathbb R\), the conditional one-sided coverage probability of cheap bootstrap is
\begin{equation*}
V(\mathcal X)
\coloneqq
\mathbb E\left[
\iiint
\ind\left\{
\frac{\sqrt n A_s(\bar{\mathbf X})}{\sqrt{\frac1B\sum_{b=1}^B z_b^2}}
\le \mathsf q
\right\}
dQ^*(z_B)\cdots dQ^*(z_1)
\mid
\mathcal X
\right].
\end{equation*}
Recall the good event \(\hat{\mathcal E}_\nu(\mathcal X)\) from Step 1 of the two-sided proof. By the same arguments as in Steps 1--4 there, we have
\begin{align}\label{eq34}
V(\mathcal X)\times\ind_{\hat{\mathcal E}_\nu}
=
\left(
V_0(\mathcal X)
+
\frac{1}{\sqrt n}\times V_{1/2}(\mathcal X)
+
\frac{1}{n}\times V_1(\mathcal X)
+
R_{[8]}
\right)\times\ind_{\hat{\mathcal E}_\nu},
\end{align}
where
\begin{align}
V_0(\mathcal X)
\coloneqq&
\mathbb E\left[
\iiint
\ind\left\{
\frac{\sqrt n A_s(\bar{\mathbf X})}{\sqrt{\frac1B\sum_{b=1}^B z_b^2}}
\le \mathsf q
\right\}
d\Phi(z_B)\cdots d\Phi(z_1)
\mid
\mathcal X
\right],
\label{eq defv0}\\
V_{1/2}(\mathcal X)
\coloneqq&
B\times
\mathbb E\left[
\iiint
\ind\left\{
\frac{\sqrt n A_s(\bar{\mathbf X})}{\sqrt{\frac1B\sum_{b=1}^B z_b^2}}
\le \mathsf q
\right\}
d\bigl(\hat p_1(z_B)\phi(z_B)\bigr)
d\Phi(z_{B-1})\cdots d\Phi(z_1)
\mid
\mathcal X
\right],
\label{eq defvhalf}\\
V_1(\mathcal X)
\coloneqq&
B\times
\mathbb E\left[
\iiint
\ind\left\{
\frac{\sqrt n A_s(\bar{\mathbf X})}{\sqrt{\frac1B\sum_{b=1}^B z_b^2}}
\le \mathsf q
\right\}
d\bigl(\hat p_2(z_B)\phi(z_B)\bigr)
d\Phi(z_{B-1})\cdots d\Phi(z_1)
\mid
\mathcal X
\right]
\notag\\
+
\frac{B(B-1)}{2}\times &
\mathbb E\left[
\iiint
\ind\left\{
\frac{\sqrt n A_s(\bar{\mathbf X})}{\sqrt{\frac1B\sum_{b=1}^B z_b^2}}
\le \mathsf q
\right\}
d\bigl(\hat p_1(z_B)\phi(z_B)\bigr)
d\bigl(\hat p_1(z_{B-1})\phi(z_{B-1})\bigr)
d\Phi(z_{B-2})\cdots d\Phi(z_1)
\mid
\mathcal X
\right].
\notag
\end{align}
As in the two-sided case, it suffices to take \(\nu=3\), so that
\begin{align}\label{eq up-r8}
\left|R_{[8]}\times\ind_{\hat{\mathcal E}_3}\right|
\le
C_{[\hat{\mathsf C}_3,B]}\times n^{-2}.
\end{align}

\medskip
\noindent
\emph{Step 1: Integral simplification.}
In view of the evenness and exponential tail decay of \(\hat p_1(\cdot)\phi(\cdot)\), together with the symmetry of the integration region, we have
\begin{align*}
V_{1/2}(\mathcal X)\equiv 0
\end{align*}
for every realization of the sample \(\mathcal X\). The same argument also applies to the second expectation in \(V_1(\mathcal X)\), hence
\begin{align}\label{eq defv1}
V_1(\mathcal X)
\equiv&
B\times
\mathbb E\left[
\iiint
\ind\left\{
\frac{\sqrt n A_s(\bar{\mathbf X})}{\sqrt{\frac1B\sum_{b=1}^B z_b^2}}
\le \mathsf q
\right\}
d\bigl(\hat p_2(z_B)\phi(z_B)\bigr)
d\Phi(z_{B-1})\cdots d\Phi(z_1)
\mid
\mathcal X
\right].
\end{align}

\medskip
\noindent
\emph{Step 2: Analysis of \(V_0(\mathcal X)\).}
Let \(V_0\coloneqq \E[V_0(\mathcal X)]\). By applying the Edgeworth expansion of \(W_n=\sqrt n A_s(\bar{\mathbf X})\) in the original representation \eqref{eq defv0}, we obtain
\begin{equation}\label{eq up-v0-expansion}
\begin{split}
V_0
=&
\iiint
\ind\left\{
\frac{z_0}{\sqrt{\frac1B\sum_{b=1}^B z_b^2}}
\le \mathsf q
\right\}
\prod_{b=0}^B d\Phi(z_b)
\\
&+
\frac{1}{\sqrt n}\times
\iiint
\ind\left\{
\frac{z_0}{\sqrt{\frac1B\sum_{b=1}^B z_b^2}}
\le \mathsf q
\right\}
d\bigl(q_1(z_0)\phi(z_0)\bigr)
\prod_{b=1}^B d\Phi(z_b)
\\
&+
\frac{1}{n}\times
\iiint
\ind\left\{
\frac{z_0}{\sqrt{\frac1B\sum_{b=1}^B z_b^2}}
\le \mathsf q
\right\}
d\bigl(q_2(z_0)\phi(z_0)\bigr)
\prod_{b=1}^B d\Phi(z_b)
+
R_{[9]}.
\end{split}
\end{equation}
As in Step 5 of the two-sided proof, the Edgeworth remainder satisfies $\abs{R_{[9]}}
\le
n^{-3/2}\times C_{[\mathsf C_3]}$. Now evaluate the three integrals in \eqref{eq up-v0-expansion}. The first one equals
\begin{align*}
\iiint
\ind\left\{
\frac{z_0}{\sqrt{\frac1B\sum_{b=1}^B z_b^2}}
\le \mathsf q
\right\}
\prod_{b=0}^B d\Phi(z_b)
=
\PP(t_B\le \mathsf q)
=
\Psi_B(\mathsf q).
\end{align*}
The second integral equals
\begin{align*}
\sum_{k=0,2}
b_{[1,k]}
\times
\frac{B^{B/2+k}\mathsf q^k}{\sqrt{2\pi}(B+\mathsf q^2)^{B/2+k}}
\eqqcolon
\zeta_{1}^{\mathrm{up}}(\mathsf q),
\end{align*}
and the third integral equals
\begin{align*}
\sum_{k=1,3,5}
b_{[2,k]}
\times
\frac{
2^{k/2}\Gamma\left(\frac{B+k}{2}\right)\mathsf q^k B^{B/2}
}{
\sqrt{2\pi}\Gamma\left(\frac{B}{2}\right)(B+\mathsf q^2)^{(B+k)/2}
}
\eqqcolon
{\zeta_{2,q}^\mathrm{up}}(\mathsf q).
\end{align*}
Therefore,
\begin{align}\label{eq up-v0-final}
V_0
=
\Psi_B(\mathsf q)
+
\frac{1}{\sqrt n}\times\zeta_{1}^{\mathrm{up}}(\mathsf q)
+
\frac{1}{n}\times{\zeta_{2,q}^\mathrm{up}}(\mathsf q)
+
R_{[9]}.
\end{align}

\medskip
\noindent
\emph{Step 3: Analysis of \(V_1(\mathcal X)\).}
After Step 1, only the first expectation in \eqref{eq defv1} remains. Namely,
\begin{align*}
V_1(\mathcal X)
=
B\times
\mathbb E\left[
\iiint
\ind\left\{
\frac{\sqrt n A_s(\bar{\mathbf X})}{\sqrt{\frac1B\sum_{b=1}^B z_b^2}}
\le \mathsf q
\right\}
d\bigl(\hat p_2(z_B)\phi(z_B)\bigr)
d\Phi(z_{B-1})\cdots d\Phi(z_1)
\mid
\mathcal X
\right].
\end{align*}
Since our final goal is only a uniform error of order \(O(n^{-3/2})\), we do not analyze \(V_1(\mathcal X)\) via the cancellation argument used in the two-sided case. Instead, we follow the simpler approach in \cite{lam2022}, Theorem 2. Using the explicit form of \(\hat p_2\), we may write
\begin{align*}
V_1(\mathcal X)
=
B\times\sum_{k=1,3,5}\hat a_{[2,k]}\times
\mathbb E\left[
\iiint
\ind\{\ldots\}
d\bigl(z_B^k\phi(z_B)\bigr)
d\Phi(z_{B-1})\cdots d\Phi(z_1)
\mid
\mathcal X
\right].
\end{align*}
The conditional expectation above is bounded by a universal constant:
\begin{align*}
&\left|
\mathbb E\left[
\iiint
\ind\{\ldots\}
d\bigl(z_B^k\phi(z_B)\bigr)
d\Phi(z_{B-1})\cdots d\Phi(z_1)
\mid
\mathcal X
\right]
\right|
\\
&\qquad\le
\iiint
\abs{z_B^k\phi(z_B)}dz_B d\Phi(z_{B-1})\cdots d\Phi(z_1).
\end{align*}
Define \(\tilde V_1(\mathcal X)\) by replacing \(\hat a_{[2,k]}\) with \(a_{[2,k]}\):
\begin{align*}
\tilde V_1(\mathcal X)
\coloneqq
B\times
\sum_{k=1,3,5}a_{[2,k]}\times
\mathbb E\left[
\iiint
\ind\{\ldots\}
d\bigl(z_B^k\phi(z_B)\bigr)
d\Phi(z_{B-1})\cdots d\Phi(z_1)
\mid
\mathcal X
\right].
\end{align*}
Then
\begin{align*}
\E\left|V_1(\mathcal X)-\tilde V_1(\mathcal X)\right|
\le
\frac{1}{\sqrt n}\times
C_{[B,\{\E\|\mathbf X_1\|^k\}_{k=1,\ldots,l}]},
\end{align*}
because each \(\hat a_{[2,k]}\) is a polynomial in sample moment estimators, and sufficient moment assumptions imply uniform integrability. Next, the analysis of \(\tilde V_1(\mathcal X)\) is parallel to that of \(V_0(\mathcal X)\):
\begin{align*}
 &\E[\tilde V_1(\mathcal X)]\\
=&
B\times
\mathbb E\left[
\iiint
\ind\left\{
\frac{z_0}{\sqrt{\frac1B\sum_{b=1}^B z_b^2}}
\le \mathsf q
\right\}
d\bigl(p_2(z_B)\phi(z_B)\bigr)
d\Phi(z_{B-1})\cdots d\Phi(z_1)d\Phi(z_0)
\right]\\
+
R_{[10]}
\\
=&
-\sum_{k=1,3,5}
a_{[2,k]}
\times
\frac{
2^{(k-1)/2}
\Gamma\left(\frac{B+k}{2}\right)
\Gamma\left(1+\frac{k}{2}\right)
\mathsf q B^{(B+k+1)/2}
}{
\pi
\Gamma\left(\frac{B+k+1}{2}\right)
(B+\mathsf q^2)^{(B+k)/2}
}
+
R_{[10]}
\\
\eqqcolon&
{\zeta_{2,p}^\mathrm{up}}(\mathsf q)+R_{[10]},
\end{align*}
with
\[
\abs{R_{[10]}}
\le
\frac{1}{\sqrt n}\times C_{[\mathsf C_1,B]}.
\]
Combining the last two displays, we obtain
\begin{align}\label{eq up-v1-final}
\E[V_1(\mathcal X)]
=
{\zeta_{2,p}^\mathrm{up}}(\mathsf q)
+
R_{[11]},
\qquad
\abs{R_{[11]}}
\le
\frac{1}{\sqrt n}\times
C_{[\mathsf C_1,B,\{\E\|\mathbf X_1\|^k\}_{k=1,\ldots,l}]}.
\end{align}

\medskip
\noindent
\emph{Step 4: Summary.}
Define
\[
{\zeta_{2}^\mathrm{up}}(\mathsf q)
\coloneqq
{\zeta_{2,q}^\mathrm{up}}(\mathsf q)+{\zeta_{2,p}^\mathrm{up}}(\mathsf q),
\]
i.e.
\begin{align*}
{\zeta_{2}^\mathrm{up}}(\mathsf q)
=&
\sum_{k=1,3,5}
b_{[2,k]}
\times
\frac{
2^{k/2}\Gamma\left(\frac{B+k}{2}\right)\mathsf q^k B^{B/2}
}{
\sqrt{2\pi}\Gamma\left(\frac{B}{2}\right)(B+\mathsf q^2)^{(B+k)/2}
}\\
- &\sum_{k=1,3,5}
a_{[2,k]}
\times
\frac{
2^{(k-1)/2}
\Gamma\left(\frac{B+k}{2}\right)
\Gamma\left(\frac{k+2}{2}\right)
\mathsf q B^{(B+k+1)/2}
}{
\pi
\Gamma\left(\frac{B+k+1}{2}\right)
(B+\mathsf q^2)^{(B+k)/2}
}
\end{align*}
Combining \eqref{eq up-r8}, \eqref{eq up-v0-final}, and \eqref{eq up-v1-final}, we obtain
\begin{align*}
\sup_{\mathsf q\in\mathbb R}
\left|
\PP(T\le \mathsf q)
-
\Psi_B(\mathsf q)
-
\frac{\zeta_{1}^{\mathrm{up}}(\mathsf q)}{\sqrt n}
-
\frac{{\zeta_{2}^\mathrm{up}}(\mathsf q)}{n}
\right|
\le
n^{-3/2}\times
C_{[\mathsf C_1,\mathsf C_3,\hat{\mathsf C}_3,B,\{\E\|\mathbf X_1\|^k\}_{k=1,\ldots,l}]}.
\end{align*}
This is exactly the desired upper-tail expansion.

\subsection{Proof of the Two-Sided Case in Theorem \ref{theorem resample uniform version}}\label{sec-proof-resample-ts}

We prove the symmetric two-sided expansion for the resampled statistic. Fix \(\varepsilon_0>0\) and \(B\geq 2\). Throughout this proof, all suprema over \(\mathsf q\) are taken over \(\mathsf q\ge \varepsilon_0\). Recall the empirical Edgeworth good event \(\hat{\mathcal E}_{\nu}\) defined in \eqref{eq goodevent0}, \eqref{eq goodevent1}, and \eqref{eq goodevent2}, the empirical norm-moment event \(\mathcal E_{\mathrm m}\) defined in \eqref{good event emp moment}, and the empirical mean event \(\mathcal E_{\bm\mu}\) defined in \eqref{good event mean}. Define
\begin{align}\label{eq good event resample G}
\mathcal G
\coloneqq
\hat{\mathcal E}_{\nu=3}
\cap
\hat{\mathcal E}_{\nu=1}
\cap
\mathcal E_{\mathrm m}
\cap
\mathcal E_{\bm\mu}.
\end{align}
By the high-probability bounds for these events, for any prescribed \(\lambda>0\), the moment order in Assumption~\ref{assume1} can be chosen sufficiently large so that
\[
\PP(\mathcal G)=1-O(n^{-\lambda}).
\]
It therefore suffices to prove the desired \(n^{-2}\) bound on \(\mathcal G\), with a deterministic constant. Let \(W_n^*\) denote the resampled pivotal numerator,
\[
W_n^*
\coloneqq
\sqrt n\,A_s^*(\bar{\mathbf X}^*),
\]
where \(A_s^*\) is the empirical analogue of \(A_s\) based on \(\mathcal X\). Conditional on \((\mathcal X,\mathcal X^*)\), let \(Q^{**}\) denote the distribution of one inner resampling coordinate. For a fixed \(\mathsf q\), define
\[
U(\mathcal X,\mathcal X^*)
\coloneqq
\mathbb E\left[
\iiint
\ind
\left\{
\left|
\frac{W_n^*}{\sqrt{\frac1B\sum_{b=1}^B z_b^2}}
\right|
\le \mathsf q
\right\}
dQ^{**}(z_B)\cdots dQ^{**}(z_1)
\mid
\mathcal X,\mathcal X^*
\right].
\]
Then \(\PP\left(\abs{T^*}\le \mathsf q\mid \mathcal X\right)
=
\mathbb E\left[
U(\mathcal X,\mathcal X^*)
\mid
\mathcal X
\right]\).

\medskip
\noindent
\emph{Step 1: Reduction to \(U_0\) and \(U_1\).}
Conditional on \((\mathcal X,\mathcal X^*)\), the same replacement argument used in Steps 1--4 of the proof of Theorem~\ref{theorem uniform version} applies to the inner resampling distribution \(Q^{**}\). Thus,
\[
U(\mathcal X,\mathcal X^*)
=
U_0(\mathcal X,\mathcal X^*)
+
\frac1n U_1(\mathcal X,\mathcal X^*)
+
R_{[1]}^*
+
R_{[2]}^*,
\]
up to a bad-event contribution whose conditional expectation is \(O(n^{-2})\) after choosing \(\lambda\) sufficiently large. Here
\[
U_0(\mathcal X,\mathcal X^*)
\coloneqq
\mathbb E\left[
\iiint
\ind
\left\{
\left|
\frac{W_n^*}{\sqrt{\frac1B\sum_{b=1}^B z_b^2}}
\right|
\le \mathsf q
\right\}
d\Phi(z_B)\cdots d\Phi(z_1)
\mid
\mathcal X,\mathcal X^*
\right],
\]
and
\begin{align*}
&U_1(\mathcal X,\mathcal X^*)
\coloneqq\\
B\times &
\mathbb E\left[
\iiint
\ind
\left\{
\left|
\frac{W_n^*}{\sqrt{\frac1B\sum_{b=1}^B z_b^2}}
\right|
\le \mathsf q
\right\}
d\left(p_2^*(z_B)\phi(z_B)\right)
d\Phi(z_{B-1})\cdots d\Phi(z_1)
\mid
\mathcal X,\mathcal X^*
\right],
\end{align*}
where \(p_2^*\) is the second-order Edgeworth polynomial associated with the empirical distribution of \(\mathcal X^*\). On \(\mathcal G\), the total variation bounds for the signed Edgeworth measures are controlled by deterministic constants depending only on the displayed quantities below. Hence
\[
\ind_{\mathcal G}\times
\left|
\mathbb E\left[
R_{[1]}^*+R_{[2]}^*
\mid
\mathcal X
\right]
\right|
\le
\frac1{n^2}
C_{[\hat{\mathsf C}_3,B,\{\mathbb E\|\mathbf X_1\|^k\}_{k=1,\ldots,l}]}.
\]

\medskip
\noindent
\emph{Step 2: Analysis of \(U_0\).}
For fixed \((\mathcal X,\mathcal X^*)\), the Gaussian integration in \(U_0\) gives
\[
U_0(\mathcal X,\mathcal X^*)
=
2\Psi_B\left(\frac{W_n^*}{\mathsf q}\right)-1.
\]
Taking conditional expectation given \(\mathcal X\) and applying the conditional Edgeworth expansion on \(\hat{\mathcal E}_{\nu=3}\), exactly as in Step 5 of the proof of Theorem~\ref{theorem uniform version}, yields
\[
\ind_{\mathcal G}\times
\left|
\mathbb E\left[
U_0(\mathcal X,\mathcal X^*)
\mid
\mathcal X
\right]
-
\left(2\Psi_B(\mathsf q)-1\right)
-
\frac1n\hat\zeta_2^q(\mathsf q)
\right|
\le
\frac1{n^2}
C_{[\hat{\mathsf C}_3]},
\]
uniformly over \(\mathsf q\ge\varepsilon_0\). Here \(\hat\zeta_2^q\) denotes the sample-based analogue of the first summation term in \(\zeta_2^{\mathrm{ts}}\) in \eqref{eq zeta2}, obtained by replacing the population moment coefficients there with their sample plug-in estimators.

\medskip
\noindent
\emph{Step 3: Analysis of \(U_1\).}
Write
\[
p_2^*(x)=\sum_{k=1,3,5}a_{[2,k]}^*x^k,
\qquad
\hat p_2(x)=\sum_{k=1,3,5}\hat a_{[2,k]}x^k,
\]
where \(a_{[2,k]}^*\) is computed from the outer resample \(\mathcal X^*\), while \(\hat a_{[2,k]}\) is computed from the original sample \(\mathcal X\). The same chi-square calculation as in Step 6 of the proof of Theorem~\ref{theorem uniform version} gives
\[
U_1(\mathcal X,\mathcal X^*)
=
-\sum_{k=1,3,5}
a_{[2,k]}^*
\times
\frac{
B^{(B+k+1)/2}\Gamma\left(1+\frac{k}{2}\right)
}
{
2^{B/2-1}\sqrt{\pi}\Gamma\left(\frac{1+B+k}{2}\right)
}
\times
g_{[B,k]}\left(\frac{W_n^*}{\mathsf q}\right),
\]
where \(g_{[B,k]}\) is defined in \eqref{eq gbkx}. Since \(B\geq 2\), each \(g_{[B,k]}\in C^2(\mathbb R)\), and \(g_{[B,k]}\), \(g_{[B,k]}'\), and \(g_{[B,k]}''\) are uniformly bounded on \(\mathbb R\). Define \(\tilde U_1(\mathcal X,\mathcal X^*)\) by replacing \(p_2^*\) with \(\hat p_2\) in the definition of \(U_1(\mathcal X,\mathcal X^*)\) above. Each difference \(a_{[2,k]}^*-\hat a_{[2,k]}\) is a finite linear combination of differences of the form \(M_n^*-M_n\), where \(M_n^*\) and \(M_n\) are the resample and sample plug-in estimators of a monomial moment. Hence Proposition~\ref{theo:resample-expect-cancellation-good}, applied with \(\tilde A=A_s\) and with the even weight \(x\mapsto g_{[B,k]}(x/\mathsf q)\), gives
\[
\ind_{\mathcal G}\times
\sup_{\mathsf q\ge \varepsilon_0}
\left|
\mathbb E\left[
U_1(\mathcal X,\mathcal X^*)
-
\tilde U_1(\mathcal X,\mathcal X^*)
\mid
\mathcal X
\right]
\right|
\le
\frac1n
C_{[\varepsilon_0,\widehat{\mathsf C}_1,\delta,L,\|\nabla A_s(\bm\mu)\|,C_M,B,\{\mathbb E\|\mathbf X_1\|^k\}_{k=1,\ldots,l}]}.
\]
It remains to evaluate \(\tilde U_1\). Since \(\tilde U_1\) uses the sample-based coefficients \(\hat a_{[2,k]}\), its coefficients are fixed conditional on \(\mathcal X\). Applying the conditional Edgeworth expansion to \(W_n^*\) on \(\hat{\mathcal E}_{\nu=1}\), in the same way as in the population proof, gives
\[
\ind_{\mathcal G}\times
\sup_{\mathsf q\ge \varepsilon_0}
\left|
\mathbb E\left[
\tilde U_1(\mathcal X,\mathcal X^*)
\mid
\mathcal X
\right]
-
\hat\zeta_2^p(\mathsf q)
\right|
\le
\frac1n
C_{[\widehat{\mathsf C}_1,B]}.
\]
Here \(\hat\zeta_2^p\) denotes the sample-based analogue of the second summation term in \(\zeta_2^{\mathrm{ts}}\) in \eqref{eq zeta2}, obtained by replacing the population moment coefficients there with their sample plug-in estimators. Combining the last two displays yields
\[
\ind_{\mathcal G}\times
\sup_{\mathsf q\ge \varepsilon_0}
\left|
\mathbb E\left[
U_1(\mathcal X,\mathcal X^*)
\mid
\mathcal X
\right]
-
\hat\zeta_2^p(\mathsf q)
\right|
\le
\frac1n
C_{[\varepsilon_0,\widehat{\mathsf C}_1,\delta,L,\|\nabla A_s(\bm\mu)\|,C_M,B,\{\mathbb E\|\mathbf X_1\|^k\}_{k=1,\ldots,l}]}.
\]

\medskip
\noindent
\emph{Step 4: Conclusion.}
Combining the decomposition of \(U\), the bound for \(U_0\), and the bound for \(U_1\), we obtain on \(\mathcal G\)
\begin{align*}
&\sup_{\mathsf q\ge \varepsilon_0}
\left|
\PP\left(\right|T^*\left|\le \mathsf q\mid \mathcal X\right)
-
\left(2\Psi_B(\mathsf q)-1\right)
-
\frac1n\left(\hat\zeta_2^q(\mathsf q)+\hat\zeta_2^p(\mathsf q)\right)
\right|
\\
&\qquad\le
\frac1{n^2}
C_{[\varepsilon_0,\widehat{\mathsf C}_1,\hat{\mathsf C}_3,\delta,L,\|\nabla A_s(\bm\mu)\|,C_M,B,\{\mathbb E\|\mathbf X_1\|^k\}_{k=1,\ldots,l}]}.
\end{align*}
By definition, \(\hat\zeta_2^{\mathrm{ts}}(\mathsf q)
\coloneqq
\hat\zeta_2^q(\mathsf q)+\hat\zeta_2^p(\mathsf q)\). Therefore,
\begin{align*}
&\ind_{\mathcal G}\times
\sup_{\mathsf q\ge \varepsilon_0}
\left|
\PP\left(\right|T^*\left|\le \mathsf q\mid \mathcal X\right)
-
\left(2\Psi_B(\mathsf q)-1\right)
-
\frac1n\hat\zeta_2^{\mathrm{ts}}(\mathsf q)
\right|\\
\le &
\frac1{n^2}
C_{[\varepsilon_0,\widehat{\mathsf C}_1,\hat{\mathsf C}_3,\delta,L,\|\nabla A_s(\bm\mu)\|,C_M,B,\{\mathbb E\|\mathbf X_1\|^k\}_{k=1,\ldots,l}]}.
\end{align*}
Since \(\PP(\mathcal G^c)=O(n^{-\lambda})\), the claimed high-probability bound in \eqref{Eq resample ts uniform convergence} follows.

\subsection{Proof of the One-Sided Case in Theorem \ref{theorem resample uniform version}}\label{resample-one-side}
Recall the event \(\mathcal G\) defined in \eqref{eq good event resample G}. For a given quantile \(\mathsf q\in\mathbb R\), let \(Q^{**}\) denote the inner resample distribution and define
\begin{equation*}
V(\mathcal X,\mathcal X^*)
\coloneqq
\mathbb E\left[
\iiint
\ind\left\{
\frac{\sqrt n\,A_s^{*}(\bar{\mathbf X}_b^{*})}{\sqrt{\frac1B\sum_{b=1}^B z_b^2}}
\le \mathsf q
\right\}
dQ^{**}(z_B)\cdots dQ^{**}(z_1)
\mid
\mathcal X,\mathcal X^*
\right].
\end{equation*}
Then
\[
\PP(T^*\le \mathsf q\mid \mathcal X)
=
\mathbb E\left[V(\mathcal X,\mathcal X^*)\mid \mathcal X\right].
\]

\medskip
\noindent
\emph{Step 1: Reduction to \(V_0^*\) and \(V_1^*\).}
Analogously to the one-sided proof of Theorem~\ref{theorem uniform version}, on the corresponding inner resampling Edgeworth good event \(\mathcal E_3^*\) from Theorem~\ref{Theo HallHe}, we have
\begin{align*}
V(\mathcal X,\mathcal X^*)\times\ind_{\mathcal E_3^*}
=
\left(
V_0^*(\mathcal X,\mathcal X^*)
+
\frac{1}{n}\times V_1^*(\mathcal X,\mathcal X^*)
+
\hat R_{[8]}
\right)\times\ind_{\mathcal E_3^*},
\end{align*}
where
\[
\left|\hat R_{[8]}\times\ind_{\mathcal E_3^*}\right|
\le
C_{[\mathsf C_3^*,B]}\times n^{-2}.
\]

\medskip
\noindent
\emph{Step 2: Analysis of \(V_0^*\).}
Analogously to Step 2 of the one-sided proof, on \(\hat{\mathcal E}_{\nu=3}\),
\[
\mathbb E\left[V_0^*(\mathcal X,\mathcal X^*)\mid \mathcal X\right]
=
\Psi_B(\mathsf q)
+
\frac{\hat\zeta_{1}^{\mathrm{up}}(\mathsf q)}{\sqrt n}
+
\frac{\hat\zeta_{2,q}^{\mathrm{up}}(\mathsf q)}{n}
+
\hat R_{[9]},
\qquad
\abs{\hat R_{[9]}}
\le
C_{[\hat{\mathsf C}_3]}\times n^{-3/2}.
\]
Here \(\hat\zeta_{1}^{\mathrm{up}}\) and \(\hat\zeta_{2,q}^{\mathrm{up}}\) are obtained from the corresponding population terms by replacing the population moment coefficients with their sample plug-in estimators.

\medskip
\noindent
\emph{Step 3: Analysis of \(V_1^*\).}
For the \(V_1^*\) term, let \(\tilde V_1^*(\mathcal X,\mathcal X^*)\) be obtained by replacing \(p_2^*\) with \(\hat p_2\). Then, analogously to Step 3 of the one-sided proof and to the coefficient-control argument in the resampled two-sided proof, on \(\mathcal E_{\mathrm m}\),
\begin{align*}
    &\mathbb E\left[
\left|V_1^*(\mathcal X,\mathcal X^*)-\tilde V_1^*(\mathcal X,\mathcal X^*)\right|
\mid
\mathcal X
\right]\\
\le &
C_{[B,\{\mathbb E\|\mathbf X_1^*\|^k\}_{k=1,\ldots,l}]}\times n^{-1/2}
\le
C_{[B,\{\mathbb E\|\mathbf X_1\|^k\}_{k=1,\ldots,l}]}\times n^{-1/2}.
\end{align*}
Also, on \(\hat{\mathcal E}_{\nu=1}\),
\[
\mathbb E\left[\tilde V_1^*(\mathcal X,\mathcal X^*)\mid \mathcal X\right]
=
\hat\zeta_{2,p}^{\mathrm{up}}(\mathsf q)+\hat R_{[10]},
\qquad
\abs{\hat R_{[10]}}
\le
C_{[\hat{\mathsf C}_1,B]}\times n^{-1/2}.
\]
Here \(\hat\zeta_{2,p}^{\mathrm{up}}\) is obtained from the corresponding population term by replacing the population moment coefficients with their sample plug-in estimators. Therefore,
\[
\mathbb E\left[V_1^*(\mathcal X,\mathcal X^*)\mid \mathcal X\right]
=
\hat\zeta_{2,p}^{\mathrm{up}}(\mathsf q)+\hat R_{[11]},
\qquad
\abs{\hat R_{[11]}}
\le
n^{-1/2}\times
C_{[\hat{\mathsf C}_1,B,\{\mathbb E\|\mathbf X_1\|^k\}_{k=1,\ldots,l}]}.
\]

\medskip
\noindent
\emph{Step 4: Conclusion.}
Define $\hat\zeta_{2}^{\mathrm{up}}(\mathsf q)
\coloneqq
\hat\zeta_{2,q}^{\mathrm{up}}(\mathsf q)
+
\hat\zeta_{2,p}^{\mathrm{up}}(\mathsf q)$. Combining Steps 1--3, on \(\mathcal G\) we obtain
\begin{align*}
&\sup_{\mathsf q\in\mathbb R}
\left|
\PP(T^*\le \mathsf q\mid \mathcal X)
-
\Psi_B(\mathsf q)
-
\frac{\hat\zeta_{1}^{\mathrm{up}}(\mathsf q)}{\sqrt n}
-
\frac{\hat\zeta_{2}^{\mathrm{up}}(\mathsf q)}{n}
\right|
\le
n^{-3/2}\times
C_{[\hat{\mathsf C}_1,\hat{\mathsf C}_3,\mathsf C_3^*,B,\{\mathbb E\|\mathbf X_1\|^k\}_{k=1,\ldots,l}]}.
\end{align*}
This is the desired one-sided resampled expansion.

\section{Proof of the Cornish--Fisher Expansion for $t$-distributions}
In Appendix \ref{conjectural framework}, we discuss a conjectural general framework for higher-order Edgeworth expansions of Student's $t$-type pivots, in which the usual Edgeworth polynomials are replaced by quasi-polynomials with analogous structural properties. Based on this framework, we develop the corresponding general Cornish--Fisher expansions and establish their uniform approximation over $\alpha$ in compact subsets of $(0,1)$. The remaining appendices provide the detailed proofs supporting this framework.

Although the general framework is conjectural, the special cases needed in this paper, namely those with $\nu\leq 3$, are already validated by Theorem \ref{theorem uniform version}. Consequently, all theorems and propositions stated in the main text are established rigorously.

\subsection{Conjectural Framework for Edgeworth Expansions on $t$ Statistics}\label{conjectural framework}

\begin{definition}[Student's $t$ Edgeworth quasi-polynomials]
Fix $B\ge 1$ and an integer $k\ge 1$. Let $\mathcal Q_{B,k}$ denote the class of all functions $\eta_k(\mathsf q)$ that can be written as a \emph{finite} linear combination of the form
\[
\eta_k(\mathsf q)
=
\sum
c_{[k,m,\ell]}
\frac{\mathsf q^m}{(B+\mathsf q^2)^{\ell/2}},
\]
where $c_{[k,m,\ell]}\in\mathbb R$, $m\in \mathbb Z_{\ge 0}$, and $\ell\in \mathbb Z$, subject to the constraint $m\le 3k-1$. In addition, $m$ is odd when $k$ is even, and even when $k$ is odd. Elements of $\mathcal Q_{B,k}$ are called \emph{Student's $t$ Edgeworth quasi-polynomials of order $k$}. We also define
\[
\mathcal Q_B \coloneqq \bigcup_{k\ge 1}\mathcal Q_{B,k}.
\]
\end{definition}
These quasi-polynomials resemble ordinary polynomials and enjoy several useful structural properties.
\begin{lemma}\label{lem:tqp-closure}
For any fixed integer $B\geq1$ and each integer $k\ge 1$, $\mathcal Q_{B,k}\subset C^\infty(\mathbb R)$. Moreover, the class $\mathcal Q_B$ is closed under multiplication and differentiation. In particular:
\begin{itemize}
    \item if $f,g\in \mathcal Q_B$, then $fg\in \mathcal Q_B$, with parity determined by the usual product rule;
    \item if $f\in \mathcal Q_B$, then $f'\in \mathcal Q_B$, and differentiation reverses parity;
    \item the density function $\psi_B(\cdot)$ belongs to $\mathcal Q_B$.
\end{itemize}
\end{lemma}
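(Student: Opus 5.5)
The plan is to work directly with the generators of $\mathcal Q_{B,k}$, namely the basic functions $e_{m,\ell}(\mathsf q)\coloneqq \mathsf q^m(B+\mathsf q^2)^{-\ell/2}$ with $m\in\mathbb Z_{\ge0}$ and $\ell\in\mathbb Z$. Each claim then reduces to a statement about these building blocks, extended by linearity. Throughout, the parity of $e_{m,\ell}$ is that of $m$, since $(B+\mathsf q^2)^{-\ell/2}$ is even.

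\emph{Smoothness.} Since $B\ge1$, the function $\mathsf q\mapsto B+\mathsf q^2$ is $C^\infty$ and bounded below by $B>0$. Hence $(B+\mathsf q^2)^{-\ell/2}=\exp\bigl(-\tfrac{\ell}{2}\log(B+\mathsf q^2)\bigr)$ is $C^\infty$ for every integer $\ell$, odd values included. Multiplying by the polynomial $\mathsf q^m$ preserves smoothness, and finite linear combinations do as well, which gives $\mathcal Q_{B,k}\subset C^\infty(\mathbb R)$.

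\emph{Products.} We have $e_{m,\ell}e_{m',\ell'}=e_{m+m',\ell+\ell'}$, and the parity of $m+m'$ follows the usual product rule. Here I have to handle the order constraint $m\le 3k-1$. Take $f\in\mathcal Q_{B,k}$ and $g\in\mathcal Q_{B,k'}$, and set $K=k+k'$. Then the product exponent satisfies $m+m'\le 3K-2\le 3K-1$. The parity rule for index $K$ is also satisfied: $K$ is even exactly when $k,k'$ have equal parity, which is exactly when $m,m'$ have equal parity, i.e.\ when $m+m'$ is even. Hmm — that is the reverse of what the definition requires, so I would check this against the convention. Under the convention ``$m$ odd iff $k$ even'', the parity of $m$ is opposite to that of $k$, so the parity of $m+m'$ equals the parity of $K$, but the index-$K$ class needs it to be opposite to $K$. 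To resolve this, I would place the product in $\mathcal Q_{B,K+1}$ instead. There the constraint $m+m'\le 3K-2\le 3(K+1)-1$ still holds, and the parity is now correct. Since the lemma only asserts closure of the union $\mathcal Q_B$, with parity given by the product rule, this suffices.

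\emph{Derivatives.} I would compute
\[
e_{m,\ell}'=m\,e_{m-1,\ell}-\ell\,e_{m+1,\ell+2},
\]
where the first term is absent when $m=0$. Both terms change the parity of $m$, so differentiation reverses parity. The exponents rise by at most one, so $f'\in\mathcal Q_{B,k+1}$ after checking parity in the same way: $m\pm1$ has parity opposite to that of $m$, matching the index shift $k\mapsto k+1$, and the bound $m+1\le 3k\le 3(k+1)-1$ holds.

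\emph{Density.} The $t$ density is $\psi_B(\mathsf q)=c_B B^{(B+1)/2}(B+\mathsf q^2)^{-(B+1)/2}$, which equals $c\,e_{0,B+1}$ with $m=0$ even. It therefore lies in $\mathcal Q_{B,1}$.

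The only delicate point is the index bookkeeping in the product and derivative rules, since the parity convention forces an index shift. I expect this to be the main obstacle, but a minor one, and it is resolved by the observation that the lemma's claims concern only the union $\mathcal Q_B$.
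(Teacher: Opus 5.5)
Your argument is correct. The paper states this lemma without proof, so there is no competing route to compare against. Working generator by generator with $e_{m,\ell}(\mathsf q)=\mathsf q^m(B+\mathsf q^2)^{-\ell/2}$ is the natural argument, and the pieces you use are all right:
\begin{itemize}
\item the identity $e_{m,\ell}e_{m',\ell'}=e_{m+m',\ell+\ell'}$;
\item the derivative formula $e_{m,\ell}'=m\,e_{m-1,\ell}-\ell\,e_{m+1,\ell+2}$;
\item placing products in $\mathcal Q_{B,k+k'+1}$ and derivatives in $\mathcal Q_{B,k+1}$, which suffices because only closure of the union $\mathcal Q_B$ and function parity are claimed.
\end{itemize}
In a final write-up, drop the mid-argument hesitation and state the index shift directly.

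You can also dispose of the index bookkeeping altogether. Because $\ell$ ranges over all of $\mathbb Z$, the identity $\mathsf q^{2}=(B+\mathsf q^2)-B$ rewrites every generator with even $m$ as a finite combination of the $e_{0,\ell'}$. Likewise, every generator with odd $m$ becomes a combination of the $e_{1,\ell'}$. Hence the degree constraint $m\le 3k-1$ is vacuous, and $\mathcal Q_{B,k}$ depends only on the parity of $k$. For odd $k$ it is the span $E$ of $\{(B+\mathsf q^2)^{-\ell/2}:\ell\in\mathbb Z\}$, and for even $k$ it is $\mathsf q E$.

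With this description, the lemma reduces to one-line checks: $E\cdot E\subset E$, $E\cdot \mathsf qE\subset \mathsf qE$, $\mathsf qE\cdot\mathsf qE\subset E$, $E'\subset \mathsf qE$, $(\mathsf qE)'\subset E$, and $\psi_B\in E$. It also shows that the shift you introduced is harmless rather than a real constraint. For instance, the product equally lies in $\mathcal Q_{B,k+k'-1}$.
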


We \emph{conjecture} that the general Student's $t$ Edgeworth expansion admits the following quasi-polynomial form:
\begin{align}
G_{n,\nu}^{\mathrm{up}}(\mathsf q)
&\coloneqq 
\PsiB(\mathsf q)+\sum_{k=1}^{\nu} n^{-k/2}\eta_k^{\mathrm{up}}(\mathsf q)\psiB(\mathsf q),
\label{eq:Gup}
\\
G_{n,\nu}^{\mathrm{ts}}(\mathsf q)
&\coloneqq
2\PsiB(\mathsf q)-1+\sum_{\substack{1\le k\le \nu\\ k\text{ even}}} n^{-k/2}\eta_k^{\mathrm{ts}}(\mathsf q)\psiB(\mathsf q),
\label{eq:Gts}
\end{align}
where $\eta_k^{\mathrm{up}},\eta_k^{\mathrm{ts}}\in\mathcal Q_B$. Moreover, each coefficient $c_{[k,m,\ell]}$ in $\eta_k(\mathsf q)=\sum c_{[k,m,\ell]}\times \mathsf q^m/(B+\mathsf q^2)^{\ell/2}$ is required to be a polynomial in finitely many moments of the underlying distribution, up to order $k+2$.

Although this is stated as a conjecture, the expansion has already been verified in Theorem \ref{theorem uniform version} for the two-sided case when $\nu\le 3$ and for the upper one-sided case when $\nu\le 2$. We develop the general Cornish--Fisher expansion for Student's $t$ quasi-polynomials based on this conjectural form. Nevertheless, the special cases established in Theorem \ref{theorem uniform version} already suffice for the coverage error analysis of our studentized cheap bootstrap.

\begin{assumption}\label{assume 100}
Assume Assumption \ref{assume1}. In addition, assume that the following $t$-Edgeworth expansions hold for any fixed $B\geq1$ in the one-sided expansion and any fixed $B\geq2$ in the two-sided expansion:
\begin{align}
\sup_{\mathsf q\in\mathbb R}
\left|
\PP(T\le \mathsf q)-G_{n,\nu}^{\mathrm{up}}(\mathsf q)
\right|
&\le
\mathsf C_{\mathrm{up}}\times n^{-(\nu+1)/2},
\\
\sup_{\mathsf q\ge \varepsilon_0}
\left|
\PP(\abs{T}\le \mathsf q)-G_{n,\nu}^{\mathrm{ts}}(\mathsf q)
\right|
&\le
\mathsf C_{\mathrm{ts}}^{\varepsilon_0}\times n^{-(\nu+1)/2}.
\label{eq:up-edge}
\end{align}
\end{assumption}

Let $\hat G_{n,\nu}^{\mathrm{up}}$ and $\hat G_{n,\nu}^{\mathrm{ts}}$ denote the formal expansions obtained from $G_{n,\nu}^{\mathrm{up}}$ and $G_{n,\nu}^{\mathrm{ts}}$ by replacing, in every coefficient, the population moments by the corresponding sample moments.

\begin{assumption}\label{assume 200}
Assume Assumption \ref{assume1}. Assume further that, for any fixed $B\geq1$ in the one-sided expansion and any fixed $B\geq2$ in the two-sided expansion,
\begin{align}
\PP\left(
\sup_{\mathsf q\in\mathbb R}
\left|
\PP(T^*\le \mathsf q\mid \mathcal X)-\hat G_{n,\nu}^{\mathrm{up}}(\mathsf q)
\right|
\ge
{\hat{\mathsf C}_{\mathrm{up}}}\times{n^{-(\nu+1)/2}}
\right)
&=
O(n^{-\lambda}),
\\
\PP\left(
\sup_{\mathsf q\ge \varepsilon_0}
\left|
\PP(\abs{T^*}\le \mathsf q\mid \mathcal X)-\hat G_{n,\nu}^{\mathrm{ts}}(\mathsf q)
\right|
\ge
{\hat{\mathsf C}_{\mathrm{ts}}^{\varepsilon_0}}\times{n^{-(\nu+1)/2}}
\right)
&=
O(n^{-\lambda}).
\end{align}
Moreover, the coefficients $\hat c_{[k,m,\ell]}$ of each $\hat\eta_k$ are polynomials in sample moments up to order $k+2$.
\end{assumption}
Similar to the usual Cornish--Fisher expansions, we expect these quantiles to admit formal approximations in terms of quasi-polynomials:
\begin{align}
u_{\nu,\alpha}^{\mathrm{up}}
&\coloneqq
\mathsf q+\sum_{k=1}^{\nu} n^{-k/2}\xi_k^{\mathrm{up}}(\mathsf q),
\qquad \text{where }\mathsf q\coloneqq\PsiB^{-1}(\alpha),
\label{def cfup}
\\
u_{\nu,\alpha}^{\mathrm{ts}}
&\coloneqq
\mathsf q+\sum_{\substack{2\le k\le \nu\\ k\text{ even}}} n^{-k/2}\xi_k^{\mathrm{ts}}(\mathsf q),
\qquad \text{where }\mathsf q\coloneqq\PsiB^{-1}\!\left(\frac{1+\alpha}{2}\right).
\label{def cfts}
\end{align}
Here $\xi_k^\mathrm{ts},\xi_k^\mathrm{up}\in \QB$, with $\xi_k$ even when $k$ is odd and odd when $k$ is even. Moreover, the coefficients of $\xi_k$ are polynomials in moments up to order $k+2$. The next proposition confirms this formal construction.

\begin{proposition}[Existence and uniqueness of the Cornish--Fisher quasi-polynomials]\label{prop:CF-existence}
Under Assumption~\ref{assume 100}, the following statements hold.

\begin{itemize}
    \item For any fixed $B\geq1$, there exist unique functions $\xi_1^{\mathrm{up}},\dots,\xi_\nu^{\mathrm{up}}\in \QB$ such that, for every compact interval $K\subset \R$,
\begin{align}
\sup_{\mathsf q\in K}
\left|
G_{n,\nu}^{\mathrm{up}}\!\left(\mathsf q+\sum_{k=1}^{\nu} n^{-k/2}\xi_k^{\mathrm{up}}(\mathsf q)\right)-\PsiB(\mathsf q)
\right|
\le
\mathsf A_K^{\mathrm{up}}\times n^{-(\nu+1)/2}
\label{eq:formal-up}
\end{align}
for some finite constant $\mathsf A_K^{\mathrm{up}}$. Moreover, $\xi_k^{\mathrm{up}}$ is odd when $k$ is even, and even when $k$ is odd.

    \item For any fixed $B\geq2$, there exist unique functions $\xi_k^{\mathrm{ts}}\in \QB$ for all even integers $k$ with $2\le k\le \nu$ ($\nu$ is odd) such that, for every compact interval $K\subset (0,\infty)$,
\begin{align}
\sup_{\mathsf q\in K}
\left|
G_{n,\nu}^{\mathrm{ts}}\!\left(\mathsf q+\sum_{\substack{2\le k\le \nu\\ k\text{ even}}} n^{-k/2}\xi_k^{\mathrm{ts}}(\mathsf q)\right)-\bigl(2\PsiB(\mathsf q)-1\bigr)
\right|
\le
\mathsf A_K^{\mathrm{ts}}\times n^{-(\nu+1)/2}
\label{eq:formal-ts}
\end{align}
for some finite constant $\mathsf A_K^{\mathrm{ts}}$. Moreover, each $\xi_k^{\mathrm{ts}}$ is odd.
\end{itemize}
\end{proposition}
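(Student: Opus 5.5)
Throughout, $G_{n,\nu}$ is treated as a deterministic smooth function of $\mathsf q$; Assumption~\ref{assume 100} is not needed beyond fixing its quasi-polynomial form. The plan is formal inversion of $G_{n,\nu}$ by Taylor expansion, with Lemma~\ref{lem:tqp-closure} guaranteeing that each step stays inside $\QB$.

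\emph{Building $\xi_k$.} For the one-sided case, put $x=\mathsf q+\delta_n(\mathsf q)$ with $\delta_n=\sum_{k=1}^\nu n^{-k/2}\xi_k^{\mathrm{up}}(\mathsf q)$. Expand $\PsiB(\mathsf q+\delta_n)$ and each $\eta_k^{\mathrm{up}}(\mathsf q+\delta_n)\psiB(\mathsf q+\delta_n)$ in Taylor series around $\mathsf q$ to order $\nu$. Collecting powers of $n^{-1/2}$, the coefficient of $n^{-j/2}$ has the form
\[
\psiB(\mathsf q)\xi_j^{\mathrm{up}}(\mathsf q)+\eta_j^{\mathrm{up}}(\mathsf q)\psiB(\mathsf q)+R_j(\mathsf q),
\]
where $R_j$ depends only on $\xi_1,\dots,\xi_{j-1}$, on the $\eta_i$, and on derivatives of $\psiB$ and $\eta_i\psiB$. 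Setting this coefficient to zero defines $\xi_j^{\mathrm{up}}=-\eta_j^{\mathrm{up}}-R_j/\psiB$ recursively. Uniqueness follows by induction: if two families agree up to $j-1$, the $n^{-j/2}$ coefficients differ by $\psiB(\xi_j-\tilde\xi_j)$. Since $\psiB>0$, a nonzero difference would contradict \eqref{eq:formal-up} on $K$, after multiplying by $n^{j/2}$ and letting $n\to\infty$.

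\emph{Quasi-polynomial structure.} Each term of $R_j$ is a product of $\psiB^{(r)}/\psiB$ or $(\eta_i\psiB)^{(r)}/\psiB$ with monomials in the earlier $\xi$'s. The ratio $\psiB'/\psiB=-(B+1)\mathsf q/(B+\mathsf q^2)$ lies in $\QB$ and is odd. Inductively, $\psiB^{(r)}/\psiB\in\QB$ with parity $(-1)^r$, and it is a finite sum of terms $\mathsf q^m(B+\mathsf q^2)^{-\ell/2}$. So closure under products and derivatives (Lemma~\ref{lem:tqp-closure}) gives $\xi_j\in\QB$. For parity, count weights: term $n^{-j/2}$ gathers products whose indices sum to $j$, and each derivative order $r$ is matched by $r$ factors of $\xi$. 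Giving $\xi_i$ and $\eta_i$ the parity $(-1)^{i+1}$, every product contributing to order $j$ then has parity $(-1)^{j+1}$: even for $j$ odd, odd for $j$ even. The degree bound $m\le 3j-1$ is tracked the same way; it may need $\ell$ to be allowed negative, which the definition permits. Coefficient polynomiality in moments up to order $j+2$ follows because products of $\eta_i$'s with $\sum i=j$ involve moments of order at most $\max(i)+2$.

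\emph{Remainder and two-sided case.} On compact $K$, $\delta_n=O(n^{-1/2})$ uniformly, and all functions involved are $C^\infty$ with bounded derivatives near $K$. The Taylor remainder of order $\nu+1$ is therefore at most $\mathsf A_K n^{-(\nu+1)/2}$, and so are the discarded cross terms of order above $\nu$; this gives \eqref{eq:formal-up}. The two-sided case repeats the argument with $2\PsiB-1$ and only even $k$. The derivative of the leading term is now $2\psiB$, so $\xi_k^{\mathrm{ts}}=-\tfrac12\eta_k^{\mathrm{ts}}-\dots$, matching $\xi_2^{\mathrm{ts}}=-\tfrac12\eta_2^{\mathrm{ts}}$ in \eqref{eq12}. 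Only even powers $n^{-k/2}$ arise, so odd orders need no correction. The same weight count, with all indices even, shows each $\xi_k^{\mathrm{ts}}$ is odd. Working on $K\subset(0,\infty)$ keeps the argument inside the region where the expansion is assumed.

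\emph{Main obstacle.} The delicate step is the bookkeeping showing $R_j/\psiB$ is a genuine element of $\QB$ with the right parity and degree constraint. This requires the claim that $\psiB^{(r)}/\psiB$ is a finite sum of terms $\mathsf q^m(B+\mathsf q^2)^{-\ell/2}$ with parity $(-1)^r$, which I would prove first by induction on $r$.
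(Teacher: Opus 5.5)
Your proposal is correct and follows essentially the same route as the paper. You Taylor-expand $G_{n,\nu}$ at $\mathsf q+\Delta_n(\mathsf q)$ and match powers of $n^{-1/2}$, so that each $\xi_s$ is fixed recursively by the linear term $\psi_B\xi_s$. You then get parity from the index-sum count over the terms $\Psi_B^{(r)}\prod\xi_{j_i}$ and $\zeta_{k,[r]}\prod\xi_{j_i}$, and bound the Lagrange remainder on compacts.

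Two of your steps make precise points the paper leaves implicit: the explicit uniqueness argument (multiply the difference by $n^{j/2}$ and use $\psi_B>0$), and the check that $\psi_B^{(r)}/\psi_B\in\mathcal Q_B$, so dividing by $\psi_B$ stays inside the class.
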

We next introduce the resampled analogues. For $0<\alpha<1$, define the resampled quantiles by
\begin{align*}
\hat u_\alpha^{\mathrm{up}}
\coloneqq
\inf\{\mathsf q\in \R:\PP(T^*\le \mathsf q \mid \mathcal X)\ge \alpha\}
\qquad\text{and}\qquad
\hat u_\alpha^{\mathrm{ts}}
\coloneqq
\inf\{\mathsf q\ge 0:\PP(\abs{T^*}\le \mathsf q\mid \mathcal X)\ge \alpha\}.
\end{align*}
Let
\[
\hat\xi_1^{\mathrm{up}},\dots,\hat\xi_\nu^{\mathrm{up}}\in\QB
\qquad\text{and}\qquad
\hat\xi_k^{\mathrm{ts}}\in\QB,\quad 2\le k\le \nu,\ k\text{ even},
\]
denote the formal Cornish--Fisher quasi-polynomials obtained from
\[
\xi_1^{\mathrm{up}},\dots,\xi_\nu^{\mathrm{up}}
\qquad\text{and}\qquad
\xi_k^{\mathrm{ts}},\quad 2\le k\le \nu,\ k\text{ even},
\]
by replacing, in every coefficient, the population moments by the corresponding sample moments. Equivalently, the coefficients of $\hat\xi_k$ are polynomials in sample moments up to order $k+2$, with exactly the same functional form as in the population version. The corresponding formal resampled approximants are denoted by $\hat u_{\nu,\alpha}^{\mathrm{up}}$ and $\hat u_{\nu,\alpha}^{\mathrm{ts}}$, defined in the same way as in \eqref{def cfup} and \eqref{def cfts}, with $\xi_k$ replaced by $\hat\xi_k$.

The uniform approximation error of the Cornish--Fisher expansions with quasi-polynomials is given by the following two theorems.

\begin{theorem}[Uniform Cornish--Fisher expansion]\label{thm:cf uniform}
Let $\varepsilon\in(0,1/2)$. Under Assumption \ref{assume 100}, for any fixed $B\geq1$, there exists a finite constant $\mathsf C_{\mathrm{c.f.}}^{\mathrm{up}}$ such that
\begin{align*}
    \sup_{\varepsilon\leq\alpha\leq 1-\varepsilon}
    \left|u_{\nu,\alpha}^{\mathrm{up}}-u_\alpha^{\mathrm{up}}\right|
    \le
    \mathsf C_{\mathrm{c.f.}}^{\mathrm{up}} n^{-(\nu+1)/2}.
\end{align*}
When $\nu$ is odd and $B\geq2$ is fixed, there exists a finite constant $\mathsf C_{\mathrm{c.f.}}^{\mathrm{ts}}$ such that
\begin{align*}
    \sup_{\varepsilon\leq\alpha\leq 1-\varepsilon}
    \left|u_{\nu,\alpha}^{\mathrm{ts}}-u_\alpha^{\mathrm{ts}}\right|
    \le
    \mathsf C_{\mathrm{c.f.}}^{\mathrm{ts}} n^{-(\nu+1)/2}.
\end{align*}
The specific forms of the constants are given in \eqref{cf constants1} and \eqref{cf constants2}.
\end{theorem}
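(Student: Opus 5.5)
The plan is a standard inversion argument that combines the uniform Edgeworth bound of Assumption~\ref{assume 100} with the formal Cornish--Fisher identity of Proposition~\ref{prop:CF-existence}. We treat the one-sided case first; the two-sided case has the same structure. Fix $\varepsilon$, set $\mathsf q=\Psi_B^{-1}(\alpha)$, and note that as $\alpha$ ranges over $[\varepsilon,1-\varepsilon]$, $\mathsf q$ stays in the compact set $K=[\Psi_B^{-1}(\varepsilon),\Psi_B^{-1}(1-\varepsilon)]$. Since each $\xi_k^{\mathrm{up}}\in\QB$ is continuous, the correction $\sum_k n^{-k/2}\xi_k^{\mathrm{up}}(\mathsf q)$ is $O(n^{-1/2})$ uniformly on $K$. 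Hence $u_{\nu,\alpha}^{\mathrm{up}}$ lies in a slightly enlarged compact set $K'$ for all large $n$; small $n$ are absorbed into the constant.

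\emph{Step 1 (the target cdf is close to $\alpha$ at $u_{\nu,\alpha}$).} We combine the uniform Edgeworth bound with \eqref{eq:formal-up} to get
\[
\bigl|\PP(T\le u_{\nu,\alpha}^{\mathrm{up}})-\alpha\bigr|\le (\mathsf C_{\mathrm{up}}+\mathsf A_K^{\mathrm{up}})n^{-(\nu+1)/2}.
\]
For a perturbation $\delta$, we then compare $\PP(T\le u_{\nu,\alpha}^{\mathrm{up}}\pm\delta)$ with $G_{n,\nu}^{\mathrm{up}}(u_{\nu,\alpha}^{\mathrm{up}}\pm\delta)$, again using the uniform Edgeworth bound.

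\emph{Step 2 (local slope of $G$).} On $K''$, a fixed neighborhood of $K'$, we have $G_{n,\nu}^{\mathrm{up}}{}'=\psi_B+n^{-1/2}(\eta_1\psi_B)'+\cdots$. Because $\psi_B$ is bounded below by some $c_K>0$ on $K''$ and the quasi-polynomial corrections are bounded there by Lemma~\ref{lem:tqp-closure}, we get $G'\ge c_K/2$ on $K''$ for large $n$. Taking $\delta=Dn^{-(\nu+1)/2}$ with $D=4(\mathsf C_{\mathrm{up}}+\mathsf A_K^{\mathrm{up}})/c_K+1$, the mean value theorem gives
\[
\PP(T\le u_{\nu,\alpha}^{\mathrm{up}}+\delta)\ge \alpha+\bigl(c_KD/2-2\mathsf C_{\mathrm{up}}-\mathsf A_K^{\mathrm{up}}\bigr)n^{-(\nu+1)/2}>\alpha .
\]
Similarly, $\PP(T\le u_{\nu,\alpha}^{\mathrm{up}}-\delta)<\alpha$. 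By the definition of $u_\alpha^{\mathrm{up}}$ as a generalized inverse, this yields $|u_\alpha^{\mathrm{up}}-u_{\nu,\alpha}^{\mathrm{up}}|\le\delta$ uniformly in $\alpha$, which gives the constant \eqref{cf constants1}. For the strict inequality on the lower side, the infimum definition means we only need $\PP(T\le x)<\alpha$ for $x<u-\delta$, and this follows from monotonicity.

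\emph{Two-sided case.} Here $\mathsf q=\Psi_B^{-1}((1+\alpha)/2)$ ranges over $\tilde K=[\Psi_B^{-1}((1+\varepsilon)/2),\Psi_B^{-1}(1-\varepsilon/2)]\subset(0,\infty)$, which is bounded away from $0$. So we may take $\varepsilon_0=\frac12\min\tilde K$ in the two-sided bound of Assumption~\ref{assume 100} and \eqref{eq:formal-ts}; the restriction $\mathsf q\ge\varepsilon_0$ then causes no trouble. The slope of $G_{n,\nu}^{\mathrm{ts}}$ is $2\psi_B+O(n^{-1})\ge c$ on a neighborhood of $\tilde K$ that stays inside $(0,\infty)$. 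We then repeat the bracketing argument for the generalized inverse defining $u_\alpha^{\mathrm{ts}}$ with $\delta=Dn^{-(\nu+1)/2}$.

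\emph{Main obstacle.} The delicate point is making every constant uniform in $\alpha$. This requires verifying that the perturbed points $u_{\nu,\alpha}\pm\delta$ remain inside a single compact set on which both the slope lower bound and the $\varepsilon_0$ restriction hold. Only finitely many $n$ fall outside this regime, and they are handled by enlarging the constant, using the fact that all the quantiles involved are bounded for those $n$. Monotonicity of the cdf of $T$ replaces any need for continuity of $T$'s distribution.
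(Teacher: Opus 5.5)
Your argument is correct and uses the same ingredients as the paper's proof: the uniform $t$-Edgeworth bound of Assumption~\ref{assume 100}, the formal identities \eqref{eq:formal-up} and \eqref{eq:formal-ts}, a lower bound on $(G_{n,\nu})'$ over a compact set, and bracketing of the generalized inverse using only monotonicity of the cdf of $T$. The one difference is organizational. The paper routes through the intermediate root $\widetilde u_{\nu,\alpha}$ of $G_{n,\nu}(\cdot)=\alpha$, which needs the existence part of Lemma~\ref{lm inter} and the $\varepsilon,2\varepsilon,3\varepsilon$ nesting of Remark~\ref{rm4}; you bracket $u_\alpha$ directly around $u_{\nu,\alpha}$, which is slightly leaner and gives a constant of the same form as, though not literally equal to, \eqref{cf constants1}.
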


\begin{theorem}[Resampled uniform Cornish--Fisher expansion]\label{thm:resample cf uniform}
Let $\varepsilon\in(0,1/2)$. Under Assumption \ref{assume 200}, for any fixed $B\geq1$, there exists a finite constant $\hat{\mathsf C}_{\mathrm{c.f.}}^{\mathrm{up}}$ such that
\begin{align*}
    \PP\left(
    \sup_{\varepsilon\leq\alpha\leq 1-\varepsilon}
    \left|\hat u_{\nu,\alpha}^{\mathrm{up}}-\hat u_\alpha^{\mathrm{up}}\right|
    \ge
    \hat{\mathsf C}_{\mathrm{c.f.}}^{\mathrm{up}} n^{-(\nu+1)/2}
    \right)
    =
    O(n^{-\lambda}).
\end{align*}
When $\nu$ is odd and $B\geq2$ is fixed, there exists a finite constant $\hat{\mathsf C}_{\mathrm{c.f.}}^{\mathrm{ts}}$ such that
\begin{align*}
    \PP\left(
    \sup_{\varepsilon\leq\alpha\leq 1-\varepsilon}
    \left|\hat u_{\nu,\alpha}^{\mathrm{ts}}-\hat u_\alpha^{\mathrm{ts}}\right|
    \ge
    \hat{\mathsf C}_{\mathrm{c.f.}}^{\mathrm{ts}} n^{-(\nu+1)/2}
    \right)
    =
    O(n^{-\lambda}).
\end{align*}
\end{theorem}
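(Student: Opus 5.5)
The resampled statement follows from the population argument for Theorem~\ref{thm:cf uniform}, run conditionally on $\mathcal X$ and restricted to a good event of probability $1-O(n^{-\lambda})$ on which all constants are deterministic. I treat the one-sided case; the two-sided case uses the same steps with $\mathsf q\mapsto 2\Psi_B(\mathsf q)-1$, the restriction $\mathsf q\ge\varepsilon_0$, and only even-index corrections.

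First I fix the good event $\mathcal G'$. It is the intersection of three events. The first is the event in Assumption~\ref{assume 200} on which $\sup_{\mathsf q}|\PP(T^*\le \mathsf q\mid\mathcal X)-\hat G^{\mathrm{up}}_{n,\nu}(\mathsf q)|\le \hat{\mathsf C}_{\mathrm{up}}n^{-(\nu+1)/2}$. The second is the moment event $\mathcal E_{\mathrm m}$ of \eqref{good event emp moment}. The third is an event on which every sample-moment polynomial coefficient $\hat c_{[k,m,\ell]}$ lies within a fixed tolerance of its population value; Markov and Marcinkiewicz--Zygmund give it probability $1-O(n^{-\lambda})$, exactly as in Lemma~\ref{emevent1}.

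On $\mathcal G'$ the quasi-polynomials $\hat\eta_k$ and $\hat\xi_k$, together with their first two derivatives, are bounded on compacts by deterministic constants. This uses Lemma~\ref{lem:tqp-closure} and the fact that $\mathsf q^m/(B+\mathsf q^2)^{\ell/2}$ is smooth with no singularity.

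Second, I apply the deterministic formal inversion of Proposition~\ref{prop:CF-existence} to $\hat G^{\mathrm{up}}_{n,\nu}$. This is a purely algebraic identity in the coefficients, so the same $\hat\xi_k$ solve it. On $\mathcal G'$ it gives $\sup_{\mathsf q\in K}|\hat G^{\mathrm{up}}_{n,\nu}(\hat u_{\nu,\alpha}^{\mathrm{up}})-\Psi_B(\mathsf q)|\le \hat{\mathsf A}_K n^{-(\nu+1)/2}$ with a deterministic $\hat{\mathsf A}_K$. To see this, I Taylor expand $\hat G$ around $\mathsf q$ to order $\nu$; the orders $n^{-k/2}$ with $k\le\nu$ cancel by construction. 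The remainder involves derivatives of $\Psi_B$ and $\hat\eta_k\psi_B$ on a slightly enlarged compact $K'$, and these are bounded on $\mathcal G'$. Here $K=[\Psi_B^{-1}(\varepsilon),\Psi_B^{-1}(1-\varepsilon)]$.

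Third, combining with the conditional Edgeworth bound, I get $|\PP(T^*\le \hat u^{\mathrm{up}}_{\nu,\alpha}\mid\mathcal X)-\alpha|\le c\,n^{-(\nu+1)/2}$ uniformly in $\alpha$. I then convert this into a quantile statement. On $\mathcal G'$, and for $n$ large, the derivative of $\hat G$ is at least $\tfrac12\min_{K'}\psi_B>0$ on $K'$. Consider $\mathsf q_\pm=\hat u^{\mathrm{up}}_{\nu,\alpha}\pm D n^{-(\nu+1)/2}$ with $D$ large. Monotone-type bounds give $\PP(T^*\le \mathsf q_-\mid\mathcal X)<\alpha\le \PP(T^*\le \mathsf q_+\mid\mathcal X)$. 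Since $\hat G$ is a good approximation and the true conditional cdf is monotone, the definition of $\hat u_\alpha^{\mathrm{up}}$ as an infimum then places $\hat u_\alpha^{\mathrm{up}}\in[\mathsf q_-,\mathsf q_+]$. The probability of the complement of $\mathcal G'$ is $O(n^{-\lambda})$, which finishes the proof.

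The main obstacle is the lattice/discreteness of the conditional law of $T^*$: the resample distribution is discrete. The inversion step needs only the $\sup$-norm approximation by the smooth $\hat G$, not continuity of the true cdf, so I handle this by working only with the smooth approximant plus the monotonicity of $\PP(T^*\le\cdot\mid\mathcal X)$.

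A second issue is keeping every constant deterministic, since $\mathsf q$ ranges over a compact set but the coefficients are random. I settle this with the coefficient-tolerance event in $\mathcal G'$. In the two-sided case I also need $K\subset(0,\infty)$, which holds because $\alpha\ge\varepsilon$ forces $\mathsf q\ge\Psi_B^{-1}((1+\varepsilon)/2)>0$, and this lets me choose $\varepsilon_0$ accordingly.
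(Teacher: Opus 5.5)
Your argument is correct and is essentially the paper's route: restrict to a high-probability event (the paper uses $\mathcal E_{\mathrm m}$ together with the event in Assumption~\ref{assume 200}) on which the random coefficients are controlled, so that $\hat{\mathsf A}_K$ and the derivative lower bound for $\hat G_{n,\nu}^{\mathrm{up}}$ are deterministic, and then rerun the population proof of Theorem~\ref{thm:cf uniform}. The only difference is cosmetic: you sandwich $\hat u_\alpha^{\mathrm{up}}$ directly between $\hat u_{\nu,\alpha}^{\mathrm{up}}\pm D n^{-(\nu+1)/2}$, whereas the paper passes through the intermediate root of $\hat G_{n,\nu}^{\mathrm{up}}=\alpha$ from Lemma~\ref{lm inter}, so you avoid the existence/uniqueness step; note also that your Taylor remainder needs derivatives up to order $\nu+1$, not just the first two, though the same good event controls them.
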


\subsection{Proof of the One-Sided Case in Proposition \ref{prop:CF-existence}}\label{sec51}
~

\medskip
\noindent
\emph{Notations.}
Define
\[
\zeta_k^{\mathrm{up}}(\mathsf q)\coloneqq \eta_k^{\mathrm{up}}(\mathsf q)\psiB(\mathsf q).
\]
By Lemma \ref{lem:tqp-closure}, each $\zeta_k^{\mathrm{up}}\in\QB$, and so do its derivatives:
\[
\zeta_{k,[r]}^{\mathrm{up}}(\mathsf q)\coloneqq\left(\frac{d}{d \mathsf q}\right)^r \zeta_k^{\mathrm{up}}(\mathsf q) \in \QB
\]
Define
\[
\Delta_n^{\mathrm{up}}(\mathsf q)
\coloneqq 
\sum_{k=1}^{\nu} n^{-k/2}\xi_k^{\mathrm{up}}(\mathsf q).
\]
Then $\Delta_n^{\mathrm{up}}(\mathsf q)=O(n^{-1/2})$ uniformly on $K$, since each $\xi_k^{\mathrm{up}}\in\QB$ is smooth.

\medskip
\noindent
\emph{Step 1: Taylor expansion.}
Recall that
\begin{align*}
    G_{n,\nu}^{\mathrm{up}}(\mathsf q)
\coloneqq
\PsiB(\mathsf q)+\sum_{k=1}^{\nu} n^{-k/2}\eta_k^{\mathrm{up}}(\mathsf q)\psiB(\mathsf q)
\end{align*}
The Taylor expansion of
\begin{align*}
& G_{n,\nu}^{\mathrm{up}}(\mathsf q+\Delta_n^{\mathrm{up}}(\mathsf q))-\PsiB(\mathsf q)\\
=~&\PsiB(\mathsf q+\Delta_n^{\mathrm{up}}(\mathsf q))-\PsiB(\mathsf q)
+\sum_{k=1}^{\nu}n^{-k/2}\zeta_k^{\mathrm{up}}(\mathsf q+\Delta_n^{\mathrm{up}}(\mathsf q))\\
=~&
\sum_{r=1}^{\nu}\frac{\PsiB^{(r)}(\mathsf q)}{r!}\bigl(\Delta_n^{\mathrm{up}}(\mathsf q)\bigr)^r
+\sum_{k=1}^{\nu}n^{-k/2}\sum_{r=0}^{\nu-k}\frac{\zeta_{k,[r]}^{\mathrm{up}}(\mathsf q)}{r!}\bigl(\Delta_n^{\mathrm{up}}(\mathsf q)\bigr)^r + R_{[12]}.    
\end{align*}
Since all functions involved are infinitely smooth on the compact set $K$, we have
\begin{align*}
    \sup_{\mathsf q\in K}\bigl|\Delta_n^{\mathrm{up}}(\mathsf q)\bigr|=O(n^{-1/2}).
\end{align*}

\medskip
\noindent
\emph{Step 2: Bounding the remainder.}
For later use in proving the resampled version, we record the more explicit bound
\begin{equation}\label{eq18}
\begin{split}
      \sup_{\mathsf q\in K}\bigl|\Delta_n^{\mathrm{up}}(\mathsf q)\bigr|\leq & \sum_{k=1}^{\nu} n^{-k/2}\sup_{\mathsf q\in K} \abs{\xi_k^{\mathrm{up}}(\mathsf q)}
    \leq  \sum_{k=1}^{\nu}\sum_{[m,\ell]} n^{-k/2}c_{[k,m,\ell]}\times \sup_{\mathsf q\in K}\left|{{\mathsf q^m}}{(B+\mathsf q^2)^{-\ell/2}}\right|\\
    = & n^{-1/2}\times C_{[B,K,\nu,\{\E\|\mathbf X_1\|^k\}_{k=1,\ldots,l}]}.
\end{split}
\end{equation}
The remainder term $R_{[12]}$ consists of the remainder $O(\bigl|\Delta_n^{\mathrm{up}}\bigr|^{\nu+1})$ from $\PsiB(\mathsf q+\Delta_n^{\mathrm{up}}(\mathsf q))$ as well as $n^{-k/2}\times O(\bigl|\Delta_n^{\mathrm{up}}\bigr|^{\nu+1-k})$ from $\zeta_{k}$. By the Lagrange remainder formula,
\begin{align*}
    \abs{R_{[12]}}\leq \sum_{k=1}^\nu n^{-k/2}\times O(\bigl|\Delta_n^{\mathrm{up}}\bigr|^{\nu+1-k})\times \sup_{\mathsf q\in K}\left|\zeta_{k,[\nu-k+1]}^{\mathrm{up}}(x)\right|
\end{align*}
where $\mathsf q<x<\mathsf q+\Delta_n^{\mathrm{up}}(\mathsf q)$. Because $\Delta_n^{\mathrm{up}}(\mathsf q)=O(n^{-1/2})$ uniformly on $K$, we have $x\in K$ except for finitely many terms. Therefore,
\[
\sup_{\mathsf q\in K}\left|\zeta_{k,[\nu-k+1]}^{\mathrm{up}}(x)\right|\leq \sup_{x\in K}\left|\zeta_{k,[\nu-k+1]}^{\mathrm{up}}(x)\right|,
\]
which is another constant of the form $C_{[B,K,\nu,\{\E\|\mathbf X_1\|^k\}_{k=1,\ldots,l}]}$. Hence,
\begin{align*}
    \abs{R_{[12]}}\leq& \sum_{k=1}^\nu n^{-k/2}\times O(n^{-{(\nu+1-k})/2})\times C_{[B,K,\nu,\{\E\|\mathbf X_1\|^k\}_{k=1,\ldots,l}]}\\
    = & n^{-{(\nu+1})/2}\times C_{[B,K,\nu,\{\E\|\mathbf X_1\|^k\}_{k=1,\ldots,l}]}
\end{align*}

\medskip
\noindent
\emph{Step 3: Comparison of each order.}
In order for \eqref{eq:formal-up} to hold, we must have
\begin{align*}
    &\sum_{r=1}^{\nu}\frac{\PsiB^{(r)}(\mathsf q)}{r!}\bigl(\Delta_n^{\mathrm{up}}(\mathsf q)\bigr)^r
 +\sum_{k=1}^{\nu}n^{-k/2}\sum_{r=0}^{\nu-k}\frac{\zeta_{k,[r]}^{\mathrm{up}}(\mathsf q)}{r!}\bigl(\Delta_n^{\mathrm{up}}(\mathsf q)\bigr)^r=O(n^{-{(\nu+1})/2})\\
 \iff\quad & \psiB(\mathsf q)\bigl(\Delta_n^{\mathrm{up}}(\mathsf q)\bigr)=-\sum_{r\geq 2}^{\nu}\frac{\PsiB^{(r)}(\mathsf q)}{r!}\bigl(\Delta_n^{\mathrm{up}}(\mathsf q)\bigr)^r
 -\sum_{k=1}^{\nu}n^{-k/2}\sum_{r=0}^{\nu-k}\frac{\zeta_{k,[r]}^{\mathrm{up}}(\mathsf q)}{r!}\bigl(\Delta_n^{\mathrm{up}}(\mathsf q)\bigr)^r\\
 &+O(n^{-{(\nu+1})/2})
\end{align*}
after substituting $\Delta_n^{\mathrm{up}}(\mathsf q)
\coloneqq
\sum_{k=1}^{\nu} n^{-k/2}\xi_k^{\mathrm{up}}(\mathsf q)$. Now fix $1\le s\le \nu$, and collect the coefficient of $n^{-s/2}$. The only linear term involving $n^{-s/2}\xi_s^{\mathrm{up}}(\cdot)\psi_B(\cdot)$ comes from LHS when $r=1$. All other terms of order $n^{-s/2}$ involve $\eta_k^{\mathrm{up}}$, $\zeta_{k,[r]}^{\mathrm{up}}$, $\PsiB^{(r)}$, and
\begin{align*}
\xi_1^{\mathrm{up}},\dots,\xi_{s-1}^{\mathrm{up}}.
\end{align*}
It is impossible for $\xi_{s'}^{\mathrm{up}}$ with $s'\geq s$ to appear in the coefficient of order $n^{-s/2}$, unless $s'=s$ through the linear term. Therefore, each $\xi_k^{\mathrm{up}}$ can be solved inductively. By Lemma \ref{lem:tqp-closure}, $\QB$ is closed under differentiation and multiplication, though not automatically under addition. Thus, to conclude that $\xi_s^{\mathrm{up}}\in \QB$, it remains to verify the required parity property.

\medskip
\noindent
\emph{Step 4: Preservation of parity.}
We prove the parity statement by induction. The case $s=1$ is immediate from $\xi_1^{\mathrm{up}}=-\eta_1^{\mathrm{up}}$. Assume that, for all $k<s$, $\xi_k^{\mathrm{up}}$ is even when $k$ is odd and odd when $k$ is even. The relevant terms in the coefficient of order $O(n^{-s/2})$ are
\begin{align*}
    f_1(\mathsf q)\coloneqq&\PsiB^{(r)}(\mathsf q)\prod_{i=1}^r \xi_{j_i}^{\mathrm{up}}(\mathsf q),
\qquad j_1+\cdots+j_r=s,\\
f_2(\mathsf q)\coloneqq&\zeta_{k,[r]}^{\mathrm{up}}(\mathsf q)\prod_{i=1}^r \xi_{j_i}^{\mathrm{up}}(\mathsf q),
\qquad
k+j_1+\cdots+j_r=s,
\end{align*}
We note that $\PsiB^{(r)}$ has parity $r+1$ modulo $2$, while $\zeta_{k,[r]}^{\mathrm{up}}$ has parity $k+1+r$ modulo $2$. Therefore, under the induction hypothesis, the parity of $f_1$ is
\begin{align*}
    r+1+\sum_{i=1}^r(j_i+1)= 2r+s+1= s+1 \pmod 2
\end{align*}
and the parity of $f_2$ is
\begin{align*}
    k+1+r+\sum_{i=1}^r(j_i+1) = s+1+2r = s+1 \pmod 2
\end{align*}
This completes the proof of the one-sided case.

\subsection{Proof of the Two-Sided Case in Proposition \ref{prop:CF-existence}}\label{sec52}
The argument is largely analogous to that of the one-sided case. Hence we only provide a concise explanation for each step. Throughout this proof, $\nu$ is an odd integer.

\medskip
\noindent
\emph{Notations.}
Define, in general, that
\begin{align*}
    \zeta_k^{\mathrm{ts}}(\mathsf q)\coloneqq&\eta_k^{\mathrm{ts}}(\mathsf q)\psiB(\mathsf q)\\
    \zeta_{k,[r]}^{\mathrm{ts}}(\mathsf q)\coloneqq&\left(\frac{d}{d \mathsf q}\right)^r \zeta_k^{\mathrm{ts}}(\mathsf q) \in \QB\\
    \Delta_n^{\mathrm{ts}}(\mathsf q)
\coloneqq &
\sum_{\substack{2\le k\le \nu\\ k\text{ even}}} n^{-k/2}\xi_k^{\mathrm{ts}}(\mathsf q).
\end{align*}
Then $\Delta_n^{\mathrm{ts}}(\mathsf q)=O(n^{-1})$ uniformly on $K$, by the same smooth-on-compact argument.

\medskip
\noindent
\emph{Step 1: Taylor expansion.}
The corresponding Taylor expansion is
\begin{align*}
& G_{n,\nu}^{\mathrm{ts}}(\mathsf q+\Delta_n^{\mathrm{ts}}(\mathsf q))-(2\PsiB(\mathsf q)-1)\\
=~&
2\sum_{r=1}^{(\nu-1)/2}\frac{\PsiB^{(r)}(\mathsf q)}{r!}\bigl(\Delta_n^{\mathrm{ts}}(\mathsf q)\bigr)^r
+\sum_{\substack{2\le k\le \nu\\ k\text{ even}}}n^{-k/2}\sum_{r=0}^{(\nu-k-1)/2}\frac{\zeta_{k,[r]}^{\mathrm{ts}}(\mathsf q)}{r!}\bigl(\Delta_n^{\mathrm{ts}}(\mathsf q)\bigr)^r + R_{[12]}.    
\end{align*}
where we have $\sup_{\mathsf q\in K}\bigl|\Delta_n^{\mathrm{ts}}(\mathsf q)\bigr|=O(n^{-1})$.

\medskip
\noindent
\emph{Step 2: Bounding the remainder.}
By an almost identical argument,
\begin{align*}
    \abs{R_{[12]}}\leq & 
  n^{-(\nu+1)/2}\times C_{[B,K,\nu,\{\E\|\mathbf X_1\|^k\}_{k=1,\ldots,l}]}
\end{align*}

\medskip
\noindent
\emph{Step 3: Comparison of each order.}
In order for \eqref{eq:formal-ts} to hold, we must have
\begin{align*}
    &2\sum_{r=1}^{(\nu-1)/2}\frac{\PsiB^{(r)}(\mathsf q)}{r!}\bigl(\Delta_n^{\mathrm{ts}}(\mathsf q)\bigr)^r
 +\sum_{\substack{2\le k\le \nu\\ k\text{ even}}}n^{-k/2}\sum_{r=0}^{(\nu-k-1)/2}\frac{\zeta_{k,[r]}^{\mathrm{ts}}(\mathsf q)}{r!}\bigl(\Delta_n^{\mathrm{ts}}(\mathsf q)\bigr)^r\\
 &=O(n^{-(\nu+1)/2})\\
 \iff\quad & 2\psiB(\mathsf q)\bigl(\Delta_n^{\mathrm{ts}}(\mathsf q)\bigr)=-2\sum_{r=2}^{(\nu-1)/2}\frac{\PsiB^{(r)}(\mathsf q)}{r!}\bigl(\Delta_n^{\mathrm{ts}}(\mathsf q)\bigr)^r\\
 &-\sum_{\substack{2\le k\le \nu\\ k\text{ even}}}n^{-k/2}\sum_{r=0}^{(\nu-k-1)/2}\frac{\zeta_{k,[r]}^{\mathrm{ts}}(\mathsf q)}{r!}\bigl(\Delta_n^{\mathrm{ts}}(\mathsf q)\bigr)^r+O(n^{-(\nu+1)/2})
\end{align*}
after substituting 
\begin{align*}
\Delta_n^{\mathrm{ts}}(\mathsf q)
\coloneqq
\sum_{\substack{2\le k\le \nu\\ k\text{ even}}} n^{-k/2}\xi_k^{\mathrm{ts}}(\mathsf q).   
\end{align*}
Again, fix an even integer $2\le s< \nu$ and collect the coefficient of $n^{-s/2}$. The only linear term involving $n^{-s/2}\xi_s^{\mathrm{ts}}(\cdot)\psi_B(\cdot)$ comes from the LHS when $r=1$. The other terms of order $n^{-s/2}$ contain $\eta_k^{\mathrm{ts}}$, $\zeta_{k,[r]}^{\mathrm{ts}}$, $\PsiB^{(r)}$, and
\begin{align*}
\xi_2^{\mathrm{ts}},\xi_4^{\mathrm{ts}},\dots,\xi_{s-2}^{\mathrm{ts}}.
\end{align*}
Therefore, each $\xi_k^{\mathrm{ts}}$ can be solved inductively.

\medskip
\noindent
\emph{Step 4: Preservation of parity.}
We use a similar induction to prove the parity statement. The case $s=2$ is immediate from
\[
\xi_2^{\mathrm{ts}}=-\frac{1}{2}\eta_2^{\mathrm{ts}}.
\]
Assume the claim that $\xi_k^{\mathrm{ts}}$ is odd holds for all even integers $2\le k<s$. The relevant terms used to calculate $\xi_s^{\mathrm{ts}}$ with order $O(n^{-s/2})$ are
\begin{align*}
    f_1(\mathsf q)\coloneqq&\PsiB^{(r)}(\mathsf q)\prod_{i=1}^r \xi_{j_i}^{\mathrm{ts}}(\mathsf q),
\qquad j_1+\cdots+j_r=s,\\
f_2(\mathsf q)\coloneqq&\zeta_{k,[r]}^{\mathrm{ts}}(\mathsf q)\prod_{i=1}^r \xi_{j_i}^{\mathrm{ts}}(\mathsf q),
\qquad
k+j_1+\cdots+j_r=s,
\end{align*}
where all indices $j_i$ and $s$ are even. Again, since both $\PsiB^{(r)}$ and $\zeta_{k,[r]}^{\mathrm{ts}}$ have parity $r+1$ modulo 2, both $f_1$ and $f_2$ have parity
\begin{align*}
    r+1+\sum_{i=1}^r1= 2r+1\equiv 1 \pmod 2.
\end{align*}
Hence $\xi_s^\mathrm{ts}$ is always odd. This completes the two-sided case.
\subsection{Auxiliary Lemmas for Theorem \ref{thm:cf uniform}}\label{sec5.3}
Define
\begin{align*}
F_n^{\mathrm{up}}(\mathsf q)\coloneqq \PP(T\le \mathsf q)
\qquad\text{and}\qquad
F_n^{\mathrm{ts}}(\mathsf q)\coloneqq \PP(\abs{T}\le \mathsf q).
\end{align*}
Recall that the true quantiles are defined as
\begin{align*}
     u_\alpha^{\mathrm{up}}
\coloneqq&
\inf\{\mathsf q\in \R:F_n^{\mathrm{up}}(\mathsf q)\ge \alpha\}\qquad\text{and}\qquad
u_\alpha^{\mathrm{ts}}
\coloneqq 
\inf\{\mathsf q\ge 0:F_n^{\mathrm{ts}}(\mathsf q)\ge \alpha\}.
\end{align*}
for $0<\alpha<1$, where $F_n^{\mathrm{up}}(\mathsf q)\coloneqq \PP(T\le \mathsf q)$ and $F_n^{\mathrm{ts}}(\mathsf q)\coloneqq \PP(\abs{T}\le \mathsf q)$. Recall that $I_\varepsilon\coloneqq [\varepsilon,1-\varepsilon]$ for $\varepsilon\in(0,1/2)$ and define the corresponding compact interval $K^\mathrm{up}_\varepsilon\subset\mathbb{R}$ by
\begin{align*}
    K^\mathrm{up}_\varepsilon\coloneqq [\PsiB^{-1}(\varepsilon),\PsiB^{-1}(1-\varepsilon)].
\end{align*}
For $\varepsilon\in(0,1/4)$, define the corresponding compact interval $K_{\varepsilon}^{\mathrm{ts}}\subset\mathbb{R}_+$ by
\begin{align*}
    K_{\varepsilon}^{\mathrm{ts}}
\coloneqq
\left[
\PsiB^{-1}\!\left(\frac12+\frac{\varepsilon}{2}\right),
\PsiB^{-1}\!\left(1-\frac{\varepsilon}{2}\right)
\right].
\end{align*}
In order to show that
\begin{align*}
    \sup_{\alpha\in I_\varepsilon}\left|u_\alpha^{\cdots}-u_{\nu,\alpha}^{\cdots}\right|
    \le O(n^{-(\nu+1)/2}),
\end{align*}
we shall employ intermediate quantities $\widetilde u_{\nu,\alpha}^{\mathrm{up}}$ and $\widetilde u_{\nu,\alpha}^{\mathrm{ts}}$, which are defined as the quantiles of the finite expansion approximants defined in \eqref{eq:Gts}, i.e.
\begin{align*}
G_{n,\nu}^{\mathrm{up}}(\mathsf q)
&\coloneqq
\PsiB(\mathsf q)+\sum_{k=1}^{\nu} n^{-k/2}\eta_k^{\mathrm{up}}(\mathsf q)\psiB(\mathsf q),
\\
G_{n,\nu}^{\mathrm{ts}}(\mathsf q)
&\coloneqq
2\PsiB(\mathsf q)-1+\sum_{\substack{1\le k\le \nu\\ k\text{ even}}} n^{-k/2}\eta_k^{\mathrm{ts}}(\mathsf q)\psiB(\mathsf q).
\end{align*}
The idea is to show both
\begin{align*}
    \sup_{\alpha\in I_\varepsilon}\left|u_\alpha^{\cdots}-\widetilde u_{\nu,\alpha}^{\cdots}\right|
    \le O(n^{-(\nu+1)/2})
    \qquad\text{and}\qquad
    \sup_{\alpha\in I_\varepsilon}\left|\widetilde u_{\nu,\alpha}^{\cdots}-u_{\nu,\alpha}^{\cdots}\right|
    \le O(n^{-(\nu+1)/2}).
\end{align*}
However, these two intermediate quantiles are not automatically well defined, since the functions $G_{n,\nu}^{\mathrm{up}}$ and $G_{n,\nu}^{\mathrm{ts}}$ are not genuine distribution functions. The next lemma addresses this issue.

\begin{lemma}[Well-definedness of the intermediate quantiles]\label{lm inter}
For any fixed $B\geq1$ in part~(i) and any fixed $B\geq2$ in part~(ii), for every $\varepsilon\in(0,1/6)$, there exists a constant $n_0(\varepsilon)\in\mathbb N$ such that, for all $n\ge n_0(\varepsilon)$, the following hold.
\begin{itemize}
    \item[(i)] The function $G_{n,\nu}^{\mathrm{up}}$ is strictly increasing on $K^\mathrm{up}_{\varepsilon}$ and satisfies
    \begin{align*}
       \bigl(G_{n,\nu}^{\mathrm{up}}\bigr)'(\mathsf q)\ge \frac12\inf_{\mathsf q\in K^\mathrm{up}_{\varepsilon}}\psiB(\mathsf q).
    \end{align*}
    Hence, for every $\alpha\in I_{3\varepsilon}$, there exists a unique $\widetilde u_{\nu,\alpha}^{\mathrm{up}}\in K_{2\varepsilon}^{\mathrm{up}}$ such that
    \begin{align}\label{eq def interup}
         G_{n,\nu}^{\mathrm{up}}(\widetilde u_{\nu,\alpha}^{\mathrm{up}})=\alpha.
    \end{align}
    Equation~\eqref{eq def interup} therefore defines $\widetilde u_{\nu,\alpha}^{\mathrm{up}}$ for all sufficiently large $n$.
    \item[(ii)] The function $G_{n,\nu}^{\mathrm{ts}}$ is strictly increasing on $K_{\varepsilon}^{\mathrm{ts}}$ and satisfies
    \begin{align*}
    \bigl(G_{n,\nu}^{\mathrm{ts}}\bigr)'(\mathsf q)
    \ge
    \inf_{\mathsf q\in K_{\varepsilon}^{\mathrm{ts}}}\psiB(\mathsf q).
    \end{align*}
    Hence, for every $\alpha\in I_{3\varepsilon}$, there exists a unique $\widetilde u_{\nu,\alpha}^{\mathrm{ts}}\in K_{2\varepsilon}^{\mathrm{ts}}$ such that
    \begin{align}\label{eq:def-implicit-ts-lemma}
    G_{n,\nu}^{\mathrm{ts}}(\widetilde u_{\nu,\alpha}^{\mathrm{ts}})=\alpha.
    \end{align}
    Equation~\eqref{eq:def-implicit-ts-lemma} therefore defines $\widetilde u_{\nu,\alpha}^{\mathrm{ts}}$ for all sufficiently large $n$.
\end{itemize}
\end{lemma}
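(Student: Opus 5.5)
The plan is a perturbation argument around the limiting Student's $t$ law. We use that $G_{n,\nu}^{\mathrm{up}}$ and $G_{n,\nu}^{\mathrm{ts}}$ differ from $\Psi_B$ and $2\Psi_B-1$ by $O(n^{-1/2})$ and $O(n^{-1})$ corrections. These corrections hold uniformly on compacts, in both value and first derivative. Strict monotonicity and the intermediate value theorem then produce the unique root. Throughout, $\nu$ and $B$ are fixed, and the coefficients $c_{[k,m,\ell]}$ of the $\eta_k$ are deterministic, since they are polynomials in population moments.

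\emph{Step 1 (derivative bound).} Differentiate to get
\[
\bigl(G_{n,\nu}^{\mathrm{up}}\bigr)'(\mathsf q)=\psiB(\mathsf q)+\sum_{k=1}^\nu n^{-k/2}\bigl(\eta_k^{\mathrm{up}}\psiB\bigr)'(\mathsf q).
\]
By Lemma~\ref{lem:tqp-closure}, each $(\eta_k^{\mathrm{up}}\psiB)'$ is smooth, hence bounded on the compact interval $K^{\mathrm{up}}_{\varepsilon}$ by some constant $D_\varepsilon$. Also $m_\varepsilon:=\inf_{K^{\mathrm{up}}_\varepsilon}\psiB>0$, since $\psiB$ is positive and continuous. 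Choose $n_0$ so that $\nu D_\varepsilon n_0^{-1/2}\le m_\varepsilon/2$. This gives $(G_{n,\nu}^{\mathrm{up}})'\ge m_\varepsilon/2$ on $K^{\mathrm{up}}_\varepsilon$, so $G_{n,\nu}^{\mathrm{up}}$ is strictly increasing there.

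The two-sided case is identical, with two differences. The leading derivative is $2\psiB$, and the corrections start at $n^{-1}$, so we may take $n$ large enough that the correction is at most $\inf_{K^{\mathrm{ts}}_\varepsilon}\psiB$. This yields the stated bound $\inf\psiB$. Here $K^{\mathrm{ts}}_\varepsilon\subset(0,\infty)$, and by construction $\varepsilon<1/6<1/4$, so the interval is nondegenerate. (The restriction $B\ge2$ is not needed for this lemma itself; it is inherited from where the expansion is valid.)

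\emph{Step 2 (existence via intermediate values).} We show that $G_{n,\nu}^{\mathrm{up}}$ brackets every $\alpha\in I_{3\varepsilon}$ on $K^{\mathrm{up}}_{2\varepsilon}\subset K^{\mathrm{up}}_\varepsilon$. At the endpoints,
\[
G_{n,\nu}^{\mathrm{up}}\bigl(\PsiB^{-1}(2\varepsilon)\bigr)=2\varepsilon+O(n^{-1/2}),\qquad G_{n,\nu}^{\mathrm{up}}\bigl(\PsiB^{-1}(1-2\varepsilon)\bigr)=1-2\varepsilon+O(n^{-1/2}),
\]
with the $O$ uniform on the compact set. Enlarge $n_0$ so that these errors are below $\varepsilon$. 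Then the endpoint values are $<3\varepsilon\le\alpha$ and $>1-3\varepsilon\ge\alpha$, respectively. Continuity and the intermediate value theorem give a root in $K^{\mathrm{up}}_{2\varepsilon}$, and strict monotonicity on the larger set $K^{\mathrm{up}}_\varepsilon$ gives uniqueness.

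For the two-sided case we need the endpoints of $K^{\mathrm{ts}}_{2\varepsilon}$, namely $\PsiB^{-1}(\tfrac12+\varepsilon)$ and $\PsiB^{-1}(1-\varepsilon)$. At these points $2\PsiB-1$ equals $2\varepsilon$ and $1-2\varepsilon$. These values bracket $I_{3\varepsilon}$ once the $O(n^{-1})$ error is below $\varepsilon$.

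\emph{Main obstacle.} The only delicate point is uniformity: $n_0$ must depend on $\varepsilon$ (and on $B,\nu$ and the moments) but not on $\alpha$. This holds because all the bounds are suprema over fixed compacts of smooth quasi-polynomials with deterministic coefficients. I would state $D_\varepsilon$ and the endpoint error constants explicitly, in the form $C_{[B,K,\nu,\{\E\|\mathbf X_1\|^k\}]}$ as in \eqref{eq18}, so that the same argument transfers to the resampled version on the good event.
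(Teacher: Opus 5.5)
Your proposal is correct and follows essentially the same argument as the paper. A uniform $O(n^{-1/2})$ bound (respectively $O(n^{-1})$ in the two-sided case) on the derivative of the correction terms over the compact $K_{\varepsilon}$ gives the derivative lower bound and hence strict monotonicity, and uniform closeness of $G_{n,\nu}$ to $\Psi_B$ (respectively $2\Psi_B-1$) at the endpoints of $K_{2\varepsilon}$, combined with the intermediate value theorem, gives existence and uniqueness of the root. Your explicit endpoint values $2\varepsilon$ and $1-2\varepsilon$ for the two-sided case fill in the step the paper calls ``entirely analogous.''
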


\begin{proof}
We first prove part (i), which concerns the upper-quantile case. Differentiating \eqref{eq:Gup}, we obtain
\[
\bigl(G_{n,\nu}^{\mathrm{up}}\bigr)'(\mathsf q)
=
\psiB(\mathsf q)+\sum_{k=1}^{\nu} n^{-k/2}\bigl(\eta_k^{\mathrm{up}}\psiB\bigr)'(\mathsf q).
\]
Since $K^\mathrm{up}_\varepsilon$ is compact and each $(\eta_k^{\mathrm{up}}\psiB)'$ is smooth, we have
\begin{align*}
    \sup_{\mathsf q\in K^\mathrm{up}_\varepsilon}
\left|
\sum_{k=1}^{\nu} n^{-k/2}\bigl(\eta_k^{\mathrm{up}}\psiB\bigr)'(\mathsf q)
\right|
\le n^{-1/2}\times
\sup_{\mathsf q\in K^\mathrm{up}_\varepsilon}
\left|
\sum_{k=1}^{\nu}\bigl(\eta_k^{\mathrm{up}}\psiB\bigr)'(\mathsf q)
\right|.
\end{align*}
According to Assumption \ref{assume 100}, the coefficients of $\eta_k^{\mathrm{up}}$ are polynomials in moments up to order $k$. Hence
\begin{align}\label{eq61}
    \sup_{\mathsf q\in K^\mathrm{up}_\varepsilon}
\left|
\sum_{k=1}^{\nu} n^{-k/2}\bigl(\eta_k^{\mathrm{up}}\psiB\bigr)'(\mathsf q)
\right|
\le n^{-1/2}\times C_{[K^\mathrm{up}_\varepsilon,B,\nu,\{\E\|\mathbf X_1\|^k\}_{k=1,\ldots,l}]}.
\end{align}
Since $\inf_{\mathsf q\in K^\mathrm{up}_\varepsilon}\psiB(\mathsf q)>0$, it follows that, for all sufficiently large $n$,
\begin{align*}
    \inf_{\mathsf q\in K^\mathrm{up}_\varepsilon}\bigl(G_{n,\nu}^{\mathrm{up}}\bigr)'(\mathsf q)
    \ge \frac12\inf_{\mathsf q\in K^\mathrm{up}_\varepsilon}\psiB(\mathsf q)
    = C_{[K^\mathrm{up}_\varepsilon,B]}.
\end{align*}
Therefore, $G_{n,\nu}^{\mathrm{up}}$ is strictly increasing on $K^\mathrm{up}_\varepsilon$. It remains to prove existence of a solution to $G_{n,\nu}^{\mathrm{up}}(\mathsf q)=\alpha$ for $\alpha\in I_{3\varepsilon}$. By the same smoothness-on-a-compact-set argument,
\begin{align*}
    G_{n,\nu}^{\mathrm{up}}(\mathsf q)=\PsiB(\mathsf q)+O(n^{-1/2})
\end{align*}
uniformly for $\mathsf q\in K^{\mathrm{up}}_{2\varepsilon}$. Hence, for all sufficiently large $n$, we have
\begin{align*}
    G_{n,\nu}^{\mathrm{up}}\bigl(\PsiB^{-1}(2\varepsilon)\bigr)<3\varepsilon
    \qquad\text{and}\qquad
    G_{n,\nu}^{\mathrm{up}}\bigl(\PsiB^{-1}(1-2\varepsilon)\bigr)>1-3\varepsilon.
\end{align*}
Since $G_{n,\nu}^{\mathrm{up}}$ is strictly increasing on $K^{\mathrm{up}}_{\varepsilon}$, it follows that, for every $\alpha\in I_{3\varepsilon}$, there exists a unique $\widetilde u_{\nu,\alpha}^{\mathrm{up}}\in K^{\mathrm{up}}_{2\varepsilon}$ such that $ G_{n,\nu}^{\mathrm{up}}(\widetilde u_{\nu,\alpha}^{\mathrm{up}})=\alpha$. As a conclusion, there exists $n_0(\varepsilon)\in\mathbb{N}$ both the monotonicity and the existence-uniqueness statements hold for all $n\ge n_0(\varepsilon)$.

For part (ii), the proof for the two-sided case is entirely analogous, up to notational changes. The only point requiring attention is that the derivative lower bound is
\[
\bigl(G_{n,\nu}^{\mathrm{ts}}\bigr)'(\mathsf q)
\ge
\inf_{\mathsf q\in K_{\varepsilon}^{\mathrm{ts}}}\psiB(\mathsf q),
\]
with no factor $\frac12$. This is due to the representation in \eqref{eq:Gts}, whose leading term is $2\PsiB(\mathsf q)-1$ rather than $\PsiB(\mathsf q)$, so the leading derivative is $2\psiB(\mathsf q)$.
\end{proof}
\begin{remark}\label{rm4}
The specific choice $\varepsilon<2\varepsilon<3\varepsilon$ is inessential. More generally, one may replace it by any $0<\varepsilon_1<\varepsilon_2<\varepsilon_3<1/2$. The purpose of this ordering is to ensure that
\begin{align*}
K_{\varepsilon_3}^{.}
\subset
K_{\varepsilon_2}^{.}
\subset
K_{\varepsilon_1}^{.}.
\end{align*}
In other words, $\{\text{admissible quantiles}\}
\subset
\{\text{region where a solution exists}\}
\subset
\{\text{region of strict}$
$\text{monotonicity}\}$.
\end{remark}

\subsection{Proof of Theorem \ref{thm:cf uniform}}\label{sec theo13}
With Lemma \ref{lm inter} and the notation established in Section \ref{sec5.3}, we are in a position to derive a uniform upper bound for the Student's $t$ Cornish--Fisher expansions.

\subsubsection{Proof of the One-Sided Case}\label{sec541}
\begin{proof}
Apply Lemma \ref{lm inter}(i). Then, for all sufficiently large $n$, the function $G_{n,\nu}^{\mathrm{up}}$ is strictly increasing on $K_{\varepsilon}^{\mathrm{up}}$ and satisfies
\begin{align*}
\bigl(G_{n,\nu}^{\mathrm{up}}\bigr)'(\mathsf q)\ge \frac12\inf_{\mathsf q\in K_{\varepsilon}^{\mathrm{up}}}\psiB(\mathsf q)
\qquad\text{for all }\mathsf q\in K_{\varepsilon}^{\mathrm{up}},
\end{align*}
and, for every $\alpha\in I_{3\varepsilon}$, there exists a unique $\widetilde u_{\nu,\alpha}^{\mathrm{up}}\in K_{2\varepsilon}^{\mathrm{up}}$ such that $G_{n,\nu}^{\mathrm{up}}(\widetilde u_{\nu,\alpha}^{\mathrm{up}})=\alpha$.

\medskip
\noindent
\emph{Step 1: Analysis of $\left|u_{\nu,\alpha}^{\mathrm{up}}-\widetilde u_{\nu,\alpha}^{\mathrm{up}}\right|$.}
Recall the definition of $u_{\nu,\alpha}^{\mathrm{up}}$ from \eqref{def cfup}. Apply Proposition \ref{prop:CF-existence} with $K=K_{3\varepsilon}^{\mathrm{up}}$. Since $\alpha\in I_{3\varepsilon}$ is equivalent to $\mathsf q\coloneqq\PsiB^{-1}(\alpha)\in K_{3\varepsilon}^{\mathrm{up}}$, \eqref{eq:formal-up} yields
\begin{align*}
    \sup_{\alpha\in I_{3\varepsilon}}
\left|
G_{n,\nu}^{\mathrm{up}}(u_{\nu,\alpha}^{\mathrm{up}})-\alpha
\right|
\le
\mathsf A_{K_{3\varepsilon}^{\mathrm{up}}}^{\mathrm{up}}\times n^{-(\nu+1)/2}.
\end{align*}
Using $G_{n,\nu}^{\mathrm{up}}(\widetilde u_{\nu,\alpha}^{\mathrm{up}})=\alpha$, we obtain
\begin{align*}
    \sup_{\alpha\in I_{3\varepsilon}}
\left|
G_{n,\nu}^{\mathrm{up}}(u_{\nu,\alpha}^{\mathrm{up}})-G_{n,\nu}^{\mathrm{up}}(\widetilde u_{\nu,\alpha}^{\mathrm{up}})
\right|
\le
\mathsf A_{K_{3\varepsilon}^{\mathrm{up}}}^{\mathrm{up}}\times n^{-(\nu+1)/2}.
\end{align*}
The same smoothness argument on a compact set yields $u_{\nu,\alpha}^{\mathrm{up}}-\PsiB^{-1}(\alpha)=O(n^{-1/2})$
uniformly for $\alpha\in I_{3\varepsilon}$. Since $K_{3\varepsilon}^{\mathrm{up}}\subset \mathrm{int}(K_{2\varepsilon}^{\mathrm{up}})$, it follows that $u_{\nu,\alpha}^{\mathrm{up}}\in K_{2\varepsilon}^{\mathrm{up}}$ for all sufficiently large $n$. Therefore,
\begin{align}
    \sup_{\alpha\in I_{3\varepsilon}}\left|u_{\nu,\alpha}^{\mathrm{up}}-\widetilde u_{\nu,\alpha}^{\mathrm{up}}\right|
\le
\frac{2\mathsf A_{K_{3\varepsilon}^{\mathrm{up}}}^{\mathrm{up}}}{\inf_{\mathsf q\in K_{2\varepsilon}^{\mathrm{up}}}\psiB(\mathsf q)}
\times n^{-(\nu+1)/2}.
\label{eq:utilde-uhat}
\end{align}

\medskip
\noindent
\emph{Step 2: Analysis of $\left|u_\alpha^{\mathrm{up}}-\widetilde u_{\nu,\alpha}^{\mathrm{up}}\right|$.}
Recall the Student's $t$ Edgeworth expansion from \eqref{eq:up-edge},
\begin{align*}
    \sup_{\mathsf q\in K_{\varepsilon}^{\mathrm{up}}}
\left|F_n^{\mathrm{up}}(\mathsf q)-G_{n,\nu}^{\mathrm{up}}(\mathsf q)\right|
\le
\mathsf C_{\mathrm{up}}\times n^{-(\nu+1)/2}.
\end{align*}
Let $\alpha\in I_{3\varepsilon}$ and define
\begin{align*}
\delta_n\coloneqq
\frac{4\mathsf C_{\mathrm{up}}}{\inf_{\mathsf q\in K_{\varepsilon}^{\mathrm{up}}}\psiB(\mathsf q)}
\, n^{-(\nu+1)/2}.
\end{align*}
When $n$ is sufficiently large, $\widetilde u_{\nu,\alpha}^{\mathrm{up}}\in K_{2\varepsilon}^{\mathrm{up}}$ and $\widetilde u_{\nu,\alpha}^{\mathrm{up}}\pm \delta_n\in K_{\varepsilon}^{\mathrm{up}}$ by Lemma \ref{lm inter}(i). Using the definition $G_{n,\nu}^{\mathrm{up}}(\widetilde u_{\nu,\alpha}^{\mathrm{up}})=\alpha$, we have
\begin{align*}
    G_{n,\nu}^{\mathrm{up}}(\widetilde u_{\nu,\alpha}^{\mathrm{up}}+\delta_n)>\alpha+\mathsf C_{\mathrm{up}}\times n^{-(\nu+1)/2}\quad\text{and}\quad G_{n,\nu}^{\mathrm{up}}(\widetilde u_{\nu,\alpha}^{\mathrm{up}}-\delta_n)&<\alpha-\mathsf C_{\mathrm{up}}\times n^{-(\nu+1)/2}.
\end{align*}
Applying the Edgeworth expansion from \eqref{eq:up-edge},
\begin{align*}
F_n^{\mathrm{up}}(\widetilde u_{\nu,\alpha}^{\mathrm{up}}+\delta_n)> \alpha,
\qquad
F_n^{\mathrm{up}}(\widetilde u_{\nu,\alpha}^{\mathrm{up}}-\delta_n)< \alpha.
\end{align*}
The definition $u_\alpha^{\mathrm{up}}=\inf\{\mathsf q:F_n^{\mathrm{up}}(\mathsf q)\ge \alpha\}$ then implies
\begin{align*}
\widetilde u_{\nu,\alpha}^{\mathrm{up}}-\delta_n\le u_\alpha^{\mathrm{up}}\le \widetilde u_{\nu,\alpha}^{\mathrm{up}}+\delta_n.
\end{align*}
for all $\alpha\in I_{3\varepsilon}$. Consequently,
\begin{align}
\sup_{\alpha\in I_{3\varepsilon}}\left|u_\alpha^{\mathrm{up}}-\widetilde u_{\nu,\alpha}^{\mathrm{up}}\right|
\le
\frac{4\mathsf C_{\mathrm{up}}}{\inf_{\mathsf q\in K_{\varepsilon}^{\mathrm{up}}}\psiB(\mathsf q)}\, n^{-(\nu+1)/2}.
\label{eq:u-utilde}
\end{align}

\medskip
\noindent
\emph{Step 3: Conclusion.}
Combining \eqref{eq:utilde-uhat} and \eqref{eq:u-utilde}, we obtain
\begin{align*}
    \sup_{\alpha\in I_{3\varepsilon}}\left|u_{\nu,\alpha}^{\mathrm{up}}-u_\alpha^{\mathrm{up}}\right|
\le \frac{\mathsf 2\mathsf A_{K_{3\varepsilon}^{\mathrm{up}}}^{\mathrm{up}}+4\mathsf C_{\mathrm{up}}}{\inf_{\mathsf q\in K_{\varepsilon}^{\mathrm{up}}}\psiB(\mathsf q)}\times n^{-(\nu+1)/2}.
\end{align*}
Since $\varepsilon$ and $3\varepsilon$ may be replaced by any pair $\varepsilon_1<\varepsilon_3$ (see Remark \ref{rm4}), it follows that
\begin{align*}
    \sup_{\alpha\in I_{\varepsilon}}\left|u_{\nu,\alpha}^{\mathrm{up}}-u_\alpha^{\mathrm{up}}\right|
\le \frac{\mathsf 2 \mathsf A_{K_{3\varepsilon}^{\mathrm{up}}}^{\mathrm{up}}+4\mathsf C_{\mathrm{up}}}{\inf_{\mathsf q\in K_{\varepsilon/2}^{\mathrm{up}}}\psiB(\mathsf q)}\times n^{-(\nu+1)/2},
\end{align*}
Since $\mathsf A_{K_{3\varepsilon}^{\mathrm{up}}}^{\mathrm{up}}
\le
\mathsf A_{K_{\varepsilon/2}^{\mathrm{up}}}^{\mathrm{up}}$, we may choose the constant to be
\begin{align}\label{cf constants1}
    \mathsf C_{\mathrm{c.f.}}^{\mathrm{up}}
    \coloneqq
    \frac{2\mathsf A_{K}^{\mathrm{up}}+4\mathsf C_{\mathrm{up}}}
    {\inf_{\mathsf q\in K}\psiB(\mathsf q)},
    \qquad
    K\coloneqq \left[\PsiB^{-1}\!\left(\frac{\varepsilon}{2}\right),\PsiB^{-1}\!\left(1-\frac{\varepsilon}{2}\right)\right].
\end{align}
This completes the proof.
\end{proof}

\subsubsection{Proof of the Two-Sided Case}
\begin{proof}
The proof is entirely analogous to that of the one-sided case, with Lemma \ref{lm inter}(ii) and \eqref{eq:formal-ts} in place of Lemma \ref{lm inter}(i) and \eqref{eq:formal-up}, so we omit the details. The constant can be chosen as
\begin{align}\label{cf constants2}
    \mathsf C^{\mathrm{ts}}_{\mathrm{c.f.}}\coloneqq \frac{\mathsf A_{K}^{\mathrm{ts}}+2\mathsf C_{\mathrm{ts}}^\frac{\varepsilon}{2}}{\inf_{\mathsf q\in K}\psiB(\mathsf q)},\qquad K\coloneqq
\left[
\PsiB^{-1}\!\left(\frac12+\frac{\varepsilon}{4}\right),
\PsiB^{-1}\!\left(1-\frac{\varepsilon}{4}\right)
\right].
\end{align}
\end{proof}

\subsection{Proof of Theorem \ref{thm:resample cf uniform}}\label{sec theo14}
\begin{proof}
To prove Theorem \ref{thm:resample cf uniform}, we revisit Lemma \ref{lm inter} in the resampled setting, with $\hat G_{n,\nu}^{\mathrm{up}}$ and $\hat G_{n,\nu}^{\mathrm{ts}}$. These are obtained by taking the formal expansions of $G_{n,\nu}^{\mathrm{up}}$ and $G_{n,\nu}^{\mathrm{ts}}$ and replacing, in every coefficient, the population moments by the corresponding sample moments.

The only randomness that may affect the uniform convergence conclusions comes from the coefficients of $\eta_k$. These coefficients are already controlled by sufficiently high moments, as in \eqref{eq61}. Therefore, exactly as in the proof of Theorem \ref{theorem resample uniform version}, it suffices to work on the good event $\mathcal E_{\mathrm m}$ from \eqref{good event emp moment}. Consequently, Lemma \ref{lm inter} remains valid for the resampled expansions on $\mathcal E_{\mathrm m}$, while
\[
\PP\!\left(\mathcal E_{\mathrm m}\right)=1- O(n^{-\lambda}).
\]
In addition, \eqref{eq:formal-up} and \eqref{eq:formal-ts} from Proposition \ref{prop:CF-existence} were used in Step 1 of Section \ref{sec541} not as the definition of $\xi_k$, but as approximation properties of $G_{n,\nu}^{\mathrm{up}}$ and $G_{n,\nu}^{\mathrm{ts}}$, with deterministic constants $\mathsf A_K^{\mathrm{up}}$ and $\mathsf A_K^{\mathrm{ts}}$. In the resampled setting, we need the corresponding analogues to hold  for some constants $\hat{\mathsf A}_K^{\mathrm{up}}$ and $\hat{\mathsf A}_K^{\mathrm{ts}}$,
\begin{align*}
&\sup_{\mathsf q\in K}
\left|
\hat G_{n,\nu}^{\mathrm{up}}\!\left(\mathsf q+\sum_{k=1}^{\nu} n^{-k/2}\hat \xi_k^{\mathrm{up}}(\mathsf q)\right)-\PsiB(\mathsf q)
\right|
\le
\hat{\mathsf A}_K^{\mathrm{up}}\times n^{-(\nu+1)/2},\\
&\sup_{\mathsf q\in K}
\left|
\hat G_{n,\nu}^{\mathrm{ts}}\!\left(\mathsf q+\sum_{\substack{2\le k\le \nu\\ k\text{ even}}} n^{-k/2}\hat \xi_k^{\mathrm{ts}}(\mathsf q)\right)-\bigl(2\PsiB(\mathsf q)-1\bigr)
\right|
\le
\hat{\mathsf A}_K^{\mathrm{ts}}\times n^{-(\nu+1)/2},
\end{align*}
except on an event of probability $O(n^{-\lambda})$.  By checking Step 2 in Sections \ref{sec51} and \ref{sec52}, we see that imposing the same good event $\mathcal E_{\mathrm m}$ is sufficient. Therefore, the proof of Theorem \ref{thm:cf uniform} carries over verbatim, with the constants $\mathsf A$ and $\mathsf C$ in \eqref{cf constants1} and \eqref{cf constants2} replaced by their resampled counterparts $\hat{\mathsf A}$ and $\hat{\mathsf C}$. This proves Theorem \ref{thm:resample cf uniform}.
\end{proof}

\subsection{Proof of Corollary \ref{cor1}}\label{appendixcor5}
\begin{proof}
\emph{One-sided case.}
For the upper one-sided case, it suffices to work with the first-order Cornish--Fisher approximation. Thus let $\nu=1$. Then
\begin{align*}
    \left|\hat u_\alpha^{\mathrm{up}}- u_\alpha^{\mathrm{up}}\right|
    \le
    \left|\hat u_\alpha^{\mathrm{up}}- \hat u_{1,\alpha}^{\mathrm{up}}\right|
    +
    \left|u_{1,\alpha}^{\mathrm{up}}- u_{\alpha}^{\mathrm{up}}\right|
    +
    \left|\hat u_{1,\alpha}^{\mathrm{up}}- u_{1,\alpha}^{\mathrm{up}}\right|.
\end{align*}
According to Theorem \ref{thm:resample cf uniform}, we define the good events
\begin{align}
    \mathcal{E}_\mathrm{rcf}^{\nu,\mathrm{up}}\coloneqq & \left\{\sup_{\varepsilon\leq\alpha\leq 1-\varepsilon}\left|\hat u_{\nu,\alpha}^{\mathrm{up}}-\hat u_\alpha^{\mathrm{up}}\right|\le \hat{\mathsf C}_{\mathrm{c.f.}}^{\mathrm{up}}\times n^{-(\nu+1)/2}\right\} \label{rcf good eventup}\\
    \mathcal{E}_\mathrm{rcf}^{\nu,\mathrm{ts}}\coloneqq & \left\{\sup_{\varepsilon\leq\alpha\leq 1-\varepsilon}\left|\hat u_{\nu,\alpha}^{\mathrm{ts}}-\hat u_\alpha^{\mathrm{ts}}\right|\le \hat{\mathsf C}_{\mathrm{c.f.}}^{\mathrm{ts}}\times n^{-(\nu+1)/2}\right\} \label{rcf good eventts}
\end{align}
Applying both Theorem \ref{thm:cf uniform} and Theorem \ref{thm:resample cf uniform} with $\mathcal{E}_\mathrm{rcf}^{1,\mathrm{up}}$, we have
\begin{align*}
&\sup_{\varepsilon\le \alpha\le 1-\varepsilon}
\left|\hat u_{\alpha}^{\mathrm{up}}-u_{\alpha}^{\mathrm{up}}\right|
\le
\sup_{\varepsilon\le \alpha\le 1-\varepsilon}
\left|\hat u_{1,\alpha}^{\mathrm{up}}-u_{1,\alpha}^{\mathrm{up}}\right|
+\frac{\mathsf C_{\mathrm{c.f.}}^{\mathrm{up}}+\hat{\mathsf C}_{\mathrm{c.f.}}^{\mathrm{up}}}{n}\\
\le~&
n^{-1/2}\times \sup_{\mathsf q\in K}\left|\hat{\xi}_1^{\mathrm{up}}(\mathsf q) -\xi_1^{\mathrm{up}}(\mathsf q)\right|
+ \frac{\mathsf C_{\mathrm{c.f.}}^{\mathrm{up}}+\hat{\mathsf C}_{\mathrm{c.f.}}^{\mathrm{up}}}{n},
\qquad
K
\coloneqq
\bigl[\PsiB^{-1}(\varepsilon),\PsiB^{-1}(1-\varepsilon)\bigr].
\end{align*}
We have $\xi_1^{\mathrm{up}}=-\eta_1^{\mathrm{up}}$, and $\eta_1^{\mathrm{up}}(\mathsf q)$ is linear in $b_{[1,0]}$ and $b_{[1,2]}$. Hence
\begin{align*}
\sup_{\mathsf q\in K}\left|\hat{\xi}_1^{\mathrm{up}}(\mathsf q) -\xi_1^{\mathrm{up}}(\mathsf q)\right|
\le
C_{[B,\varepsilon]}\times \sum_{k=0,2}\abs{b_{[1,k]}- \hat b_{[1,k]}}
\end{align*}
Therefore, on $\mathcal{E}_\mathrm{rcf}^{1,\mathrm{up}}$, we have
\begin{align*}
\sup_{\varepsilon\le \alpha\le 1-\varepsilon}
\left|\hat u_{\alpha}^{\mathrm{up}}-u_{\alpha}^{\mathrm{up}}\right|
\le
Cn^{-1/2}\sum_{k=0,2}\abs{b_{[1,k]}- \hat b_{[1,k]}}
+\frac{\mathsf C_{\mathrm{c.f.}}^{\mathrm{up}}+\hat{\mathsf C}_{\mathrm{c.f.}}^{\mathrm{up}}}{n}= O_p(n^{-1})
\end{align*}
under sufficient moment conditions. Hence $\sup_{\varepsilon\le \alpha\le 1-\varepsilon}
\left|\hat u_{\alpha}^{\mathrm{up}}-u_{\alpha}^{\mathrm{up}}\right|
=O_p(n^{-1})$.

\medskip
\noindent
\emph{Two-sided case.}
For the two-sided case, take $\nu=3$. Then
\begin{align*}
    \left|\hat u_\alpha^{\mathrm{ts}}- u_\alpha^{\mathrm{ts}}\right|
    \le
    \left|\hat u_\alpha^{\mathrm{ts}}- \hat u_{3,\alpha}^{\mathrm{ts}}\right|
    +
    \left|u_{3,\alpha}^{\mathrm{ts}}- u_{\alpha}^{\mathrm{ts}}\right|
    +
    \left|\hat u_{3,\alpha}^{\mathrm{ts}}- u_{3,\alpha}^{\mathrm{ts}}\right|.
\end{align*}
Applying both Theorem \ref{thm:cf uniform} and Theorem \ref{thm:resample cf uniform} with $\mathcal E_{\mathrm{rcf}}^{3,\mathrm{ts}}$, we have
\begin{align*}
&\sup_{\varepsilon\le \alpha\le 1-\varepsilon}
\left|\hat u_{\alpha}^{\mathrm{ts}}-u_{\alpha}^{\mathrm{ts}}\right|
\le
\sup_{\varepsilon\le \alpha\le 1-\varepsilon}
\left|\hat u_{3,\alpha}^{\mathrm{ts}}-u_{3,\alpha}^{\mathrm{ts}}\right|
+\frac{\mathsf C_{\mathrm{c.f.}}^{\mathrm{ts}}+\hat{\mathsf C}_{\mathrm{c.f.}}^{\mathrm{ts}}}{n^2}\\
\le~&
n^{-1}\times \sup_{\mathsf q\in K}\left|\hat{\xi}_2^{\mathrm{ts}}(\mathsf q) -\xi_2^{\mathrm{ts}}(\mathsf q)\right|
+ \frac{\mathsf C_{\mathrm{c.f.}}^{\mathrm{ts}}+\hat{\mathsf C}_{\mathrm{c.f.}}^{\mathrm{ts}}}{n^2},
\qquad
K
\coloneqq
\left[
\PsiB^{-1}\!\left(\frac{1+\varepsilon}{2}\right),
\PsiB^{-1}\!\left(1-\frac{\varepsilon}{2}\right)
\right].
\end{align*}
By the remark after Proposition \ref{prop:CF-existence}, $\xi_2^{\mathrm{ts}}=-\frac12\eta_2^{\mathrm{ts}}$, and by Example 1, $\eta_2^{\mathrm{ts}}(\mathsf q)$ is linear in $a_{[2,k]}$ and $b_{[2,k]}$ for $k\in\{1,3,5\}$. Hence
\begin{align*}
\sup_{\mathsf q\in K}\left|\hat{\xi}_2^{\mathrm{ts}}(\mathsf q)-\xi_2^{\mathrm{ts}}(\mathsf q)\right|
\le
C_{[B,\varepsilon]}\times \sum_{k=1,3,5}\left(\abs{a_{[2,k]}-\hat a_{[2,k]}}+\abs{b_{[2,k]}-\hat b_{[2,k]}}\right).
\end{align*}
Therefore, on $\mathcal E_{\mathrm{rcf}}^{3,\mathrm{ts}}$, we have
\begin{align*}
\sup_{\varepsilon\le \alpha\le 1-\varepsilon}
\left|\hat u_{\alpha}^{\mathrm{ts}}-u_{\alpha}^{\mathrm{ts}}\right|
\le
Cn^{-1}\sum_{k=1,3,5}\left(\abs{a_{[2,k]}-\hat a_{[2,k]}}+\abs{b_{[2,k]}-\hat b_{[2,k]}}\right)
+\frac{\mathsf C_{\mathrm{c.f.}}^{\mathrm{ts}}+\hat{\mathsf C}_{\mathrm{c.f.}}^{\mathrm{ts}}}{n^2}
=
O_p(n^{-3/2})
\end{align*}
under sufficient moment conditions. Hence $\sup_{\varepsilon\le \alpha\le 1-\varepsilon}
\left|\hat u_{\alpha}^{\mathrm{ts}}-u_{\alpha}^{\mathrm{ts}}\right|
=
O_p(n^{-3/2})$. Finally, when $B=1$, the term
\[
n^{-2}\bigl(\mathsf C_{\mathrm{c.f.}}^{\mathrm{ts}}+\hat{\mathsf C}_{\mathrm{c.f.}}^{\mathrm{ts}}\bigr)
\]
is replaced by
\[
n^{-3/2}\bigl(\mathsf C_{\mathrm{c.f.}}^{\mathrm{ts}}+\hat{\mathsf C}_{\mathrm{c.f.}}^{\mathrm{ts}}\bigr),
\]
and hence the conclusion remains unchanged.
\end{proof}

\section{Proof of SCB Coverage Error Results}\label{appendix-D}\label{appendix D}
\subsection{Auxiliary Lemmas for Theorems \ref{theorem err 1side} and \ref{theorem err 2side}}
\begin{lemma}\label{lmHqq}
Under Assumption \ref{assume1}, for any fixed $B\geq1$, there exists a constant $L$ such that for any $\mathsf q_1$, $\mathsf q_2$, and $n$,
\begin{align*}
\left|
\PP(T\le \mathsf q_1)-\PP(T\le \mathsf q_2)
\right|
\le L\times \abs{\mathsf q_1- \mathsf q_2}+
2\mathsf C_{\mathrm{up}}\times n^{-3/2},
\end{align*}
\end{lemma}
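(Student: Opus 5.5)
The plan is to compare $\PP(T\le\cdot)$ with its smooth one-sided Student's $t$ Edgeworth approximant at both points and then use a uniform Lipschitz bound for that approximant. Write
\[
G_n^{\mathrm{up}}(\mathsf q)=\PsiB(\mathsf q)+n^{-1/2}\zeta_1^{\mathrm{up}}(\mathsf q)+n^{-1}\zeta_2^{\mathrm{up}}(\mathsf q).
\]
By the one-sided part of Theorem~\ref{theorem uniform version}, valid for every fixed $B\ge1$, we have $\sup_{\mathsf q\in\R}|\PP(T\le\mathsf q)-G_n^{\mathrm{up}}(\mathsf q)|\le \mathsf C_{\mathrm{up}}n^{-3/2}$. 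The triangle inequality then gives
\[
\bigl|\PP(T\le \mathsf q_1)-\PP(T\le \mathsf q_2)\bigr|\le \bigl|G_n^{\mathrm{up}}(\mathsf q_1)-G_n^{\mathrm{up}}(\mathsf q_2)\bigr|+2\mathsf C_{\mathrm{up}}n^{-3/2},
\]
which already produces the second term in the claim exactly.

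It remains to show that $G_n^{\mathrm{up}}$ is Lipschitz on all of $\R$ with a constant $L$ independent of $n$. By the mean value theorem it suffices to bound
\[
\sup_{\mathsf q\in\R}\Bigl|\psiB(\mathsf q)+n^{-1/2}\bigl(\zeta_1^{\mathrm{up}}\bigr)'(\mathsf q)+n^{-1}\bigl(\zeta_2^{\mathrm{up}}\bigr)'(\mathsf q)\Bigr|\le \sup\psiB+\sup\bigl|\bigl(\zeta_1^{\mathrm{up}}\bigr)'\bigr|+\sup\bigl|\bigl(\zeta_2^{\mathrm{up}}\bigr)'\bigr|,
\]
using $n\ge1$. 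The first term is a finite constant. For the other two, I will use the explicit forms \eqref{eq zeta2 upper} and \eqref{eq zeta2}: each $\zeta_k^{\mathrm{up}}$ is a finite linear combination, with deterministic coefficients $b_{[1,k]}$, $a_{[2,k]}$, $b_{[2,k]}$ (polynomials in population moments, finite under Assumption~\ref{assume1}), of terms
\[
\frac{\mathsf q^m}{(B+\mathsf q^2)^{p}}\quad\text{with }p\ge (B+m)/2 ,
\]
namely $m\in\{0,2\}$ with $p=B/2+m$, $m\in\{1,3,5\}$ with $p=(B+m)/2$, and $m=1$ with $p=(B+k)/2$, $k\ge1$.

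The step that needs care is the tail behaviour, since we want a bound over all of $\R$ rather than on compacts. Differentiating gives
\[
\frac{m\,\mathsf q^{m-1}}{(B+\mathsf q^2)^p}-\frac{2p\,\mathsf q^{m+1}}{(B+\mathsf q^2)^{p+1}},
\]
which is continuous and decays like $|\mathsf q|^{m-1-2p}\le |\mathsf q|^{-B-1}$ as $|\mathsf q|\to\infty$. Hence each such derivative is bounded on $\R$ for every $B\ge1$; the case $m=0$ is similar, with the $m$-term absent. Summing finitely many of these bounds gives a constant $L$ depending only on $B$ and the moments, which completes the argument. No good-event argument is needed, since everything here is deterministic.
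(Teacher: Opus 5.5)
Your proposal is correct and follows the paper's argument: compare $\PP(T\le\cdot)$ at both points with $H_n=\PsiB+n^{-1/2}\zeta_1^{\mathrm{up}}+n^{-1}\zeta_2^{\mathrm{up}}$ using the one-sided part of Theorem~\ref{theorem uniform version}, then apply the mean value theorem with a bound on $H_n'$ that is uniform in $\mathsf q$ and $n$. Your tail estimate, showing each derivative term decays like $\abs{\mathsf q}^{-B-1}$, spells out the step the paper leaves as ``by inspecting the expressions.''
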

\begin{proof}
By Theorem \ref{theorem uniform version}, there exists a constant $\mathsf C_{\mathrm{up}}$ such that
\begin{align*}
\sup_{\mathsf q\in\mathbb R}
\left|
\PP(T\le \mathsf q)-H_n(\mathsf q)
\right|
\le
\mathsf C_{\mathrm{up}}\,n^{-3/2},
\end{align*}
where $H_n(\mathsf q)\coloneqq
\Psi_B(\mathsf q)+n^{-1/2}\zeta_1^{\mathrm{up}}(\mathsf q)+n^{-1}\zeta_2^{\mathrm{up}}(\mathsf q)$. Hence, for every pair of real numbers $\mathsf q_1,\mathsf q_2$,
\begin{align*}
\left|
\PP(T\le \mathsf q_1)-\PP(T\le \mathsf q_2)
\right|
\le
\abs{H_n(\mathsf q_1)-H_n(\mathsf q_2)}+2 \mathsf C_{\mathrm{up}}\,n^{-3/2}.
\end{align*}
By inspecting the expressions of $\zeta_1^{\mathrm{up}}$ and $\zeta_2^{\mathrm{up}}$ in \eqref{eq zeta2 upper}, we see that $H_n(\cdot)$ and its first derivative are uniformly bounded in $\mathsf q$ and $n$. Therefore there exists a finite constant $L$ such that
\begin{align*}
\sup_{n\ge 1}\sup_{\mathsf q\in\mathbb R}\abs{H_n'(\mathsf q)}
\le
L.
\end{align*}
By the mean value theorem, $\abs{H_n(\mathsf q_1)-H_n(\mathsf q_2)}
\le
L\times\abs{\mathsf q_1 - \mathsf q_2}$.
\end{proof}

\begin{lemma}[Two-sided analogue of Lemma \ref{lmHqq}]\label{lmHqqts}
Under Assumption \ref{assume1}, for any fixed $B\geq2$ and any $\varepsilon_0>0$, there exists a constant $L_{\varepsilon_0}$ such that for any $\mathsf q_1,\mathsf q_2\ge \varepsilon_0$, and $n$,
\begin{align*}
\left|
\PP(\abs{T}\le \mathsf q_1)-\PP(\abs{T}\le \mathsf q_2)
\right|
\le
L_{\varepsilon_0}\times \abs{\mathsf q_1-\mathsf q_2}
+
2\mathsf C_{\mathrm{ts}}\times n^{-2}.
\end{align*}
\end{lemma}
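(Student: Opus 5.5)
The plan mirrors the proof of Lemma~\ref{lmHqq}, with the two-sided expansion of Theorem~\ref{theorem uniform version} in place of the one-sided one. First, invoke the two-sided bullet of Theorem~\ref{theorem uniform version}, which requires $B\ge2$. It says that for the given $\varepsilon_0>0$ there is a constant $\mathsf C_{\mathrm{ts}}$ (depending on $\varepsilon_0$) with
\[
\sup_{\mathsf q\ge\varepsilon_0}\left|\PP(\abs{T}\le\mathsf q)-H_n^{\mathrm{ts}}(\mathsf q)\right|\le \mathsf C_{\mathrm{ts}}\,n^{-2},
\qquad
H_n^{\mathrm{ts}}(\mathsf q)\coloneqq 2\Psi_B(\mathsf q)-1+n^{-1}\zeta_2^{\mathrm{ts}}(\mathsf q).
\]
For $\mathsf q_1,\mathsf q_2\ge\varepsilon_0$, the triangle inequality then gives
\[
\left|\PP(\abs{T}\le\mathsf q_1)-\PP(\abs{T}\le\mathsf q_2)\right|\le \abs{H_n^{\mathrm{ts}}(\mathsf q_1)-H_n^{\mathrm{ts}}(\mathsf q_2)}+2\mathsf C_{\mathrm{ts}}n^{-2}.
\]
This step is immediate.

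It remains to show that $H_n^{\mathrm{ts}}$ is Lipschitz on $[\varepsilon_0,\infty)$ uniformly in $n\ge1$. The derivative is $2\psi_B+n^{-1}(\zeta_2^{\mathrm{ts}})'$, so it suffices to bound $\sup_{\mathsf q\ge\varepsilon_0}\abs{(\zeta_2^{\mathrm{ts}})'(\mathsf q)}$, since $\psi_B\le\psi_B(0)$. From \eqref{eq zeta2}, $\zeta_2^{\mathrm{ts}}$ is a finite linear combination, with deterministic coefficients $a_{[2,k]},b_{[2,k]}$, of the functions $\mathsf q^k(B+\mathsf q^2)^{-(B+k)/2}$ and $\mathsf q(B+\mathsf q^2)^{-(B+k)/2}$ for $k\in\{1,3,5\}$. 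Each of these is smooth on $\mathbb R$, because $B+\mathsf q^2\ge B>0$. Differentiating produces terms of the form $\mathsf q^{m}(B+\mathsf q^2)^{-(B+k+2)/2}$ with $m\le k+1$, and every such term is bounded on $\mathbb R$ because its total degree $m-(B+k+2)\le -B-1<0$. So the derivative is in fact bounded on all of $\mathbb R$, with a bound depending only on $B$ and the moment coefficients. Set $L_{\varepsilon_0}\coloneqq 2\psi_B(0)+\sup_{\mathsf q}\abs{(\zeta_2^{\mathrm{ts}})'(\mathsf q)}$ and apply the mean value theorem on the segment between $\mathsf q_1$ and $\mathsf q_2$, which stays in $[\varepsilon_0,\infty)$.

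There is no real obstacle here. The only point to watch is that the restriction $\mathsf q\ge\varepsilon_0$ enters solely through the constant $\mathsf C_{\mathrm{ts}}$ of Theorem~\ref{theorem uniform version}. That constant is what depends on $\varepsilon_0$; it arises from the $\varepsilon_0^{-4}$ factor in \eqref{eq 16}. The Lipschitz constant itself does not need the restriction, and the notation $L_{\varepsilon_0}$ is kept only for uniformity with the statement.
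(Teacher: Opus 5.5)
Your proposal is correct and follows the paper's proof step for step: the two-sided expansion of Theorem~\ref{theorem uniform version} with the triangle inequality, then a uniform bound on $H_n'$ read off from \eqref{eq zeta2}, then the mean value theorem. You add one accurate sharpening: $(\zeta_2^{\mathrm{ts}})'$ is bounded on all of $\mathbb R$, not only on $[\varepsilon_0,\infty)$ as the paper states, so only $\mathsf C_{\mathrm{ts}}$ genuinely depends on $\varepsilon_0$.
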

\begin{proof}
By Theorem \ref{theorem uniform version}, there exists a constant $\mathsf C_{\mathrm{ts}}$ such that
\begin{align*}
\sup_{\mathsf q\ge \varepsilon_0}
\left|
\PP(\abs{T}\le \mathsf q)-H_n(\mathsf q)
\right|
\le
\mathsf C_{\mathrm{ts}}\,n^{-2},
\end{align*}
where $H_n(\mathsf q)\coloneqq
2\Psi_B(\mathsf q)-1+n^{-1}\zeta_2(\mathsf q)$. Hence, for every pair $\mathsf q_1,\mathsf q_2\ge \varepsilon_0$,
\begin{align*}
\left|
\PP(\abs{T}\le \mathsf q_1)-\PP(\abs{T}\le \mathsf q_2)
\right|
\le
\abs{H_n(\mathsf q_1)-H_n(\mathsf q_2)}+2\mathsf C_{\mathrm{ts}}\,n^{-2}.
\end{align*}
By inspecting the expression of $\zeta_2$ in \eqref{eq zeta2}, we see that $H_n'(\cdot)$ is uniformly bounded on $[\varepsilon_0,\infty)$ over all $n$. Therefore there exists a finite constant $L_{\varepsilon_0}$ such that
\begin{align*}
\sup_{n\ge 1}\sup_{\mathsf q\ge \varepsilon_0}\abs{H_n'(\mathsf q)}
\le
L_{\varepsilon_0}.
\end{align*}
By the mean value theorem, $\abs{H_n(\mathsf q_1)-H_n(\mathsf q_2)}
\le
L_{\varepsilon_0}\times \abs{\mathsf q_1-\mathsf q_2}$.
\end{proof}

\subsection{Proof of Theorem \ref{theorem err 1side}}
\begin{proof}
We define
\begin{align*}
    I_0\coloneqq& \bigl|
\PP(T\le u_\alpha^{\mathrm{up}})-\alpha 
\bigr|\\
I_1\coloneqq & \left|
\PP(T\le u_{2,\alpha}^{\mathrm{up}})
-
\PP(T\le u_\alpha^{\mathrm{up}})
\right|\\
I_2\coloneqq & \left|
\PP(T\le \hat u_\alpha^{\mathrm{up}})
-
\PP(T\le \hat u_{2,\alpha}^{\mathrm{up}})
\right|\\
I_3\coloneqq & \left|
\PP(T\le \hat u_{2,\alpha}^{\mathrm{up}})
-
\PP(T\le u_{2,\alpha}^{\mathrm{up}})
\right|
\end{align*}
so that $\bigl|
\PP(T\le \hat u_\alpha^{\mathrm{up}})-\alpha
\bigr|
\eqqcolon I_0+I_1+ I_2 +I_3$.

\begin{enumerate}[label=(\alph*)]
\item Since $u_\alpha^{\mathrm{up}}=\inf\{\mathsf q:\PP(T\le \mathsf q)\ge \alpha\}$, we only have
\[
0\le \PP(T\le u_\alpha^{\mathrm{up}})-\alpha \le \PP(T=u_\alpha^{\mathrm{up}}).
\]
By Theorem 2.3 in \cite{hall2013bootstrap} and Cram\'er's  condition, the maximal point mass is exponentially small. Hence, for some $\epsilon_0>0$,
\[
\sup_{\varepsilon\le \alpha\le 1-\varepsilon} I_0
\le \sup_x \PP(T=x)
=O(e^{-\epsilon_0 n}).
\]

\item By Lemma \ref{lmHqq},
    $ I_1\leq  L\times \left|u_{2,\alpha}^{\mathrm{up}}-u_{\alpha}^{\mathrm{up}}\right|+
2\mathsf C_{\mathrm{up}}\times n^{-3/2}$.
Using Theorem \ref{thm:cf uniform} with $\nu=2$ and taking the supremum, we obtain
\begin{align*}
    \sup_{\varepsilon\leq\alpha\leq 1-\varepsilon} I_1\leq  \left(L\times  \mathsf C_{\mathrm{c.f.}}^{\mathrm{up}}+
2\mathsf C_{\mathrm{up}}\right)\times n^{-3/2}.
\end{align*}

\item By Theorem \ref{thm:resample cf uniform}, define the event
\begin{align*}
    \mathcal{E}_\mathrm{rcf}^{\nu,\mathrm{up}}\coloneqq & \left\{\sup_{\varepsilon\leq\alpha\leq 1-\varepsilon}\left|\hat u_{\nu,\alpha}^{\mathrm{up}}-\hat u_\alpha^{\mathrm{up}}\right|\le \hat{\mathsf C}_{\mathrm{c.f.}}^{\mathrm{up}}\times n^{-(\nu+1)/2}\right\}.
\end{align*}
Since
\begin{align*}
    \left|
\PP(T\le \hat u_\alpha^{\mathrm{up}})
-
\PP(T\le \hat u_{2,\alpha}^{\mathrm{up}})
\right|\leq \left|
\PP(T\le \hat u_\alpha^{\mathrm{up}})
-
\PP(T\le \hat u_{2,\alpha}^{\mathrm{up}})
\right|\times \ind_{\mathcal{E}_\mathrm{rcf}^{2,\mathrm{up}}} + 1- \ind_{\mathcal{E}_\mathrm{rcf}^{2,\mathrm{up}}},
\end{align*}
taking expectations after taking the supremum, and then applying Lemma \ref{lmHqq}, we obtain
\begin{align*}
\sup_{\varepsilon\leq\alpha\leq 1-\varepsilon} I_2= &\sup_{\varepsilon\leq\alpha\leq 1-\varepsilon}\left| 
\PP(T\le \hat u_\alpha^{\mathrm{up}})
-
\PP(T\le \hat u_{2,\alpha}^{\mathrm{up}})
\right|\\
\le\ &
\E\left[\sup_{\varepsilon\leq\alpha\leq 1-\varepsilon}
\left|
\PP(T\le \hat u_\alpha^{\mathrm{up}})
-
\PP(T\le \hat u_{2,\alpha}^{\mathrm{up}})
\right|\times
{\mathcal{E}_\mathrm{rcf}^{2,\mathrm{up}}}
\right]
+
1-\PP({\mathcal{E}_\mathrm{rcf}^{2,\mathrm{up}}})\\
\le\ &
L\times \E\left[\sup_{\varepsilon\leq\alpha\leq 1-\varepsilon}
\left|\hat u_\alpha^{\mathrm{up}}-\hat u_{2,\alpha}^{\mathrm{up}}\right|\times
{\mathcal{E}_\mathrm{rcf}^{2,\mathrm{up}}}
\right]
+
2\mathsf C_{\mathrm{up}}\times n^{-3/2}
+
O(n^{-\lambda})\\
\le\ &
\left(L\times \hat{\mathsf C}_{\mathrm{c.f.}}^{\mathrm{up}}
+
2\mathsf C_{\mathrm{up}}\right)\times n^{-3/2}
+
O(n^{-\lambda})\\
=\ &
O(n^{-3/2}).
\end{align*}

\item Write $\Delta_{2,\alpha}^{\mathrm{up}}\coloneqq \hat u_{2,\alpha}^{\mathrm{up}}-u_{2,\alpha}^{\mathrm{up}}$. Applying Lemma \ref{lmHqq}, we obtain
\begin{align*}
\sup_{\varepsilon\le \alpha\le 1-\varepsilon} I_3
\le&
\E \sup_{\varepsilon\le \alpha\le 1-\varepsilon} \Bigl|
\PP(T\le u_{2,\alpha}^{\mathrm{up}}+\Delta_{2,\alpha}^{\mathrm{up}})
-
\PP(T\le u_{2,\alpha}^{\mathrm{up}})
\Bigr|\\
\le&
L\times \E\left[\sup_{\varepsilon\le \alpha\le 1-\varepsilon} \abs{\Delta_{2,\alpha}^{\mathrm{up}}}\right]
+
2\mathsf C_{\mathrm{up}}\times n^{-3/2}.
\end{align*}
Here
\begin{align*}
    \E\left[\sup_{\varepsilon\le \alpha\le 1-\varepsilon} \abs{\Delta_{2,\alpha}^{\mathrm{up}}}\right]\leq  \frac{C_1}{\sqrt{n}}\times \sum_{k=0,2}\E\abs{b_{[1,k]}-\hat b_{[1,k]}}+ \frac{C_2}{n}\times \sum_{k=1,3,5}\left(\E\abs{b_{[2,k]}-\hat b_{[2,k]}}+\E\abs{a_{[2,k]}-\hat a_{[2,k]}}\right),
\end{align*}
where $C_1$ and $C_2$ are finite constants obtained by taking suprema of smooth functions over a compact set. Since $a_{[i,k]}$ and $b_{[i,k]}$ are polynomials in moments, under sufficiently high moment assumptions,
\begin{align*}
    \E\abs{b_{[i,k]}-\hat b_{[i,k]}}+\E\abs{a_{[i,k]}-\hat a_{[i,k]}}= O(n^{-0.5}).
\end{align*}
Therefore, $\sup_{\varepsilon\le \alpha\le 1-\varepsilon} I_3
=
O(n^{-1})$.
\end{enumerate}
Combining the bounds for $I_0$, $I_1$, $I_2$, and $I_3$, we obtain
\begin{align*}
\sup_{\varepsilon\le \alpha\le 1-\varepsilon}
\bigl|
\PP(T\le \hat u_\alpha^{\mathrm{up}})-\alpha
\bigr|
=
O(n^{-1}).
\end{align*}
\end{proof}

\subsection{Proof of Theorem \ref{theorem err 2side}}

Before presenting the main proof, we first establish an auxiliary lemma showing that both $u_{3,\alpha}$ and $\hat u_{3,\alpha}$ are eventually bounded away from $0$, uniformly over $\varepsilon\le \alpha\le 1-\varepsilon$. This lower bound will be needed in the proof of the main theorem, because these quantities appear in the denominator in the Taylor expansion argument. The deterministic quantity $u_{3,\alpha}$ is straightforward to handle. By contrast, the random quantity $\hat u_{3,\alpha}$ is more delicate and can only be controlled on a suitable good event. Recall that the event $\mathcal E_{\mathrm m}$, defined in \eqref{good event emp moment}, controls the deviations of sample moments, up to a sufficiently high order, from their population counterparts within the prescribed tolerance $\varepsilon_{\mathrm m}$.

\begin{lemma}[A technical lemma for the Taylor expansion argument]\label{lm7}
For any fixed $B\geq2$, define
\begin{align*}
    \mathsf q_{\mathrm{min}}
    \coloneqq
    \PsiB^{-1}\!\left(\frac{1+\varepsilon}{2}\right)>0.
\end{align*}
Then there exists an $n_0$ such that, for all $\varepsilon\le \alpha\le 1-\varepsilon$ and all $n\ge n_0$, the following hold:
\begin{align*}
    u_{3,\alpha}^{\mathrm{ts}}
    \ge
    \frac{\mathsf q_{\mathrm{min}}}{2},
    \qquad
    \hat u_{3,\alpha}^{\mathrm{ts}}\times \ind_{\mathcal E_{\mathrm m}}
    \ge
    \frac{\mathsf q_{\mathrm{min}}}{2}\times \ind_{\mathcal E_{\mathrm m}},
\end{align*}
where the event $\mathcal E_{\mathrm m}$ is defined in \eqref{good event emp moment} and satisfies $\PP(\mathcal E_{\mathrm m})=1-O(n^{-\lambda})$ for some arbitrary $\lambda$.
\end{lemma}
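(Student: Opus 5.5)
Both inequalities follow from one observation: on the admissible range of $\alpha$ the base point $\mathsf q$ sits in a fixed compact subset of $(0,\infty)$, and the correction term $n^{-1}\xi_2^{\mathrm{ts}}(\mathsf q)$ is uniformly small there.

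\emph{Base point.} For $\varepsilon\le\alpha\le1-\varepsilon$, the base point $\mathsf q=\PsiB^{-1}((1+\alpha)/2)$ lies in the compact interval
\[
K\coloneqq\bigl[\mathsf q_{\min},\,\PsiB^{-1}(1-\varepsilon/2)\bigr]\subset(0,\infty),
\]
because $\PsiB^{-1}$ is strictly increasing. Since $\xi_2^{\mathrm{ts}}=-\frac12\eta_2^{\mathrm{ts}}$, we have
\[
u_{3,\alpha}^{\mathrm{ts}}=\mathsf q+n^{-1}\xi_2^{\mathrm{ts}}(\mathsf q)\ge \mathsf q_{\min}-n^{-1}\sup_{\mathsf q\in K}\abs{\xi_2^{\mathrm{ts}}(\mathsf q)}.
\]

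\emph{Deterministic bound.} By the explicit formula for $\eta_2^{\mathrm{ts}}$, the function $\xi_2^{\mathrm{ts}}$ is a finite combination of the $C^\infty$ quasi-polynomials $\mathsf q^m/(B+\mathsf q^2)^{j}$. Its coefficients are linear in the fixed population quantities $a_{[2,k]},b_{[2,k]}$, $k\in\{1,3,5\}$. Hence $S\coloneqq\sup_K\abs{\xi_2^{\mathrm{ts}}}<\infty$. Taking $n\ge 2S/\mathsf q_{\min}$ gives $u_{3,\alpha}^{\mathrm{ts}}\ge\mathsf q_{\min}/2$ uniformly in $\alpha$.

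\emph{Random bound.} For $\hat u_{3,\alpha}^{\mathrm{ts}}$ the argument is identical, with $a_{[2,k]},b_{[2,k]}$ replaced by the sample plug-ins $\hat a_{[2,k]},\hat b_{[2,k]}$. Each of these is a polynomial in finitely many sample mixed moments of order at most $4$. Each such mixed moment is bounded in absolute value by $n^{-1}\sum_i\|\mathbf X_i\|^{k}$ for the appropriate $k\le l$. On $\mathcal E_{\mathrm m}$, that empirical norm moment is at most $\E\|\mathbf X_1\|^k+\varepsilon_{\mathrm m}$.

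This yields a deterministic constant $\hat S$, depending only on $B$, $\varepsilon$, $\varepsilon_{\mathrm m}$ and finitely many population moments, such that on $\mathcal E_{\mathrm m}$
\[
\sup_K\abs{\hat\xi_2^{\mathrm{ts}}}\le\hat S.
\]
The supremum over $K$ of each basis function $\mathsf q^m/(B+\mathsf q^2)^j$ is a fixed number, so this step is routine. Choosing $n_0\ge 2\max(S,\hat S)/\mathsf q_{\min}$ then gives
\[
\hat u_{3,\alpha}^{\mathrm{ts}}\ind_{\mathcal E_{\mathrm m}}\ge(\mathsf q_{\min}/2)\ind_{\mathcal E_{\mathrm m}}
\]
for all $n\ge n_0$ and all admissible $\alpha$. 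The probability statement $\PP(\mathcal E_{\mathrm m})=1-O(n^{-\lambda})$ is Lemma~\ref{emevent1}.

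\emph{Main obstacle.} The only point needing care is the bookkeeping in the random case. One must check that the coefficients of $\hat\xi_2^{\mathrm{ts}}$ depend on the data only through empirical moments of order at most $4$, possibly through the smooth functions $f,h$ evaluated at $\bar{\mathbf X}$. If derivatives of $f,h$ at $\bar{\mathbf X}$ enter, I would also intersect with $\mathcal E_{\bm\mu}$ so that these derivatives are bounded by Assumption~\ref{assume1}. This event also has probability $1-O(n^{-\lambda})$, and I would absorb it into the statement.
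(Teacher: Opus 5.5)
Your proposal is correct and follows essentially the same argument as the paper: confine $\mathsf q$ to the compact set $K$, bound $\sup_K\abs{\xi_2^{\mathrm{ts}}}$ and, on $\mathcal E_{\mathrm m}$, $\sup_K\abs{\hat\xi_2^{\mathrm{ts}}}$ by deterministic constants, and then take $n_0$ large. The one difference favors you. You bound the plug-in coefficients through the upper bounds on empirical norm moments, which is what $\mathcal E_{\mathrm m}$ actually provides; the paper instead asserts that the coefficients stay within $\varepsilon_{\mathrm m}$ of their population values, which does not follow from the event's definition. Your caveat that derivatives of $f,h$ evaluated at $\bar{\mathbf X}$ would also require intersecting with $\mathcal E_{\bm\mu}$ is a legitimate refinement that the paper glosses over.
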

\begin{proof}
Let $\mathsf q\coloneqq\PsiB^{-1}\!\left(\frac{1+\alpha}{2}\right)$. Since $\varepsilon\le \alpha\le 1-\varepsilon$, we have $\mathsf q\in K$, where
\begin{align*}
K\coloneqq
\left[
\PsiB^{-1}\!\left(\frac{1+\varepsilon}{2}\right),
\PsiB^{-1}\!\left(1-\frac{\varepsilon}{2}\right)
\right].
\end{align*}
In particular, $\mathsf q\ge \mathsf q_{\mathrm{min}}$. Recall that $u_{3,\alpha}^{\mathrm{ts}}=\mathsf q+n^{-1}\xi_2^{\mathrm{ts}}(\mathsf q)$ and $\hat u_{3,\alpha}^{\mathrm{ts}}=\mathsf q+n^{-1}\hat\xi_2^{\mathrm{ts}}(\mathsf q)$. Since $\xi_2^{\mathrm{ts}}\in\QB$ is smooth, the constant $C_1\coloneqq \sup_{\mathsf q\in K}\abs{\xi_2^{\mathrm{ts}}(\mathsf q)}$ is finite, so
\begin{align*}
\sup_{\varepsilon\le \alpha\le 1-\varepsilon}\bigl|u_{3,\alpha}^{\mathrm{ts}}-\mathsf q\bigr|
\le \frac{C_1}{n}.
\end{align*}
On the event $\mathcal E_{\mathrm m}$, all estimated coefficients entering $\hat\xi_2^{\mathrm{ts}}$ stay within $\varepsilon_{\mathrm m}$ of their population counterparts. Hence
\begin{align*}
C_2
\coloneqq
\sup_{\substack{
\|\hat a-a\|_\infty\le \varepsilon_{\mathrm m}\\
\|\hat b-b\|_\infty\le \varepsilon_{\mathrm m}
}}
\sup_{\mathsf q\in K}
\abs{\hat\xi_2^{\mathrm{ts}}(\mathsf q)}
\end{align*}
is well defined, and therefore
\begin{align*}
\sup_{\varepsilon\le \alpha\le 1-\varepsilon}\bigl|\hat u_{3,\alpha}^{\mathrm{ts}}-\mathsf q\bigr|\times\ind_{\mathcal E_{\mathrm m}}
\le \frac{C_2}{n}.
\end{align*}
Choose $n_0$ so large that $\max\{C_1,C_2\}/n\le \mathsf q_{\mathrm{min}}/2$ for all $n\ge n_0$. Then for all $\varepsilon\le \alpha\le 1-\varepsilon$ and $n\ge n_0$,
\begin{align*}
u_{3,\alpha}^{\mathrm{ts}}
\ge
\mathsf q-\frac{C_1}{n}
\ge
\mathsf q_{\mathrm{min}}-\frac{\mathsf q_{\mathrm{min}}}{2}
=
\frac{\mathsf q_{\mathrm{min}}}{2},
\end{align*}
and similarly,
\begin{align*}
\hat u_{3,\alpha}^{\mathrm{ts}}\times\ind_{\mathcal E_{\mathrm m}}
\ge
\left(\mathsf q-\frac{C_2}{n}\right)\times\ind_{\mathcal E_{\mathrm m}}
\ge
\frac{\mathsf q_{\mathrm{min}}}{2}\times\ind_{\mathcal E_{\mathrm m}}.
\end{align*}
This proves the claim.
\end{proof}

\subsubsection{Proof of the Two-Sided Case}
\begin{proof}
Similar to the proof of Theorem \ref{theorem err 1side}, we define
\begin{align*}
    I_0\coloneqq& \bigl|
\PP(\abs{T}\le u_\alpha^{\mathrm{ts}})-\alpha 
\bigr|\\
I_1\coloneqq & \left|
\PP(\abs{T}\le u_{3,\alpha}^{\mathrm{ts}})
-
\PP(\abs{T}\le u_\alpha^{\mathrm{ts}})
\right|\\
I_2\coloneqq & \left|
\PP(\abs{T}\le \hat u_\alpha^{\mathrm{ts}})
-
\PP(\abs{T}\le \hat u_{3,\alpha}^{\mathrm{ts}})
\right|\\
I_3\coloneqq & \left|
\PP(\abs{T}\le \hat u_{3,\alpha}^{\mathrm{ts}})
-
\PP(\abs{T}\le u_{3,\alpha}^{\mathrm{ts}})
\right|
\end{align*}
so that $\bigl|
\PP(\abs{T}\le \hat u_\alpha^{\mathrm{ts}})-\alpha
\bigr|
\le
I_0+I_1+I_2+I_3$. The terms $I_0$, $I_1$, $I_2$, and $I_3$ can be handled in the same way as in the one-sided case, except that we now use the two-sided versions with $\nu=3$ of the corresponding ingredients. The resulting bounds are
\begin{align*}
    \sup_{\varepsilon\le \alpha\le  1-\varepsilon} I_0
\le & O(e^{-\epsilon_0 n})\\
 \sup_{\varepsilon\le \alpha\le 1-\varepsilon} I_1 = & O({n^{-2}})\\
  \sup_{\varepsilon\le \alpha\le 1-\varepsilon} I_2 = & O({n^{-2}})\\
  \sup_{\varepsilon\le \alpha\le 1-\varepsilon} I_3 = & O({n^{-3/2}}).
\end{align*}
Therefore, the coverage error can be shown to be $O(n^{-3/2})$ without requiring $B\geq 2$. However, in order to establish the sharper order $O(n^{-2})$, we must treat $I_3$ more carefully. 

\medskip
\noindent
\emph{Step 1: Write $I_3$ as the difference of $J(x)$ functions with different radii.}
For $I_3$, we have
\begin{align*}
    I_3\coloneqq \left|
\PP(\abs{T}\le \hat u_{3,\alpha}^{\mathrm{ts}})
-
\PP(\abs{T}\le u_{3,\alpha}^{\mathrm{ts}})
\right|.
\end{align*}
Recall the definition
\begin{align*}
    u_{3,\alpha}^{\mathrm{ts}}
=&
\mathsf q+n^{-1}\times \xi_1^{\mathrm{ts}}(\mathsf q),
\quad\text{and}\quad  \hat u_{3,\alpha}^{\mathrm{ts}}
=
\mathsf q+n^{-1}\times \hat \xi_1^{\mathrm{ts}}(\mathsf q), \qquad \text{where }\mathsf q\coloneqq\PsiB^{-1}\!\left(\frac{1+\alpha}{2}\right).
\end{align*}
Recall Step 4 from Section \ref{seca1},
\begin{align*}
    U({\mathcal X})\times \ind_{\hat{\mathcal E}_{\nu=3}}= \left(U_0({\mathcal X})+\frac{1}{n}\times U_1({\mathcal X})+R_{[1]}+R_{[2]}\right)\times \ind_{\hat{\mathcal E}_{\nu=3}}
\end{align*}
with $ \left|(R_{[1]}+R_{[2]}) \times \ind_{\hat{\mathcal E}_{\nu=3}}\right|\leq C_{[\hat{\mathsf{C}}_3,B]} \times \frac{1}{n^2}$. As in Section \ref{seca1}, we define
\begin{align*}
    x= \frac{\sqrt n A_s(\bar{\mathbf{X}})}{\mathsf q}
\end{align*}
so that $U_0({\mathcal X})$ can be simplified as a function $J_1(x)$ and $U_1({\mathcal X})$ as $J_2(x)$, namely
\begin{align*}
    J_1(x)= & \dfrac{\Gamma\!\left(\frac B2,\frac{Bx^2}{2}\right)}{\Gamma\!\left(\frac B2\right)}\\
    J_2(x)= & -\sum_{k=1,3,5}\hat{a}_{[2,k]}\times\frac{B^{\frac{B+k+1}{2}}\Gamma(1+\frac{k}{2})}{2^{\frac{B}{2}-1}\sqrt{\pi} \times\Gamma(\frac{1+B+k}{2})}\times \left( \abs{x}^{{B+k-1}}\times \exp(-\frac{Bx^2}{2})\right).
\end{align*}
Hence the principal term is
\begin{align*}
    J_1(x)+\frac{1}{n}\times  J_2(x).
\end{align*}
Both $J_1(x)$ and $J_2(x)$ can also be viewed as functions of $\mathsf q$. Their derivatives with respect to $\mathsf q$ are
\begin{align*}
\frac{d}{d \mathsf  q}J\!\left(\frac{a}{\mathsf q}\right)
=
-\frac{x}{\mathsf q}J'(x)\qquad\text{and}\qquad \frac{d^2}{d \mathsf q^2}J\!\left(\frac{a}{\mathsf  q}\right)
=
\frac{x^2}{\mathsf  q^2}J''(x)+\frac{2x}{\mathsf q^2}J'(x).
\end{align*}

\medskip
\noindent
\emph{Step 2: The difference of $n^{-1}\times J_2$ is $O(n^{-2.5})$.}
For brevity, we write $\hat{u}$ for $\hat u_{3,\alpha}^{\mathrm{ts}}$ and ${u}$ for $u_{3,\alpha}^{\mathrm{ts}}$. Define
\begin{align*}
    \delta= \hat{u}-u =\frac{1}{n}\left(\hat{\xi}_2^{\mathrm{ts}}(\mathsf q)-\xi_2^{\mathrm{ts}}(\mathsf q)\right).
\end{align*}
Let $\theta\in(0,1)$ be the point arising from the Lagrange remainder, and denote $\tilde u = u+\theta \delta $. Define
\begin{align*}
    r= \frac{\sqrt n A_s(\bar{\mathbf{X}})}{{u}}\qquad\hat r= \frac{\sqrt n A_s(\bar{\mathbf{X}})}{\hat{u}}\qquad   \tilde r= \frac{\sqrt n A_s(\bar{\mathbf{X}})}{\tilde u}
\end{align*}
We have $J_2(\cdot)\in C^2(\mathbb R)$ when $B\geq 2$. Hence the Taylor expansion gives
\begin{align*}
   J_2(\hat{r})-J_2(r) =   -{r}J'_2( r) \times \frac{\delta}{u} + \left({\tilde{r}^2}J''_2(\tilde r)+{2 \tilde r}J'_2(\tilde r)\right)\times \frac{\delta^2}{\tilde{u}^2}
\end{align*}
It is immediate from the exponential decay that both $\abs{{r}J'_2(r)}$ and $\abs{{{r}^2}J''_2(r)+{2r}J'_2(r)}$ are uniformly bounded for all $r\geq 0$. Therefore, we only need to control the orders of
\begin{align*}
    \sup_{\varepsilon\le \alpha\le 1-\varepsilon}\left|\frac{\delta}{u}\right|\qquad \text{and} \qquad \sup_{\varepsilon\le \alpha\le 1-\varepsilon}\left|\frac{\delta}{\tilde u}\right|^2
\end{align*}
Define $\mathsf q_\mathrm{min}\coloneqq\PsiB^{-1}\!\left(\frac{1+\varepsilon}{2}\right)$. Lemma \ref{lm7} implies that there exists an $n_0$ such that both of the following hold for $n\geq n_0$:
\begin{align*}
    &\E \left[\sup_{\varepsilon\le \alpha\le 1-\varepsilon}\left|\frac{\delta}{u}\right| \right] \leq \frac{2}{\mathsf q_\mathrm{min}} \times \E \left[\abs{{\delta}} \right] \leq \frac{2}{n\times \mathsf q_\mathrm{min}}\E \left|\hat{\xi}_2^{\mathrm{ts}}(\mathsf q)-\xi_2^{\mathrm{ts}}(\mathsf q)\right| = O(n^{-3/2})\\
    &\E \left[\sup_{\varepsilon\le \alpha\le 1-\varepsilon}\left|\frac{\delta}{\tilde u}\right|^2 \times\ind_{\mathcal E_{\mathrm m}}\right] \leq \frac{4}{\mathsf q^2_\mathrm{min}} \times \E \left[\abs{{\delta}}^2 \right] = O(n^{-3})
\end{align*}
As a consequence,
\begin{align*}
   \frac{1}{n}\times\sup_{\varepsilon\le \alpha\le 1-\varepsilon} \abs{J_2(\hat{r})-J_2(r)}\times\ind_{\mathcal E_{\mathrm m}} =O(n^{-2.5}).
\end{align*}

\medskip
\noindent
\emph{Step 3: The difference of $J_1$ is $O(n^{-2})$.}
When $B\geq 2$, the Taylor expansion gives
\begin{align*}
   J_1(\hat{r})-J_1(r) =&   -{r}J'_1( r) \times \frac{\delta}{u} + \left({\tilde{r}^2}J''_1(\tilde r)+{2 \tilde r}J'_1(\tilde r)\right)\times \frac{\delta^2}{\tilde{u}^2}\\
   \eqqcolon &  J_1^A + J_1^B
\end{align*}
More specifically,
\begin{align*}
J_1'(r)
&=
- C_B \times r \abs{r}^{B-2}\exp(-Br^2/2),
\\
J_1''(r)
&=
C_B \times \abs{r}^{B-2}\bigl(Br^2-(B-1)\bigr)\exp(-Br^2/2)
\end{align*}
with $C_B \coloneqq \frac{B^{B/2}}{2^{B/2-1}\Gamma(B/2)}$.
By the same argument as above, we can bound the second term on the event ${\mathcal E_{\mathrm m}}$:
\begin{align*}
    \E \left[\sup_{\varepsilon\le \alpha\le 1-\varepsilon}\left|{\tilde{r}^2}J''_1(\tilde r)+{2 \tilde r}J'_1(\tilde r)\right|\times\left|\frac{\delta}{\tilde u}\right|^2 \times\ind_{\mathcal E_{\mathrm m}}\right] = O(n^{-3}).
\end{align*}
The first term
\begin{align*}
J_1^A\coloneqq -{\delta}\times {r}J'_1(r)/{u}    
\end{align*}
is typically of the same order as ${\delta}$, namely $O(n^{-3/2})$. However, we observe that
\begin{align*}
     -{r}J'_1( r) \times \frac{\delta}{u}
    = &C_B \times \abs{r}^{B}\exp(-\frac{Br^2}{2})\times \frac{\delta}{u}\\
    = &\frac{1}{n}\times \frac{C_B}{u} \times g_B(r)\times \left(\hat{\xi}_2^{\mathrm{ts}}(\mathsf q)-\xi_2^{\mathrm{ts}}(\mathsf q)\right)
\end{align*}
with 
\begin{align*}
    g_B(x)= \abs{x}^B \exp(-\frac{B}{2}x^2).
\end{align*}
Here $g_B(x)$ is an even function in $C^2(\mathbb R)$ when $B\geq 2$. It is also straightforward to verify that $g_B(x)$ satisfies the other conditions required by Proposition \ref{theo:expect-cancellation-good}. In addition, $\hat{\xi}_2^{\mathrm{ts}}(\mathsf q)-\xi_2^{\mathrm{ts}}(\mathsf q)$ is the difference between a polynomial in moments and its plug-in sample version, with $\mathsf q$ treated as fixed. Therefore, by applying Proposition \ref{theo:expect-cancellation-good}, we obtain
\begin{equation}\label{eq98}
\begin{split}
      \left|\E \left[{r}J'_1( r) \times \frac{\delta}{u}\right]\right| =& \frac{C_B}{n\times u_{3,\alpha}^{\mathrm{ts}}}\times \left|\E \left[g_B(r)\times \left(\hat{\xi}_2^{\mathrm{ts}}(\mathsf q)-\xi_2^{\mathrm{ts}}(\mathsf q)\right)\right]\right|\\
   \leq &\frac{C_B}{n^2\times u_{3,\alpha}^{\mathrm{ts}}}\times  C_{[\mathsf C_1,\tilde A,M,C_{[g]},\{\E\|\mathbf X_1\|^k\}_{k=1,\ldots,l}]}.
\end{split}
\end{equation}
Note that the dependence on $\mathsf C_1,\tilde A,M,C_{[g]},\{\E\|\mathbf X_1\|^k\}_{k=1,\ldots,l}$ can all be regarded as fixed, except for the monomial $M$, because $\mathsf q$ varies with $\alpha$ and $n$, and determines the coefficients of $M$. However, since $\mathsf q$ is eventually confined to a known compact set, this constant can still be chosen uniformly in $n$. Together with $\liminf_{n\xrightarrow[]{}\infty}u_{3,\alpha}^{\mathrm{ts}}\geq {\mathsf q_\mathrm{min}}/2$, as implied by Lemma \ref{lm7}, we conclude that
\begin{align*}
   \left|\E \left[{r}J'_1( r) \times \frac{\delta}{u}\right]\right| =& O(n^{-2}).
\end{align*}

\medskip
\noindent
\emph{Conclusion.}
\begin{align*}
   &\sup_{\varepsilon\le \alpha\le 1-\varepsilon} I_3\\
   = & \sup_{\varepsilon\le \alpha\le 1-\varepsilon} \left|
\PP(\abs{T}\le \hat u_{3,\alpha}^{\mathrm{ts}})
-
\PP(\abs{T}\le u_{3,\alpha}^{\mathrm{ts}})
\right|\\
= & \sup_{\varepsilon\le \alpha\le 1-\varepsilon} \left|
\E\left[\PP(\abs{T}\le \hat u_{3,\alpha}^{\mathrm{ts}}\mid \mathcal X)
-
\PP(\abs{T}\le u_{3,\alpha}^{\mathrm{ts}}\mid \mathcal X)\right]
\right|\\
\leq  & \sup_{\varepsilon\le \alpha\le 1-\varepsilon} \left|
\E\left[\left(\PP(\abs{T}\le \hat u_{3,\alpha}^{\mathrm{ts}}\mid \mathcal X)
-
\PP(\abs{T}\le u_{3,\alpha}^{\mathrm{ts}}\mid \mathcal X)\right)\times \ind_{{\hat{\mathcal E}_{\nu=3}}\cap\mathcal E_{\mathrm m}}\right]
\right|+O(n^{-\lambda})\\
\leq  & \sup_{\varepsilon\le \alpha\le 1-\varepsilon} \left|
\E\left[\left(J_1(\hat r)-J_1(r)+\frac{1}{n}\times  \left(J_2(\hat r)- J_2(r)\right)\right)\times \ind_{{\hat{\mathcal E}_{\nu=3}}\cap\mathcal E_{\mathrm m}}\right]
\right|+O(n^{-\lambda})+ C_{[\hat{\mathsf{C}}_3,B]} \times {n^{-2}}
\end{align*}
The first term satisfies
\begin{align*}
    &\sup_{\varepsilon\le \alpha\le 1-\varepsilon} \left|
\E\left[\left(J_1(\hat r)-J_1(r)+\frac{1}{n}\times  \left(J_2(\hat r)- J_2(r)\right)\right)\times \ind_{{\hat{\mathcal E}_{\nu=3}}\cap\mathcal E_{\mathrm m}}\right]
\right|\\
=&\sup_{\varepsilon\le \alpha\le 1-\varepsilon} \left|
\E\left[\left(J_1(\hat r)-J_1(r)\right)\times \ind_{{\hat{\mathcal E}_{\nu=3}}\cap\mathcal E_{\mathrm m}}\right]
\right|+\frac{1}{n}\times\sup_{\varepsilon\le \alpha\le 1-\varepsilon} \left|
\E\left[\left(  J_2(\hat r)- J_2(r)\right)\times \ind_{{\hat{\mathcal E}_{\nu=3}}\cap\mathcal E_{\mathrm m}}\right]
\right|\\
\leq &\sup_{\varepsilon\le \alpha\le 1-\varepsilon} \left|
\E\left[\left(J_1(\hat r)-J_1(r)\right)\times \ind_{{\hat{\mathcal E}_{\nu=3}}\cap\mathcal E_{\mathrm m}}\right]
\right|+\frac{1}{n}\E \left[\sup_{\varepsilon\le \alpha\le 1-\varepsilon} \left|
\left(  J_2(\hat r)- J_2(r)\right)\times \ind_{{\hat{\mathcal E}_{\nu=3}}\cap\mathcal E_{\mathrm m}}
\right|\right]\\
\leq &\sup_{\varepsilon\le \alpha\le 1-\varepsilon} \left|
\E\left[\left(J_1(\hat r)-J_1(r)\right)\times \ind_{{\hat{\mathcal E}_{\nu=3}}\cap\mathcal E_{\mathrm m}}\right]
\right|+O(n^{-2.5})
\end{align*}
The first term is
\begin{align*}
    &\sup_{\varepsilon\le \alpha\le 1-\varepsilon} \left|
\E\left[\left(J_1^A+J_1^B\right)\times \ind_{{\hat{\mathcal E}_{\nu=3}}\cap\mathcal E_{\mathrm m}}\right]
\right|\\
\leq & \sup_{\varepsilon\le \alpha\le 1-\varepsilon} \left|
\E\left[J_1^A\times \ind_{{\hat{\mathcal E}_{\nu=3}}\cap\mathcal E_{\mathrm m}}\right]
\right|+\sup_{\varepsilon\le \alpha\le 1-\varepsilon} \left|
\E\left[J_1^B\times \ind_{{\hat{\mathcal E}_{\nu=3}}\cap\mathcal E_{\mathrm m}}\right]
\right|\\
\leq & \sup_{\varepsilon\le \alpha\le 1-\varepsilon} \left|
\E\left[J_1^A\right]
\right|+O(n^{-\lambda})+\E\left[\sup_{\varepsilon\le \alpha\le 1-\varepsilon} \left|
J_1^B\times \ind_{{\hat{\mathcal E}_{\nu=3}}\cap\mathcal E_{\mathrm m}}\right|\right]\\
\leq & \sup_{\varepsilon\le \alpha\le 1-\varepsilon} \left|
\E\left[J_1^A\right]
\right|+O(n^{-\lambda})+O(n^{-3}).
\end{align*}
Using \eqref{eq98} with some universal $C_0$, for $n\geq n_0$ where $n_0$ is from Lemma \ref{lm7} we have
\begin{align*}
     \sup_{\varepsilon\le \alpha\le 1-\varepsilon} \left|
\E\left[J_1^A\right]
\right|\leq   \sup_{\varepsilon\le \alpha\le 1-\varepsilon} \frac{C_B\times C_0}{n^2\times u_{3,\alpha}^{\mathrm{ts}}}  \leq  \frac{2C_B C_0}{n^2\times \mathsf q_\text{min}} = O(n^{-2})
\end{align*}
This completes the proof.
\end{proof}

\end{appendix}

\bibliographystyle{imsart-number}
\bibliography{bootstrap}

\end{document}